\LetLtxMacro\amsproof\proof
\LetLtxMacro\amsendproof\endproof
  \LetLtxMacro\proof\amsproof
  \LetLtxMacro\endproof\amsendproof
\newcommand{\abs}[1]{\ensuremath{\left\vert#1\right\vert}}
\newcommand{\ket}[1]{\ensuremath{\vert#1\rangle}}
\newcommand{\bra}[1]{\ensuremath{\langle #1\vert}}
\newcommand{\kb}[2]{\ensuremath{\vert #1 \rangle \langle #2 \vert}}
\def\id{\mbox{\small 1} \!\! \mbox{1}}
\newtheorem{claim}{Claim}
\def\z{\mathbb{Z}}
\def\ve{\varepsilon}
\def\q{\mathbb{Q}}
\def\G{\mathcal{G}}
\def\PG#1{\mathcal{P}_{#1}} 
\DeclareMathOperator{\Stab}{Stab}
\DeclareMathOperator{\Spec}{Spec}
\def\ip#1{\mathopen{}\left\langle #1\right\rangle\mathclose{}}
\def\nrm#1{\mathopen{}\left\Vert #1\right\Vert\mathclose{}}
\def\ceil#1{\mathopen{}\left\lceil #1\right\rceil\mathclose{}}
\def\at#1{\mathopen{}\left(#1\right)\mathclose{}}
\def\atsm#1{\mathopen{}(#1)\mathclose{}}
\def\of#1{\mathopen{}\left[#1\right]\mathclose{}}
\def\set#1{\mathopen{}\left\{ #1 \right\}\mathclose{}}
\def\ket#1{\mathopen{}\left|#1\right\rangle\mathclose{}}
\def\ketsm#1{|#1\rangle}
\def\bra#1{\mathopen{}\left\langle #1\right|\mathclose{}}
\def\brasm#1{\mathopen{}\langle #1|\mathclose{}}
\def\abs#1{\mathopen{}\left|#1\right|\mathclose{}}
\newtheorem*{lem*}{Lemma}
\newtheorem*{thm*}{Theorem}
\newtheorem{thm}{Theorem}[section]
\newtheorem{lem}[thm]{Lemma}
\newtheorem{prop}[thm]{Proposition}
\newtheorem{alg}[thm]{Algorithm}
\newtheorem{dfn}[thm]{Definition}
\newtheorem{rem}[thm]{Remark}
\newcommand{\eq}[1]{\hyperref[eq:#1]{(\ref*{eq:#1})}}
\renewcommand{\sec}[1]{\hyperref[sec:#1]{Section~\ref*{sec:#1}}}
\newcommand{\secsm}[1]{\hyperref[sec:#1]{Sec.~\ref*{sec:#1}}}
\newcommand{\app}[1]{\hyperref[app:#1]{Appendix~\ref*{app:#1}}}
\newcommand{\theo}[1]{\hyperref[thm:#1]{Theorem~\ref*{thm:#1}}}
\newcommand{\algo}[1]{\hyperref[alg:#1]{Algorithm~\ref*{alg:#1}}}
\newcommand{\lemm}[1]{\hyperref[lem:#1]{Lemma~\ref*{lem:#1}}}
\newcommand{\defn}[1]{\hyperref[defn:#1]{Definition~\ref*{defn:#1}}}
\newcommand{\corr}[1]{\hyperref[cor:#1]{Corollary~\ref*{cor:#1}}}
\newcommand{\fig}[1]{\hyperref[fig:#1]{Figure~\ref*{fig:#1}}}
\newcommand{\figsm}[1]{\hyperref[fig:#1]{Fig.~\ref*{fig:#1}}}
\newcommand{\tab}[1]{\hyperref[tab:#1]{Table~\ref*{tab:#1}}}
\newcommand{\tabsm}[1]{\hyperref[tab:#1]{Tab.~\ref*{tab:#1}}}
\newcommand{\propos}[1]{\hyperref[prop:#1]{Proposition~\ref*{prop:#1}}}
\newcommand{\propsm}[1]{\hyperref[prop:#1]{Prop.~\ref*{prop:#1}}}
\newcommand{\rema}[1]{\hyperref[rem:#1]{Remark~\ref*{rem:#1}}}
\newcommand{\xRightarrow}[2][]{\ext@arrow 0359\Rightarrowfill@{#1}{#2}}
\def\catalyse#1{\xRightarrow{#1}}
\begin{document}

\title{Lower bounds on the non-Clifford resources for quantum computations}
\date{\today}
\author{Michael Beverland}
\affiliation{QuArc, Microsoft Quantum, Redmond, Washington, US}
\author{Earl Campbell}
\affiliation{Department of Physics and Astronomy, University of Sheffield, Sheffield, UK}
\author{Mark Howard}
\affiliation{School of Mathematics, Statistics \& Applied Mathematics, NUI Galway, Ireland}
\orcid{0000-0002-6910-185X}
\author{Vadym Kliuchnikov}
\affiliation{QuArc, Microsoft Quantum, Redmond, Washington, US}
\orcid{0000-0002-7076-5864}

\maketitle

\begin{abstract}
Treating stabilizer operations as free, we establish lower bounds on the number of resource states, also known as magic states, needed to perform various quantum computing tasks.
Our bounds apply to adaptive computations using measurements with an arbitrary number of stabilizer ancillas.
We consider (1) resource state conversion, (2) single-qubit unitary synthesis, and (3) computational subroutines including the quantum adder and the multiply-controlled $Z$ gate. 

To prove our resource conversion bounds we introduce two new monotones, the stabilizer nullity and the dyadic monotone, and make use of the already-known stabilizer extent.
We consider conversions that borrow resource states, known as catalyst states, and return them at the end of the algorithm. 
We show that catalysis is necessary for many conversions and introduce new catalytic conversions, some of which are optimal.

By finding a canonical form for post-selected stabilizer computations, we show that approximating a single-qubit unitary to within diamond-norm precision $\varepsilon$ requires at least $1/7\cdot\log_2(1/\varepsilon) - 4/3$ $T$-states on average.
This is the first lower bound that applies to synthesis protocols using fall-back, mixing techniques, and where the number of ancillas used can depend on $\varepsilon$. 

Up to multiplicative factors, we optimally lower bound the number of $T$ or $CCZ$ states needed to implement the ubiquitous modular adder and multiply-controlled-$Z$ operations. 
When the probability of Pauli measurement outcomes is 1/2, some of our bounds become tight to within a small additive constant.
\end{abstract}

\setcounter{tocdepth}{2} 
\newpage
\renewcommand{\baselinestretch}{0.95}\normalsize
\tableofcontents
\renewcommand{\baselinestretch}{1.0}\normalsize
\newpage
\section{Introduction and high-level overview}

Many promising architectures for universal fault-tolerant quantum computing~\cite{karzig17,fowler2012} perform computation by applying \textit{stabilizer operations} to carefully prepared \textit{resource states} known as magic states \cite{knill2005quantum,bravyi2005universal,campbell2017roads}.
The stabilizer operations, which consist of Clifford gates, preparation of stabilizer ancilla states, and measurements in the Pauli basis, tend to be relatively easy to implement in these architectures.
On the other hand, due to restrictions imposed by error correction \cite{Eastin2009,Bravyi2013,Pastawski2015,Beverland2016}, the resource states tend to be produced by hefty distillation protocols \cite{bravyi2012,jones2013multilevel,Haah2017magicstate,Campbell2018magicstateparity}, that dominate the space-time overhead of the overall computation.
It is therefore very natural and practically motivated to ask: 

\textit{What is the minimum number of copies of a particular resource state that must be consumed to perform a given computational task using an arbitrary number of stabilizer operations?}

We address this question by providing lower bounds for a number of computational tasks.

In the early days of quantum computing research, upper bounds for the resources required to implement compelling algorithms were crucial to motivate the development of scalable quantum computing hardware.
Today, lower bounds are arguably more important since they can identify opportunities for further optimization.
Unfortunately, lower bounds are infamously elusive: where an upper bound of resource requirements can be obtained by identifying an explicit algorithm, proving that no algorithm exists with certain properties can be very difficult.

Good lower bounds have however been forthcoming for some models of quantum computation, such as the two-qubit gate cost in the absence of measurement considering single-qubit gates as free \cite{shende2004smaller,shende2004minimal,Shende2006,Iten2016,Knill1995}. 
For example, the multiply controlled phase operation needs at least a linear number of two-qubit gates \cite{Barenco1995}.
These results are not tailored to the fault-tolerant setting where two-qubit CNOT gates and measurements are much cheaper than single-qubit non-Clifford gates, leading us to seek new theoretical tools.

We provide lower bounds for the production of particular target states, the implementation of important subroutines such as the adder and the multiply controlled phase operation, as well as approximating an arbitrary unitary to a desired precision.
Our bounds significantly strengthen the best that were previously known and in some cases are the first non-trivial bounds that apply.
We give separate bounds in terms of a variety of the most common basic resource states, and map out how these basic states themselves can be converted into one another.

\subsection{Monotones under stabilizer operations}

In \sec{basic-techniques}, we introduce the \textit{stabilizer nullity}, a function $\nu(\ket{\psi})$ of any pure state $\ket{\psi}$ that is non-increasing under stabilizer operations. 
The stabilizer nullity is surprisingly powerful given its simplicity: it is the number of qubits that $\ket{\psi}$ is hosted in, minus the number of independent Pauli operators that stabilize $\ket{\psi}$.
It is easy to see that $\nu = 0$ for any stabilizer state.
We also leverage a previously known monotone called the \textit{stabilizer extent} \cite{regula2017convex,bravyi2018simulation} (see \sec{stabiziler-extent}) which is also non-increasing under stabilizer operations.
These monotones allow us to bound some state preparation tasks, for example $n$ copies of the state $\ket{\psi}$ cannot be sufficient to produce a target state $\ket{\text{tar.}}$ if $\nu(\ket{\psi}^{\otimes n})< \nu(\ket{\text{tar.}})$.
We write this as 
$$ \nu(\ket{\psi}^{\otimes n})< \nu(\ket{\text{tar.}}) ~~~ \text{implies} ~~~ \ket{\psi}^{\otimes n} \not\longrightarrow \ket{\text{tar.}}. $$
An important factor in understanding the limitations of stabilizer operations is that their power can be increased not only by consuming resource states, but also by borrowing other resource states, known as \textit{catalyst states} and returning them unchanged at the end of the algorithm, that is
\begin{align*}
    \ket{A} \not\longrightarrow \ket{B}\text{ but }   \ket{A}\ket{\text{cat.}} \longrightarrow \ket{B}\ket{\text{cat.}}, \text{ written as } \ket{A}\catalyse{\ket{\text{cat.}}} \ket{B}.
\end{align*}

In \sec{catalysis}, after establishing a general no-go theorem, we give several examples of resource state conversion where catalysis is necessary and sufficient. Similar results have been proven in the past for a more restricted set of scenarios e.g.~\cite{Campbell2010,Selinger2012,aqft2018}.

\subsection{Resource state conversion}

Standard choices of which basic resource states are consumed by an algorithm could include the $T$-state $\ket{T}:=T \ket{+}$, the $CCZ$-state $\ket{CCZ}:= CCZ\ket{+}^{\otimes 3}$, the $\sqrt{T}$-state $\ketsm{\sqrt{T}}:= \sqrt{T}\ket{+}$, along with many others.
Before we turn to resource lower bounds for computational tasks such as implementing arbitrary unitaries, we should first consider how various basic resource states relate to one another, which is the focus of \sec{Conversions}. 
Thankfully, the stabilizer nullity is \textit{additive}, such that $\nu(\ket{\psi} \ket{\phi}) = \nu(\ket{\psi}) + \nu(\ket{\phi})$ for all $\ket{\psi}$ and $\ket{\phi}$, which allows us to say even more. 
For example, we can rule out catalyzed conversions since $\nu(\ket{A})< \nu(\ket{B})$ implies that $\nu(\ket{A}  \ket{\text{cat.}})< \nu(\ket{B}  \ket{\text{cat.}})$ for any catalyzing state $\ket{\text{cat.}}$.
Moreover, tensor powers of states simplify, allowing us to make asymptotic implications, i.e.,
$$ \nu(\ket{A})< r \cdot \nu(\ket{B}) ~~~ \text{implies} ~~~ \ket{A}^{\otimes n} \not\Longrightarrow \ket{B}^{\otimes \ceil{r n}} ~~~\forall n.$$
Here, the double arrow indicates that \textit{even with an arbitrary catalyst state} $\ceil{r n}$ copies of $\ket{B}$ cannot be produced from $n$ copies of $\ket{A}$ using stabilizer operations.

These state conversion bounds and algorithms put our computational task lower bounds on more solid footing by allowing us to analyze the cost in terms of different input resource states.
We also foresee our conversion results being useful in a much broader context, such as allowing a meaningful comparison of protocols that distill $T$-states with protocols that distill $CCZ$-states.
For example, two $T$-states can be produced from a single $CCZ$-state, and therefore a distillation protocol $A$ for $\ket{CCZ}$ outperforms a distillation protocol $B$ for $\ket{T}$ if the protocol formed from converting the output of $A$ into $T$-states outperforms $B$.

The bounds we obtain show there is a conversion gap: starting with $n$ copies of $\ket{T}$, and applying the best possible $\ket{T}$ to $\ket{CCZ}$ conversion followed by the best possible $\ket{CCZ}$ to $\ket{T}$ conversion will yield fewer than $n$ copies of $\ket{T}$.
This gap survives in the asymptotic limit.

In \tab{T_to_Targ} and \tab{CCZ_to_Targ} at the end of \sec{Conversions} we summarize many of our conversion bounds along with the most efficient known conversion algorithms (some of which were previously known, some of which we introduce in \sec{polynomialcatalysis} and \sec{adder-conversion-protocols}). 
Many of the conversion algorithms do not match the bounds suggesting that more efficient algorithms remain to be found.

\subsection{Computational tasks}
\label{sec:intro-computational-tasks}

We have outlined how monotones can help bound the resources required to produce a particular target state.
In \sec{computational-tasks} we bootstrap these techniques to lower bound the resources required to perform certain computational tasks.

To exactly implement a unitary $U$,
note that a lower bound of the number of copies of  $\ket{\psi}$ needed to produce a state $U\ket{S}$, where $\ket{S}$ is a stabilizer state, also serves as a lower bound for applying $U$.
We use this strategy in \sec{CnZ-bound} to study the multiply controlled $Z$ gate $C^n Z$, which is a key component of many important algorithms, including part of the reflection step in Grover's search \cite{Grover1996}.
By calculating the stabilizer nullity of the state $\ket{C^{n-1} Z} = C^{n-1} Z \ket{+}^{\otimes n}$ we straightforwardly show that for $n\geq 3$ it is not possible to apply the multiply controlled $Z$ gate $C^{n-1} Z$ with stabilizer operations consuming fewer than $n$ copies of $\ket T$, or $n/2$ copies of $\ket{CS}$, or $n/3$ copies of $\ket{CCZ}$.
For comparison, the most efficient known algorithm \cite{Jones2012} produces $C^{n-1} Z$ with $n-2$ copies of $\ket{CCZ}$.

It is also useful to consider catalysis as a proof technique when establishing lower bounds for computational tasks.
For example, suppose $U$ maps a state $\ket{S}\ket{\Psi}$ to a state 
$\ket{\Phi}\ket{\Psi}$ for stabilizer state $\ket{S}$ and non-stabilizer states $\ket{\Phi}$ and $\ket{\Psi}$.
Then, a resource lower bound for catalytically producing the state $\ket{\Phi}$ must
also serve as a lower bound for implementing $U$.
In \sec{adder-bound}, we use this strategy to lower bound one of the most fundamental quantum arithmetic operations: the adder circuit, which acts on $n$-qubit basis states as $A(\ket{i}\ket{j}) = \ket{i}\ket{i + j}$ with $i+j$ evaluated modulo $2^n$. 
The key is that the modular adder circuit acts on the input $\ket{+}^{\otimes n}\ket{QFT_n}$ to produce $\ket{QFT^*_n}\ket{QFT_n}$, where $\ket{QFT_n}$ is sometimes known as the quantum Fourier state, which becomes $\ket{QFT^*_n}$ under complex conjugation of coefficients in the computational basis.
Crucially, we find that $\nu(\ket{QFT_n}) = \nu(\ket{QFT^*_n})= n-2$ implying the adder circuit cannot be implemented with fewer than $n-2$ copies of $\ket{T}$, or $(n-2)/2$ copies of $\ket{CS}$, or $(n-2)/3$ copies of $\ket{CCZ}$.
The most efficient known implementation of a modular adder uses $n-1$ copies of $\ket{CCZ}$ state \cite{Gidney2018}.\footnote{
After the first posting of this paper, Craig Gidney \cite{GidneyBlog2019} showed that the state $\ket{C^{n} Z}$ can be produced using the $n$-qubit modular adder. 
We reproduce his argument in \app{Gidney-bound-reduction} for completeness. 
Using our (slightly stronger) bounds for $\ket{C^{n-1} Z}$ the adder circuit cannot be implemented with fewer than $n+1$ copies of $\ket{T}$, or $(n+1)/2$ copies of $\ket{CS}$, or $(n+1)/3$ copies of $\ket{CCZ}$.}

\begin{figure}
    \centering
    \includegraphics{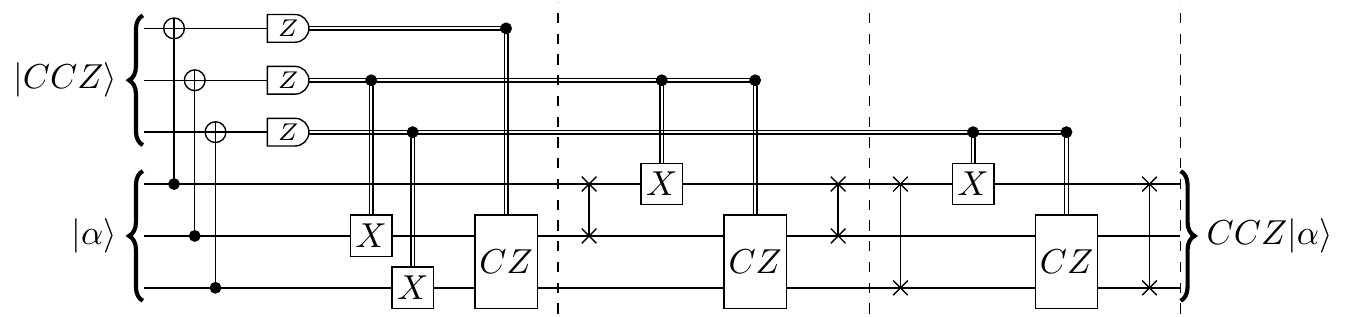}
    \caption{Implementation of the $CCZ$ gate via injection of the $\ket{CCZ}$ state.}
    \label{fig:ccz-injection}
\end{figure}

There are diagonal unitaries for which the cost of the resource state $U\ket{+}^{\otimes n}$ is the same as the cost of implementing the unitary itself.
In particular, this is the case for all diagonal unitaries from the third level of the Clifford hierarchy, as can be seen from the state injection protocol described in \app{diagonal-gate-injection}.

\subsection{Unitary synthesis}

In \sec{unitary-synthesis} we consider the number of resource states required to approximate an arbitrary single-qubit unitary to within diamond-norm precision $\varepsilon$ using stabilizer operations.
Although the importance of unitary synthesis has been long recognized, less is known regarding synthesis strategies exploiting measurements, classical feed-forward and ancilla qubits. Crucially our lower bounds apply to synthesis algorithms in this general setting. 

Loosely, our proof strategy is to select a target unitary $U$ that can just be resolved from the identity given the required precision $\varepsilon$, and lower bound the resources required to produce the state $U\ket{0}$.
As $\varepsilon$ becomes small, $U\ket{0}$ approaches (but never quite reaches) the basis state $\ket{0}$.
Unfortunately, the associated resource requirement divergence is not captured by either of the monotones we have discussed as they do not diverge for states approaching $\ket{0}$.  

To achieve the required lower bound, we find a canonical form for stabilizer circuits applied to resource states (which may be of independent interest  - see \theo{canonical-form}), and turn to a number theoretic approach. 
First, note that if the state $U\ket{0}$ is measured, if $\varepsilon$ is small then the probability $|\langle 1 |U\ket{0}|^2$ must be finite but very close to zero.
Second, we use the canonical form to show that any single-qubit state produced using stabilizer circuits on a fixed number of resource states can only have a discrete set of allowed measurement probabilities, irrespective of the length of the stabilizer circuit and the number of stabilizer ancillas.
This establishes a bound since producing a state $U\ket{0}$ with sufficiently small $|\langle 1 |U\ket{0}|^2$ requires a sufficiently large number of resource states.

Our unitary synthesis results do not hold when a catalyst state is allowed, in contrast to those bounds proven with the stabilizer nullity due to its additive property.
Another complication is that the number of resource states consumed by a protocol is actually a random variable, which can depend on the sequence of measurement outcomes obtained during the protocol.
Our previous bounds held for every possible sequence of measurement outcomes, but for unitary synthesis we have to account for this subtlety.
Let $\mathcal{N}_{\ket{\Psi}}(U,\varepsilon)$ be the number of copies of $\ket{T}$ consumed by a stabilizer circuit that approximates a unitary $U$ to within diamond-norm precision $\varepsilon$.
We show that there exist (diagonal) target unitaries $U$ such that the expectation of $\mathcal{N}_{\ket{\Psi}}(U,\varepsilon)$ satisfies
\[
\mathbf{E} \mathcal{N}_{\ket{T}}(U,\varepsilon) 
\ge
\frac{1}{7}\log_2\at{1/\varepsilon} - \frac{4}{3} .
\] 
The best existing algorithm requires at most $2\log_2\at{1/\varepsilon} + O(\log(\log 1/\varepsilon))$ $T$-gates to implement any diagonal unitary, which comes from a combination of Refs~\cite{CampbellRandom17, hastings2016turning} with \cite{RossSelinger}.
Our bound above is inferred from \theo{approximation-lower-bound}, which is a stronger but more nuanced result.
We also have results that apply to stabilizer circuits with post selection in \sec{unitary-and-state-approx} and \sec{approx-with-CS}.

\subsection{Measurement with probability 1/2}

We can further tighten some lower bounds in a common restricted setting \cite{Gottesman1999, Jones2012, Gidney2018, gidney2018efficient} where arbitrary single-qubit Pauli measurements are not permitted, but only those measurements with outcomes which occur with probability one half.
In \sec{prob-half-bounds}, we introduce a quantity similar to the stabilizer nullity, which we call the \textit{dyadic monotone} $\mu_2(\ket{\psi})$ to prove a number of known subroutines exhibit nearly optimal resource consumption in this setting, narrowing the search for future algorithm improvements. 
The dyadic monotone requires that we restrict to states (including catalyst states) which can be written in the computational basis with coefficients that are integer combinations of $\exp(i\pi j/2^d) / 2^k$ for integers $ d, j, k$. This includes stabilizer states as well as resource states associated with all higher levels of the Clifford hierarchy~\cite{Gottesman1999}.
Among other results we show that in this setting, the well-known circuit \cite{Jones2012} which implements the multiply-controlled-Z gate $C^{n-1} Z$ using $n-2$ copies of $\ket{CCZ}$ is optimal. 
In addition, we show that $n-2$ copies of $\ket{CCZ}$ are required to implement modular adder.
The best known modular adder circuit~\cite{Gidney2018} uses just one more $\ket{CCZ}$ gate.
\footnote{
This bound becomes tight in light of Craig Gidney's blog post \cite{GidneyBlog2019} which showed that the $n$-qubit modular adder can be used to produce the state $\ket{C^{n} Z}$, which requires at least $n-1$ copies of $\ket{CCZ}$ in this setting.}

\section{Some basic techniques}
\label{sec:basic-techniques}

In this section we present some general techniques that are used throughout the paper, and defer our more specialized techniques to later sections and appendices. 
In \sec{state-properties} we introduce a number of properties of quantum states, including a simple but surprisingly powerful monotone under stabilizer operations which we call the stabilizer nullity.
In \sec{stabiziler-extent} we review another monotone under stabilizer operations known as the stabilizer extent which was recently introduced \cite{regula2017convex,bravyi2018simulation}.
In \sec{catalysis} we show that the number of resources required to accomplish a computational goal can depend upon whether or not an additional catalyzing resource state is allowed which is returned unchanged at the end of an algorithm.

\subsection{Stabilizer nullity}
\label{sec:state-properties}

Let us first recall the definition of a stabilizer state and introduce a slight generalization of it.
\begin{dfn} \label{defn:state-stabilizer}
Let $\ket\psi$ be a non-zero $n$-qubit state. 
\emph{The stabilizer} of $\ket\psi$, denoted $\Stab\ket\psi$,
is the sub-group of the Pauli group $\PG{n}$ on $n$ qubits for which $\ket \psi$ is a $+1$ eigenstate,
that is $\Stab{\ket\psi} = \set{P\in \PG{n}:P\ket{\psi} = \ket{\psi}}$.
The states for which the size of the stabilizer is $2^n$ are called \emph{stabilizer states}.
States for which the stabilizer contains only the identity matrix are said to have a \emph{trivial stabilizer}. If Pauli $P$ is in $\Stab{\ket\psi}$, we say that $P$ stabilizes $\ket\psi$.
\end{dfn}
Note that $\Stab\at{\ket\psi}$ can not contain $-I$.
In addition, note that all Pauli group elements contained in $\Stab\at{\ket\psi}$ commute with each other and are Hermitian matrices.
The size of the stabilizer of any state is equal to some power of two.
For any Clifford unitary $C$, the size of $\Stab\ket\psi$ is always equal to the size of $\Stab( C\ket\psi )$.
The size of the stabilizer is also multiplicative for the tensor product of states,
that is $\abs{\Stab\at{\ket\psi\ket\phi}} = \abs{\Stab{\ket\psi}}\cdot\abs{\Stab{\ket\phi}}$. A key quantity that we use throughout the paper is simply related to  $\Stab\at{\ket\psi}$.
\begin{dfn}[Stabilizer nullity] \label{defn:stabilizer-monotone}
	Let $\ket\psi$ be a non-zero $n$-qubit state. 
	\emph{\bf The stabilizer nullity} of $\ket\psi$
	is $\nu(\ket\psi) = n - \log_2|\Stab\ket\psi|$.
\end{dfn}
Let us next see that the stabilizer nullity is non-increasing when multiple-qubit Pauli measurements are applied.
\begin{prop} \label{prp:measurement-and-stabilizer}
Let $\ket\psi$ be a non-zero $n$-qubit state and let $P$ be an $n$-qubit Pauli matrix
and suppose that the probability of a $+1$ outcome when measuring $P$ on $\ket\psi$ is non-zero.
Then there are two alternatives for the state $\ket\phi$ after the measurement: 
either $\abs{\Stab\ket\phi} = \abs{\Stab\ket\psi}$,
or $\abs{\Stab\ket\phi} \ge 2\abs{\Stab\ket\psi}$, both of which satisfy $\nu(\ket\phi) \leq \nu(\ket\psi)$.
\end{prop}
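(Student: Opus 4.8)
The plan is to describe the post-measurement state explicitly and then track which Pauli stabilizers survive. Writing $\Pi_+ = (I+P)/2$ for the projector onto the $+1$ eigenspace of $P$, the non-zero probability assumption guarantees $\Pi_+\ket\psi \neq 0$, so the post-measurement state is the well-defined unit vector $\ket\phi = \Pi_+\ket\psi / \nrm{\Pi_+\ket\psi}$. The first observation is that $P$ itself stabilizes $\ket\phi$, since $P\Pi_+ = \Pi_+$ forces $P\ket\phi = \ket\phi$; note also that $-P \notin \Stab\ket\psi$, for otherwise $P\ket\psi = -\ket\psi$ would give $\Pi_+\ket\psi = 0$, contradicting the assumption.

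Next I would split $S := \Stab\ket\psi$ according to commutation with $P$. Because any two elements of a Pauli group either commute or anticommute, the subset $S_+ \subseteq S$ of elements commuting with $P$ is a subgroup, and it is either all of $S$ or a subgroup of index two. Every $g \in S_+$ stabilizes $\ket\phi$, since $g\Pi_+\ket\psi = \Pi_+ g\ket\psi = \Pi_+\ket\psi$. I would then argue $\abs{\Stab\ket\phi} \ge \abs{S}$ in both cases: if $S_+ = S$ then $S \subseteq \Stab\ket\phi$ directly; if $S_+$ has index two then $P \notin S$ (an element of $S$ anticommuting with $P$ would otherwise contradict that stabilizer elements commute pairwise), so the group $\ip{S_+, P}$ has order $2\abs{S_+} = \abs{S}$ and is contained in $\Stab\ket\phi$.

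Finally, I would invoke the fact recalled before the statement that the order of any stabilizer group is a power of two. Since $\abs{\Stab\ket\phi} \ge \abs{\Stab\ket\psi}$ with both orders powers of two, their ratio is a power of two that is at least one, which yields the stated dichotomy: either $\abs{\Stab\ket\phi} = \abs{\Stab\ket\psi}$, or $\abs{\Stab\ket\phi} \ge 2\abs{\Stab\ket\psi}$. The inequality on the nullity is then immediate, because $\ket\phi$ and $\ket\psi$ are hosted on the same $n$ qubits, so $\nu(\ket\phi) = n - \log_2\abs{\Stab\ket\phi} \le n - \log_2\abs{\Stab\ket\psi} = \nu(\ket\psi)$.

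The steps are all short, and the only point demanding care is the index-two case, where I must verify simultaneously that $\ip{S_+, P}$ is genuinely a group of order $\abs{S}$ sitting inside $\Stab\ket\phi$ and that adjoining $P$ exactly compensates for the lost anticommuting coset. The dichotomy itself is then essentially free from the power-of-two structure, so I expect that to be the cleanest part rather than an obstacle.
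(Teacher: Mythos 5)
Your proof is correct, but it takes a genuinely different route from the paper's. The paper argues by cases on how $P$ sits relative to $\Stab\ket\psi$: if $P\in\Stab\ket\psi$ the state is unchanged; if $P$ commutes with all of $\Stab\ket\psi$ but lies outside it, the stabilizer visibly at least doubles (it contains $\Stab\ket\psi\cup P\Stab\ket\psi$); and if $P$ anticommutes with some $Q\in\Stab\ket\psi$, the paper shows the outcome probability is exactly $1/2$ and that the post-measurement map acts on $\ket\psi$ as the Clifford unitary $(I+PQ)/\sqrt{2}=\exp(i\pi P'/4)$ with $P'=-iPQ$, so the stabilizer size is \emph{exactly} preserved. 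You instead run a uniform group-theoretic count: the commutant $S_+$ of $P$ in $S=\Stab\ket\psi$ has index at most two, every element of $S_+$ survives the measurement, and adjoining $P$ yields a subgroup $\ip{S_+,P}\subseteq\Stab\ket\phi$ of order at least $\abs{S}$ in either case; the stated dichotomy then falls out of the fact that stabilizer orders are powers of two, and the nullity inequality is immediate since $\ket\phi$ lives on the same $n$ qubits. Your argument is more elementary and self-contained — no probability computation and no identification of a Clifford unitary — and it handles all commutation cases at once; the index-two bookkeeping you flag as the delicate step is indeed sound ($P\notin S$ in that case, so $\ip{S_+,P}$ has order $2\abs{S_+}=\abs{S}$). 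What the paper's approach buys in exchange is sharper per-case information: in the anticommuting case the stabilizer size is exactly equal, not merely bounded below, and that probability-$1/2$/Clifford-unitary observation is reused later in the paper (e.g., Proposition~\ref{prop:measurement-as-unitary} in the canonical-form appendix), which is presumably why the authors prove it that way. Both arguments fully establish the proposition as stated.
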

\begin{proof}
First consider the simple case when $P$ is in $\Stab\ket\psi$.
In this case, the ``$+1$'' measurement outcome occurs with probability $1$ and $\ket\psi$ is unchanged.
When $P$ is not in $\Stab\ket\psi$ we consider two alternatives.
The first alternative is that $P$ commutes with all elements of $\Stab\ket\psi$,
then $\Stab\ket\phi$ contains $\Stab\ket\psi \cup P\Stab\ket\psi$ and its size is at least double
that of $\Stab\ket\psi$.
The second alternative is that $P$ anti-commutes with some element $Q$ from $\Stab\ket\psi$.
In this case, we will see that the size of the stabilizer does not change as a result of the measurement. 
First note that in this case the probability of the +1 measurement outcome is $1/2$, because the probability of the $+1$ outcome is $\bra\psi(I+P)\ket\psi/2$ and equal to 
$\bra\psi Q (I+P) Q\ket\psi/2 = \bra\psi (I-P)\ket\psi/2$ which is the probability of the $-1$ outcome, where we have used $Q\ket \psi = \ket\psi$ and $QPQ=-P$.
Therefore $\ket\phi = (I + P)/\sqrt{2} \ket\psi$, where we have fixed the normalization such that $\bra \phi \phi \rangle = \bra \psi \psi \rangle$. 
Using that $Q$ stabilizes $\ket\psi$ we also see that, $\ket\phi = (I + PQ)/\sqrt{2} \ket\psi$.
Finally we observe that $(I + PQ)/\sqrt{2}$ is a Clifford unitary equal to $\exp\at{i\pi P'/4}$ for
Hermitian Pauli matrix $P' = -i PQ$. 
As $\ket \phi$ and $\ket \psi$ differ by a Clifford, we conclude that $\Stab \ket\psi$ and $\Stab \ket\phi$ are the same size.
\end{proof}
One might wonder, if the second alternative in the proposition statement above should be 
$\abs{\Stab\ket\phi} = 2\abs{\Stab\ket\psi}$ instead of 
$\abs{\Stab\ket\phi} \ge 2\abs{\Stab\ket\psi}$.
Here is an example that shows that the size of the stabilizer can more than double after one measurement.
Consider an initial state $\ket\psi= \ket{T}\ket{T}$ states and measure the $+1$ outcome of $-Z\otimes Z$.
The resource state $\ket{T}$ has a trivial stabilizer and so do its tensor powers, such that $|\Stab \ket\psi|=1$.
The result of the measurement is $\ket\phi =\at{\ketsm{01}+\ketsm{10}}/\sqrt{2}$ which is a two-qubit stabilizer state which therefore has $|\Stab \ket\phi|=4$.

By Prop.~\ref{prp:measurement-and-stabilizer} and the other aforementioned properties of $\Stab\ket\psi$, we see that the stabilizer nullity $\nu$ is invariant under Clifford unitaries, is non-increasing under Pauli measurements, and is additive under the tensor product.
Moreover, as $\nu(\ket\psi) = 0$ when $\ket{\psi}$ is a stabilizer state, the stabilizer nullity is invariant under the inclusion or removal of stabilizer states.
Another useful definition is the Pauli spectrum.

\begin{dfn}[Pauli spectrum] \label{defn:Pauli-spectrum}
	Let $\ket\psi$ be a non-zero $n$-qubit state. 
	\emph{The Pauli spectrum} $\Spec\ket\psi$ of $\ket\psi$ is:
	\begin{eqnarray}
	\Spec{\ket\psi} = \left\{ \frac{|\bra{\psi} P \ket{\psi}|}{\bra{\psi}\psi \rangle},~~~ \forall ~P \in  \{ I,X,Y,Z \}^{\otimes n} \right\}.
	\end{eqnarray}
\end{dfn}

The Pauli spectrum is a list of $4^n$ real numbers each between $0$ and $1$ which is invariant under Clifford gates.
Consider the following example

\begin{prop} \label{prp:RotationStatePauliSpectrum}
	The Pauli spectrum of the state $\ket{\theta}=(\ket{0}+e^{i\theta}\ket{1})/\sqrt{2}$ is $\{\cos{\theta},\sin{\theta},0 \}$.
	The state $\ket{\theta}$ is therefore a stabilizer state only for $\theta = m ~\pi/2$ for some integer $m$.
\end{prop}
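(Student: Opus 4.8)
The plan is to compute the four single-qubit Pauli expectation values directly and then translate the result into the stabilizer characterization. Since $\ket{\theta}$ lives on one qubit, the only Paulis to consider are $I,X,Y,Z$, and the state is already normalized so $\bk{\theta}{\theta}=1$. First I would record $\bra{\theta}I\ket{\theta}=1$ (trivially, for any normalized state), and then evaluate the three nontrivial overlaps. A short calculation using $X\ket{0}=\ket{1}$, $Y\ket{0}=i\ket{1}$, $Z\ket{1}=-\ket{1}$ and their conjugates gives $\bra{\theta}X\ket{\theta}=\tfrac12(e^{i\theta}+e^{-i\theta})=\cos\theta$, $\bra{\theta}Y\ket{\theta}=\tfrac{i}{2}(e^{-i\theta}-e^{i\theta})=\sin\theta$, and $\bra{\theta}Z\ket{\theta}=0$ since the two diagonal contributions cancel. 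Taking absolute values yields the spectrum values $\{1,\abs{\cos\theta},\abs{\sin\theta},0\}$; dropping the identity entry, which equals $1$ for every normalized state, leaves the advertised set $\{\cos\theta,\sin\theta,0\}$.

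For the stabilizer claim, I would use that a single-qubit pure state is a stabilizer state precisely when $\abs{\Stab\ket\theta}=2$, i.e.\ when some nontrivial Hermitian Pauli $P$ satisfies $P\ket\theta=\ket\theta$. Since such a $P$ has eigenvalues $\pm1$, this is equivalent to $\abs{\bra\theta P\ket\theta}=1$ for some $P\in\set{X,Y,Z}$; in other words, $\ket\theta$ is a stabilizer state exactly when one of the nontrivial spectrum values equals $1$. Because $\bra\theta Z\ket\theta=0$ always, the condition reduces to $\abs{\cos\theta}=1$ or $\abs{\sin\theta}=1$. Solving, $\abs{\cos\theta}=1$ gives $\theta\equiv 0\pmod\pi$ and $\abs{\sin\theta}=1$ gives $\theta\equiv\pi/2\pmod\pi$, and the union of these two families is exactly $\theta=m\,\pi/2$ for integer $m$.

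I expect the arithmetic to be entirely routine; the only points needing care are bookkeeping ones. The first is reconciling the full list of four spectrum values with the three-element set in the statement: the identity always contributes the constant entry $1$, so it is suppressed, and the displayed entries $\cos\theta,\sin\theta$ are shorthand for their absolute values. The second, and genuinely conceptual, step is the equivalence between ``some spectrum value equals $1$'' and ``stabilizer state''; for a single qubit this is immediate from the eigenvalue structure of Hermitian Paulis together with the normalization identity $\cos^2\theta+\sin^2\theta=1$, which ensures that whenever one of $\abs{\cos\theta},\abs{\sin\theta}$ reaches $1$ the other is forced to $0$, consistent with the Bloch vector $(\cos\theta,\sin\theta,0)$ pointing along a single coordinate axis.
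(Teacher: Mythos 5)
Your proposal is correct and follows exactly the route the paper intends: the paper's proof is simply the one-line remark that the claim ``follows from direct verification of the Pauli spectrum,'' and your explicit computation of $\bra{\theta}X\ket{\theta}=\cos\theta$, $\bra{\theta}Y\ket{\theta}=\sin\theta$, $\bra{\theta}Z\ket{\theta}=0$ together with the eigenvalue argument for the stabilizer criterion is precisely that verification spelled out. Your bookkeeping points (suppressing the identity entry and reading the listed values as absolute values, per Definition~\ref{defn:Pauli-spectrum}) are also the right way to reconcile the statement with the definition.
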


\begin{proof}
	This follows from direct verification of the Pauli spectrum of $(\ket{0}+e^{i\theta}\ket{1})/\sqrt{2}$.
\end{proof}

We will make further use of the Pauli spectrum later in the paper, but for now note that the number of 1s in the Pauli spectrum of $\ket\psi$ is $|\Stab\ket{\psi}|$.

\subsection{Stabilizer extent}
\label{sec:stabiziler-extent}

In \defn{stabilizer-monotone} we introduced the stabilizer nullity $\nu$ which is additive and monotonic under stabilizer operations.
Another monotone under stabilizer operations known as the \emph{stabilizer extent} $\xi$ was recently introduced \cite{regula2017convex,bravyi2018simulation}. 
The stabilizer extent has a number of other very desirable properties shown in \cite{regula2017convex}.

\begin{dfn}[stabilizer extent]
\label{defn:stabilizer-extent}
For an arbitrary pure state $\ket{\psi}$, 
\emph{\bf the stabilizer extent}, denoted $\xi(\ket{\psi})$, is
\begin{align}
\xi(\ket{\psi})=\min||(c_1,\ldots,c_k)||_1^2 \text{ s.t. } \ket{\psi}=\sum_{\alpha=1}^k c_\alpha \ket{\phi_\alpha}.
\end{align}
where the minimization is over all complex linear combinations of stabilizer states $\{\ket{\phi_\alpha}\}$.
\end{dfn}

It is clearly submultiplicative but for many interesting cases it has been proven to be strictly multiplicative. More precisely,

\begin{lem}
\label{lem:stabilizer-extent-mult}
The stabilizer extent is multiplicative with respect to a given set of states $\{ \ket{\psi_1}, \ket{\psi_2},\dots \ket{\psi_\ell} \}$, such that, $\xi(\ket{\psi_1} \ket{\psi_2}  \ldots  \ket{\psi_\ell})=\prod_{j=1}^\ell \xi(\ket{\psi_j})$, if for each state $\ket{\psi_j}$ at least one of the following conditions is satisfied ($\ket{\phi_j}$ is always a stabilizer state):
\begin{enumerate}
    \item $\ket{\psi_j}$ is a state of at most three qubits,
	\item There exist states $\ket{\omega_j}$ and $\ket{\phi_j}$ such that $\xi(\ket{\psi_j})=\frac{|\langle \psi_j \vert \omega_j \rangle|^2}{\max_{\phi_j} |\bra{\omega_j} \phi_j \rangle|^2  }$  \textbf{and} $ |\bra{\omega_j} \phi_j \rangle|^2 \geq 1/4$.
\end{enumerate}
\end{lem}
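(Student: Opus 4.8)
The plan is to prove the two inequalities separately. The bound $\xi(\ket{\psi_1}\cdots\ket{\psi_\ell}) \le \prod_j \xi(\ket{\psi_j})$ is immediate: taking optimal stabilizer decompositions $\ket{\psi_j} = \sum_\alpha c^{(j)}_\alpha \ket{\phi^{(j)}_\alpha}$ and tensoring them expresses the product as a stabilizer decomposition whose coefficient vector is the tensor product of the individual ones, so its squared $\ell_1$-norm is $\prod_j \|c^{(j)}\|_1^2 = \prod_j \xi(\ket{\psi_j})$. The content of the lemma is therefore the reverse inequality under the stated hypotheses.

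For the reverse inequality I would use the dual characterization of the extent established in \cite{regula2017convex,bravyi2018simulation}, namely $\sqrt{\xi(\ket{\psi})} = \max_{\ket{\omega}} |\bk{\omega}{\psi}| / \lambda(\ket{\omega})$, where $\lambda(\ket{\omega}) := \max_{\ket{\phi}\,\text{stab.}}|\bk{\phi}{\omega}|$ is the largest overlap of $\ket{\omega}$ with a stabilizer state (the dual norm of the stabilizer atomic norm $\sqrt{\xi}$). Choosing each $\ket{\omega_j}$ to be an optimal dual witness for $\ket{\psi_j}$ and testing $\ket{\Psi} = \ket{\psi_1}\cdots\ket{\psi_\ell}$ against the product witness $\ket{\Omega} = \ket{\omega_1}\cdots\ket{\omega_\ell}$ gives $\sqrt{\xi(\ket{\Psi})} \ge |\bk{\Omega}{\Psi}|/\lambda(\ket{\Omega}) = \prod_j|\bk{\omega_j}{\psi_j}| / \lambda(\ket{\Omega})$, where the numerator has factorized. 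It therefore suffices to show the submultiplicativity of the stabilizer fidelity on these witnesses, $\lambda(\ket{\Omega}) \le \prod_j \lambda(\ket{\omega_j})$; combined with the trivial reverse bound coming from product stabilizer states, this says the optimal global stabilizer state for $\ket{\Omega}$ may be taken to be a product of local ones.

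The crux is this last submultiplicativity, which can fail in general because an \emph{entangled} stabilizer state may overlap $\ket{\Omega}$ better than any product stabilizer state; the two hypotheses are exactly what rule this out. I would exploit the entanglement structure of bipartite stabilizer states: across any cut, such a state has a flat Schmidt spectrum $\ket{\phi} = 2^{-s/2}\sum_{k=1}^{2^s}\ket{a_k}\ket{b_k}$, whose Schmidt vectors $\ket{a_k}$, $\ket{b_k}$ are themselves stabilizer states related by Pauli operators, and within each two-dimensional effective system the relevant single-qubit-type stabilizer states remain global stabilizer states. Expanding $\bk{\phi}{\Omega}$ in this basis and using $|\bk{a_k}{\omega_1}| \le \lambda(\ket{\omega_1})$, the prefactor $2^{-s/2}$ from the flat spectrum competes against the growing number of Schmidt terms, while the overlap constraints from the effective-system stabilizer states bound the individual amplitudes. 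A short optimization then shows that for entangled witnesses ($s \ge 1$) the resulting overlap cannot exceed $\prod_j\lambda(\ket{\omega_j})$ provided each $\lambda(\ket{\omega_j})^2 = |\bk{\omega_j}{\phi_j}|^2 \ge 1/4$; this is the origin of the threshold in condition~2. For factors covered by condition~1, I would instead invoke the fact, established by the exhaustive structural analysis of stabilizer states on at most three qubits in \cite{regula2017convex,bravyi2018simulation}, that the optimal stabilizer state for a product already factorizes across such a factor.

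The main obstacle I anticipate is this threshold optimization: one must control the overlap of a completely arbitrary entangled stabilizer state of any Schmidt rank $2^s$ with the product witness, rather than just a single Bell-pair cut, and verify that $1/4$ is exactly the value below which the product witness ceases to dominate. Organizing the argument cleanly for $\ell > 2$ factors — by grouping the factors across a single bipartition, applying the two-party bound, and inducting — and confirming that the two conditions may be mixed freely across different factors, is where the care is required.
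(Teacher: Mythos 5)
The paper never proves this lemma: it is stated as a summary of results imported from \cite{regula2017convex,bravyi2018simulation} (note the preceding sentence, ``for many interesting cases it has been proven to be strictly multiplicative''), so there is no in-paper argument to compare yours against. What you have written is, in effect, a reconstruction of the proof strategy of those cited works, and you have the skeleton right: the easy direction by tensoring optimal decompositions; the dual characterization $\sqrt{\xi(\ket{\psi})}=\max_{\ket{\omega}}\abs{\bk{\omega}{\psi}}/\lambda(\ket{\omega})$ with $\lambda(\ket{\omega})=\max_{\phi\,\mathrm{stab}}\abs{\bk{\phi}{\omega}}$; the reduction of the hard direction to showing that the product of optimal witnesses remains optimal, i.e.\ $\lambda(\ket{\omega_1}\cdots\ket{\omega_\ell})\le\prod_j\lambda(\ket{\omega_j})$ (multiplicativity of stabilizer fidelity on witnesses); and the role of the flat Schmidt spectrum of bipartite stabilizer states together with the $1/4$ threshold in ruling out entangled stabilizer states overlapping the product witness too well. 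This is exactly how the cited references proceed.

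As a standalone proof, however, your proposal stops short at precisely the points where the real work lies. The ``short optimization'' that controls the overlap of an arbitrary Schmidt-rank-$2^s$ stabilizer state with the product witness, and shows $1/4$ is the correct threshold, is asserted rather than carried out --- and it is not short: one needs the structural fact that a bipartite stabilizer state is local-Clifford equivalent to a tensor product of Bell pairs and product stabilizer states, and then a careful bookkeeping of how each Bell pair trades the $2^{-1/2}$ amplitude against two Schmidt terms, which is where the fidelity bound enters. Likewise, for condition~1 you simply cite ``the exhaustive structural analysis'' of $\le 3$-qubit states in the references, which is circular if the goal is to prove the lemma rather than to quote it; that case analysis (multiplicativity of stabilizer fidelity for all states of at most three qubits) is the main technical theorem of \cite{bravyi2018simulation} in this context. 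Your concluding remarks about inducting over bipartitions and mixing the two conditions across factors are the right way to assemble the pieces, but given that the paper itself treats the lemma as external input, the honest summary is: your outline matches the literature's proof, and its gaps coincide with the literature's content.
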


This may actually hold more generally as we do not know of any counterexamples to multiplicativity of the stabilizer extent.

\subsection{Catalysis} 
\label{sec:catalysis} 

When considering the action of a sequence of stabilizer operations, it is important to consider scenarios in which another resource state is present which is returned at the end of the sequence.
As the additional resource state is not consumed, we refer to it as a \textit{catalyst}.
By considering restrictions of the entries of density matrices, we prove here that a broad class of resource state conversions are impossible without catalysis, but can be achieved with catalysis.
Similar results have been proven in the past for a more restricted set of scenarios \cite{Campbell2010,Selinger2012,aqft2018}.

It will be useful to recall some standard number fields: 
\begin{eqnarray*}
\q\at{i} &=& \set{ a_0 + i a_1 : a_0, a_1 \in \q },\\
\q\at{\zeta_8} &=& \set{a_0 + \zeta_8 a_1 + \ldots + \zeta^3_8 a_3 : a_k \in \q}, \zeta_8 = \exp\at{2\pi i / 8},\\
\q\at{\zeta_{16}} &=& \set{a_0 + \zeta_{16} a_1 + \ldots + \zeta^7_{16} a_7 : a_k \in \q}, \zeta_{16} = \exp\at{2\pi i / 16},
\end{eqnarray*}
where $\q$ is the set of rational numbers. 
As fields, each of these sets is closed under addition, multiplication, negation and taking the inverse. 
Note that $\q\at{i}$ is a subset of both $\q\at{\zeta_8}$ and $\q\at{\zeta_{16}}$.
It is straightforward to verify that in the computational basis $\ket{CS}$ and $\ket{CCZ}$ states have density matrices with all entries in $\q\at{i}$, but that the density matrix for the $\ket{T}$ state has some entries outside of $\q\at{i}$ -- instead all its entries are in $\q\at{\zeta_8}$.
More generally, the density matrix for 
$\ket{T}^{\otimes k}\otimes \ket{0}^{n-k}$ is given by: 
\[
\frac{1}{2^k}
\sum_{a,b \in \set{0,1}^k}
  \zeta_8^{\mathrm{weight}\at{a}-\mathrm{weight}\at{b}} \ket{a}\bra{b} \otimes \at{\ket{0}\bra{0}}^{\otimes n-k}.
\]
where $\mathrm{weight}\at{a}$ is the Hamming weight of the bit string $a$.
Similarly, we observe that the density matrix of $\ket{\sqrt{T}}^{\otimes k}\otimes \ket{0}^{n-k}$ is given by: 
\[
\frac{1}{2^k}
\sum_{a,b \in \set{0,1}^k}
  \zeta_{16}^{\mathrm{weight}\at{a}-\mathrm{weight}\at{b}} \ket{a}\bra{b} \otimes \at{\ket{0}\bra{0}}^{\otimes n-k}.
\]

Consider the following theorem, which rules out a number of uncatalysed resource state conversions.

\begin{thm}\label{thm:number-field-constraint} Let $F$ be a number field which contains $\q\at{i}$ and which is closed under complex conjugation. 
Any stabilizer circuit applied to a density matrix with all entries in $F$ produces a density matrix with all entries in $F$, with both density matrices written in the computational basis. 
\end{thm}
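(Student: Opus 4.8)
The plan is to show that each elementary operation comprising a stabilizer circuit---adjoining a stabilizer ancilla, applying a Clifford gate, and performing a Pauli measurement---maps a density matrix with all entries in $F$ to another such density matrix, and then to conclude by composing these steps along the (possibly adaptive) circuit. Writing $\rho$ for the input density matrix in the computational basis with every $\rho_{ab}\in F$, I treat the three elementary operations in turn and compose.

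First, adjoining an $m$-qubit stabilizer ancilla $\sigma$ replaces $\rho$ by $\rho\otimes\sigma$. Since any stabilizer state has density matrix $\sigma=2^{-m}\sum_{P\in\Stab\sigma}P$ and every Pauli matrix has entries in $\set{0,\pm1,\pm i}\subseteq\q\at{i}\subseteq F$, the entries of $\sigma$ lie in $F$; hence so do the products $\rho_{ab}\sigma_{cd}$ forming $\rho\otimes\sigma$.

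Second, for a Clifford gate I would decompose it into the standard generators $H$, $S$, $\mathrm{CNOT}$ and verify that each preserves the $F$-property under the conjugation $\rho\mapsto G\rho G^\dagger$; composing generators then settles an arbitrary Clifford by induction on circuit length. The gates $S=\operatorname{diag}(1,i)$ and $\mathrm{CNOT}$, together with their adjoints, have entries in $\set{0,\pm1,\pm i}$ and $\set{0,1}$ respectively, so every entry of $G\rho G^\dagger$ is a finite sum of products of elements of $F$ and again lies in $F$. The only subtle generator is $H$, whose entries $\pm1/\sqrt2$ are not in $\q\at{i}$, and I expect this to be the main obstacle to a naive argument. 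The key observation is that $H$ enters only through the conjugation $H\rho H^\dagger$, so the two factors of $1/\sqrt2$ always pair up: since $H=H^\dagger$ is real, each entry of $H\rho H$ equals $\tfrac12$ times an integer combination of entries of $\rho$, and $\tfrac12\in\q\subseteq F$ keeps the result in $F$. This is exactly why the hypothesis $\q\at{i}\subseteq F$ suffices rather than the stronger $\q\at{\zeta_8}\subseteq F$: one never needs $1/\sqrt2$ in isolation.

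Third, a Pauli measurement of $P$ with outcome $\pm1$ sends $\rho$ to the normalized state $\Pi_\pm\rho\Pi_\pm/p_\pm$, where $\Pi_\pm=\tfrac12(I\pm P)$ and $p_\pm=\operatorname{Tr}(\Pi_\pm\rho)$. The projector $\Pi_\pm$ has entries in $\q\at{i}\subseteq F$, so $\Pi_\pm\rho\Pi_\pm$ has entries in $F$, and its trace $p_\pm$ is a sum of such entries and so also lies in $F$. Because the outcome occurs we have $p_\pm\neq0$, and here I use that $F$ is a \emph{field}: $1/p_\pm\in F$, so the normalized state retains entries in $F$. Closure of $F$ under complex conjugation guarantees that the Hermiticity relations $\rho_{ba}=\overline{\rho_{ab}}$ remain consistent with $F$ and is what makes the theorem useful when applied to conjugate target states. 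Finally, discarding ancilla qubits is a partial trace, a sum of entries, which stays in $F$. Composing all elementary steps along the circuit---each depending only on previously recorded outcomes---shows the output density matrix has all entries in $F$, completing the proof.
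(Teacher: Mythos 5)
Your proof is correct and follows essentially the same route as the paper's: show that every elementary stabilizer operation acts with coefficients in $F$ (Clifford generators and Pauli projectors having entries in $\q\at{i}\subseteq F$), and use closure of $F$ under division to handle measurement normalization. The only cosmetic difference is the treatment of $H$: the paper rescales it by the global phase $1/(1+i)$ so that its entries lie in $\q\at{i}$ outright, while you pair the two $1/\sqrt{2}$ factors inside the conjugation $H\rho H^{\dagger}$ --- these are the same observation in different clothing.
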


For example, no stabilizer circuit on any number of $\ket{CS}$ or $\ket{CCZ}$ states (which have density matrices with all entries in $\q\at{i}$) can be used to produce a $\ket{T}$ state (which has a density matrix with all entries in $\q\at{\zeta_8}$). Similarly, no stabilizer circuit on any number of $\ket{T}$ states can be used to produce a $\ketsm{\sqrt{T}}$ state (with entries in $\q\at{\zeta_{16}}$).

\begin{proof} 
Suppose our stabilizer circuit acts upon $n$ qubits initially in the $\ket{0}$ state. 
Clearly the density matrix $\rho_{\text{initial}} = \at{\ket{0}\bra{0}}^{\otimes n}$ has entries over $\q$.
We point out that all Clifford unitaries can be written as matrices with entries over $\q\at{i}$, and therefore as matrices with entries over $F$. 
Explicitly, the Clifford group is generated by $H$, $CZ$ and $S$
\begin{eqnarray*}
H & = & \frac{1}{1+i}\begin{bmatrix}
    1       & 1 \\
    1       & -1
\end{bmatrix}, \\
S & : & \ket{0} \mapsto \ket{0},\, \ket{1} \mapsto i\ket{1}, \\
CZ & : & \ket{ab} \mapsto \at{-1}^{a \wedge b} \ket{ab}.
\end{eqnarray*}
Given that any gate $U$ in the circuit is a tensor product of a unitary with entries over $F$ and $I$ and $\rho$ has entries over $F$ the product $U \rho U ^\dagger $ is a density matrix with entries over $F$. 
Therefore applying the gates in the circuit preserves required property. 

Note that measurement with or without post-selection can be described as: 
\begin{eqnarray*}
\rho & \mapsto & \frac{P \rho P}{\mathrm{Tr}{\rho P}}, \\
\rho & \mapsto & \sum_{ P \in \mathcal{P} } P \rho P.
\end{eqnarray*}
The projectors $P$ above correspond to measurement in the computational basis and therefore can be written as matrices with entries over $\q\at{i}$ and therefore over $F$. 
The product of matrices over $F$ is a matrix over $F$. 
The trace of a matrix over $F$ is also in $F$ by the definition of a field. 
The quotient of a matrix over $F$ and an element of $F$ is again a matrix over $F$ because any field is closed under the division operation. 
This completes the proof. 
\end{proof}

Importantly, the no-go results of \theo{number-field-constraint} can be evaded by including a catalyst state. For example in \fig{CStoT} we show how a $\ket{CS}$ state can be used to produce a $\ket{T}$ state by using an additional $\ket{T}$ state which is not consumed.
Some examples of catalytic conversion have been noted before \cite{Campbell2010,Selinger2012}, and this particular example is Clifford equivalent to that in \cite{Campbell2010}. 
In \sec{Conversions} we introduce two new catalytic conversion families.


\begin{figure}[ht!]
\center
	\includegraphics{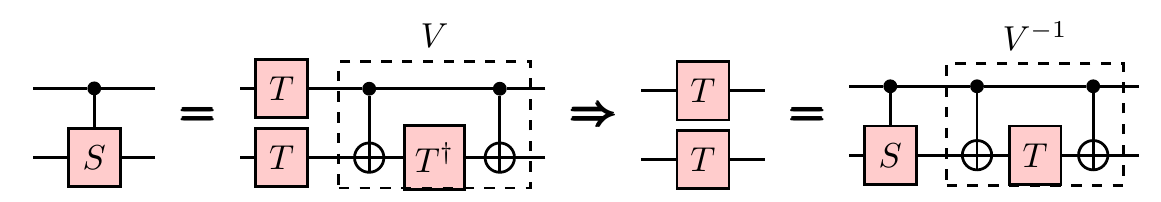}
	\caption{Upon applying the rightmost circuit to the $\ket{+}\ket{+}$ state, $\ket{CS}$ is converted into two copies of $\ket{T}$, using an additional $T$-gate. 
	In terms of resource states $\ket{CS}\ket{T} \longrightarrow \ket{T}\ket{T}$, or equivalently $\ket{CS}\implies \ket{T}$.
	From \theo{number-field-constraint}, the conversion of $\ket{CS}$ into $\ket{T}$ would be impossible without catalysis.
	The leftmost gate identity is the standard implementation of a $CS$-gate from $T$-gates \cite{Barenco1995}.}
	\label{fig:CStoT}
\end{figure}

Clearly we must distinguish scenarios in which catalysts are allowed from those in which they are not allowed. 
Throughout the remainder of the paper we use the following notation of single and double arrows.

\begin{dfn}[Conversion notation]
\label{defn:conversion-equations}
The equation $\ket{A} \rightarrow \ket{B}$ indicates that resource state $\ket{A}$ can be converted into resource state $\ket{B}$ with stabilizer operations in the absence of a catalyst. 
On the other hand, $\ket{A} \catalyse{\ketsm{C}} \ket{B}$, which is equivalent to $\ket{A}\ketsm{C} \rightarrow \ket{B}\ketsm{C}$, indicates the conversion can proceed with the use of a catalyst $\ketsm{C}$ (which we sometimes omit above the arrow).
When a process is impossible, we strike through the arrow, for example $\ket{A} \not\catalyse{} \ket{B}$ signifies that $\ket{A}$ cannot be converted to $\ket{B}$ by stabilizer operations even in the presence of an arbitrary catalyst.
In cases involving multiple copies of a given state such as $\ket{A}^{\otimes 2} \catalyse{\ketsm{C}} \ket{B}$, we sometimes write $2 \ket{A} \catalyse{\ketsm{C}} \ket{B}$ to avoid clutter.
\end{dfn}

\section{Conversion between resource states}
\label{sec:Conversions}

In this section we collect several results on the inter-conversion of resource states.
These state conversion bounds and algorithms put the computational task lower bounds in later sections on more solid footing by allowing us to analyze the cost in terms of different input resource states.
We also foresee our conversion results being useful in a much broader context, such as allowing a meaningful comparison of protocols that distill different types of resource states.

We saw in \sec{catalysis} that some resource conversions are impossible without access to a non-consumable resource often called a catalyst. 
Here we give two families of catalyzed conversion circuits, generalizing the previously known examples  \cite{Campbell2010,Selinger2012,Gidney2018}. 
First, in \sec{polynomialcatalysis} we introduce a general set of techniques for catalytic conversion of Clifford magic states.
Second, in \sec{adder-conversion-protocols} we specify the use of adder circuits to perform catalysis, building on ideas of Gidney~\cite{Gidney2018}. Finally, in \sec{montonebounds}, we utilize the monotones discussed in \sec{basic-techniques} to bound the optimal rates for conversion of resource states. 
One interesting observation is that although many pairs of resource states can be exactly converted into one another, it is impossible to do so without loss, even asymptotically. 
This complements the recent work~\cite{wang2018efficiently} which applies to odd-prime qudits but not qubits.

\subsection{Phase polynomial protocols}
\label{sec:polynomialcatalysis}

\begin{figure}[ht]
    \centering
    \includegraphics{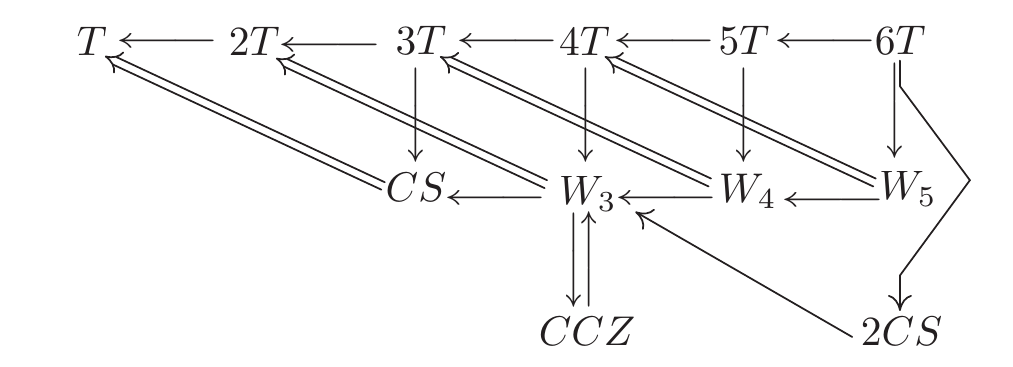}
    \caption{Phase polynomial conversion protocols for states of the form $\ket{U}:=U \ket{+}^{\otimes n}$ where $U$ is any diagonal unitary in the 3$^{\mathrm{rd}}$ level of the Clifford hierarchy.  
    The single arrow $\longrightarrow$ shows when a conversion can be directly realized, whereas a double arrow $\implies$ indicates catalysis is used (and necessary).
    A subclass of $n$-qubit Clifford magic states are denoted $\ket{W_n}$ and arise from the diagonal unitary $W_n = \sum_{x}  \exp(i \pi g(x) / 4   ) \kb{x}{x} ,\quad g(x) = ( \oplus_{i=1}^n x_i )   + \sum_{i=1}^n x_i$, where $\oplus$ denotes addition modulo 2. }
    \label{fig:ConversionGraphic}
\end{figure}    
    
Here we introduce a general set of techniques for catalytic conversion of Clifford magic states.
Our main results are summarized in \fig{ConversionGraphic}.
Recall that for any diagonal unitary $U$ in the 3$^{\mathrm{rd}}$ level of the Clifford hierarchy, the resource state $\ket{U}:=U \ket{+}^{\otimes n}$ can be used to deterministically apply $U$ and is known as a \textit{Clifford magic state}.  The unitary $U$ can always be implemented using CNOT, $S$ and $T$ gates~\cite{amy2016t,heyfron2018efficient}.  The Clifford hierarchy is nested, so that the Clifford group (the 2$^{\mathrm{nd}}$ level) is contained within the 3$^{\mathrm{rd}}$ level.  
We have the following result
\begin{restatable}{thm}{PhasePolyCat}
\label{thm:PhasePolyCat}
	Let $\ket{U}=U\ket{+}^{\otimes n}$ be an $n$-qubit magic state for a diagonal unitary $U$ from the 3$^{\mathrm{rd}}$ level of the Clifford hierarchy, and let $\tau(U)$ be the minimum number of $T$ gates needed to implement $U$ using the gate set $\{CNOT, S,T \}$.
    The following resource conversion is possible
	\begin{equation}
	\ket{U}   \catalyse{\ket{T}^{\otimes \tau(U)-\nu( \ket{U} )}} \ket{T}^{\otimes 2\nu( \ket{U} ) - \tau(U)}.
\end{equation}	
\end{restatable}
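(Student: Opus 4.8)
The plan is to prove the conversion by \emph{reversing an optimal $T$-count synthesis of $U$}, spending the borrowed $\ket{T}$ states to uncompute the redundant phase gadgets while harvesting genuine $\ket{T}$'s from the independent ones. Since $U$ is a diagonal third-level unitary, I would first write $\ket{U} = U\ket{+}^{\otimes n} = 2^{-n/2}\sum_x \exp(i\pi p(x)/4)\ket{x}$ for an integer phase polynomial $p(x)$ taken $\bmod\, 8$. By the synthesis results of \cite{amy2016t,heyfron2018efficient}, $U$ admits an implementation with exactly $\tau := \tau(U)$ single-qubit $T$ gates, each acting on a parity $f_j(x)$ given by an $\f_2$-linear form $f_j$, interleaved with $CNOT$s and a diagonal Clifford remainder. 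All of these phase gadgets are diagonal and commute, so up to a free Clifford $C$ we have $\ket{U} = C\,\prod_{j=1}^{\tau} T_{f_j}\ket{+}^{\otimes n}$, where $T_{f_j}$ denotes a $T$ applied to the parity $f_j$ (computed in and out by $CNOT$s).

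The technical heart, which I expect to be the main obstacle, is the identity $\dim V = \nu(\ket{U})$, where $V = \mathrm{span}_{\f_2}\{f_1,\dots,f_\tau\}$ is the span of the $T$-carrying forms and $r := \dim V$. The route is to compute $\Stab\ket{U}$ directly: because $U$ is diagonal, a Pauli $Z^a X^b$ stabilizes $\ket{U}$ exactly when $U^\dagger X^b U$ is, up to sign, the $Z$-type Pauli cancelling $Z^a$, which happens iff the discrete derivative $\Delta_b p(x) = p(x{+}b)-p(x)$ is affine in $x$ with values in $\{0,4\}\bmod 8$. Since each $f_j$ is linear, $\Delta_b p$ is governed by $\sum_j c_j f_j(b)\,f_j(x)$, and the admissible directions $b$ form a subspace of dimension $n-r$; hence $|\Stab\ket{U}| = 2^{\,n-r}$ and $\nu(\ket{U}) = r$. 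The delicate part is the bookkeeping of this derivative over $\z_8$ (distinguishing genuinely non-Clifford forms from those that collapse to $S$- or $Z$-phases) and verifying that the span dimension is a synthesis-independent invariant coinciding with $\nu$.

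With $r = \nu(\ket{U})$ established, I would choose $r$ of the forms making a basis of $V$ and, via a $CNOT$ network (a free Clifford change of variables), arrange that these are $x_1,\dots,x_r$, so that every remaining form is an $\f_2$-combination of them. Given the physical state $\ket{U}$ together with the catalyst $\ket{T}^{\otimes(\tau-\nu)}$, I apply $T^\dagger_{f_j}$ to $\ket{U}$ for each of the $\tau - r$ dependent gadgets; each $T^\dagger$ on a parity is realized by $CNOT$s and the injection of one catalyst $\ket{T}$, so all $\tau-\nu$ catalysts are consumed and every dependent gadget is cancelled. What remains is $\prod_{i=1}^{r} T_{x_i}\ket{+}^{\otimes n}$ up to a Clifford, that is $\ket{T}^{\otimes r}\otimes\ket{+}^{\otimes n-r}$, and discarding the stabilizer $\ket{+}$'s leaves $\ket{T}^{\otimes\nu}$.

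The net effect is therefore $\ket{U}\,\ket{T}^{\otimes(\tau-\nu)}\longrightarrow\ket{T}^{\otimes\nu}$. Because all $\ket{T}$ states are interchangeable, I designate $\tau-\nu$ of the $\nu$ output copies as the returned catalyst, leaving $2\nu-\tau$ genuine output copies and yielding $\ket{U}\catalyse{\ket{T}^{\otimes(\tau-\nu)}}\ket{T}^{\otimes(2\nu-\tau)}$ as claimed. The construction is meaningful precisely when $\tau(U)\le 2\nu(\ket{U})$, so that the output count is non-negative, while the complementary bound $\nu(\ket{U})\le\tau(U)$ needed for a non-negative catalyst follows since each $T$ gate raises the nullity by at most one and $\nu(\ket{+}^{\otimes n})=0$. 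As sanity checks I would verify the counts on $\ket{CS}$ (where $\tau=3$, $\nu=2$, recovering the catalysis $\ket{CS}\catalyse{\ket{T}}\ket{T}$ of \fig{CStoT}) and on the states $\ket{W_n}$.
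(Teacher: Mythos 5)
Your construction is, at the circuit level, the same as the paper's: write $\ket{U}$ (up to a Clifford) as $\prod_j T_{f_j}\ket{+}^{\otimes n}$, use a CNOT change of variables to turn a basis of $V=\mathrm{span}_{\f_2}\{f_1,\dots,f_\tau\}$ into the coordinate forms $x_1,\dots,x_r$, and spend one borrowed $\ket{T}$ per dependent gadget to cancel it, giving $\ket{U}\ket{T}^{\otimes(\tau-r)}\rightarrow\ket{T}^{\otimes r}$; the paper does exactly this with the row-reduced echelon form $P=(\id_r\, A)$, the columns of $A$ being your dependent forms. The genuine gap is in what you call the technical heart, the exact identity $r=\nu(\ket{U})$. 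Your discrete-derivative computation proves only the easy half: every $b$ with $f_j(b)=0$ for all $j$ gives a stabilizer element $X^b$, so $\abs{\Stab\ket{U}}\ge 2^{n-r}$, i.e.\ $\nu(\ket{U})\le r$. The other half --- that no $b$ with $S_b=\{j: f_j(b)=1\}\neq\emptyset$ is admissible --- is asserted, not proved, and it is genuinely delicate. Admissibility of such a $b$ amounts to $\sum_{j\in S_b}c_j f_j(x)\equiv 2\,a\cdot x \pmod{4}$ for some $a\in\f_2^n$, and such mod-$4$ collapses of odd-coefficient subcollections do occur even in minimal representations: for $CCZ$ the sum over all seven forms is $4x_1x_2x_3\equiv 0\pmod{4}$. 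What rescues $\nu(\ket{CCZ})=3$ is only that this particular subset has odd weight seven while every realizable $S_b$ there has weight $0$ or $4$; in general, excluding hidden stabilizer elements requires an argument that combines the subspace constraint on the realizable sets $S_b$ with the minimality of $\tau$, and your sketch supplies neither.

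Fortunately, the exact equality is never needed, and the paper does not prove it either. Since $\nu(\ket{U})\le r\le\tau$, your conversion already implies the theorem: of the $\tau-\nu(\ket{U})$ borrowed states, consume $\tau-r$ in the construction and leave the remaining $r-\nu(\ket{U})$ untouched, ending with $r+(r-\nu(\ket{U}))\ge\nu(\ket{U})$ copies of $\ket{T}$; return $\tau-\nu(\ket{U})$ of them as the catalyst and keep $2r-\tau\ge 2\nu(\ket{U})-\tau$ as output, discarding any surplus. This is precisely the paper's closing remark that if $r$ exceeds the nullity ``we actually have a stronger result and the statement of the theorem still follows.'' So your proof closes once you replace the claimed equality by the inequality $\nu(\ket{U})\le r$, which is the direction your stabilizer computation does establish.
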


In the theorem, we follow the conversion notation of \defn{conversion-equations} and use $\nu$ that was defined earlier as the stabilizer nullity (recall \defn{stabilizer-monotone}). 
The proof of this theorem can be found in \app{phase-polynomial}.

An interesting family of  ($n>1$-qubit) unitaries that we call $W_n$ are 
\begin{equation}
     W_n = \sum_{x}  \exp(i \pi g(x) / 4   ) \kb{x}{x} , \text{ with } g(x) = ( \oplus_{i=1}^n x_i )   + \sum_{i=1}^n x_i,
\end{equation}
where the $\oplus$ sum is performed modulo 2. The corresponding Clifford magic state $\ket{W_n}=W_n\ket{+}^{\otimes n}$, when expressed as a density matrix $\rho=\vert W_n \rangle \! \langle W_n \vert$, has entries in $\mathbb{Q}[i]$ by virtue of the fact that $g(x) \equiv 0 \mod 2$ for all $x$. By \theo{number-field-constraint} this implies that no $T$-states can be derived from $\ket{W_n}$ in the absence of a catalyst. In \fig{Wn_Catalysis} we give an explicit circuit for converting  $\ket{W_n}$ to $\ket{T}^{\otimes n-1}$ using catalysis. This matches \theo{PhasePolyCat} by virtue of the following lemma, which is proved in \app{phase-polynomial}. 

\begin{restatable*}{lem}{tauforWn}
\label{lem:tau-for-Wn}
$\tau(W_n)=n+1$.
\end{restatable*}


\begin{figure}[ht!]
\center
	\includegraphics[scale=0.62]{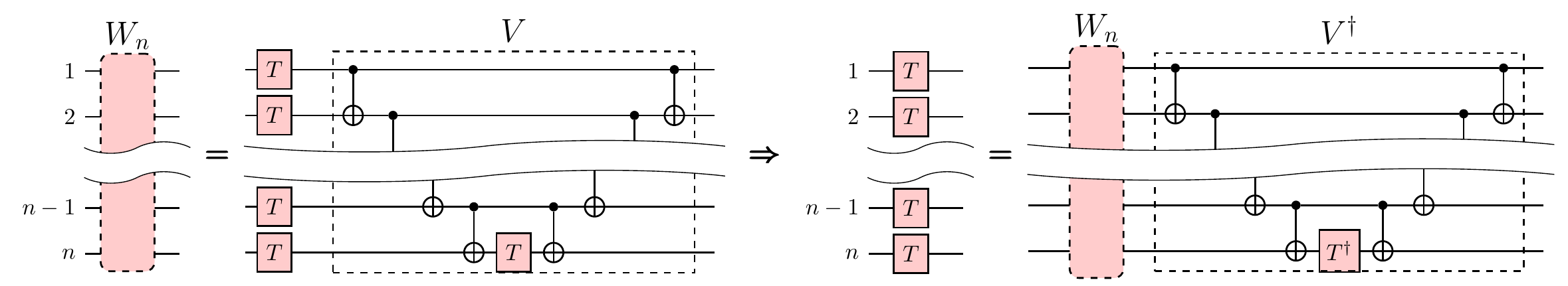}
	\caption{The first circuit identity can be verified explicitly, and follows from reasoning presented in \cite{amy2016t}.
	By applying the rightmost circuit to the $\ket{+}^{\otimes n}$ state, $\ket{W_n}$ is converted into $n$ copies of $\ket{T}$, using an additional $T^\dagger$-gate. 	In terms of resource states $\ket{W_n}\ket{T} \longrightarrow \ket{T}^{\otimes{n}}$, or equivalently $\ket{W_n} \implies \ket{T}^{\otimes{n-1}}$. The leftmost circuit identity depicts how the unitary $W_n$-gate can be implemented using a minimal (i.e., $\tau(W_n)=n+1$) number of $T$-gates. }
	\label{fig:Wn_Catalysis}
\end{figure}



\begin{figure}[ht!]
\center
	\includegraphics[scale=0.85]{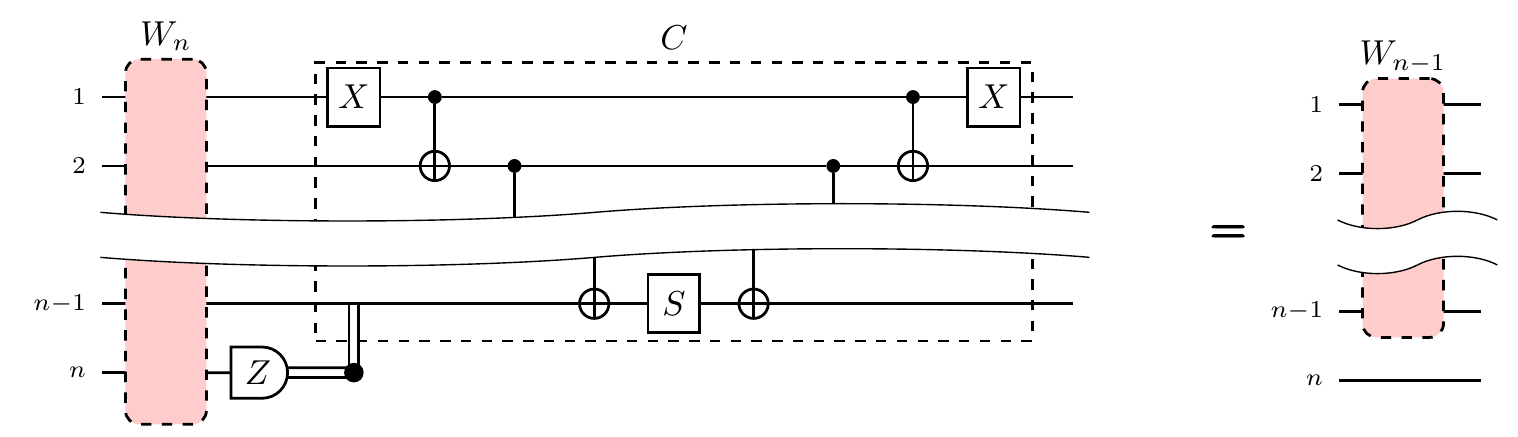}
	\caption{By applying the circuit on the left to the $\ket{+}^{\otimes n}$ state, $\ket{W_n}$ is converted into $\ket{W_{n-1}}$, where, depending on the $Z$ measurement result, a Clifford correction $C$ as in \eq{Correction} may be required.
	This equality can be understood as follows. 
	First note that the gate is symmetric with respect to permutations of qubits, so measuring the last qubit is equivalent to meausring the first. 
	Since a pair of CNOT gates are applied in $W_n$ controlled on the first qubit, if the outcome is zero, then those CNOTs can be removed, and $W_{n-1}$ is applied directly. 
	On the other hand, if the outcome is one, then instead of $W_{n-1}$, we have $W_{n-1}$ sandwiched between a pair of $X$ gates applied to the target of those CNOTs.
	This can be fixed by the content of the dashed box, which can be verified by taking the product of the gate which is applied with $W_{n-1}$, propagating the $X$s through the circuit and making use of $XTX^\dagger = T^\dagger$ before cancelling adjacent CNOTs.
	}
	\label{fig:Wn_level_reduction}
\end{figure}

We also have that the $W_n$ states can be reduced in the sense that
\begin{equation}
\label{WnReduction}
   \ket{ W_n} \rightarrow  \ket{ W_{n-1} }.
\end{equation}
This conversion is achieved by first measuring the last qubit in the computational basis as shown in \fig{Wn_level_reduction}. If one obtains the ``0" outcome, we immediately have the state $\ket{W_{n-1}} \ket{0}$ and so just discard the last qubit.  In the case of a ``1" outcome, then a Clifford correction 
\begin{equation}
 C = \sum_x  i^{1  \oplus_{j=1}^n x_j} \kb{x}{x} \label{eq:Correction},
\end{equation}
is required. Performing this correction and discarding the last qubit we again obtain $\ket{W_{n-1}}$.

\subsection{One-bit adder conversion protocols}
\label{sec:adder-conversion-protocols}

In this subsection, we present a class of protocols that use catalysis to convert resource states for the third level of the Clifford hierarchy (i.e. Clifford magic states) to resource states for the higher levels of the Clifford hierarchy.
It is beneficial to apply some of the protocols directly at the gate level too.
The main building block for the protocols in this subsection is the circuit shown on~\fig{three-sqrt-t-gates},
which is a special case of an idea described
on \href{https://arxiv.org/pdf/1709.06648v3.pdf#page=4}{Page~4} in \cite{Gidney2018}. 
This circuit implements three $\sqrt{T}$ gates (which are in the $4^{\text{th}}$ level of the Clifford heirarchy) using one $\sqrt{T}$ gate along with a few gates from the third level of the Clifford hierarchy. 
The key difference in our approach from that in Ref.~\cite{Gidney2018} is that to scale this small example to parallel rotations on an $n$-qubit register, we use recursion, whereas in Ref.~\cite{Gidney2018} a Hamming weight generalization is used.
Compared with the Hamming weight construction, our recursive construction amortizes the cost of the correction operations associated with injecting gates from higher levels of the Clifford hierarchy. 
Later, in \sec{prob-half-conversion-bounds}, we show that our construction is asymptotically optimal under the assumption that only measurements with probability one-half are used. 

\begin{figure}[ht]
\centering
  \begin{subfigure}{0.8\columnwidth}
	\centering
	\includegraphics[scale=0.66]{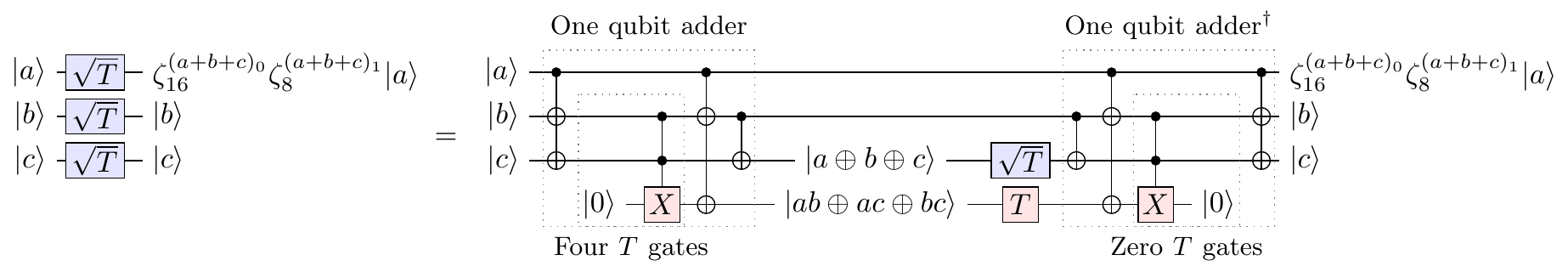}
	\caption{Three $\sqrt{T}$ gates can be applied using a circuit with just one $\sqrt{T}$ gate and other gates in the third level of the Clifford hierarchy.  
	This uses the Hamming weight register idea from~\cite{Gidney2018} along with the adder from Figure~4 in~\cite{Gidney2018}.}
	\label{fig:three-sqrt-t-gates}
	\end{subfigure}
  \vspace{1em}
    \begin{subfigure}{0.8\columnwidth}
  	\centering
  	\includegraphics[scale=1]{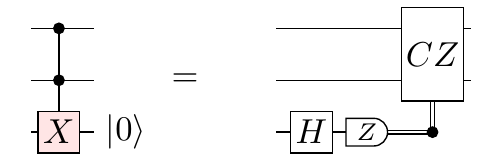}
  	\hspace{2em}
  	\includegraphics[scale=1]{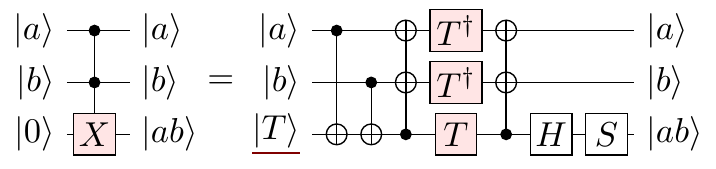} 
  	\caption{Simplified circuits for the Toffolli (i.e. the doubly-controlled-X gate) when the target qubit ends in the $\ket{0}$ state, and when the target qubit starts in the $\ket{0}$ state.
  	See {Figure~3}~in~\cite{Gidney2018}.
  	Using the first of these circuits as a subroutine, \fig{three-sqrt-t-gates} implements three $\sqrt{T}$ gates using one $\sqrt{T}$ gate, one $T$ gate and one $CCX$ gate.
  	Additionally making use of the second of these circuits, \fig{three-sqrt-t-gates} implements three $\sqrt{T}$ gates using one $\sqrt{T}$ gate and five $T$ gates.
    }
  	\label{fig:simpler-CCX}
  \end{subfigure}
  \caption{Circuits for applying three $\sqrt{T}$ gates using five $T$ gates and one $\sqrt{T}$ gate.}
  \label{fig:sqrtT-via-CCZ-and-T}
\end{figure}

To understand the circuit in~\fig{three-sqrt-t-gates}, first note that the gate
$\exp{\at{i\theta \ket{1}\bra{1}}}^{\otimes n}$
acting on an $n$-qubit register in the computational basis state $\ket{w}$ gives 
$e^{i \theta \cdot \mathrm{hw}(w)}\ket{w}$,
where $\mathrm{hw}(w)$ is the Hamming weight of the bit string $w$.
Therefore an alternative way of applying the gate $\exp{\at{i\theta \ket{1}\bra{1}}}^{\otimes n}$ is
to compute the binary representation of $\mathrm{hw}(w)$ and store it in a quantum register $\ket{x_k \ldots x_0 }$,
and for $j$ from $0$ to $k$ apply $\exp{\at{i 2^j \theta \ket{1}\bra{1}}}$ to qubit $j$ in the register. 
In~\fig{three-sqrt-t-gates} we use the adder circuit shown
in~\href{https://arxiv.org/pdf/1709.06648v3.pdf#page=4}{Figure~4} in~\cite{Gidney2018}
to compute the Hamming weight of the bit string $a,b,c$.
For bit strings of length three the Hamming weight can be represented using two bits. 
The lower bit is the parity $a\oplus b \oplus c$
and the higher bit is the majority function $ab\oplus bc\oplus ac$.
These are exactly the values computed by the adder. 
An important efficiency gain comes from the observation illustrated in the first circuit in \fig{simpler-CCX} 
that the one qubit adder can be un-computed by using Clifford gates and single qubit Pauli measurements only~\cite{Gidney2018}. 
With this trick, the circuit shown in \fig{three-sqrt-t-gates} applies three $\sqrt{T}$ gates using only one $\sqrt{T}$ gate,
and either one $T$ gate and one $CCX$ gate,
or five $T$ gates if the second circuit in \fig{simpler-CCX} is used.

\begin{figure}[ht]
  \centering
  \includegraphics[scale=1]{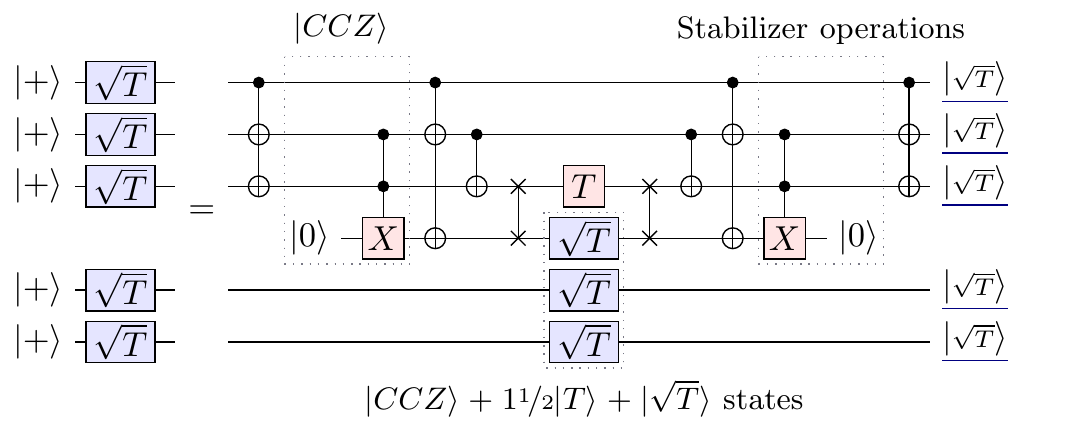}
  \caption{ \label{fig:many-sqrt-t-states} Conversion $k\ketsm{CCZ} + (k+\nicefrac{1}{2})\ket{T} \catalyse{\ketsm{\sqrt{T}}} 2k\ketsm{\sqrt{T}}$ for $k=2$.}
  
\end{figure}

\fig{three-sqrt-t-gates} can be adapted to form resource state conversion protocols. 
For example the protocol $\ketsm{\sqrt{T}} + 5.5\ket{T} \rightarrow 3\ketsm{\sqrt{T}}$ follows directly when $\ket{+}$ states are fed into \fig{three-sqrt-t-gates} when the $\sqrt{T}$ gate is implemented by injection of $\ket{\sqrt{T}}$, and when the third Clifford level gates are implemented with $\ket{T}$ resource states.
We need to use $5$ $\ket{T}$ states to implement the first $CCX$ gate and the $T$ gate in \fig{three-sqrt-t-gates}. 
The $\sqrt{T}$ gate is implemented with the injection circuit which requires an additional $T$ gate correction $50\%$ of the time, which adds $1/2$ to the number of $\ket{T}$ states consumed (on average). 

The extra $T$ gate can be amortized to give the conversion 
\[
k\ketsm{CCZ} + (k+\nicefrac{1}{2})\ket{T} \catalyse{\ketsm{\sqrt{T}}} 2k\ketsm{\sqrt{T}}
\]
valid for any positive integer $k$.
We see that asymptotically $\ketsm{\sqrt{T}}$ state costs
half of $\ketsm{T}$ state plus half of $\ketsm{CCZ}$ state. 
Using the circuit on \fig{three-sqrt-t-gates} we can reduce the parallel application of $2k+1$ $\sqrt{T}$ gates
to the parallel application of $2k-1$ $\sqrt{T}$ gates for any positive integer $k$. 
We use the circuit on \fig{three-sqrt-t-gates} to execute first three out of $2k+1$ $\sqrt{T}$ gates by only using one $\sqrt{T}$ gate. 
Then we observe that the rest of the $2k-2$ $\sqrt{T}$ gates can be executed in parallel with the newly introduced $\sqrt{T}$ gate. 
\fig{many-sqrt-t-states} shows how to reduce the parallel application of five $\sqrt{T}$ gates to the parallel application of three $\sqrt{T}$ gates. 
We also note that all above results can be applied to conversion of $\ketsm{\sqrt{T}^3}$ states and application of $\sqrt{T}^3$ gates. The cost of applying $\sqrt{T}^3$ gate is the same as the cost of applying $\sqrt{T}$. This is an important observation for single qubit circuit synthesis applications.

A similar idea leads to the lower cost of applying many $T^{j/2^{d-2}} = \exp\atsm{ \pi i j / 2^d \ket{1}\bra{1}}$ for positive integer $d \ge 2$ and an odd $j$ proved in \app{dyadic-powers-of-t}. 

\begin{restatable*}{thm}{dyadicpowersconversion}
\label{thm:dyadic-powers-conversion}
Let $k, d \ge 1$ be positive integers and let $j$ be an odd integer and let $a_{d,k} = 2^{d-1}(k-1)+2$. Then $a_{d,k}$ gates $\exp\atsm{ \pi i j / 2^d \ketsm{1}\brasm{1}}$ can be executed in parallel by using stabilizer operations with measurements that have probability $50\%$,
$b_{d,k} = \atsm{2^{d-1}-1}\at{k-1}+d-1$ copies of $\ket{CCZ}$ state and 
using one copy of each of the states
$\ketsm{ \pi j / 2^{d}},\ketsm{ \pi j / 2^{d-1}},\ldots,\ketsm{\pi j / 2^2}$ as a catalyst.
Asymptotically, the state $\ketsm{ \pi j / 2^d }$ is produced using $1-1/2^{d-1}$ $\ketsm{CCZ}$ states.
\end{restatable*}

Note that the number of $\ket{CCZ}$ states used by protocols described in the theorem above is 
asymptotically the same as the lower bounds established later in~\sec{prob-half-bounds}:

\begin{restatable*}{lem}{dyadicpowersbound}
\label{lem:dyadic-powers-bound}
Consider a protocol that uses stabilizer operations with measurements probability $50\%$,
$\ket{CCZ}$ states and a multi-qubit state as a catalyst.
The catalyst has entries in $\mathcal{R}_{d'}$ for some $d'$.
Suppose that such circuit uses $k$ $\ket{CCZ}$ states and 
produces $n$ states $\ketsm{ \pi j / 2^d }$ for odd $j$ and integer $d \ge 2$, then $k \ge n(1-1/2^{d-1})$.
Asymptotically, at least $1-1/2^{d-1}$ copies of $\ketsm{CCZ}$ state are needed to produce state $\ketsm{ \pi j / 2^d }$.
\end{restatable*}

These protocols are useful for reducing the cost of approximate unitary synthesis, as described in \app{seqeuntial-root-t}.

\subsection{Conversion bounds} 
\label{sec:montonebounds}

Suppose we can identify a monotone $\mathcal{M}$ such that $\mathcal{M}(\ket{\psi})$ is real for any state $\ket{\psi}$, and is non-increasing under stabilizer operations. 
We say such a function $\mathcal{M}$ is a \textit{monotone}, and can use it to bound conversion processes since for example, a resource state $\ket{A}$ cannot be used to produce a resource state $\ket{B}$ with stabilizer operations if $\mathcal{M}(\ket{A})< \mathcal{M}(\ket{B})$, i.e.,
$$ \mathcal{M}(\ket{A})< \mathcal{M}(\ket{B}) ~~~ \text{implies} ~~~ \ket{A} \not\longrightarrow \ket{B}. $$
If the monotone is also \textit{additive}, such that $\mathcal{M}(\ket{\psi}\otimes \ket{\phi}) = \mathcal{M}(\ket{\psi}) + \mathcal{M}(\ket{\phi})$ for all $\ket{\psi}$ and $\ket{\phi}$, then we can say even more. 
For example we can rule out catalyzed conversions since $\mathcal{M}(\ket{A})< \mathcal{M}(\ket{B})$ implies that $\mathcal{M}(\ket{A} \otimes \ket{\text{cat.}})< \mathcal{M}(\ket{B} \otimes \ket{\text{cat.}})$ for any catalyzing state $\ket{\text{cat.}}$.
Tensor powers of states simplify, allowing us to make asymptotic implications, i.e.,
$$ \mathcal{M}(\ket{A})< \alpha \cdot \mathcal{M}(\ket{B}) ~~~ \text{implies} ~~~ \ket{A}^{\otimes n} \not\Longrightarrow \ket{B}^{\otimes \alpha n} ~~~\forall n.$$
using the arrow notation described in \defn{conversion-equations}.
In other words this would put an upper bound of $\alpha$ on the catalytic rate of conversion from $\ket{A}$ to $\ket{B}$.
Note that equivalent implications hold if the monotone is \textit{multiplicative} rather than additive, i.e., if $\mathcal{M}(\ket{\psi}\otimes \ket{\phi}) = \mathcal{M}(\ket{\psi}) \cdot \mathcal{M}(\ket{\phi})$ for all $\ket{\psi}$ and $\ket{\phi}$.

For example, consider the states $\ket{T}$ and $\ket{CCZ}$ for which the best known conversion algorithms are:
\begin{eqnarray}
4 \ket{T} & \rightarrow & \ket{CCZ}, \nonumber \\
\ket{CCZ} & \catalyse{\ket{T}} & 2 \ket{T}. \nonumber
\end{eqnarray}
Clearly these algorithms would have loss if feeding the output of one into the other.
The best possible conversion algorithms have (for any $n$ and any catalyst) the minimum $r$ and maximum $r'$ in 
\begin{eqnarray}
r n \ket{T} & \catalyse{} & n \ket{CCZ}, \nonumber \\
n \ket{CCZ} & \catalyse{} & r' n \ket{T}. \nonumber
\end{eqnarray}

It is straightforward to compute the stabilizer nullity values $\nu(\ket{T})=1$ and $\nu(\ket{CCZ})=3$.
As described above, the fact that $\nu$ is additive immediately implies $r \geq 3$ and $r' \leq 3$.
It is also possible to compute the extent values $\xi(\ket{T})= (\text{sec}~\pi/8)^2= 1.17157$ and $\xi(\ket{CCZ})=16/9 =1.77778$.
Moreover, $\log \xi$ is an additive monotone with respect to collections of $\ket{T}$ states and $\ket{CCZ}$ states (which satisfy \lemm{stabilizer-extent-mult})\footnote{Note that the bounds from the extent are only guaranteed to hold for catalysts which satisfy \lemm{stabilizer-extent-mult}, but if (as conjectured) the extent is multiplicative for all states then it will hold in general.} and therefore $r \geq \log[1.77778]/\log[1.17157] = 3.63356$ and $r' \leq \log[1.77778]/\log[1.17157] = 3.63356$.
We therefore have that $r \geq 3.63356$ and $r' \leq 3$.
From these bounds we see there is a gap: the best possible algorithm would require at least $3.63$ $\ket{T}$ states to produce a $\ket{CSS}$ state, which can then be converted back into at most 3 $\ket{T}$ states.
In \tab{T_to_Targ} and \tab{CCZ_to_Targ} we show these conversion bounds along with those for many other pairs of states.

\begin{table}[ht]
    \centering
    {\renewcommand{\arraystretch}{1.25}
    \begin{tabular}{c|c|c}
     $\ket{\psi}$
     &
     \begin{tabular}{@{}c@{}} 
        Best algo. (lower bound) [Ref.] \\ 
        $r n \ket{T}    \catalyse{} n \ket{\psi}$ 
     \end{tabular}
     &
     \begin{tabular}{@{}c@{}}
        Best algo. (upper bound) [Ref.]\\ 
        $n \ket{\psi} \catalyse{} r' n \ket{T}$
     \end{tabular} 
     \\
     \hline
        $\ket{\sqrt{T}}$ &
        2.5 (1) [\figsm{many-sqrt-t-states}] & 0.25 (0.754933*)  \\
        $\ket{T}$ & 1 (1) & 1 (1)\\
        $\ket{CS} = \ket{W_2}$ &
        3 (2.96818*) [\figsm{CStoT}] & 1 (2) [\figsm{CStoT}] \\
        $\ket{CCS}$ & 7 (4.53328*) \cite{Jones2012} & 0.5 (3) [\propsm{measure-control}] \\
        $\ket{C^3S}$  & 11 (4) \cite{Jones2012} & 0.25 (3.82743*) [\propsm{measure-control}] \\
        $\ket{CCZ}$  & 4 (3.63356*) \cite{Jones2012} &
        2 (3) \cite{gidney2018efficient} \\
        $\ket{C^3Z}$  & 6 (5.12122*) \cite{Jones2012} & 1 (4) [\tabsm{CCZ_to_Targ}] \\
        $\ket{C^4Z}$  & 12 (5) \cite{Jones2012} & 0.5 (3.8233*) [\tabsm{CCZ_to_Targ}] \\
        $\ket{CCZ_{123,145}}$ & 8 (5) [\tabsm{CCZ_to_Targ}] & 2 (4.37739*) [\tabsm{CCZ_to_Targ}]  \\
        $\ket{W_3}$ & 4 (3.63356*) [\tabsm{CCZ_to_Targ}] & 2 (3) [\tabsm{CCZ_to_Targ}] \\
        $\ket{W_4}$ &
        5 (4.99907*) [\figsm{Wn_Catalysis}] & 3 (4) [\figsm{Wn_Catalysis}]\\ 
        $\ket{W_5}$ &
        6 (5.93637*) [\figsm{Wn_Catalysis}] & 4 (5) [\figsm{Wn_Catalysis}]
    \end{tabular}
    }
    \caption{Catalytic conversion rates to and from $\ket{T}$ states. 
    In the first column, the produced or consumed state is specified. 
    The second and third columns list the conversion rates ($r$ to consume, and $r'$ to produce)  for the best known algorithm, along with the tightest bound implied by stabilizer extent or nullity in parenthesis.
    The references to particular results from \app{explicit-circuits} are provided in square brackets.
    Note that bounds from the stabilizer extent, marked here by an asterisk, are not known to hold for arbitrary catalysts since the stabilizer extent is currently not known to be multiplicative for all states.
    The results that reference \tab{CCZ_to_Targ} are direct consequence of corresponding result from the table together with the inter-conversion between $\ketsm{T}$ and $\ketsm{CCZ}$. 
    The values of the extent are calculated in \app{extent-values}.
    }
    \label{tab:T_to_Targ}
\end{table}

\begin{table}[ht]
    \centering
    {\renewcommand{\arraystretch}{1.25}
    \begin{tabular}{c|c|c}
 $\ket{\psi}$  &  \begin{tabular}{@{}c@{}} Best algo. (lower bound) \\ $r n \ket{CCZ} \catalyse{} n \ket{\psi}$ \end{tabular} & \begin{tabular}{@{}c@{}} Best algo. (upper bound) \\ $n \ket{\psi} \catalyse{} r' n \ket{CCZ}$ \end{tabular}\\ \hline
$\ket{\sqrt{T}}$  &
0.75 (0.33333) [\figsm{many-sqrt-t-states}] & 0.0625 (0.207767*) [\tabsm{T_to_Targ}] \\
$\ket{T}$   & 0.5 (0.33333) \cite{gidney2018efficient} & 0.25 (0.275212*) \cite{Jones2012} \\
$\ket{CS} = \ket{W_2}$  & 1 (0.81688*) [\figsm{Wn_level_reduction}, \figsm{ccz-to-cs}] & 0.5 (0.66666)  [\figsm{cs-to-ccz}] \\
$\ket{CCS}$ & 2 (1.24763*) \cite{Jones2012} & 0.25  (1) [\propsm{measure-control}] \\
$\ket{C^3S}$ & 3 (1.33333) \cite{Jones2012} & 0.125 (1.05336*) [\propsm{measure-control}] \\
$\ket{CCZ}$ & 1(1) & 1(1)  \\
$\ket{C^3Z}$ & 2 (1.40942*) \cite{Jones2012} & 0.5 (1.33333) [\propsm{measure-control}]   \\
$\ket{C^4Z}$ & 3 (1.66667) \cite{Jones2012} & 0.25 (1.05336*) [\propsm{measure-control}] \\
$\ket{CCZ_{123,145}}$ & 2 (1.66667) [\figsm{ccz123145}] & 1 (1.20471*) [\figsm{ccz123145}] \\
$\ket{W_3}$ & 1 (1) \cite{HowardCampbell2016}  & 1 (1)  \cite{HowardCampbell2016}  \\
$\ket{W_4}$ & 2.5 (1.3758*) [\tabsm{T_to_Targ}] & 1 (1.33333) [\figsm{Wn_level_reduction}] \\
$\ket{W_5}$  & 3 (1.66667) [\tabsm{T_to_Targ}] & 1 (1.63376*) [\figsm{Wn_level_reduction}]
    \end{tabular}
    }
    \caption{Catalytic conversion rates to and from $\ket{CCZ}$ states. 
    In the first column, the produced or consumed state is specified. 
    The second and third columns list the conversion rates ($r$ to consume, and $r'$ to produce)  for the best known algorithm, along with the tightest bound implied by stabilizer extent or nullity in parenthesis.
    The references to particular results from \app{explicit-circuits} are provided in square brackets.
    Note that bounds from the stabilizer extent, marked here by an asterisk, are not known to hold for arbitrary catalysts since the stabilizer extent is currently not known to be multiplicative for all states.
    The results that reference \tab{T_to_Targ} are direct consequence of corresponding result from the table together with the inter-conversion between $\ketsm{T}$ and $\ketsm{CCZ}$. 
    The values of the extent are calculated in \app{extent-values}.}
    \label{tab:CCZ_to_Targ}
\end{table}

\section{Computational task lower bounds}
\label{sec:computational-tasks}

\label{sec:lower-bounds}

In the previous section we established bounds on the resources required to
produce specific states. In this section we lower bound the non stabilizer resources needed to
implement some important computational tasks.
Specifically, we consider the multiply controlled $Z$ gate in \sec{CnZ-bound} and
the modular adder in \sec{adder-bound}.

Our strategy to lower bound the number of copies of a resource state $\ket{\psi}$
needed to implement a unitary $U$ (where $U$ corresponds to some computational task) is to bootstrap bounds on the resources required to
produce specific states. 
For example, note that a lower bound of the number of copies of  $\ket{\psi}$ needed to
produce a \textit{state} $U\ket{S}$, where $\ket{S}$ is a stabilizer state, 
also serves as a lower bound for applying $U$.
It is also useful to consider catalysis when establishing lower bounds for computational tasks.
For example, suppose $U$ maps the state $\ket{\Psi}\ket{S}$ to a state 
$\ket{\Psi}\ket{\Phi}$ for some non-stabilizer states $\ket{\Phi}$, $\ket{\Psi}$,
then the number of copies of $\ket{\psi}$ needed to (catalytically) produce $\ket{\Phi}$
also serves as a lower bound for applying $U$. 

\subsection{Lower bounds for the \texorpdfstring{$C^n Z$}{multiply-controlled-Z} gate}
\label{sec:CnZ-bound}

The multiply controlled $Z$ gate $C^n Z$ is a key component of many important algorithms, for example to implement the reflection step in Grover's search \cite{Grover1996}.
We can lower bound the resources required to implement $C^n Z$ as follows:

\begin{prop} \label{prp:CnZlowerbound}
	For $n\geq 3$, it is not possible to apply the multiply controlled $Z$ gate $C^{n-1} Z$ or produce the state $\ket{C^{n-1} Z} = C^{n-1} Z \ket{+}^{\otimes n}$ by Clifford gates and measurements using fewer than $n$ $\ket T$ states, or $n/2$ $\ket{CS}$ states, or $n/3$ $\ket{CCZ}$ states.
\end{prop}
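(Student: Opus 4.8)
The plan is to reduce the entire statement to a single computation of the stabilizer nullity $\nu$ (\defn{stabilizer-monotone}) of the $n$-qubit state $\ket{C^{n-1}Z}$, and then to invoke the additivity of $\nu$ together with its monotonicity under stabilizer operations. First I would establish the central claim that $\nu(\ket{C^{n-1}Z}) = n$ for all $n\ge 3$, i.e.\ that $\ket{C^{n-1}Z}$ has trivial stabilizer $\Stab\ket{C^{n-1}Z}=\set{I}$. Writing the state in the computational basis as $\ket{C^{n-1}Z} = 2^{-n/2}\sum_x (-1)^{f(x)}\ket{x}$ with $f(x)=x_1 x_2\cdots x_n$, a Pauli of the form $\alpha X^u Z^v$ fixes it precisely when the $\f_2$-function $f(y\oplus u)+f(y)+v\cdot y$ is constant in $y$ (the free phase $\alpha$ absorbing the constant).

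The key observation, viewing $f$ as the top-degree multilinear monomial, is that the degree-$n$ part of $f(y\oplus u)+f(y)$ cancels, while its degree-$(n-1)$ part equals $\sum_{k:\,u_k=1}\prod_{i\ne k} y_i$, which is nonzero whenever $u\ne 0$ and involves no cancellation among its distinct monomials. For $n\ge 3$ this component has degree $n-1\ge 2$ and therefore cannot be cancelled by the affine term $v\cdot y$, so no stabilizing Pauli with $u\ne 0$ can exist; and $u=0$ in turn forces $v=0$. Hence the stabilizer is trivial and $\nu(\ket{C^{n-1}Z})=n$. This is exactly where the hypothesis $n\ge3$ enters, since for $n\le 2$ the degree-$(n-1)$ part is affine and the state is in fact a stabilizer state.

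Given this, the state-production bounds are immediate. Any stabilizer protocol consuming $m$ copies of a resource $\ket{\psi}$ (together with free stabilizer ancillas, which contribute $\nu=0$) to produce $\ket{C^{n-1}Z}$, possibly alongside additional output, cannot increase $\nu$ (by Proposition~\ref{prp:measurement-and-stabilizer} and the Clifford-invariance and additivity recorded after it). Combined with additivity and non-negativity of $\nu$ this gives $m\cdot\nu(\ket{\psi}) = \nu(\ket{\psi}^{\otimes m}) \ge \nu(\ket{C^{n-1}Z}) = n$. Substituting $\nu(\ket{T})=1$, $\nu(\ket{CS})=2$ and $\nu(\ket{CCZ})=3$ then yields $m\ge n$, $m\ge n/2$ and $m\ge n/3$ respectively; here $\nu(\ket{CCZ})=3$ is literally the $n=3$ instance of the computation above, while $\nu(\ket{CS})=2$ follows by the same phase-function method (now with fourth-root phases).

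Finally, to pass from producing the state to applying the gate, I would use the bootstrapping observation from \sec{intro-computational-tasks}: since $\ket{+}^{\otimes n}$ is a free stabilizer state and $C^{n-1}Z\ket{+}^{\otimes n}=\ket{C^{n-1}Z}$, any circuit that applies $C^{n-1}Z$ also produces the state, so the same lower bounds transfer. I expect the main obstacle to be the nullity computation itself, and specifically verifying that the degree-$(n-1)$ component $\sum_{k:\,u_k=1}\prod_{i\ne k} y_i$ never vanishes for $u\ne0$ and cannot be absorbed into the affine shift $v\cdot y$; everything downstream is routine bookkeeping with the additive monotone.
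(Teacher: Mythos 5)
Your proof is correct, and its skeleton coincides with the paper's: reduce the gate to the state via $C^{n-1}Z\ket{+}^{\otimes n}=\ket{C^{n-1}Z}$, compute $\nu(\ket{C^{n-1}Z})=n$, and conclude by monotonicity, additivity and non-negativity of the stabilizer nullity together with $\nu(\ket{T})=1$, $\nu(\ket{CS})=2$, $\nu(\ket{CCZ})=3$. Where you genuinely diverge is in the proof of the key lemma that $\Stab\ket{C^{n-1}Z}$ is trivial. The paper establishes this through \propos{CnZPauliSpectrum}, an explicit summation that yields the \emph{entire} Pauli spectrum of $\ket{C^{n-1}Z}$ with values $1$, $0$, $1-2^{2-n}$, $2^{2-n}$ and their multiplicities, from which triviality of the stabilizer is read off (only the identity has expectation $1$ when $n\ge 3$). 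You instead argue via the algebraic normal form of the phase function $f(x)=x_1\cdots x_n$ over $\mathbb{F}_2$: a stabilizing Pauli $\alpha X^u Z^v$ forces $f(y\oplus u)+f(y)+v\cdot y$ to be constant, and for $u\ne 0$ the degree-$(n-1)$ homogeneous component $\sum_{k:\,u_k=1}\prod_{i\ne k}y_i$ is nonzero and, since $n-1\ge 2$, cannot be cancelled by the affine term — so by uniqueness of the ANF no such Pauli exists. Your argument is shorter, cleanly isolates where $n\ge 3$ is needed, and proves exactly what this proposition requires; the paper's heavier spectrum computation buys more, since the full spectrum with multiplicities is reused later (e.g.\ to evaluate the dyadic monotone $\mu_2(\ket{C^{n-1}Z})=n-2$ in the probability-one-half setting of \sec{prob-half-bounds}). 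One small caveat: your claim that $\nu(\ket{CS})=2$ "follows by the same method with fourth-root phases" needs a minor adaptation, since the phase function of $\ket{CS}$ is $\mathbb{Z}_4$-valued rather than $\mathbb{F}_2$-valued; this is easily patched (or replaced by the paper's "direct verification" of the sixteen two-qubit Pauli expectations), so it is a presentational gap rather than a mathematical one.
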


\begin{proof}
	First note that proving that a bound holds for the state $\ket{C^{n-1} Z}$ implies that it holds for the gate $C^{n-1} Z$.
	The proof for each of the bounds is then very straightforward: we simply show that the stabilizer nullity of the input state is smaller than the output state unless the bound is satisfied.
	Direct verification shows that $\nu(\ket{T})=1, \nu(\ket{CS})=2$, and $\nu(\ket{CCZ})=3$.
	Finally, it is clear that $\nu(\ket{C^{n-1} Z})=n$ for all $n\geq 3$ from \propos{CnZPauliSpectrum} since we see that no non-trivial Pauli operator has expectation value $+1$ for the state $\ket{CCZ}$.	
\end{proof}

\begin{prop} \label{prop:CnZPauliSpectrum}
	For all $n \geq 3$, the Pauli spectrum of the state $\ket{C^{n-1} Z} = C^{n-1} Z \ket{+}^{\otimes n}$ has values (and multiplicities): $1 ~(1)$; $0~(-1 + 2^{n - 1} + 2^{2 n - 1})$; $1-2^{2-n}~(2^n-1)$; $2^{2-n}~(1 - 3 \cdot 2^{n - 1} + 2^{2 n - 1})$.
\end{prop}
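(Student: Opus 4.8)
The plan is to exploit that $C^{n-1}Z$ acts on $\ket{+}^{\otimes n}$ simply by flipping the sign of the single all-ones amplitude, so that
\[
\ket{C^{n-1}Z} = \ket{+}^{\otimes n} - 2^{1-n/2}\ket{1}^{\otimes n}.
\]
Writing $\ket\psi$ for this (normalised) state and expanding $\bra\psi P \ket\psi$ for $P = P_1 \otimes \cdots \otimes P_n$ produces four terms: $\bra{+}^{\otimes n} P \ket{+}^{\otimes n}$; the two cross terms $\bra{+}^{\otimes n} P \ket{1}^{\otimes n}$ and $\bra{1}^{\otimes n} P \ket{+}^{\otimes n}$ (each weighted by $-2^{1-n/2}$); and $2^{2-n}\bra{1}^{\otimes n} P \ket{1}^{\otimes n}$. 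Each factorises over the $n$ qubits, so I would evaluate them from the single-qubit matrix elements $\bra{+}P_i\ket{+}$, $\bra{1}P_i\ket{1}$ and $\bra{1}P_i\ket{+}$.

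Letting $n_X,n_Y,n_Z$ denote the numbers of $X$-, $Y$- and $Z$-factors of $P$, this evaluation gives $\bra{+}^{\otimes n} P \ket{+}^{\otimes n} = 1$ exactly when $n_Y=n_Z=0$ (else $0$); $\bra{1}^{\otimes n} P \ket{1}^{\otimes n} = (-1)^{n_Z}$ exactly when $n_X=n_Y=0$; and $\bra{1}^{\otimes n} P \ket{+}^{\otimes n} = 2^{-n/2} i^{n_Y}(-1)^{n_Z}$. Combining the cross terms gives $-2^{2-n}(-1)^{n_Z}\,\mathrm{Re}(i^{n_Y})$, where $\mathrm{Re}(i^{n_Y})$ is $0$ for odd $n_Y$ and $(-1)^{n_Y/2}$ for even $n_Y$. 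Thus $\bra\psi P \ket\psi$ depends only on $n_X,n_Y,n_Z$ through these three contributions, with the common scale $2^{2-n}$ matching the value in the statement (the denominator in the Pauli spectrum is $1$ since $\ket\psi$ is normalised, so we need only take absolute values).

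It then remains to split into cases by the support of $P$. If $P=I^{\otimes n}$ all three terms survive and sum to $1$, yielding value $1$ with multiplicity $1$. If $P$ is a nonidentity product of $I$'s and $X$'s ($n_Y=n_Z=0$, $n_X\ge1$), only the first term and a cross term remain, giving $1-2^{2-n}$, and there are $2^n-1$ such $P$. If $P$ is a nonidentity product of $I$'s and $Z$'s, the last term and a cross term cancel to give $0$. In all remaining (``neither'') cases only the cross term can survive, so the value is $2^{2-n}$ when $n_Y$ is even and $0$ when $n_Y$ is odd.

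Finally I would count multiplicities. The value $2^{2-n}$ arises exactly from the ``neither'' Paulis with $n_Y$ even; since the total number of $P$ with $n_Y$ even is $\tfrac12(4^n+2^n)=2^{2n-1}+2^{n-1}$ (the standard parity identity from weighting $Y$ by $-1$, so that $\sum_P(-1)^{n_Y}=(1+1-1+1)^n=2^n$), subtracting the $1+(2^n-1)+(2^n-1)$ already-accounted Paulis yields $1-3\cdot 2^{n-1}+2^{2n-1}$. The value $0$ collects the $2^n-1$ products of $I$'s and $Z$'s together with all $2^{2n-1}-2^{n-1}$ Paulis with $n_Y$ odd (all of which lie in the ``neither'' case), giving $-1+2^{n-1}+2^{2n-1}$. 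A check that the four multiplicities sum to $4^n$ completes the argument. The main obstacle is the bookkeeping in this last step: verifying that the four cases partition all $4^n$ Paulis without overlap and cleanly isolating the even- and odd-$n_Y$ subcases.
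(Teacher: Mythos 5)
Your proposal is correct: every evaluation checks out (the decomposition $\ket{C^{n-1}Z} = \ket{+}^{\otimes n} - 2^{1-n/2}\ket{1}^{\otimes n}$ is normalized, the single-qubit matrix elements and the combined cross term $-2^{2-n}(-1)^{n_Z}\,\mathrm{Re}(i^{n_Y})$ are right, and the four multiplicities sum to $4^n$), so the remaining bookkeeping you flag is genuinely routine. Your route differs from the paper's in a meaningful way. The paper never decomposes the state; it parametrizes Paulis as $X^x Z^z$ and computes $\bra{C^{n-1}Z} X^x Z^z \ket{C^{n-1}Z}$ as a double sum over computational basis strings, observing that this sum differs from the pure character sum $\sum_b (-1)^{z\cdot b}$ in only the two terms $b = 1^n$ and $b = 1^n \oplus x$; the case analysis then runs over the pair $(x,z)$, with vanishing expectation exactly when $x\cdot z$ is odd, and multiplicities counted by the parity of $x \cdot z$. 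You instead split the state into a stabilizer part plus an all-ones correction, which lets each of the four resulting terms factorize over qubits into single-qubit matrix elements, and you classify Paulis by their letter counts $(n_X, n_Y, n_Z)$, counting by the parity of $n_Y$. The two counting arguments are secretly identical, since $n_Y \equiv x\cdot z \pmod 2$, which is why your multiplicities land on the paper's formulas exactly. What your approach buys is a cleaner, more structural evaluation — no manipulation of sums over bit strings, and the origin of each spectral value is visible term by term (e.g., the exact cancellation producing the zeros on $I/Z$-type Paulis). What the paper's approach buys is the $X^x Z^z$ parametrization, which interfaces directly with the symplectic description of the Pauli group and makes the "$x\cdot z$ odd $\Rightarrow$ expectation $0$" criterion immediate rather than derived through $\mathrm{Re}(i^{n_Y})$.
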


\begin{proof}
	Consider the multiply controlled $Z$ state $\ket{C^{n-1}Z}$, defined as 
	\begin{eqnarray}
	\ket{C^{n-1}Z} = C^{n-1}Z |+\rangle^{\otimes n} =\frac{1}{\sqrt{2^n}} \sum_{b \in \{0,1\}^n} (-1)^{b_1 \cdot b_2 \cdot \dots \cdot b_n} \ket{b}.\nonumber
	\end{eqnarray}
	We are interested in the set of Pauli expectation values $\langle C^{n-1}Z | X^x Z^z  |C^{n-1}Z \rangle$ for arbitrary bit strings $x$ and $z$.
	Explicit calculation shows
	\begin{eqnarray}
	2^n \langle C^{n-1}Z | X^x Z^z  |C^{n-1}Z \rangle & = & \sum_{b,b' \in \{0,1\}^n} (-1)^{b_1 \cdot b_2 \cdot \dots \cdot b_n}(-1)^{b'_1 \cdot b'_2 \cdot \dots \cdot b'_n} \bra{b'}X^x Z^z \ket{b},\nonumber \\
	& = & \sum_{b,b' \in \{0,1\}^n} (-1)^{b_1 \cdot b_2 \cdot \dots \cdot b_n}(-1)^{b'_1 \cdot b'_2 \cdot \dots \cdot b'_n}(-1)^{z \cdot b} \bra{b'+x} b \rangle, \nonumber\\
	& = & \sum_{b \in \{0,1\}^n} (-1)^{b_1 \cdot b_2 \cdot \dots \cdot b_n}(-1)^{(b_1+x_1) \cdot (b_2+x_2) \cdot \dots \cdot (b_n+x_n)}(-1)^{z \cdot b}.\nonumber
	\end{eqnarray}
	When $x = 0^n$, we see that the sum simplifies to $\sum_{b \in \{0,1\}^n} (-1)^{z \cdot b}$, which is $2^n$ for $z=0^n$, and $0$ for any other $z$.
	For $x \neq 0^n$, note that the terms in the sum over $b$ differ from $\sum_{b \in \{0,1\}^n} (-1)^{z \cdot b}$ only for $b = 1^n$ and $b = 1^n+x$. Therefore, 
	\begin{eqnarray}
	2^n \langle C^{n-1}Z | X^x Z^z  |C^{n-1}Z \rangle 
	& = & \sum_{b \in \{0,1\}^n} (-1)^{b_1 \cdot b_2 \cdot \dots \cdot b_n}(-1)^{(b_1+x_1) \cdot (b_2+x_2) \cdot \dots \cdot (b_n+x_n)}(-1)^{z \cdot b}, \nonumber \\
	& = & - 2(-1)^{z \cdot 1^n}- 2(-1)^{z \cdot (1^n+x)}+\sum_{b \in \{0,1\}^n} (-1)^{z \cdot b}. \nonumber
	\end{eqnarray}
	When $z=0^n$, this is simply $2^n - 4$. When $z \neq 0^n$, it is $- 2(-1)^{z \cdot 1^n}- 2(-1)^{z \cdot (1^n+x)}$.
	Summarizing,
	\begin{eqnarray}
	| \langle C^{n-1}Z | X^x Z^z  |C^{n-1}Z \rangle | & = & \begin{cases} 
	1 &\mbox{if } x = 0^n \mbox{ and } z = 0^n, \\ 
	0 &\mbox{if } x \cdot z \mbox{ is odd, or if } x = 0^n \mbox{ and } z \neq 0^n, \\
	1-2^{2-n} &\mbox{if } x \neq 0^n \mbox{ and } z = 0^n, \\
	2^{2-n}  &\mbox{if } x \neq 0^n \mbox{ and } z \neq 0^n \mbox{ and } x \cdot z \mbox{ is even}.
	\end{cases}
	\end{eqnarray}
	We can count the number of each subset of binary vectors $x$ and $z$ to find the multiplicities.
\end{proof}

\subsection{Lower bounds for the modular adder}
\label{sec:adder-bound}

The adder circuit is one of the most fundamental quantum arithmetic operations, which implements addition on a pair of registers in superposition.
We can lower bound the resources required to implement it as follows:\footnote{After the first posting of this paper, Craig Gidney \cite{GidneyBlog2019} showed that the state $\ket{C^{n} Z}$ can be produced using the $n$-qubit modular adder. 
We reproduce his argument in \app{Gidney-bound-reduction} for completeness. 
Using our (slightly stronger) bounds for $\ket{C^{n-1} Z}$ the adder circuit cannot be implemented with fewer than $n+1$ copies of $\ket{T}$, or $(n+1)/2$ copies of $\ket{CS}$, or $(n+1)/3$ copies of $\ket{CCZ}$.}

\begin{prop} \label{prp:AdderLowerBound}
An adder circuit on two $n$-qubit registers acts on basis states as 
\[
A(\ket{i}\ket{j}) = \ket{i}\ket{i + j}
\]
with $i+j$ evaluated modulo $2^n$. It is not possible to implement the adder circuit with Clifford gates and measurements using fewer than $n-2$ $\ket{T}$ states, $(n-2)/2$ $\ket{CS}$ states or $(n-2)/3$ $\ket{CCZ}$ states.
\end{prop}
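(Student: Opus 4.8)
The plan is to convert the operator lower bound into a \emph{catalytic state-production} bound and then invoke the additivity and monotonicity of the stabilizer nullity $\nu$ established in \secsm{state-properties}. Following the catalysis-as-proof strategy of \secsm{intro-computational-tasks}, I will exhibit a stabilizer input state and a non-stabilizer catalyst on which the adder performs a phase kickback, producing a fixed non-stabilizer state whose nullity I can compute exactly. Write $\omega = \exp\at{2\pi i/2^n}$ and let
\[
\ket{QFT_n} = \frac{1}{\sqrt{2^n}}\sum_{j=0}^{2^n-1}\omega^{j}\ket{j}
\]
be the quantum Fourier state, with $\ket{QFT^*_n}$ its entrywise complex conjugate in the computational basis.

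First I would verify the phase-kickback identity $A\at{\ket{+}^{\otimes n}\ket{QFT_n}} = \ket{QFT^*_n}\ket{QFT_n}$. Acting on a single control basis state and re-indexing the target sum by $j\mapsto j-a$ modulo $2^n$,
\[
A\at{\ket{a}\ket{QFT_n}} = \frac{1}{\sqrt{2^n}}\sum_{j}\omega^{j}\ket{a}\ket{a+j} = \omega^{-a}\,\ket{a}\ket{QFT_n},
\]
so the adder kicks the phase $\omega^{-a}$ onto the control register while returning the target register unchanged. Summing over $a$ with equal amplitudes turns $\ket{+}^{\otimes n}=\tfrac{1}{\sqrt{2^n}}\sum_a\ket{a}$ into $\tfrac{1}{\sqrt{2^n}}\sum_a\omega^{-a}\ket{a}=\ket{QFT^*_n}$. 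Hence the adder catalytically produces $\ket{QFT^*_n}$ from the stabilizer state $\ket{+}^{\otimes n}$, using $\ket{QFT_n}$ as a catalyst that is returned intact; any resource lower bound for this catalytic production is therefore a lower bound for implementing $A$.

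The second ingredient is the exact value $\nu(\ket{QFT_n}) = \nu(\ket{QFT^*_n}) = n-2$. Here I would use the standard factorization of the Fourier state into single-qubit rotations,
\[
\ket{QFT_n} = \bigotimes_{k=0}^{n-1} \tfrac{1}{\sqrt 2}\at{\ket{0} + e^{2\pi i/2^{\,n-k}}\ket{1}},
\]
and then apply additivity of $\nu$. The factors with $k=n-1$ and $k=n-2$ are the stabilizer states $\ket{-}$ and $\at{\ket{0}+i\ket{1}}/\sqrt2$, each contributing $0$; every factor with $k\le n-3$ is a state $\ket{\theta}$ with $\theta = 2\pi/2^{\,n-k}$ not an integer multiple of $\pi/2$, hence a non-stabilizer single-qubit state with trivial stabilizer and $\nu=1$ by Proposition~\ref{prp:RotationStatePauliSpectrum}. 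Summing the $n-2$ nontrivial factors gives $\nu(\ket{QFT_n})=n-2$; since the single-qubit Pauli spectrum of Proposition~\ref{prp:RotationStatePauliSpectrum} is invariant under $\theta\mapsto-\theta$, the conjugated factors of $\ket{QFT^*_n}$ have the same stabilizer sizes and $\nu(\ket{QFT^*_n})=n-2$ as well.

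Finally I would assemble the bound. If the adder is realized by stabilizer operations consuming $k$ copies of a resource state $\ket{\psi}$, then feeding in $\ket{+}^{\otimes n}\ket{QFT_n}\ket{\psi}^{\otimes k}$ and producing $\ket{QFT^*_n}\ket{QFT_n}$, monotonicity together with additivity of $\nu$ gives
\[
\at{n-2} + k\,\nu(\ket{\psi}) \;\ge\; \at{n-2} + \at{n-2},
\]
that is $k\,\nu(\ket{\psi}) \ge n-2$. Substituting $\nu(\ket{T})=1$, $\nu(\ket{CS})=2$, $\nu(\ket{CCZ})=3$ yields the three claimed bounds. I do not expect a genuine obstacle: the two nontrivial steps are the phase-kickback identity and the nullity computation, both clean. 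The only point demanding care is the bookkeeping in this last inequality, and it is precisely the instructive one: the argument survives the use of a \emph{non-stabilizer} catalyst only because $\nu$ is additive, so the catalyst's nullity $\nu(\ket{QFT_n})$ appears identically on both sides and cancels, leaving a bound purely in terms of $\nu(\ket{QFT^*_n})=n-2$.
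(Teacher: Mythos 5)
Your proposal is correct and follows essentially the same route as the paper's proof: exhibit the identity $A\at{\ket{+}^{\otimes n}\ket{QFT_n}} = \ket{QFT^*_n}\ket{QFT_n}$, compute $\nu(\ket{QFT^{\pm 1}_n}) = n-2$ from the single-qubit tensor factorization (two stabilizer factors, $n-2$ non-stabilizer ones), and close with additivity and monotonicity of $\nu$ so the catalyst's nullity cancels. The only cosmetic differences are that you derive the adder identity by phase kickback on control basis states rather than the paper's double-sum re-indexing over $\ket{QFT^a_n}\ket{QFT^b_n}$, and you make explicit the $\theta\mapsto-\theta$ symmetry giving $\nu(\ket{QFT^*_n})=\nu(\ket{QFT_n})$, which the paper leaves implicit.
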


\begin{proof}
The proof proceeds in two steps. 
First we show that the adder circuit $A$ acting on the $n$-qubit quantum Fourier state $\ket{QFT^b_n}$ (defined below) and the stabilizer state $\ket{+}^{\otimes n}$ has the action $A(\ket{+}^{\otimes n} \ket{QFT^b_n}) = \ket{QFT^{-b}_n}  \ket{QFT^b_n}$.
This tells us that if $A$ is implemented by a set of Clifford gates and Pauli measurements along with some input resource state $\ket{\psi}$, it must be that $\nu(\ket{\psi} \ket{QFT^b_n}) \geq \nu(\ket{QFT^{-b}_n}  \ket{QFT^b_n})$, and hence $\nu(\ket{\psi}) \geq \nu(\ket{QFT^{-b}_n})$ by the additive property of the stabilizer nullity.
Second we show that $\nu(\ket{QFT^{-1}_n}) = n-2$, which then directly implies our bounds since if the bounds are not satisfied, $\nu(\ket{\psi}) \geq \nu(\ket{QFT^{-1}_n})$ would not be satisfied. 

Given this proof structure, it remains to show that 
\[
A(\ket{+}^{\otimes n}  \ket{QFT^b_n}) = \ket{QFT^{-b}_n} \ket{QFT^b_n},
\] and that $\nu(\ket{QFT^{-1}}) = n-2$.
First we recall the family of quantum Fourier states for each integer $a = 0,1, \dots ,2^n-1$: 
\begin{eqnarray}
	\ket{QFT^{a}_n} = \frac{1}{\sqrt{2^n}}\sum_{y=0}^{2^n-1} \exp\left[ \frac{i 2 \pi a y}{2^n}\right] \ket{y} = \otimes_{k=1}^{n}\frac{\ket{0}+e^{i 2\pi a/ 2^k}\ket{1}}{\sqrt{2}},
	\label{eq:QFTstate}
\end{eqnarray}
where $\ket{y}$ is an $n$-qubit basis state (with $y$ expressed in binary), and note that $\ket{QFT^0_n} = \ket{+}^{\otimes n}$, and $\ket{QFT^{a}_n} = \ket{a+2^n}$.
Consider applying the adder to a pair of such states:
\begin{eqnarray} \label{eq:qft-state}
A \left( \ket{QFT^{a}_n} \ket{QFT^b_n} \right) &=& \frac{1}{2^n}\sum_{y=0}^{2^n-1}\sum_{z=0}^{2^n-1}  \exp\left[ \frac{i 2 \pi (a y+bz)}{2^n}\right] \ket{y} \ket{z+y},\nonumber\\
&=& \frac{1}{2^n}\sum_{y=0}^{2^n-1}\sum_{x=0}^{2^n-1}  \exp\left[ \frac{i 2 \pi (a y+b(x-y))}{2^n}\right] \ket{y} \ket{x},\nonumber\\
&=& \ket{QFT^{a-b}_n} \ket{QFT^b_n}.\nonumber
\end{eqnarray}
By taking $a=0$, we have $A(\ket{+^{\otimes n}} \ket{QFT^b_n}) = \ket{QFT^{-b}_n}  \ket{QFT^b_n}$ as required.

Finally, to calculate the stabilizer nullity of $\ket{QFT^{-b}_n}$, we use the tensor product decomposition in Eq.~(\ref{eq:QFTstate}), and note that $\nu(\ket{QFT^{-b}_n})$ is the sum of the stabilizer nullity for each state in the tensor product.
When $b=1$, from prop.~\ref{prp:AdderLowerBound} the first two states in the tensor decomposition have $\nu=0$, whereas the remaining the $n-2$ states have $\nu=1$, such that $\nu(\ket{QFT^1_n}) = n-2$.
The bounds are then implied from the fact that $\nu(\ket{T}) = 1$, $\nu(\ket{CS}) = 2$ and $\nu(\ket{CCZ}) = 3$.
\end{proof}


The calculation of $\nu(\ket{QFT^1_n}) = n-2$ that we performed in the proof above
also implies that the Quantum Fourier Transform on $n$ qubits can not be performed 
using fewer than $n-2$ copies of $\ket{T}$.

\section{Lower bounds for approximate unitary synthesis}
\label{sec:unitary-synthesis}

In this section, we lower bound the number of resource states needed to approximate an arbitrary single-qubit unitary using Clifford gates and Pauli measurements.
Unlike the previously-known lower bounds, our bounds: 
(1) allow for Pauli measurements; 
(2) allow measurement outcomes to affect the subsequent parts of the protocol; and 
(3) do not depend on the number of ancillary qubits used in the protocol.

There are some subtleties to be addressed when analyzing a protocol containing measurements that can affect the operations applied in subsequent parts of the protocol.
In particular, the state the protocol outputs and
the number of resource states it consumes are random variables, which depend on the sequence of measurement outcomes obtained. 
The following definition is convenient for formulating lower bounds in this setting.


\begin{dfn}
\label{defn:measurement-dependent-protocols}
Consider a protocol with measurement outcomes that can affect subsequent parts of the protocol.
Fixing a sequence of measurement outcomes in the protocol specifies an associated post-selected quantum circuit.
Every input state to such a protocol defines a probability distribution on the set of all measurement outcomes and on their associated post-selected quantum circuits.
We say that the protocol has some property $P$ with probability at least $p$ if, for all states input to the protocol, a sample drawn from the distribution of post-selected quantum circuits has the property $P$ with probability at least $p$.
\end{dfn}

For example, the property $P$ above could be \textit{the number of $\ket{T}$ states consumed is at least $M$}.
The primary goal of this section is to establish the following result:

\begin{thm} \label{thm:approximation-lower-bound}
Consider a protocol that uses $\mathcal{N}_{\ket{\Psi}}(U,\varepsilon)$ copies of the resource state $\ket{\Psi}$ and stabilizer operations to
approximate a one-qubit unitary $U$ to within precision $\varepsilon$ (measured by the diamond norm). For any positive $C > 1$ and $\varepsilon < 1/(2^8 C)$ there exists a unitary $U$ such that the following inequalities must hold
\begin{align*}
\mathcal{N}_{\ket{T}}(U,\varepsilon) 
& \ge
\frac{1}{6} \log_2\at{1/\varepsilon} - \frac{1}{6} \log_2\at{C} - 1,
\\
\mathcal{N}_{\ket{CCZ}}(U,\varepsilon)
& \ge
\frac{1}{8} \log_2\at{1/\varepsilon} - \frac{1}{8} \log_2\at{C} - \frac{3}{4},
\\
\mathcal{N}_{\ket{CS}}(U,\varepsilon)
& \ge
\frac{1}{6} \log_2\at{1/\varepsilon} - \frac{1}{6} \log_2\at{C} - 1.
\end{align*}
with probability at least $(C-1)/C$.
In particular, this is the case for all unitaries $U$ such that
$2\sqrt{C \varepsilon} \le \abs{\bra{0}U\ket{1}}^2 \le 6 \sqrt{C \varepsilon}$.
\end{thm}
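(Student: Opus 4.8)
The plan is to reduce the task of approximating $U$ to a statement about a single computational-basis measurement probability, and then to argue that this probability lives in a discrete set whose smallest nonzero element shrinks only exponentially in the number of consumed resource states. First I would fix a target $U$ with $p := \abs{\bra{1}U\ket{0}}^2 \in [2\sqrt{C\varepsilon},\,6\sqrt{C\varepsilon}]$; this is a legitimate probability once $6\sqrt{C\varepsilon}\le 1$, which holds for $\varepsilon < 1/(2^8 C)$ (this threshold also keeps all relevant probabilities below $1/2$), and every such $U$ will satisfy the bound. Writing $\ket{\psi}=U\ket{0}$, the diamond-norm guarantee implies the output $\mathcal{E}(\kb{0}{0})$ of the synthesis channel is $\varepsilon$-close in trace distance to $\kb{\psi}{\psi}$, so that $1-\bra{\psi}\mathcal{E}(\kb{0}{0})\ket{\psi}\le\varepsilon$. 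Decomposing $\mathcal{E}(\kb{0}{0})=\sum_b \Pr[b]\,\rho_b$ over measurement branches (each a post-selected stabilizer circuit consuming a definite number $N_b$ of copies of $\ket{\Psi}$, in the sense of \defn{measurement-dependent-protocols}), and setting the overlap $F_b=\bra{\psi}\rho_b\ket{\psi}$, this reads $\mathbf{E}_b[1-F_b]\le\varepsilon$.

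Next I would invoke \theo{canonical-form} to control the per-branch outcome probability $q_b=\bra{1}\rho_b\ket{1}$. The canonical form lets me write the unnormalised single-qubit output of any post-selected stabilizer circuit on $N$ copies of $\ket{\Psi}$ as $x_0\ket{0}+x_1\ket{1}$, with $x_0,x_1$ in the ring of integers of the relevant field ($\z[i]$ for $\ket{CS}$ and $\ket{CCZ}$, $\z[\zeta_8]$ for $\ket{T}$), and a normalisation $\abs{x_0}^2+\abs{x_1}^2=2^a$ whose exponent $a$ is bounded by a fixed multiple of $N$ that is \emph{independent of circuit length, ancilla count, and post-selection}. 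Then $q_b=\abs{x_1}^2/2^a$ lies in a discrete set, and a field-norm argument gives a lower bound on its smallest nonzero value: a nonzero algebraic integer has norm at least one, and conjugating the normalisation bounds the Galois conjugate $\abs{x_1'}^2\le 2^a$ (in the $\z[i]$ case $\abs{x_1}^2$ is itself a positive integer, so this is automatic). Combining, $q_b\ge 2^{-\gamma N}$ with $\gamma=3$ for $\ket{T}$ and $\ket{CS}$ and $\gamma=4$ for $\ket{CCZ}$. In words, a branch consuming few resources can only realise a measurement probability that is either exactly zero or bounded away from zero.

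Finally I would close with a gap-plus-Markov step. Set $M=\tfrac{1}{2\gamma}\log_2(1/\varepsilon)-\tfrac{1}{2\gamma}\log_2(C)-\tfrac{3}{\gamma}$, so that $2^{-\gamma M}=8\sqrt{C\varepsilon}$; substituting the three values of $\gamma$ reproduces exactly the three inequalities in the statement. For any branch with $N_b<M$ the smallest nonzero element of its discrete set exceeds $8\sqrt{C\varepsilon}$, so $q_b$ is either $0$ or at least $8\sqrt{C\varepsilon}$; in the first case $\abs{q_b-p}=p\ge 2\sqrt{C\varepsilon}$, and in the second $\abs{q_b-p}\ge 8\sqrt{C\varepsilon}-6\sqrt{C\varepsilon}=2\sqrt{C\varepsilon}$. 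Either way $1-F_b\ge\abs{q_b-p}^2\ge 4C\varepsilon$, using that for a pure target the infidelity dominates the squared deviation of any outcome probability. Combined with $\mathbf{E}_b[1-F_b]\le\varepsilon$, Markov's inequality yields $\Pr[N_b<M]\le\Pr[1-F_b\ge 4C\varepsilon]\le 1/(4C)$, so $N_b\ge M$ with probability at least $1-1/(4C)$, which in particular exceeds $(C-1)/C$.

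I expect the main obstacle to be the number-theoretic core of the second paragraph: using \theo{canonical-form} to show that the denominator exponent $a$ of the single-qubit output grows only linearly in $N$ (and not with depth, ancillas, or post-selection), and pinning down the precise constant $\gamma$ for each resource. The averaging subtlety — that the number of resources consumed is a random variable — is handled cleanly by working with the infidelity $1-F_b$, which is an affine functional of $\rho_b$, so no cancellation between over- and under-shooting branches can occur; this is precisely the reason for passing to $F_b$ rather than reasoning directly with the outcome probability $q_b$, whose branch average could otherwise mask deviations.
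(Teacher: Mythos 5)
Your outer shell is correct, and it is a clean repackaging of the paper's averaging machinery (\lemm{from-channels-to-post-selection}, \lemm{from-state-to-unitary}): the inequality $1-F_b\ge\abs{q_b-p}^2$ (Fuchs--van de Graaf combined with the fact that trace distance dominates differences of outcome probabilities), the gap argument, and Markov's inequality do reproduce the three stated bounds with probability $1-1/(4C)\ge (C-1)/C$, and your values $\gamma=3,3,4$ match the paper's \lemm{t-lower-bound}, \lemm{cs-lower-bound} and \lemm{ccz-lower-bound}. The genuine gap is your second paragraph, which is the heart of the theorem, not a deferrable technicality. First, the premise $\abs{x_0}^2+\abs{x_1}^2=2^a$ is false as stated: the squared norm of the unnormalised post-selected output equals the joint post-selection probability, which is generically not a power of two (post-selecting $\ket{T}$ on the $+1$ outcome of $X$ already gives norm squared $(1+1/\sqrt{2})/2$). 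Second, and more fundamentally, the route you propose --- tracking denominator exponents of output \emph{amplitudes} through the canonical form --- cannot yield an exponent bounded by a multiple of $N$ alone: the Clifford $C$ in \theo{canonical-form} and the stabilizer factor $\ket{S}$ act on \emph{all} qubits of the circuit, and their matrix entries carry $1/\sqrt{2}$-denominators that grow with the qubit count (e.g.\ $H^{\otimes m}$ has entries $\pm 2^{-m/2}$); this is exactly the ancilla dependence the theorem must rule out. You flag this yourself as ``the main obstacle,'' but without it there is no theorem.

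The paper's resolution is to never touch amplitudes. Conjugate the final computational-basis measurement back through the Clifford: $C^\dagger(\kb{1}{1}\otimes I)C$ is a Pauli projector $M_{P_{k+1}}$ with $P_{k+1}=C^\dagger Z_1 C$, and Clifford conjugation maps Paulis to Paulis while introducing \emph{no} denominators. The quantity to bound is then the joint probability of commuting post-selected Pauli measurements applied directly to $\ket{\Psi}^{\otimes N}$,
\[
\frac{1}{2^{m}}\sum_{P \in \ip{P_1,\ldots,P_{m}}} \bra{\Psi}^{\otimes N} P \ket{\Psi}^{\otimes N},
\]
where the number $m$ of projectors (the canonical-form measurements plus $P_{k+1}$) is at most $\nu(\ket{\Psi})N$ --- this, rather than amplitude bookkeeping, is what \theo{canonical-form} buys. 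Discreteness now comes from the Pauli spectrum of the \emph{input}: expectations of $\ket{CS}$ and $\ket{CCZ}$ lie in $\set{0,\pm 1/2,\pm 1}$, so the probability is an integer multiple of $2^{-(m+N)}$, giving $\gamma=3$ and $4$; for $\ket{T}$ the values $\pm 1/\sqrt{2}$ are handled by the Galois map $\sqrt{2}\mapsto-\sqrt{2}$, under which $p^\bullet$ is itself a measurement probability on $\ket{T^\bullet}^{\otimes N}$ and hence at most $1$ --- this is your norm argument, but the upper bound on the conjugate that you extracted from the false normalisation premise here comes for free (\propos{commuting-pauli-prob-on-t-states}). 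One further detail your sketch skips: one must verify that $P_{k+1}$ commutes with $P_1,\ldots,P_k$, which the paper gets by noting that an anticommuting final measurement forces outcome probability exactly $1/2$, contradicting $q_b<8\sqrt{C\varepsilon}<1/2$. Your conditional probability then inherits the bound since $q_b$ is at least the joint probability.
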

The bounds in \theo{approximation-lower-bound} directly imply related bounds on the average case, such as:
\[
\mathbf{E} \mathcal{N}_{\ket{T}}(U,\varepsilon) 
\ge
\frac{C-1}{C}\at{\frac{1}{6} \log_2\at{1/\varepsilon} - \frac{1}{6} \log_2\at{C} - 1}.
\] 
In the rest of this section we put together the pieces to prove \theo{approximation-lower-bound}.
Our strategy to lower bound the number of resource states needed to approximate unitary $U$ to within diamond-norm precision $\varepsilon$ is to establish a relation between this and lower bounds on approximating the state $U\ket{1}$ to within trace norm $\varepsilon'$. 
Unfortunately, the associated resource requirement divergence is not captured by either the nullity or extent monotones we have discussed as they do not diverge for states approaching $\ket{0}$. 
Our unitary synthesis results do not hold when a catalyst state is allowed, in contrast to those bounds proven with the stabilizer nullity due to its additive property.
We make the relation between approximating the unitary $U$ to within diamond-norm precision $\varepsilon$ and approximating the state $U\ket{1}$ to within trace norm $\varepsilon'$ concrete in \sec{unitary-and-state-approx}, and then prove lower bounds for state approximation using different resource states in \sec{approx-with-CS} and \sec{approx-with-T}.
Before this, we present a theorem which we use to prove our lower bounds apply even with an arbitrary number of additional stabilizer ancillas:
\begin{thm} \label{thm:canonical-form}
Consider a post-selected stabilizer circuit with input $\ket{\psi_{\mathrm{in}}}$ and output $\ket{\psi_{\mathrm{out}}}$, where
$\ket{\psi_{\mathrm{in}}}$ is defined on no fewer qubits than $\ket{\psi_{\mathrm{out}}}$.
Then there exists a set of $k = \nu(\ket{\psi_{\mathrm{in}}})-\nu(\ket{\psi_{\mathrm{out}}})$ independent commuting Pauli operators $P_1,\ldots,P_{k}$ and
a Clifford unitary $C$ such that 
\[
\ket{\psi_{\mathrm{out}}}\otimes\ket{S}
\propto
C M_{P_1} \ldots M_{P_{k}} \ket{\psi_{\mathrm{in}}},
\]
where $\ket{S}$ is a stabilizer state and where $M_{P}$ is the projector on the $+1$ eigenspace of $P$.
\end{thm}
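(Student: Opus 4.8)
The plan is to put an arbitrary post-selected stabilizer circuit into a normal form by pushing all Clifford unitaries to the end and all stabilizer-ancilla preparations to the beginning, and then to whittle the intervening post-selected Pauli measurements down to exactly $k$ commuting independent ones acting on the input register. First I would record the elementary rewriting rules: a stabilizer-state ancilla preparation equals $C\ket{0}^{\otimes a}$ for a Clifford $C$, so all preparations may be commuted to the front and we may assume the ancillas start in $\ket{0}^{\otimes a}$; and a Clifford $U$ followed by a measurement satisfies $U M_P = M_{UPU^{\dagger}} U$, so every Clifford can be commuted past the remaining measurements and collected into a single final Clifford. After this step the circuit reads $\ket{\psi_{\mathrm{out}}}\otimes\ket{\mathrm{junk}} \propto C_f\, M_{Q_m}\cdots M_{Q_1}\,(\ket{\psi_{\mathrm{in}}}\otimes\ket{0}^{\otimes a})$, where $\ket{\mathrm{junk}}$ is the pure, disentangled state of the discarded qubits.

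Next I would reduce the measurement string while tracking the stabilizer group of the running state, processing the $Q_i$ in time order. Three things can happen, exactly as in Proposition~\ref{prp:measurement-and-stabilizer}: if $\pm Q_i$ already lies in the current stabilizer the measurement is deterministic and can be deleted; if $Q_i$ anticommutes with some current stabilizer element $R$ then $M_{Q_i}$ acts as the Clifford $(I+Q_iR)/\sqrt{2}$ on the state, which I commute to the end and absorb into $C_f$ while conjugating the later $Q_j$; and if $Q_i$ commutes with the entire current stabilizer it enlarges it and is kept. The kept measurements are mutually commuting, since each commutes with all earlier stabilizer generators and hence with all earlier kept measurements, and may be taken independent. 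This is the standard Pauli-based-computation normal form: a product of commuting independent Pauli projectors followed by one Clifford.

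The third step is ancilla elimination. For a kept (commuting) measurement $Q_i = A_i \otimes B_i$ split into system and ancilla factors; because the ancillas carry the stabilizers $Z_{n+1},\dots,Z_{n+a}$ (tracked through the Cliffords pushed out), $Q_i$ can commute with all of them only if its ancilla factor $B_i$ is a product of those $Z$'s, in which case $B_i$ acts as $+1$ and $M_{Q_i}$ may be replaced by $M_{A_i \otimes I}$, a measurement of the input register alone. The residual action on the ancillas is then purely Clifford, so they end in a stabilizer state that I fold, together with the leftover input qubits, into the padding stabilizer state $\ket{S}$ and into the final Clifford $C$. Finally I fix the count: by additivity and Proposition~\ref{prp:measurement-and-stabilizer} the stabilizer dimension must increase by exactly $\nu(\ket{\psi_{\mathrm{in}}})-\nu(\ket{\psi_{\mathrm{out}}})=k$ in passing to the target $\ket{\psi_{\mathrm{out}}}\otimes\ket{S}$, whose nullity is $\nu(\ket{\psi_{\mathrm{out}}})$ since $\ket{S}$ is a stabilizer state, and I arrange that every retained projector drops the nullity by exactly one.

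The main obstacle is precisely this last bookkeeping. A single projection can raise the stabilizer dimension by more than one, as the $\ket{T}\ket{T}$ example following Proposition~\ref{prp:measurement-and-stabilizer} shows, and the discarded $\ket{\mathrm{junk}}$ need not be a stabilizer state, so the raw count of retained measurements is not obviously $k$. The work is to show both effects normalize away without disturbing commutativity or independence: whenever a projection creates more than one new stabilizer generator, the extra emergent generators are inserted as deterministic, state-preserving commuting measurements; and any non-stabilizer $\ket{\mathrm{junk}}$ is collapsed by further commuting measurements that I pull back through $C$ to become Pauli projectors on the input register. Verifying that the ancilla factors of kept measurements are forced to be $Z$-type, so that ancilla elimination never consumes a measurement, is the other place requiring care, and together these steps pin the final tally at exactly $k=\nu(\ket{\psi_{\mathrm{in}}})-\nu(\ket{\psi_{\mathrm{out}}})$.
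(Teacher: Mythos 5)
Your first two stages --- pushing all Cliffords to the end via $U M_P = M_{UPU^{\dagger}}U$, the three-case reduction of each measurement against the running stabilizer (delete / absorb as $(I+Q_iR)/\sqrt{2}$ / keep), and the observation that every kept measurement must commute with $Z_{n+1},\dots,Z_{n+a}$ and therefore has a $Z$-type ancilla factor that can be stripped --- reproduce the paper's \lemm{weak-canonical-form} (built on \propos{measurement-as-unitary}) and the first half of the proof of \theo{canonical-form-app}, and your final padding-by-deterministic-generators step to force the count to exactly $k$ is also how the paper fixes the tally. One small slip: when $-Q_i$ lies in the current stabilizer the measurement is not ``deterministic and deletable''; the post-selected $+1$ branch has probability zero, and this case must simply be excluded because the output is a non-zero state.

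The genuine gap is hidden in the word ``fold''. After your reductions the identity you actually hold is
\[
\ket{\psi_{\mathrm{out}}}\otimes\ket{\mathrm{junk}}
\propto
C_f\left( \left(M_{A_s}\cdots M_{A_1}\ket{\psi_{\mathrm{in}}}\right)\otimes\ket{0}^{\otimes a}\right),
\]
an equation on $n+a$ qubits in which $C_f$ acts \emph{jointly} on the input register and the ancillas, whereas the theorem asserts an equation on the input register alone: $C$ and the $P_i$ act only on the qubits of $\ket{\psi_{\mathrm{in}}}$, and no trace of $a$ survives. Your claim that the ancillas ``end in a stabilizer state'' is unjustified --- $C_f$ may entangle them arbitrarily with everything else --- and even after you collapse $\ket{\mathrm{junk}}$ with additional post-selected measurements (which must themselves be re-processed through your case analysis, since pulled back through $C_f$ they act on all $n+a$ qubits, not on the input register), you are still left with an $(n+a)$-qubit Clifford relating $\chi\otimes\ket{0}^{\otimes a}$ to $\ket{\psi_{\mathrm{out}}}\otimes\ket{S'}$. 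Descending from that to an $n$-qubit Clifford is exactly the paper's \lemm{ancilla-free}: if $\ket{0}^{\otimes a}\otimes\ket{\psi} = C\left(\ket{0}^{\otimes a}\otimes\ket{\phi}\right)$ for a Clifford $C$, then $\ket{\psi}=C_0\ket{\phi}$ for some Clifford $C_0$ on the small register. This is not a formality: the paper must first replace $C$ by a Clifford commuting with every $Z_j$ on the ancillas (conjugating by a CNOT circuit that undoes the induced permutation of the $Z$-type stabilizers), then block-decompose it using \propos{z-commutator}, and then prove that the diagonal block is itself a Clifford, which is done via the Choi-state argument of \lemm{clifford-property} together with \propos{pauli-matrix-property}. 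Since the whole point of the theorem is that the canonical form is independent of the number of ancillas, some version of this decoupling lemma is unavoidable, and your proposal never supplies it.
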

From \theo{canonical-form}, without loss of generality we can assume that there are only commuting measurements in the protocol and no ancillary qubits which simplifies our analysis.
However, note that this canonical form works for post-selected measurements. 
We highlight this theorem here because we expect that it may be of broader application and interest. 
We defer the proof to \app{canonical-form}.

\subsection{Approximate unitary synthesis with and without post-selection}
\label{sec:unitary-and-state-approx}

Our starting point addresses the order of taking averages for a protocol with measurement outcomes that can affect subsequent parts of the protocol. 
In particular, the following lemma shows that a protocol that has an average output density matrix which is close to a desired state also has, on average, an output density matrix which is close to the desired state \textit{on individual runs of the protocol}.

\begin{lem} \label{lem:from-channels-to-post-selection}
Consider a protocol that, when averaged over measurement outcomes, produces a density matrix $\rho$ that has fidelity $\bra{\psi} \rho \ket{\psi}$ at least $1 - \delta$ with a pure state $\ket{\psi}$.
Then, for any $C > 1$, with probability at least $(C-1)/C$ the fidelity between $\ket{\psi}$ and the protocol's output is at least $1- C\delta$ following the convention of \defn{measurement-dependent-protocols}.
\end{lem}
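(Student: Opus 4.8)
The plan is to recognize that this is, at its core, Markov's inequality applied to the per-run infidelity. First I would unpack the setup following \defn{measurement-dependent-protocols}: a post-selected stabilizer circuit maps a pure input state to a pure output state, so each measurement-outcome sequence $m$ (drawn with probability $p_m$) yields a normalized pure output $\ket{\psi_m}$, and the density matrix obtained by averaging over outcomes is the convex combination $\rho = \sum_m p_m \ket{\psi_m}\bra{\psi_m}$. The single point worth stating carefully here is why each branch is pure: post-selecting on a fixed outcome sequence of a stabilizer circuit corresponds to applying (normalized) projectors and Cliffords to the input, which keeps a pure input pure, so that ``fidelity with the pure state $\ket{\psi}$'' on each run is the single scalar $F_m := |\langle \psi | \psi_m \rangle|^2$.

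Second, I would observe that the fidelity of the mixture decomposes linearly into these per-run fidelities:
\[
\bra{\psi}\rho\ket{\psi} = \sum_m p_m\, |\langle \psi | \psi_m \rangle|^2 = \sum_m p_m F_m .
\]
Thus the hypothesis $\bra{\psi}\rho\ket{\psi} \ge 1 - \delta$ says precisely that the expectation of $F_m$ over the distribution of post-selected circuits is at least $1 - \delta$.

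Third, I would introduce the nonnegative random variable $X := 1 - F_m$ (the infidelity on a run), which satisfies $\mathbf{E}[X] = 1 - \sum_m p_m F_m \le \delta$. Applying Markov's inequality with threshold $C\delta$ then gives $\Pr[X \ge C\delta] \le \mathbf{E}[X]/(C\delta) \le 1/C$, and taking complements yields
\[
\Pr[\,F_m \ge 1 - C\delta\,] = \Pr[\,X \le C\delta\,] \ge \frac{C-1}{C},
\]
which is exactly the claimed bound under the convention of \defn{measurement-dependent-protocols}.

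I do not expect a genuine technical obstacle: the content is a one-line Markov estimate. The only care required is in the bookkeeping that matches the averaging convention of \defn{measurement-dependent-protocols} — namely establishing that the outcome-averaged $\rho$ is a $p_m$-weighted mixture of the pure per-run outputs $\ket{\psi_m}$, so that the mixture fidelity splits as the convex combination $\sum_m p_m F_m$ and the quantity $1-F_m$ can legitimately be treated as a nonnegative random variable with expectation at most $\delta$. Once that correspondence is in place, the probabilistic conclusion is immediate.
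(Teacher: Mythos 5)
Your proof is correct and is essentially the paper's argument: the paper splits the outcome sequences into a good set $S$ (per-run fidelity $\ge 1-C\delta$) and its complement, then solves $1-\delta \le p_S + (1-p_S)(1-C\delta)$ for $p_S$, which is precisely an inline proof of the Markov estimate you invoke. The only substantive remark is that your purity claim is unnecessary (and not guaranteed in general, since the output register may be entangled with discarded ancillas): the paper works with per-run density matrices $\rho_k$ and fidelities $\bra{\psi}\rho_k\ket{\psi}$, and your linearity-plus-Markov argument goes through verbatim in that generality.
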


\begin{proof}
Suppose the protocol has $N$ possible sequences of measurement outcomes.
Let $p_k$ be the probability of the $k^\text{th}$ sequence of measurement outcomes occurring, and let
$\rho_k$ be the normalized density matrix of the output register for that sequence.

For fixed $C > 1$ we split the set of all fixed sequences of measurement outcomes into two subsets, $S$ and its complement $\overline{S}$. 
The set $S$ contains sequences that output good approximations of $\ket{\psi}$ such that for $k\in S$, $\bra{\psi}\rho_{k}\ket{\psi} \geq 1 - C \delta$, and $\overline{S}$ contains sequences that output worse approximations, such that for $k \in \overline{S}$, $\bra{\psi}\rho_{k}\ket{\psi} < 1 - C \delta$.
Because the overall average output $\rho$ has fidelity at least $1 - \delta$ with $\ket{\psi}$, the probability $p_S$ of all outcomes leading to a good approximation can not be small.
More explicitly, let $\rho_S$ and $\rho_{\overline{S}}$ be the normalized density matrices corresponding to averaging over the subsets $S$ and $\overline{S}$ respectively:
\[
\rho_S \propto \sum_{k \in S} p_k \rho_k \text{ and } \rho_{\overline{S}} \propto \sum_{k \in \overline{S}} p_k \rho_k.
\]
The density matrix of the output is then $\rho = p_S \rho_S + (1 - p_S) \rho_{\overline{S}}$. 
By construction $\bra{\psi}\rho_{\overline{S}}\ket{\psi} < 1 - C \delta$, therefore 
\[
1 - \delta
\le
\bra{\psi}\rho\ket{\psi}
= 
p_S \bra{\psi} \rho_S \ket{\psi} + (1-p_S) \bra{\psi} \rho_{\overline{S}} \ket{\psi}
\le
p_S + (1-p_S)(1-C \delta).
\]
By solving the inequality $1 - \delta \le p_S + (1-p_S)(1-C \delta)$ we derive the required lower bound on $p_S$.
\end{proof}
Thus far we have used fidelity to compare a state and its approximation, but we wish to deduce something about the diamond norm distance between channels. 
We can give bounds in both directions between the trace distance and the fidelity $\sqrt{\bra{\psi} \rho \ket{\psi}}$ using the Fuchs–van de Graaf inequalities:
\begin{align} \label{eq:fuchs-van-de-graaf}
\sqrt{\bra{\psi} \rho \ket{\psi}} & \ge 1 - \frac{1}{2}\nrm{ \ket{\psi}\bra{\psi} - \rho }_1, \\
\quad
\label{eq:fuchs-van-de-graaf-snd}
\nrm{ \ket{\psi}\bra{\psi} - \rho}_1 & \le 2\sqrt{ 1 - \bra{\psi} \rho \ket{\psi} }.
\end{align}

From the second of these inequalities and from \lemm{from-channels-to-post-selection}, the following is implied:
\textit{Consider a protocol which, when averaged over measurement outcomes, produces a density matrix $\rho$ that has fidelity at least $1 - \delta$ with a pure state $\ket{\psi}$.
Then, for any $C > 1$, with probability at least $(C-1)/C$ the trace distance between $\ket{\psi}$ and the protocol's output is at most $2\sqrt{C \delta}$.}
Note that the square root is necessary, as exemplified by randomized protocols~\cite{hastings2016turning,CampbellRandom17,childs2018faster,campbell2018random}. A corollary of these protocols is approximate state preparation protocols that achieve trace distance $\sim \delta$ by randomly choosing between different deterministic state preparation procedures, each with trace distance $\sim \sqrt{\delta}$.

The next lemma establishes connection between the lower bounds for state preparation protocols with post-selection and lower bounds for non-post-selected protocols for approximating unitaries.

\begin{lem} \label{lem:from-state-to-unitary}
Consider a protocol that uses $\mathcal{N}_{\ket{\Psi}}(U,\varepsilon)$ copies of the resource state $\ket{\Psi}$ and stabilizer operations to
approximate a one-qubit unitary $U$ to within precision $\varepsilon$ (measured by the diamond norm). 
For any $C > 1$,
let $N$ be the minimum number of copies of a resource state $\ket{\Psi}$ needed to approximate the state $\ket{\psi}=U\ket{1}$ to trace distance $2\sqrt{C \varepsilon}$ with any protocol composed of stabilizer operations and post-selection. 
Then $\mathcal{N}_{\ket{\Psi}}(U,\varepsilon) \ge N$ with probability at least $(C-1)/C$, following the convention of \defn{measurement-dependent-protocols}.
\end{lem}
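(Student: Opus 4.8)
The plan is to reduce the unitary-approximation task to a state-preparation task by feeding the protocol the stabilizer input $\ket{1}$. Since $\ket{1}$ is a stabilizer state that can be prepared for free, prepending its preparation turns the protocol into a state-preparation procedure whose target is $\ket{\psi}=U\ket{1}$ and which on each branch consumes the corresponding value of the random variable $\mathcal{N}_{\ket{\Psi}}(U,\varepsilon)$. First I would use the diamond-norm guarantee: because the averaged channel implemented by the protocol is within $\varepsilon$ of $U$ in diamond norm, its output on the fixed input $\ket{1}$ is within the corresponding trace distance of $\ket{\psi}\bra{\psi}$, the average being taken over all measurement outcomes. Converting this trace-distance bound into a fidelity bound via the Fuchs--van de Graaf inequality \eq{fuchs-van-de-graaf}, the averaged output density matrix $\rho$ satisfies $\bra{\psi}\rho\ket{\psi}\ge 1-\varepsilon$, up to the constant fixed by the diamond-norm normalization convention.

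Next I would pass from this statement about the averaged output to a statement about individual post-selected runs, which is precisely what the corollary of \lemm{from-channels-to-post-selection} stated immediately after it accomplishes. Applying it with $\delta=\varepsilon$, for any $C>1$ the trace distance between $\ket{\psi}$ and the protocol's output on a single run is at most $2\sqrt{C\varepsilon}$ with probability at least $(C-1)/C$, in the sense of \defn{measurement-dependent-protocols}. Each such good run, after prepending the free preparation of $\ket{1}$, is a post-selected stabilizer circuit that prepares $\ket{\psi}$ to within trace distance $2\sqrt{C\varepsilon}$. By the very definition of $N$ as the minimum number of copies of $\ket{\Psi}$ needed by any post-selected stabilizer preparation of $\ket{\psi}$ to that accuracy, every good run must satisfy $\mathcal{N}_{\ket{\Psi}}(U,\varepsilon)\ge N$. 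Hence $\mathcal{N}_{\ket{\Psi}}(U,\varepsilon)\ge N$ holds with probability at least $(C-1)/C$, as claimed.

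The main obstacle, and the reason the preceding lemmas are needed, is the bookkeeping around the fact that both the output state and the resource count are random variables depending on the measurement record. One must be careful to establish the approximation quality for the genuinely relevant input $\ket{1}$---the single input on which the lower bound is forced---while noting that the number of injected resource states is fixed on each branch, so that the event $\{\mathcal{N}_{\ket{\Psi}}(U,\varepsilon)\ge N\}$ is a well-defined property of the post-selected circuit to which \defn{measurement-dependent-protocols} applies. A secondary and purely routine point is tracking the constant factors relating the diamond norm, the trace distance, and the fidelity so that the target accuracy comes out to exactly $2\sqrt{C\varepsilon}$; these follow from \eq{fuchs-van-de-graaf} and \eq{fuchs-van-de-graaf-snd} together with the chosen normalization of the diamond norm.
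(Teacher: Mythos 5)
Your proposal is correct and takes essentially the same route as the paper's proof: feed the free stabilizer input $\ket{1}$ to turn the unitary protocol into a state-preparation protocol, use \eq{fuchs-van-de-graaf} to convert the diamond-norm guarantee into an averaged fidelity bound $\bra{\psi}\rho\ket{\psi}\ge 1-\varepsilon$, invoke \lemm{from-channels-to-post-selection} together with \eq{fuchs-van-de-graaf-snd} to get a per-run trace distance of $2\sqrt{C\varepsilon}$ with probability at least $(C-1)/C$, and conclude from the definition of $N$. The only cosmetic difference is that you cite the pre-packaged corollary stated after \lemm{from-channels-to-post-selection}, whereas the paper applies the lemma and the second Fuchs--van de Graaf inequality as two separate steps, and you spell out the final appeal to the minimality of $N$ which the paper leaves implicit.
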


\begin{proof}
Given a protocol that uses $\mathcal{N}_{\ket{\Psi}}(U,\varepsilon)$ copies of $\ket{\Psi}$ to
approximate $U$ to diamond-norm precision $\varepsilon$,
we could approximate the state $\ket{\psi} = U\ket{1}$ to within trace distance $\varepsilon$ with $\mathcal{N}_{\ket{\Psi}}(U,\varepsilon)$ copies of $\ket{\Psi}$.
By Fuchs-van de Graaf inequality~\eq{fuchs-van-de-graaf}, 
our protocol approximates $\ket{\psi}$ with fidelity at least $1- \varepsilon/2$.
We now have a statement regarding the fidelity of the protocol, averaged over all the protocol's possible measurement sequences, and we wish to connect this to post-selected protocols.
By direct application of \lemm{from-channels-to-post-selection}, the fidelity between the output of this protocol and $\ket{\psi}$ is at least $1 - C \varepsilon$ with probability at least $(C-1)/C$, following the convention of \defn{measurement-dependent-protocols}.
Finally, by Fuchs-van de Graaf inequality~\eq{fuchs-van-de-graaf-snd}, 
the output density matrix $\rho$ is within trace distance $ 2 \sqrt{C \varepsilon}$  with probability at least $(C-1)/C$.
Therefore $\mathcal{N}_{\ket{\Psi}}(U,\varepsilon) \ge N$ with probability at least $(C-1)/C$.
\end{proof}

In the next sub-sections we establish lower bounds on the number of $\ket{T}$ and $\ket{CS}$ states needed to approximate one qubit states when using post-selected stabilizer operations.
We first establish the lower bounds involving $\ket{CS}$ because it is simpler and
illustrates main ideas used for the lower bound in terms of $\ket{T}$ states.

\subsection{Lower bounds with \texorpdfstring{$\ket{CS}$}{CS} and \texorpdfstring{$\ket{CCZ}$}{CCZ} resource states}
\label{sec:approx-with-CS}

We start by establishing approximation lower bound using $\ket{CS}$ states because it is the simplest case sufficient to illustrate the main proof techniques.
The aim of this subsection is to prove the following result:

\begin{lem} \label{lem:cs-lower-bound}
Let $N_{\ket{CS}}\at{\ket{\psi},\varepsilon}$ be the minimum number of $\ket{CS}$ resource states required
to approximate the one-qubit state $\ket{\psi}$
to within trace distance $\varepsilon$ using stabilizer operations and post-selection.
When $\varepsilon < 1/8$, there exists a state $\ket{\psi}$ such that  $N_{\ket{CS}}\at{\ket{\psi},\varepsilon} \ge \nicefrac{1}{3}\cdot\log_2(1/\varepsilon)-\nicefrac{2}{3}$. 
For example, this is the case for all states such that
$\varepsilon < \abs{\ip{\psi | 0}}^2 < 3 \varepsilon$.
\end{lem}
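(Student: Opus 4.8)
The plan is to reduce the statement to a number-theoretic claim about which computational-basis probabilities a single-qubit state producible from $n$ copies of $\ket{CS}$ can have, and then to choose a target whose small overlap $\abs{\bk{0}{\psi}}^2$ forces $n$ to be large. Concretely, I aim to show that if a post-selected stabilizer protocol consuming $n$ copies of $\ket{CS}$ outputs a one-qubit state $\ket{\psi'}$ with $q := \abs{\bk{0}{\psi'}}^2$ satisfying $0 < q < 1/2$, then necessarily $q \ge 2^{-3n}$.

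First I would invoke \theo{canonical-form} with $\ket{\psi_{\mathrm{in}}} = \ket{CS}^{\otimes n}$ (on $2n$ qubits) and $\ket{\psi_{\mathrm{out}}} = \ket{\psi'}$ to write $\ket{\psi'}\otimes\ket{S} \propto C\, M_{P_1}\cdots M_{P_k}\ket{CS}^{\otimes n}$, with $C$ Clifford, $\ket{S}$ a stabilizer state, and $k = \nu(\ket{CS}^{\otimes n}) - \nu(\ket{\psi'}) = 2n - \nu(\ket{\psi'})$ (using $\nu(\ket{CS}) = 2$ and additivity). Writing $M = \prod_i M_{P_i}$ for the projector onto the common $+1$ eigenspace and $\Pi_0 = C^\dagger(\ket{0}\bra{0}\otimes I)C = \tfrac12(I + Q_0)$ for a Pauli $Q_0$, I obtain
\[
q = \frac{\bra{CS^{\otimes n}} M \Pi_0 M \ket{CS^{\otimes n}}}{\bra{CS^{\otimes n}} M \ket{CS^{\otimes n}}}.
\]
If $Q_0$ anticommutes with some $P_i$ then $M\Pi_0 M = \tfrac12 M$ and $q = 1/2$, which is excluded by hypothesis; so $Q_0$ commutes with every $P_i$, whence $M Q_0 M = Q_0 M$ and $q = \tfrac12\,\langle (I+Q_0)M\rangle / \langle M\rangle$, where $\langle\cdot\rangle$ denotes the expectation in $\ket{CS}^{\otimes n}$.

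The number-theoretic heart is to control the $2$-adic denominators. I would first record that the single-copy Pauli spectrum is $\Spec\ket{CS} = \{0, 1/2, 1\}$, with the value $1$ attained only by the identity (a short direct computation, consistent with $\nu(\ket{CS}) = 2$), so every Pauli expectation of $\ket{CS}^{\otimes n}$ is a product of such factors and lies in $2^{-n}\z$. Expanding $M = 2^{-k}\sum_S \prod_{i\in S} P_i$, the numerator $\langle(I+Q_0)M\rangle$ is $2^{-k}$ times a sum of Pauli expectations of $\ket{CS}^{\otimes n}$, so it lies in $2^{-(k+n)}\z$; being positive (as $q>0$ and $\langle M\rangle > 0$) it is at least $2^{-(k+n)}$, while $\langle M\rangle \le 1$, giving $q \ge 2^{-(k+n+1)}$. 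The essential sharpening is that $q\in(0,1/2)$ forces $\ket{\psi'}$ to be non-stabilizer — a one-qubit stabilizer state has $\abs{\bk{0}{\cdot}}^2\in\{0,1/2,1\}$ — so $\nu(\ket{\psi'}) = 1$ and $k = 2n-1$, yielding the promised $q \ge 2^{-3n}$.

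Finally I would assemble the lemma: take any target $\ket{\psi}$ with $\varepsilon < \abs{\bk{0}{\psi}}^2 < 3\varepsilon$ and $\varepsilon < 1/8$. Any protocol approximating $\ket{\psi}$ to trace distance $\varepsilon$ outputs some $\ket{\psi'}$ whose computational-basis probabilities differ from those of $\ket{\psi}$ by at most $\varepsilon$, so $q = \abs{\bk{0}{\psi'}}^2 \in (0, 4\varepsilon) \subset (0,1/2)$. The bound $q \ge 2^{-3n}$ then reads $3n \ge \log_2(1/q) > \log_2(1/\varepsilon) - 2$, i.e. $n \ge \tfrac13\log_2(1/\varepsilon) - \tfrac23$. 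The step I expect to be most delicate is this tight $2$-adic accounting: obtaining the coefficient $1/3$ (rather than a worse constant) relies simultaneously on the sharp $2^{-n}$ spectrum bound, on pinning $k = 2n-1$ via $\nu(\ket{\psi'})=1$, and on the observation that the anticommuting case collapses cleanly to $q=1/2$ instead of contributing extra powers of two.
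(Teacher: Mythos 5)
Your proposal is correct and follows essentially the same route as the paper's proof: invoke \theo{canonical-form}, adjoin the Pauli $C^\dagger Z_1 C$ after ruling out anticommutation via the $q=1/2$ dichotomy, pin $\nu(\ket{\psi'})=1$ to get $k=2n-1$, and use the $\{0,\pm\nicefrac{1}{2},1\}$ Pauli spectrum of $\ket{CS}$ to force the relevant probability into $2^{-(k+n)}\z$. The only (cosmetic) difference is that you bound the conditional probability $q$ directly via the ratio with $\bra{CS}^{\otimes n} M \ket{CS}^{\otimes n}\le 1$, whereas the paper bounds the joint probability of all $k$ commuting measurements and notes it is at most $q<4\varepsilon$.
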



\begin{proof}
Our proof has two main parts.
Firstly, we note that the existence of a protocol that uses $n$ copies of $\ket{CS}$ to approximately prepare a state $\ket{\psi}$ to within trace distance $\varepsilon$, where the target state satisfies
$\varepsilon < \abs{\ip{\psi|0}}^2 < 3 \varepsilon$,
implies that there must be a set of $k \le 2n$ commuting Pauli operators which, when measured on the input state $\ket{CS}^{\otimes n}$, have a probability of all giving $+1$ outcomes in the interval $(0,4\varepsilon)$. 
Secondly, we observe that the probability of a joint measurement of any $k \le 2n$ commuting Pauli operators
on the input state $\ket{CS}^{\otimes n}$ can either be zero, or must be at least $1/2^{k+n}$.
We then conclude that $4\varepsilon \ge 1/2^{n+k} \ge 1/2^{3n}$ and therefore 
$N_{\ket{CS}}\at{\ket{\psi},\varepsilon} \ge \nicefrac{1}{3}\cdot\log_2(1/(\varepsilon))-\nicefrac{2}{3}$.

Consider $\ket{\psi}$ such that $\varepsilon < \abs{\ip{\psi|0}}^2 < 3 \varepsilon$ and assume that 
the first qubit is the output qubit of the protocol.
Let $\rho$ be the density matrix of the output qubit.
By \theo{canonical-form}, we can write the approximate preparation of $\rho$ in terms of a Clifford unitary $C$ and a set of $k - 1 = \nu(\ket{CS}^{\otimes n}) - \nu(\ket{\Psi_{\text{out}}})$ independent commuting Pauli operators $P_1, \ldots, P_{k-1}$. 
Let us define $P_k = C^\dagger Z_1 C$ and show that $P_k$ commutes with $P_1,\ldots,P_{k-1}$.
Recall that if $P_k$ anti-commutes with one of $P_1,\ldots,P_{k-1}$, this implies that $p' =1/2$, where $p'=\mathrm{Tr}(\ket{0}\bra{0} \rho)$ is the probability of getting a $+1$ measurement of $Z_1$.
Next we estimate this probability based on the precision requirement $\nrm{ \ket{\psi}\bra{\psi} - \rho}_1 \le \varepsilon$.
Note that $p'$ satisfies the inequality: 
\[
\abs{ |\ip{{0}|{\psi}}|^2 - p' }
=
\abs{\mathrm{Tr}\left(\ket{0}\ip{0|\psi}\bra{\psi} \right) - \mathrm{Tr}(\ket{0}\bra{0} \rho) }
\le
\nrm{ \ket{\psi}\bra{\psi} - \rho}_1 \le \varepsilon,
\]
where we have used the inequality $\abs{\mathrm{Tr} AB } \le \nrm{A}_{\infty} \nrm{B}_1$, and that $\nrm{\ket{0}\bra{0}}_{\infty}=1$.
This implies that the probability $p'$ must belong to the interval $(0, 4\varepsilon)$. 
The condition $\varepsilon < 1/8$ implies that $p' \in (0,1/2)$ and therefore $P_k$ must commute with $P_1,\ldots, P_{k-1}$. 
Next we show that $k \le 2n$, by showing that $\nu(\ket{\Psi_{\text{out}}}) \ge 1$.
If $\nu(\ket{\Psi_{\text{out}}}) = 0$ this means that the output state in a stabilizer state and this would imply that probability 
of measuring $\ket{0}$ on output qubit must be $0,1$ or $1/2$ which is ruled out by our estimate $p' \in (0,4\varepsilon)$. 
The joint probability of measuring $P_1,\ldots,P_k$ is non-zero and less than the conditional probability $p'$ and therefore also belongs to interval $(0,4 \varepsilon)$, as required.

Next we show that if the joint probability of measuring any $k$ commuting Pauli operators $P_1,\ldots,P_{k}$ is non-zero, then it must be at least $1/2^{n+k}$.
Consider
\[
\bra{CS}^{\otimes n} \prod_{j=1}^k \frac{(I+P_{j})}{2} \ket{CS}^{\otimes n} = \frac{1}{2^{k}}\sum_{P \in \ip{P_1,\ldots,P_{k} }} \bra{CS}^{\otimes n} P \ket{CS}^{\otimes n}.
\]
The Pauli expectations of $\ket{CS}$ can only be $0$, $1$ or $\pm 1/2$.
Therefore, the value of the expression above can always be written as $a/2^{k+n}$ for some non-negative integer $a$ and its smallest non-zero value is $1/2^{k+n}$. 
\end{proof}

The key to generalizing the above result from $\ket{CS}$ states to an arbitrary $k$-qubit resource state $\ket{\Psi}$ is to establish a lower bound on the quantity:
\begin{equation} \label{eq:expectation}
 \frac{1}{2^{m}}\sum_{P \in \ip{P_1,\ldots,P_{m} }} \bra{\Psi}^{\otimes n} P \ket{\Psi}^{\otimes n},
\end{equation}
where $\set{P_1,\ldots,P_{m}}$ are independent commuting Pauli operators and $m \le k\cdot n$. 
Note that replacing $p$ with one in the statement of the lemma leads to a slightly weaker lower bound that does not require the knowledge of $p$.
For example, it is not too difficult to generalize the above result to use $\ket{CCZ}$ states in place of $\ket{CS}$ states, because their Pauli expectations also take values $0$, $1$ and $\pm 1/2$. The resulting lemma is

\begin{lem} \label{lem:ccz-lower-bound}
Let $N_{\ket{CCZ}}\at{\ket{\psi},\varepsilon}$ be the minimum number of $\ket{CCZ}$ resource states required to approximate the one-qubit state $\ket{\psi}$
to within trace distance $\varepsilon$ using stabilizer operations and post-selection probability $p$.
When $\varepsilon < 1/8$,
there exists a state $\ket{\psi}$ such that 
$N_{\ket{CCZ}}\at{\ket{\psi},\varepsilon} \ge \nicefrac{1}{4}\cdot\log_2(1/\varepsilon)-\nicefrac{1}{2}$. 
\end{lem}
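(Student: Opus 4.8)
The plan is to replay the proof of \lemm{cs-lower-bound} essentially line for line, since the only feature of $\ket{CS}$ that entered that argument is that its Pauli expectations lie in $\set{0,\pm\nicefrac{1}{2},1}$, and this is equally true of $\ket{CCZ}$: specializing \propos{CnZPauliSpectrum} to $n=3$ shows the Pauli spectrum of $\ket{CCZ}=\ket{C^2 Z}$ is exactly $\set{0,\nicefrac{1}{2},1}$. The only quantitative change is that $\ket{CCZ}$ is a three-qubit state with $\nu(\ket{CCZ})=3$ in place of the two-qubit $\ket{CS}$ with $\nu(\ket{CS})=2$; additivity of the nullity then gives $\nu(\ket{CCZ}^{\otimes n})=3n$ on $3n$ qubits, and it is precisely this that replaces the exponent $2n$ by $3n$ and hence the constant $\nicefrac{1}{3}$ by $\nicefrac{1}{4}$.

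Concretely, I would fix a target $\ket{\psi}$ with $\varepsilon<\abs{\ip{\psi|0}}^2<3\varepsilon$, take $\rho$ to be the output qubit of a post-selected stabilizer protocol consuming $n$ copies of $\ket{CCZ}$ that reaches trace distance $\varepsilon$ from $\ket{\psi}$, and apply \theo{canonical-form} with $\ket{\psi_{\mathrm{in}}}=\ket{CCZ}^{\otimes n}$. This expresses the preparation through a Clifford $C$ and $k-1=\nu(\ket{CCZ}^{\otimes n})-\nu(\ket{\Psi_{\text{out}}})=3n-\nu(\ket{\Psi_{\text{out}}})$ commuting Pauli measurements $P_1,\ldots,P_{k-1}$, to which I adjoin $P_k=C^\dagger Z_1 C$. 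Exactly as before, the precision bound $\nrm{\ket{\psi}\bra{\psi}-\rho}_1\le\varepsilon$ together with $\abs{\mathrm{Tr}AB}\le\nrm{A}_\infty\nrm{B}_1$ pins the output probability $p'=\mathrm{Tr}(\ket{0}\bra{0}\rho)$ into $(0,4\varepsilon)$; since $\varepsilon<\nicefrac{1}{8}$ forces $p'\neq\nicefrac{1}{2}$, the operator $P_k$ commutes with $P_1,\ldots,P_{k-1}$, and the same interval rules out $\nu(\ket{\Psi_{\text{out}}})=0$ (a stabilizer output would give $p'\in\set{0,\nicefrac{1}{2},1}$). Hence $\nu(\ket{\Psi_{\text{out}}})\ge 1$ and $k\le 3n$.

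For the matching discretization I would expand the joint acceptance probability,
\[
\bra{CCZ}^{\otimes n}\prod_{j=1}^{k}\frac{I+P_j}{2}\ket{CCZ}^{\otimes n}=\frac{1}{2^{k}}\sum_{P\in\ip{P_1,\ldots,P_k}}\bra{CCZ}^{\otimes n}P\ket{CCZ}^{\otimes n},
\]
and observe that each single-term expectation $\bra{CCZ}^{\otimes n}P\ket{CCZ}^{\otimes n}$ is a product of $n$ factors from $\set{0,\pm\nicefrac{1}{2},1}$, hence an integer multiple of $2^{-n}$; the whole expression is therefore $a/2^{k+n}$ for a non-negative integer $a$, with least non-zero value $1/2^{k+n}$. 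This acceptance probability is non-zero and bounded above by $p'<4\varepsilon$, so $4\varepsilon\ge 1/2^{n+k}\ge 1/2^{4n}$, which rearranges to $n\ge\nicefrac{1}{4}\log_2(1/\varepsilon)-\nicefrac{1}{2}$.

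I do not anticipate a real obstacle, because the argument is a bookkeeping recast of the $\ket{CS}$ case; the two points deserving care are (i) verifying via \propos{CnZPauliSpectrum} that $\ket{CCZ}$ indeed has all Pauli expectations in $\set{0,\pm\nicefrac{1}{2},1}$, so the $a/2^{k+n}$ discretization survives, and (ii) tracking the counting $k\le 3n$ so that the exponent $n+k$ is correctly capped at $4n$. The sharper statement that keeps the post-selection probability $p$ explicit follows from the same expansion by carrying $p$ through rather than bounding it by one.
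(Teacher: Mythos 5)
Your proposal is correct and is exactly the paper's intended argument: the paper itself proves this lemma only by remarking that the $\ket{CS}$ proof generalizes because the Pauli expectations of $\ket{CCZ}$ also lie in $\set{0,\pm\nicefrac{1}{2},1}$, which is precisely the replay you carry out, with the correct bookkeeping $k\le 3n$ (from $\nu(\ket{CCZ}^{\otimes n})=3n$ and $\nu(\ket{\Psi_{\text{out}}})\ge 1$) and the discretization $4\varepsilon\ge 1/2^{n+k}\ge 1/2^{4n}$ yielding $n\ge\nicefrac{1}{4}\log_2(1/\varepsilon)-\nicefrac{1}{2}$.
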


\subsection{Lower bounds with \texorpdfstring{$\ket{T}$}{T} resource states} 
\label{sec:approx-with-T}

The goal of this subsection is to establish the lower bound on the probability of a sequence of measurements of $k$ independent commuting Pauli operators on input state $\ket{T}^{\otimes n}$ for $k \le n$ and then find the lower bound on the number of $\ket{T}$ states needed to approximate single a qubit unitary.
The following result is the missing piece needed to generalize \lemm{cs-lower-bound}.
\begin{prop} \label{prop:commuting-pauli-prob-on-t-states}
Let $\set{P_1,\ldots,P_{k}}$ be independent commuting Pauli operators and
let the probability of measuring the $+1$ eigenvalue of each be
\begin{equation} \label{eq:ts-expectations}
 p = \frac{1}{2^{k}}\sum_{P \in \ip{P_1,\ldots,P_{k} }} \bra{T}^{\otimes n} P \ket{T}^{\otimes n}.
\end{equation}
If the value of $p$ is non-zero, then  $ p \ge \frac{1}{2^{2k+n}}$.
\end{prop}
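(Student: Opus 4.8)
The plan is to exploit the rigid arithmetic of the Pauli expectation values of $\ket{T}^{\otimes n}$. First I would record the single-copy values $\bra{T}X\ket{T}=\bra{T}Y\ket{T}=1/\sqrt2$, $\bra{T}Z\ket{T}=0$ and $\bra{T}I\ket{T}=1$ (this is \propos{RotationStatePauliSpectrum} at $\theta=\pi/4$, with signs retained). Since expectations factorize over tensor products, every $P$ in the group $G=\ip{P_1,\ldots,P_{k}}$ (which has exactly $2^{k}$ elements, all Hermitian Paulis) satisfies $\bra{T}^{\otimes n}P\ket{T}^{\otimes n}\in\set{0}\cup\set{\pm 2^{-w/2}}$, where $w=w_P$ counts the tensor factors on which $P$ acts as $X$ or $Y$, and $0\le w\le n$. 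Inserting this into \eq{ts-expectations} shows $p\in\q(\sqrt2)$, and clearing the worst denominator $2^{\ceil{n/2}}$ gives $2^{k+\ceil{n/2}}\,p = A + B\sqrt2$ for integers $A,B$.

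The heart of the argument is then a Galois-conjugation trick. The nontrivial automorphism $\sigma:\sqrt2\mapsto-\sqrt2$ sends $2^{-w/2}\mapsto(-1)^{w}2^{-w/2}$, i.e. it flips exactly the expectations with $w$ odd. I would observe that this is precisely the effect of conjugating the resource state by $Z^{\otimes n}$ (equivalently, replacing $\ket{T}$ by the Clifford-equivalent state $Z\ket{T}$, whose $X$ and $Y$ expectations are $-1/\sqrt2$). Hence $\sigma(p)=(A-B\sqrt2)/2^{k+\ceil{n/2}}$ is itself the probability of a genuine commuting Pauli measurement (of the operators $Z^{\otimes n}P_jZ^{\otimes n}$) on $\ket{T}^{\otimes n}$, so it lies in $[0,1]$. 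This yields the crucial conjugate bound $\abs{A-B\sqrt2}\le 2^{k+\ceil{n/2}}$, which one cannot extract from the integrality of $A,B$ alone.

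Finally I would close with a norm estimate in $\z[\sqrt2]$. The norm $N=(A+B\sqrt2)(A-B\sqrt2)=A^2-2B^2$ is an integer, nonzero precisely because $p\neq0$ (irrationality of $\sqrt2$). Since $p>0$ we have $A+B\sqrt2=\abs{N}/\abs{A-B\sqrt2}$, so $p=\abs{N}/(\abs{A-B\sqrt2}\,2^{k+\ceil{n/2}})\ge \abs{N}\,2^{-2(k+\ceil{n/2})}$. For even $n$ this is $\ge 2^{-(2k+n)}$ already from $\abs{N}\ge1$. The one wrinkle—and the step needing the most care—is odd $n$: there $2\ceil{n/2}=n+1$, so a bare $\abs{N}\ge1$ loses a factor of two. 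I would recover it by noting that for odd $n$ the even-$w$ contributions to $A$ all acquire at least one extra factor of $2$ (their denominators $2^{w/2}$ with even $w\le n-1$ are strictly smaller than $2^{\ceil{n/2}}$), so $A$ is even; then $N$ is even, $\abs{N}\ge2$, and this extra factor exactly compensates to give $p\ge 2\cdot 2^{-(2k+n+1)}=2^{-(2k+n)}$ in all cases. The main obstacle is therefore not a single hard computation but assembling these three ingredients—the restricted value set, the conjugate-is-a-probability observation, and the parity bookkeeping—so that the constant comes out exactly as stated.
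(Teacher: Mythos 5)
Your proof is correct and takes essentially the same route as the paper's: the paper's ``bullet map'' is your Galois conjugation $\sigma$, its observation that $p^{\bullet}$ is a measurement probability on the conjugated state $\ket{T^{\bullet}}^{\otimes n}$ (each factor being $-Z\ket{T}$) is unitarily equivalent to your observation that $\sigma(p)$ is the probability of measuring the operators $Z^{\otimes n}P_jZ^{\otimes n}$ on $\ket{T}^{\otimes n}$, and both arguments close via integrality and positivity of the norm $p\,\sigma(p)$. The only cosmetic difference is bookkeeping: the paper writes $p=(a_p+c_p\sqrt{2})\sqrt{2}^{\,n}/2^{n+k}$ so that $p\,p^{\bullet}=n_p/2^{2k+n}$ falls out with no parity casework, whereas your normalization by $2^{k+\lceil n/2\rceil}$ forces the extra (correct) even-$A$ argument for odd $n$.
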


Before proceeding we need to introduce several concepts we are going to use in the proof \cite{cohen2007number}.
Consider the following set:
\[
\mathcal{R} = \set{ \frac{a+bi+\sqrt{2}(c+di)}{2^j} : \text{ for } a,b,c,d,j \text{ integers} }.
\]
Note that the set $\mathcal{R}$ is closed under addition, negation and multiplication, and contains $0$ and $1$.
Thus, the set $\mathcal{R}$ is an example of a ring. 
Also note that the set $\mathcal{R}$ is closed under complex conjugation.

Note that the state $\ket{T}$ can can be written as a vector with entries in $\mathcal{R}$ as $(\sqrt{2}/2,(1+i)/2)$. 
Similarly, all Pauli operators can be written as matrices with entries in $\mathcal{R}$.
For this reason, $p$ defined in Equation~\eq{ts-expectations} also belongs to $\mathcal{R}$. 
Moreover, as a real number, we can write $p=(a_p+c_p\sqrt{2})/2^k$ for some integers $a_p,c_p,k$. 
We cannot directly use the approach of lower bounding $p$ directly that we used in Sec.~\ref{sec:approx-with-CS}, because $\sqrt{2}$ is an irrational number and $a_p + c_p\sqrt{2}$ can be made arbitrary small.  To address this new complication, we use the bullet map that preserves $\mathcal{R}$ and is similar to complex conjugation:
\[
\at{\frac{a+bi+\sqrt{2}(c+di)}{2^k}}^{\bullet} = \at{\frac{a+bi-\sqrt{2}(c+di)}{2^k}}.
\]
One can directly check that for arbitrary elements of $r_1$ and $r_2$ of $\mathcal{R}$, the following holds:
\begin{align}
\at{r_1+r_2}^{\bullet} & = r_1^\bullet + r_2^\bullet ,\label{eq:addition}\\
\at{r_1\cdot r_2}^{\bullet} & = r_1^\bullet \cdot r_2^\bullet, \label{eq:multiplication} \\ 
(r_1 ^\bullet)^{\ast} & = (r_1 ^\ast)^{\bullet}.\label{eq:conjugation}
\end{align}
In addition, the map $(\cdot)^\bullet$ helps us convert numbers of the form $(a+c\sqrt{2})/2^k$ into numbers of the form $d/2^k$ because:
\begin{equation}\label{eq:automorphism-properties}
(a+c\sqrt{2})(a+c\sqrt{2})^\bullet = a^2 - 2 c^2
\end{equation}
Now we are ready to prove the proposition:
\begin{proof}[Proof of \propos{commuting-pauli-prob-on-t-states}]
We will show that if $p$ is non-zero, then $p^\bullet$ belongs to the interval $(0,1]$ and $p p^\bullet$ is a non-negative number of the form $n_p/2^{2k+n}$ for some integer $n_p$.
This implies that the smallest non-zero value of $p = (n_p/2^{2k+n})/p^\bullet$ is at least $1/2^{2k+n}$.

First note that the Pauli expectations of $\ket{T}$ can only be $0$, $1$ or $\pm 1/\sqrt{2}$.
For this reason, $p$ must be a number of the form $(a_p+c_p\sqrt{2})\sqrt{2}^n/2^{k}$.
Using \eq{addition}, \eq{multiplication}, \eq{conjugation} and \eq{automorphism-properties} we see that: 
\[
p^\bullet = \frac{1}{2^{n}}\sum_{P \in \ip{P_1,\ldots,P_{n} }} \bra{T^\bullet}^{\otimes n} P \ket{T^\bullet}^{\otimes n} \text{ where } \ket{T^\bullet} =  (-\sqrt{2}/2,(1+i)/2).
\]
Therefore $p^\bullet$ is the probability of measuring a projector on the state $\ket{T^\bullet}^{\otimes n} $ and must be less or equal to one.
By definition of $(\cdot)^\bullet$, $p^\bullet$ can be zero if and only if $p$ is zero. We conclude that $p^\bullet$ belongs to the interval $(0,1]$ as required. 

Finally let us compute 
\[
pp^{\bullet} = (a_p^2 - 2 c_p^2)\at{-1}^n / 2^{2k+n} = n_p / 2^{2k+n} \text{ for some integer } n_p,
\]
as required.
\end{proof}
Using the same techniques as in the proof of Lemma~\ref{lem:cs-lower-bound} we get the following result:

\begin{lem} \label{lem:t-lower-bound}
Let $N_{\ket{T}}\at{\ket{\psi},\varepsilon}$ be the minimum number of $\ket{T}$ resource states required to approximate the one-qubit state $\ket{\psi}$
to within trace distance $\varepsilon$ using stabilizer operations.
When $\varepsilon < 1/8$,
there exists a state $\ket{\psi}$ such that 
$N_{\ket{T}}\at{\ket{\psi},\varepsilon} \ge \nicefrac{1}{3}\cdot\log_2(1/\varepsilon)-\nicefrac{2}{3}$. 
For example, this is the case for all states such that
$\varepsilon < \abs{\ip{\psi | 0}}^2 < 3 \varepsilon$.
\end{lem}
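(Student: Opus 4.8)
The plan is to mirror the proof of \lemm{cs-lower-bound} almost verbatim, replacing the elementary probability estimate used there with the sharper bound of \propos{commuting-pauli-prob-on-t-states}, and accounting for the fact that $\nu(\ket{T}) = 1$ rather than $\nu(\ket{CS}) = 2$. Concretely, I would take a target state $\ket{\psi}$ with $\varepsilon < \abs{\ip{\psi|0}}^2 < 3\varepsilon$ and suppose that $n$ copies of $\ket{T}$ suffice to prepare the output density matrix $\rho$ of the designated output qubit to within trace distance $\varepsilon$ using post-selected stabilizer operations.

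First I would invoke \theo{canonical-form} to express this preparation via a Clifford unitary $C$ and a family of $k-1 = \nu(\ket{T}^{\otimes n}) - \nu(\ket{\Psi_{\mathrm{out}}}) = n - \nu(\ket{\Psi_{\mathrm{out}}})$ independent commuting Pauli operators $P_1,\ldots,P_{k-1}$ whose joint $+1$ outcome is post-selected; here I have used additivity of the nullity to get $\nu(\ket{T}^{\otimes n}) = n$. Setting $P_k = C^\dagger Z_1 C$, the same commutation argument as in \lemm{cs-lower-bound} applies: if $P_k$ anticommuted with some $P_j$ then the conditional probability $p' = \mathrm{Tr}(\kb{0}{0}\rho)$ of the $Z_1$ outcome would be exactly $1/2$, but the precision bound $\nrm{\kb{\psi}{\psi}-\rho}_1 \le \varepsilon$ together with $\varepsilon < 1/8$ forces $p' \in (0,4\varepsilon) \subset (0,1/2)$, a contradiction. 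The same estimate rules out $\nu(\ket{\Psi_{\mathrm{out}}}) = 0$, since a stabilizer output would only permit $p' \in \set{0,1/2,1}$, so $\nu(\ket{\Psi_{\mathrm{out}}}) \ge 1$ and hence $k \le n$. Thus $P_1,\ldots,P_k$ are $k \le n$ commuting Paulis whose joint $+1$ probability $p$ is nonzero and, being bounded above by the conditional probability $p'$, also lies in $(0,4\varepsilon)$.

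The key step is then to apply \propos{commuting-pauli-prob-on-t-states}, which guarantees that any nonzero such probability on $\ket{T}^{\otimes n}$ satisfies $p \ge 1/2^{2k+n}$. Combining with $k \le n$ gives $4\varepsilon > p \ge 1/2^{2k+n} \ge 1/2^{3n}$, so that $3n > \log_2(1/\varepsilon) - 2$ and therefore $n > \nicefrac{1}{3}\log_2(1/\varepsilon) - \nicefrac{2}{3}$, which is the claimed bound.

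The genuinely new content, the lower bound $p \ge 1/2^{2k+n}$, is precisely what \propos{commuting-pauli-prob-on-t-states} supplies, and it is there, through the bullet map on the ring $\mathcal{R}$, that the irrationality of the $\ket{T}$ Pauli expectations $\pm 1/\sqrt{2}$ must be tamed. So I expect the only real obstacle to lie in that proposition, which is already established; once it is in hand the remainder is a routine transcription of the $\ket{CS}$ argument, with the single bookkeeping change that $k\le n$ here in place of $k\le 2n$.
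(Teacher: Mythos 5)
Your proposal is correct and matches the paper's intended argument exactly: the paper omits the proof of this lemma, stating it is "very similar to the proof of Lemma~\ref{lem:cs-lower-bound}," which is precisely the transcription you carried out, with \propos{commuting-pauli-prob-on-t-states} supplying the bound $p \ge 1/2^{2k+n}$ and the nullity bookkeeping giving $k \le n$, so that $4\varepsilon > 1/2^{3n}$ yields the stated inequality.
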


We omit the proof here because it is very similar to the proof of Lemma~\ref{lem:cs-lower-bound}.
These can be generalized further to include states like $|\sqrt{T}\rangle$, $|\sqrt{T}^3\rangle$ as shown in \theo{root-t-lower-bound} in the Appendix and
other roots of $T$ using methods described in \app{general-approx-bounds} using the dyadic monotone introduced in the next section.

\begin{proof}[Proof of \theo{approximation-lower-bound}]
First note that setting $p=1$ on the right hand side of the inequalities in
\lemm{cs-lower-bound}, \lemm{ccz-lower-bound} and \lemm{t-lower-bound} form
new (weaker) inequalities which hold for all $p$.
Then apply \lemm{from-state-to-unitary} to each of these inequalities.
\end{proof}

\section{Tighter lower bounds with measurement probabilities one half}
\label{sec:prob-half-bounds}
The goal of this section is to introduce a quantity similar to the stabilizer nullity $\nu(\ket{\psi})$ that lets us establish stronger lower bounds on the number of resource states needed for certain tasks.
The drawback is that these tighter bounds are not for completely arbitrary sequences of Clifford gates and Pauli measurements, but only those in which each measurement outcome occurs with probability half.
However, as so many of the known circuits are of this class, we foresee these bounds being of interest and expect them to encourage researchers to turn to more rich classes of circuits to evade them.
In what follows, we first show that the well-known circuit~\cite{Jones2012} to implement the multiply-controlled-Z gate using $\ket{CCZ}$ states is optimal with probability half measurements. 
We then show that the best-known circuit for the modular adder~\cite{Gidney2018} using $\ket{CCZ}$ states with probability half measurements uses the number of $\ket{CCZ}$ states that differs by one from the lower bound.
 
\subsection{Lower bound with
\texorpdfstring{$CCZ$}{doubly-controlled-Z} gates for \texorpdfstring{$C^n Z$}{multiply-controlled-Z} gate} 
\label{sec:prob-half-ccz-bounds}

Consider quantum states which, when written in the computational basis, have entries in the following set: \[
\z\of{i,1/2} = \set{ \frac{a+ib}{2^k} : a,b,k \in \z }.
\]
Indeed, $\ket{C^n Z}$ can we written as vectors with entries in the above set.
Note that the set $\z\of{i,1/2}$ is a ring since it is closed under addition, negation, multiplication, and contains $0$ and $1$.

Observe that if a state $\ket{\psi}$ has entries in $\z\of{i,1/2}$
then for any Hermitian multi-qubit Pauli operator $P$, the expectation $\bra{\psi} P \ket{\psi}$ can be written as $a/2^k$ for integers $a,k$.
The expectation is in $\z\of{i,1/2}$ because the entries of the Pauli matrices are in $\z\of{i,1/2}$ and $\z\of{i,1/2}$ is closed under complex conjugation.  
The expectation is also a real number and all the real numbers in $\z\of{i,1/2}$ are of the form $a/2^k$ for integers $a,k$. 
Note that for stabilizer states Pauli expectations can only be $\pm 1$ and $0$. 
Roughly speaking, the power of $2$ in the denominator of the Pauli expectation lets us capture how non-stabilizer the state is.
Next we develop this intuition more rigorously.

First we need a more rigorous way to talk about the power of $2$ in the denominator.
Let $q$ be a non-zero rational number.
It can be written as a product of integer powers of prime numbers in a unique way:
\[
 q = \pm 2^k \cdot p_1 ^{k(1)} \cdots p_m^{k(m)}, \,p_k \text{ are odd primes,}\,k,k(1),\ldots,k(m)\text{ are integers}
\]
Let us define $v_2(q)$ to be $k$.
Note that function $v_2$ is somewhat similar to $\log\abs{\cdot}$ in that $v_2\at{q_1 q_2}=v_2\at{q_1}+v_2\at{q_2}$, $v_2\at{\pm1} = 0$ and $v_2\at{q} = v_2\at{-q}$.
For odd integer $a$ and integer $k$ the value is $v_2\at{a/2^k}=-k$.
Note also that $v_2$ is always non-negative for integer arguments.
It is convenient to extend $v_2$ to all rational numbers, by defining $v_2\at{0} = +\infty$.
Note that with this extension the multiplicative property still holds. 
Now we are ready to define the quantity of interest. 
\begin{dfn}[Dyadic monotone]
Let $\ket{\psi}$ be an $n$-qubit state with entries in $\z\of{i,1/2}$, the \emph{dyadic monotone} is
\[
\mu_2\ket{\psi} = \max\set{ -v_2\at{\bra{\psi}P\ket{\psi}} : P \in \set{I,X,Y,Z}^{\otimes n} }.
\]
\end{dfn}

The dyadic monotone is essentially the maximum power of two in the denominator over the Pauli spectrum (the set of all Pauli expectations).
It is invariant under Clifford unitaries because they map the set of all multi-qubit Pauli matrices to the set of all Pauli matrices up to a sign and $v_2$ is insensitive to the sign of its argument. 
In addition, Clifford unitaries map states with entries in $\z\of{i,1/2}$ to states with entries in $\z\of{i,1/2}$, because all Clifford unitaries can be written as matrices with entries in $\z\of{i,1/2}$, up to a global phase.

Similarly to the stabilizer nullity $\nu$, the dyadic monotone $\mu_2$ behaves nicely under taking tensor products.
\begin{prop}
Let $\ket{\phi}$ and $\ket{\psi}$ be states with entries in $\z\of{i,1/2}$, then 
\[
\mu_2\at{\ket{\phi}\otimes\ket{\psi}} = \mu_2{\ket{\phi}} + \mu_2{\ket{\psi}}.
\]
\end{prop}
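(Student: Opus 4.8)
The plan is to show multiplicativity (really additivity) of $\mu_2$ by relating the Pauli spectrum of a tensor product to the Pauli spectra of its factors, and then tracking the $2$-adic valuation $v_2$ through this relationship. The key structural fact I would use is that every multi-qubit Pauli operator on the combined system factorizes as $P = P_\phi \otimes P_\psi$ with $P_\phi \in \set{I,X,Y,Z}^{\otimes n_\phi}$ and $P_\psi \in \set{I,X,Y,Z}^{\otimes n_\psi}$, and correspondingly the expectation factorizes multiplicatively: $\bra{\phi\otimes\psi} P_\phi \otimes P_\psi \ket{\phi\otimes\psi} = \bra{\phi}P_\phi\ket{\phi}\cdot\bra{\psi}P_\psi\ket{\psi}$.

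First I would invoke the multiplicative property of $v_2$ established just before the proposition, namely $v_2(q_1 q_2) = v_2(q_1) + v_2(q_2)$ for rationals (with the convention $v_2(0) = +\infty$ preserving this). Applying $-v_2$ to the factorized expectation gives
\[
-v_2\at{\bra{\phi\otimes\psi} P_\phi \otimes P_\psi \ket{\phi\otimes\psi}} = -v_2\at{\bra{\phi}P_\phi\ket{\phi}} - v_2\at{\bra{\psi}P_\psi\ket{\psi}}.
\]
Then maximizing the left side over all Pauli operators $P = P_\phi \otimes P_\psi$ separates into independent maximizations over $P_\phi$ and $P_\psi$, since the two summands on the right involve disjoint variables. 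This yields
\[
\mu_2\at{\ket{\phi}\otimes\ket{\psi}} = \max_{P_\phi}\at{-v_2\at{\bra{\phi}P_\phi\ket{\phi}}} + \max_{P_\psi}\at{-v_2\at{\bra{\psi}P_\psi\ket{\psi}}} = \mu_2{\ket{\phi}} + \mu_2{\ket{\psi}},
\]
which is exactly the claim.

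The one subtlety I would flag as the main obstacle is handling the $+\infty$ value from zero expectations cleanly. If some factor expectation vanishes, the corresponding term is $-\infty$, and I must confirm this cannot spuriously produce the maximum: the maximum over each factor is achieved by $P = I$, which gives expectation $\bra{\phi}\phi\rangle = 1$ and hence $-v_2 = 0 \ge -\infty$, so the overall maximum is always finite and the $+\infty$ cases never dominate. With that observation the separation of the double maximization is unproblematic, and the proof reduces to the factorization of Pauli expectations combined with additivity of $v_2$.
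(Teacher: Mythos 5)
Your proposal is correct and follows essentially the same route as the paper's proof: factorize each Pauli on the joint system as $P_\phi \otimes P_\psi$, use the multiplicativity of expectation values under tensor products together with the additivity $v_2(q_1 q_2) = v_2(q_1) + v_2(q_2)$, and maximize separately over the two factors. The only difference is that you explicitly handle the $v_2(0) = +\infty$ edge case, which the paper sidesteps by restricting its key identity to non-zero expectations; this is a minor bookkeeping point, not a different approach.
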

\begin{proof}
The result follows from the fact that for Pauli matrices $P$ and $Q$ such that the expectations $\bra{\phi}P\ket{\phi}$ and $\bra{\psi}Q\ket{\psi}$ are non-zero it is the case that:
\[
v_2\at{\bra{\phi}\otimes\bra{\psi}\at{P \otimes Q}\ket{\phi}\otimes\ket{\psi}}
=
v_2\at{\bra{\phi}P\ket{\phi}}
+
v_2\at{\bra{\psi}Q\ket{\psi}}.
\]
\end{proof}

Another important property is that the dyadic monotone is minimal for stabilizer states:
\begin{prop} \label{prop:dyadic-non-negativity}
Let $\ket{\phi}$ be a state $\z\of{i,1/2}$, then $\mu_2\ket{\psi} \ge 0$, with equality achieved if and only if $\ket{\psi}$ is a stabilizer state.
\end{prop}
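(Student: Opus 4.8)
The plan is to treat the inequality and the two directions of the equality condition separately, with the forward implication (that $\mu_2\ket{\psi}=0$ forces $\ket{\psi}$ to be a stabilizer state) being the substantive part. Throughout I take $\ket{\psi}$ normalized, consistent with the definition of $\mu_2$. For the inequality $\mu_2\ket{\psi}\ge 0$, I would simply note that the maximum defining $\mu_2$ ranges over all Paulis, including $P=I^{\otimes n}$, and $\bra{\psi}I\ket{\psi}=1$, so $-v_2\at{1}=0$ is one of the terms being maximized and hence $\mu_2\ket{\psi}\ge 0$. For the easy (backward) direction, if $\ket{\psi}$ is a stabilizer state then every Pauli expectation $\bra{\psi}P\ket{\psi}$ lies in $\set{-1,0,1}$, so each $v_2\at{\bra{\psi}P\ket{\psi}}$ equals $0$ or $+\infty$; the maximum of $-v_2$ is therefore exactly $0$.

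For the forward direction I would argue as follows. Assuming $\ket{\psi}$ has entries in $\z\of{i,1/2}$ and $\mu_2\ket{\psi}=0$, the vanishing of the monotone means $v_2\at{\bra{\psi}P\ket{\psi}}\ge 0$ for every Pauli $P$. Since each such expectation is simultaneously real and an element of $\z\of{i,1/2}$, it has the form $a/2^k$; a non-negative $2$-adic valuation then forces $2^k \mid a$, i.e.\ the expectation is an actual integer. As $\ket{\psi}$ is normalized and $\nrm{P}_\infty=1$, we have $\abs{\bra{\psi}P\ket{\psi}}\le 1$, so in fact $\bra{\psi}P\ket{\psi}\in\set{-1,0,1}$ for all $P$. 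The key step is then to convert this pointwise information into a count: expanding the pure-state density matrix in the Pauli basis as $\rho=\ket{\psi}\bra{\psi}=2^{-n}\sum_P \bra{\psi}P\ket{\psi}\,P$ and using purity $\mathrm{Tr}\at{\rho^2}=1$ gives $\sum_P \at{\bra{\psi}P\ket{\psi}}^2 = 2^n$. Since every squared expectation is $0$ or $1$, exactly $2^n$ Paulis satisfy $\bra{\psi}P\ket{\psi}=\pm 1$. For each such $P$, an expectation of $\pm 1$ forces $P\ket{\psi}=\pm\ket{\psi}$ (decomposing $\ket{\psi}$ into the $\pm 1$ eigenspaces of the Hermitian involution $P$), so the sign-corrected operator $\bra{\psi}P\ket{\psi}\cdot P$ stabilizes $\ket{\psi}$. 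These $2^n$ signed Paulis are distinct and form an abelian subgroup of $\PG{n}$ not containing $-I$; a group with these properties and order $2^n$ is a maximal stabilizer group, so $\abs{\Stab\ket{\psi}}=2^n$ and $\ket{\psi}$ is a stabilizer state in the sense of \defn{state-stabilizer}.

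The main obstacle is the forward direction, and within it the cleanest route is the purity counting, which promotes the local condition ``all expectations are integers in $[-1,1]$'' to the global statement that exactly $2^n$ Paulis are extremal. Two small points require care: confirming that the $2^n$ sign-corrected operators are genuinely distinct group elements (they are, since the unsigned $P\in\set{I,X,Y,Z}^{\otimes n}$ are pairwise distinct as matrices and the signs are uniquely determined), and recalling the standard characterization that an abelian Pauli subgroup of order $2^n$ avoiding $-I$ has a unique common $+1$ eigenstate, which is precisely $\ket{\psi}$.
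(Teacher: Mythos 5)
Your proof is correct and takes essentially the same route as the paper's: non-negativity from writing each non-zero expectation as $a/2^k$ with $\abs{\bra{\psi}P\ket{\psi}}\le 1$, and the forward direction via the same purity expansion in the orthogonal Pauli basis, $\sum_P \abs{\bra{\psi}P\ket{\psi}}^2 = 2^n$, which counts exactly $2^n$ Paulis with expectation $\pm 1$. The extra details you supply (the eigenvector argument for $\pm 1$ expectations and the group structure of the signed Paulis) are harmless; the paper shortcuts them because its Definition of a stabilizer state is precisely that $\abs{\Stab\ket{\psi}} = 2^n$.
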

\begin{proof}
Consider a non-zero Pauli expectation $\bra{\psi}P\ket{\psi}$ and write it as $a/2^k$ for some odd integer $a$.
Note that $k$ must be non-negative because $\abs{\bra{\psi}P\ket{\psi}} \le 1$.
This shows that  $\mu_2\ket{\psi} \ge 0$. For stabilizer states, the only non-zero expectations can be $\pm 1$ and therefor $\mu_2$ is zero.
It remains to show that $\mu_2(\ket{\psi})=0$ implies that $\ket{\psi}$ is stabilizer state.
First note that $\mu_2(\ket{\psi})=0$ implies that all non-zero Pauli expectations are odd integers. 
Together with the condition $\abs{\bra{\psi}P\ket{\psi}} \le 1$ this implies that the expectations can only be $\pm 1$, 
in other words either $P$ or $-P$ is in $\mathrm{Stab}\ket{\psi}$.
Suppose that $\ket{\psi}$ is an $n$-qubit state and let us compute the size of $\mathrm{Stab}\ket{\psi}$. 
Note that the set $\set{I,X,Y,Z}^{\otimes n}$ is an orthogonal basis of the space of matrices with respect to the inner product $\ip{A,B} = \mathrm{Tr} A B^\dagger$.
The norm squared of the density matrix $\ket{\psi}\bra{\psi}$ is given by the following expression: 
\[
1
=
\ip{\psi | \psi}^2
=
\frac{1}{2^n}
\sum_{P \in \set{I,X,Y,Z}^{\otimes n} }
\abs{
  \mathrm{Tr}\at{\ket{\psi}\bra{\psi}P}
} ^2, 
\]
which implies that the size of $\mathrm{Stab}\ket{\psi}$ is $2^n$ and therefore that $\ket{\psi}$ is a stabilizer state.
\end{proof}

Now we show that Pauli measurements with probability half take states with entries in $\z\of{i,1/2}$ to states with entries in $\z\of{i,1/2}$ (allowing the dyadic monotone to be evaluated). 
Such measurements are used in magic-state injection protocols and play an important role in reducing state preparation using non-Clifford gates to state preparation using resource states, Clifford unitaries and Pauli measurements. 
Measuring a $\pm 1$ eigenvalue of a Pauli observable $P$ with probability $1/2$ is equivalent to multiplying the state by the matrix $(I\pm P)/\sqrt{2}$ which is equal to $(1+i)(I\pm P)/2$ up to a global phase.
The matrix $(1+i)(I\pm P)/2$ has entries in the ring $\z\of{i,1/2}$ and therefore the resulting state will also have entries in $\z\of{i,1/2}$. 

Next we show that $\mu_2$ is non-increasing under these measurements.
To do this, we need another property of the function $v_2$ given by the following proposition:
\begin{prop} For arbitrary rational numbers $a,b$ the following inequality holds
\begin{equation} \label{eq:ultrametric}
v_2(a_1+a_2) \ge \min(v_2(a_1),v_2(a_2)).
\end{equation}
\end{prop}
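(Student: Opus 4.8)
The plan is to prove the ultrametric inequality directly from the definition of $v_2$ as the exponent of $2$ in the prime factorization of a nonzero rational, together with the convention $v_2(0)=+\infty$ and the multiplicativity $v_2(q_1 q_2)=v_2(q_1)+v_2(q_2)$ already recorded above. First I would dispose of the degenerate cases. If either argument vanishes, say $a_2=0$, then $v_2(a_1+a_2)=v_2(a_1)=\min(v_2(a_1),+\infty)$; and if the sum itself vanishes, then $v_2(a_1+a_2)=+\infty$ dominates everything. In all such cases the inequality holds trivially, so I may henceforth assume that $a_1$, $a_2$ and $a_1+a_2$ are all nonzero.

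Next, set $m=\min(v_2(a_1),v_2(a_2))$ and assume without loss of generality that $m=v_2(a_1)\le v_2(a_2)$. The key step is to factor out the common power $2^m$. Writing $a_j=2^{v_2(a_j)}u_j$, each $u_j$ is a rational whose reduced numerator and denominator are both odd, that is $v_2(u_j)=0$. Then
\[
a_1+a_2 = 2^{m}\at{u_1 + 2^{\,v_2(a_2)-m}\,u_2},
\]
where the exponent $v_2(a_2)-m$ is a non-negative integer.

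It then remains only to check that the bracketed factor has non-negative valuation. Writing $u_1=s_1/t_1$ and $u_2=s_2/t_2$ with $s_1,t_1,s_2,t_2$ all odd, the bracket equals $(s_1 t_2 + 2^{\,v_2(a_2)-m}s_2 t_1)/(t_1 t_2)$; its numerator is an integer and hence has non-negative valuation, while its denominator $t_1 t_2$ is odd and so has valuation $0$. By multiplicativity the bracket therefore has valuation $\ge 0$, whence $v_2(a_1+a_2)=m+v_2(\text{bracket})\ge m$, as required. I do not anticipate any genuine obstacle, since this is the standard non-archimedean property of a $2$-adic valuation; the only point demanding care is the clean handling of the $v_2(0)=+\infty$ convention in the degenerate cases above.
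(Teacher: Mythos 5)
Your proof is correct and follows essentially the same route as the paper's: both factor out $2^{\min(v_2(a_1),v_2(a_2))}$, write the remaining quantity as an integer numerator over an odd denominator, and conclude via multiplicativity of $v_2$ together with the non-negativity of $v_2$ on integers, handling the zero cases through the $v_2(0)=+\infty$ convention. The only difference is notational (your $u_j=s_j/t_j$ versus the paper's $p_j/q_j$), so there is nothing substantive to add.
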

\begin{proof}
Let us first prove the inequality for non-zero $a_1, a_2$. 
Rewrite $a_j = 2^{k_j} p_j / q_j$ for integer $k_j$ and odd integers $p_j$ and $q_j$ such that
\[
a_1 + a_2 = 2^{\min(k_1,k_2)} \at{ 2^{k_1 - \min(k_1,k_2)} p_1 q_2 + 2^{k_2 - \min(k_1,k_2)} p_2 q_1 } / q_1 q_2.  
\]
Since $q_1$ and $q_2$ are odd, $v_2(a_1+a_2)$ is equal to 
\[
\min(k_1,k_2) + v_2\at{2^{k_1 - \min(k_1,k_2)} p_1 q_2 + 2^{k_2 - \min(k_1,k_2)} p_2 q_1}
\]
by the multiplicative property of $v_2$.
Since $2^{k_1 - \min(k_1,k_2)} p_1 q_2 + 2^{k_2 - \min(k_1,k_2)} p_2 q_1$ is an integer, its value of $\nu_2$ is non-negative. 
The case when at least one of $a_j$ is zero follows from the fact $\min(x,+\infty) = x$.
This concludes the proof of the inequality.
\end{proof}
Now we are ready to prove desired result:
\begin{prop} \label{prop:measure-half}
Let $\ket{\psi}$ be a state with entries in $\z\of{i,1/2}$,
let $P$ be a Pauli observable such that measuring its eigenvalue $+ 1$ has probability $1/2$
and let $\ket{\psi_+}$ be the normalized result of that measurement. 
Then $\mu_2 \ket{\psi} \ge \mu_2 \ket{\psi_+}$.
\end{prop}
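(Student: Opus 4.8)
The plan is to express the post-selected state explicitly and then push the dyadic valuation $v_2$ through the measurement using the ultrametric inequality \eq{ultrametric}. First I would use the probability-$\tfrac12$ hypothesis to write the normalized output as $\ket{\psi_+} = \frac{I+P}{\sqrt 2}\ket{\psi}$: the outcome probability $\tfrac12\at{1+\bra\psi P\ket\psi}$ being equal to $1/2$ forces $\bra\psi P\ket\psi = 0$, which is exactly the condition making this vector normalized. The goal then reduces to showing $-v_2\at{\bra{\psi_+}Q\ket{\psi_+}} \le \mu_2\ket\psi$ for every $Q \in \{I,X,Y,Z\}^{\otimes n}$, since taking the maximum over $Q$ yields $\mu_2\ket{\psi_+}\le\mu_2\ket\psi$. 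Note that $\ket{\psi_+}$ has entries in $\z\of{i,1/2}$ by the observation preceding the proposition, so $\mu_2\ket{\psi_+}$ is well-defined.

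Next I would compute $\bra{\psi_+}Q\ket{\psi_+} = \tfrac12\bra\psi (I+P)Q(I+P)\ket\psi$ and expand $(I+P)Q(I+P) = Q + PQ + QP + PQP$. The computation splits on whether $P$ and $Q$ commute. If they anticommute, then $PQP = -Q$ and $QP = -PQ$, so the whole expression vanishes and $\bra{\psi_+}Q\ket{\psi_+}=0$, giving $v_2 = +\infty$ and the desired bound trivially. If they commute, then $PQP = Q$ and the expression collapses to $2\at{Q+PQ}$, so $\bra{\psi_+}Q\ket{\psi_+} = \bra\psi Q\ket\psi + \bra\psi PQ\ket\psi$. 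Because $P$ and $Q$ are commuting elements of $\{I,X,Y,Z\}^{\otimes n}$, their product satisfies $PQ = \pm R$ for some $R \in \{I,X,Y,Z\}^{\otimes n}$, so the second term is $\pm\bra\psi R\ket\psi$.

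The final step is to apply the ultrametric inequality \eq{ultrametric} to $\bra{\psi_+}Q\ket{\psi_+} = \bra\psi Q\ket\psi \pm \bra\psi R\ket\psi$, which gives $v_2\at{\bra{\psi_+}Q\ket{\psi_+}} \ge \min\bigl(v_2\at{\bra\psi Q\ket\psi},\, v_2\at{\bra\psi R\ket\psi}\bigr)$. By the definition of $\mu_2$, both valuations on the right are at least $-\mu_2\ket\psi$, hence $v_2\at{\bra{\psi_+}Q\ket{\psi_+}} \ge -\mu_2\ket\psi$, i.e. $-v_2\at{\bra{\psi_+}Q\ket{\psi_+}} \le \mu_2\ket\psi$. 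Maximizing over $Q$ completes the argument.

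I expect the only real subtlety—more bookkeeping than genuine obstacle—to lie in the Pauli products: verifying that in the commuting case $PQ$ is indeed a sign-definite element of $\{I,X,Y,Z\}^{\otimes n}$ up to an overall $\pm 1$, so that its expectation is controlled by $\mu_2\ket\psi$, and confirming that the anticommuting case genuinely annihilates the expectation. The conceptual heart is simply that $v_2$ is non-archimedean, so a sum of two Pauli expectations cannot carry a larger power of two in its denominator than the worse of its two summands.
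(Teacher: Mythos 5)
Your proof is correct and follows essentially the same route as the paper's: write $\ket{\psi_+} = \frac{I+P}{\sqrt{2}}\ket{\psi}$, split on whether $Q$ commutes or anticommutes with $P$, and in the commuting case apply the ultrametric inequality $v_2(a+b) \ge \min(v_2(a),v_2(b))$ to $\bra{\psi}Q\ket{\psi} + \bra{\psi}PQ\ket{\psi}$. The extra details you supply (deriving $\bra{\psi}P\ket{\psi}=0$ from the probability-$\tfrac12$ hypothesis, and noting $PQ=\pm R$ with $v_2$ insensitive to sign) are left implicit in the paper but are exactly the right bookkeeping.
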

\begin{proof}
Let us bound the value of $v_2$ for some Pauli operator $Q$ evaluated on the expectation $\bra{\psi_+} Q \ket{\psi_+}$.
The normalized state is $\ket{\psi_+} = \frac{(I+P)}{\sqrt{2}}\ket{\psi}$.
The expectation of $Q$ is therefore equal to: 
\[
\bra{\psi_+} Q \ket{\psi_+} = \bra{\psi} (I+P) Q (I+P) \ket{\psi}/2.
\]
If $P$ and $Q$ anti-commute, the expectation is zero and does not contribute to the calculation of $\mu_2$.
When $P$ and $Q$ commute, the expectation is equal to $\bra{\psi}Q\ket{\psi} + \bra{\psi}PQ\ket{\psi}$.
Next we use inequality $v_2(a+b) \ge \min(v_2(a),v_2(b))$, to see that: 
\[
v_2\at{\bra{\psi}Q\ket{\psi} + \bra{\psi}PQ\ket{\psi}} \ge  \min\at{\bra{\psi}Q\ket{\psi},\bra{\psi}PQ\ket{\psi}} \ge -\mu_2 \ket\psi.
\]
We have upper-bounded $-v_2\at{\bra{\psi_+} Q \ket{\psi_+}}$ by $\mu_2\ket\psi$ as required.
\end{proof}

Now we use these techniques to show the optimality of the well-known circuit \cite{Jones2012} to implement the multiply-controlled-Z gate 
using stabilizer operations with measurement probablities half and $\ket{CCZ}$ magic states.

\begin{lem}
At least $n-2$ $\ketsm{CCZ}$ states are needed to implement the $n$-qubit multiply controlled $Z$ gate $C^{n-1} Z$ by using stabilizer operations with measurement probabilities one half.
The optimal circuit follows from the construction for multiply-controlled unitaries described in~\cite{Jones2012}.
\end{lem}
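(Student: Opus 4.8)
The plan is to mirror the stabilizer-nullity argument of Proposition~\ref{prp:CnZlowerbound}, but with the dyadic monotone $\mu_2$ in place of $\nu$, exploiting that $\mu_2$ is additive, Clifford-invariant, and---crucially in the probability-one-half model---non-increasing under measurement by \propos{measure-half}. First I would reduce the gate bound to a state bound exactly as there: implementing $C^{n-1}Z$ with $m$ copies of $\ket{CCZ}$ lets us prepare $\ket{C^{n-1}Z}=C^{n-1}Z\ket{+}^{\otimes n}$ from the free stabilizer state $\ket{+}^{\otimes n}$, so any lower bound on producing the state is a lower bound on the gate. I would note at the outset that all Pauli expectations of $\ket{CCZ}$ and of $\ket{C^{n-1}Z}$ are dyadic rationals of the form $a/2^k$, so that $\mu_2$ is well-defined and $v_2$ applies; these values are read off directly from \propos{CnZPauliSpectrum}, and the only $\sqrt{2}$ factors live in the overall normalization, cancelling in $\bra{\psi}P\ket{\psi}/\bk{\psi}{\psi}$.

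The heart of the proof is the evaluation of the two monotone values from \propos{CnZPauliSpectrum}. For $\ket{CCZ}=\ket{C^{2}Z}$ (the $n=3$ case) the nonzero Pauli expectations are $1$ and $1/2$, so $\mu_2\ket{CCZ}=-v_2\at{1/2}=1$. For the $n$-qubit state $\ket{C^{n-1}Z}$ the nonzero expectations are $1$, $1-2^{2-n}$, and $2^{2-n}$. Here $-v_2\at{2^{2-n}}=n-2$, and writing $1-2^{2-n}=\at{2^{n-2}-1}/2^{n-2}$ with $2^{n-2}-1$ odd for all $n\ge 3$ gives $-v_2\at{1-2^{2-n}}=n-2$ as well, so $\mu_2\ket{C^{n-1}Z}=n-2$. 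Note this is genuinely tighter than the nullity bound, since $\nu(\ket{CCZ})=3$ only yields $m\ge n/3$.

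Finally I would assemble the bound. Fix any branch of the protocol, i.e. a post-selected stabilizer circuit consuming some number $m$ of $\ket{CCZ}$ states together with stabilizer ancillas and producing $\ket{C^{n-1}Z}\otimes\ket{S}$ for a stabilizer state $\ket{S}$. By additivity the input has $\mu_2=m\cdot\mu_2\ket{CCZ}+0=m$, while the output has $\mu_2=\mu_2\ket{C^{n-1}Z}+0=n-2$. Since every intervening operation is a Clifford (invariant) or a probability-one-half Pauli measurement (non-increasing by \propos{measure-half}, applied to each outcome), $\mu_2$ cannot increase, giving $n-2\le m$. As this holds for every branch, at least $n-2$ copies of $\ket{CCZ}$ are consumed. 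I expect the main obstacle to be bookkeeping rather than conceptual: one must confirm that $\mu_2$ is legitimately defined on the non-integer-normalized states $\ket{CCZ}^{\otimes m}$ and $\ket{C^{n-1}Z}$ (handled by working with Pauli expectations), and that the monotonicity of \propos{measure-half} applies separately to each measurement outcome of each branch so that the count is controlled uniformly. The matching upper bound---that $n-2$ copies suffice---is not part of this argument and is quoted from the construction of~\cite{Jones2012}.
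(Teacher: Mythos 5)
Your proposal is correct and follows essentially the same route as the paper: reduce the gate to preparing $\ket{C^{n-1}Z}$ from $\ket{+}^{\otimes n}$, then compare $\mu_2\atsm{\ketsm{CCZ}^{\otimes k}}=k$ against $\mu_2\atsm{\ketsm{C^{n-1}Z}}=n-2$ (read off from \propos{CnZPauliSpectrum}) using additivity, Clifford invariance, and monotonicity under probability-one-half measurements (\propos{measure-half}), with the matching upper bound quoted from~\cite{Jones2012}. Your explicit $v_2$ evaluation of the spectrum values $2^{2-n}$ and $1-2^{2-n}$, and your care about normalization, simply fill in details the paper leaves implicit.
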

\begin{proof}
The circuit for $C^{n-1} Z$ that follows from~\cite{Jones2012} uses $n-2$ $CCZ$ gates. 
By applying that circuit to $\ket{+}^{\otimes n}$ we can prepare $\ketsm{C^{n-1} Z}$.
If there existed a circuit that used $k$ $CCZ$ gates for $k<n-2$, we would be able to prepare states $\ketsm{C^{n-1} Z}$ starting from $k$ $\ket{CCZ}$ states and then using Clifford unitaries and Pauli observable measurements with probability half.
Let us show that this is impossible.
Indeed for the input state we would have value $\mu_2 \atsm{\ketsm{CCZ}^{\otimes k}} = k$.
For the output state we would have $\mu_2 \atsm{\ketsm{C^{n-1} Z}} = n-2$.
This follows from the calculation of Pauli spectrum of $\ketsm{C^n Z}$ in \propos{CnZPauliSpectrum}.
We have shown above that $\mu_2$ is non-increasing when we apply Clifford unitaries and measurements with probability $1/2$, 
therefore $k \ge n-2$ which concludes the proof.
\end{proof}

\subsection{Lower bounds for the modular adder}
\label{sec:prob-half-adder-bounds}

To establish lower bounds for adder circuits we will use the fact that adder can create a complex conjugate copy of a Fourier state.
Our strategy is to generalize $\mu_2$ to be defined on a wider set of states including Fourier states.
This is achieved by extending the domain of $v_2$ to a wider set of values. 
We postpone all the details of the construction of the generalization of $v_2$ to \app{number-theory}. 
Instead we list and discuss all the properties of $v_2$ needed for the lower-bound proof and prove the lower bound for the adder using them.
The properties are then proved in the appendix.

In the previous section, to establish the lower bounds we needed to define rings over which we can write coordinates of $\ket{C^n Z}$ states. 
We will need to define the rings we can use to write down the coordinates of Fourier states. 
We extend the domain of $\mu_2$ to the union of the following family of sets:
\[
\mathcal{R}_d = \z\of{\exp(i\pi/2^d),1/2} = \set{ \frac{1}{2^k} \sum_{j=0}^{2^d-1} a_j \exp(i\pi j/2^d) : \text{ where } a_j, k \text{ are integers} }. 
\]
Note that each of the sets $\mathcal{R}_d$ is closed under addition, negation, multiplication and therefore each of $\mathcal{R}_d$ is an example of a ring. 
In addition, ring $\mathcal{R}_d$ is closed under taking complex conjugate.
Note also that $\mathcal{R}_1$ is exactly the ring $\z\of{i,1/2}$ and $\mathcal{R}_{d} \subset \mathcal{R}_{d+1}$ for all positive $d$.

After we defined the rings, we extend the domain of function $v_2$ so it is defined on values of Pauli expectations of Fourier states. 
For this reason, $v_2$ must be defined at least on the real subsets of $\mathcal{R}_d$.
The proof of the lower bound for multiply-controlled-Z gate relied on additivity for a tensor product of states and monotonicity under measurements with probability $1/2$ of dyadic monotone $\mu_2$.
In turn, our proofs of the mentioned properties of dyadic monotone $\mu_2$ relied on the following two properties of $v_2$:
\begin{itemize}
  \item $v_2\at{a\cdot b} = v_2\at{a} + v_2 \at{b}$
  \item $v_2(a+b) \ge \min\at{v_2(a),v_2(b)}$
\end{itemize}
Above properties also hold for our extension of $v_2$.
We will also need to know some explicit values of $v_2$ to compute $\mu_2$ for Fourier states:
\begin{equation}
\text{For all odd integers } k, \text{ integers } d \ge 2 : v_2\at{ \cos(\pi k /2^d) } = v_2\at{ \sin(\pi k /2^d) } = \frac{1}{2^{d-1}} - 1.
\end{equation}
For example, using above we see that $\mu_2{\ket{T}} = 1/2$ because $v_2(1/\sqrt{2}) = -1/2$.
We can immediately conclude that $C^n Z$ gate requires at least $2(n-2)$ $\ket{T}$ states. 
Next we proceed to calculate $\mu_2$ for Fourier states:
\begin{prop} Consider Fourier state
\begin{eqnarray*}
	\ket{QFT^a_n} = \sum_{y=0}^{2^n-1} \exp\left[ \frac{i 2 \pi a y}{2^n}\right] \ket{y} = \otimes_{k=1}^{n}\left(\ket{0}+e^{i 2\pi a/ 2^k}\ket{1}\right),
\end{eqnarray*}
For all odd $a$, $\mu_2 \ket{QFT^a_n} = n - 3 + (1/2)^{n-2}$.
\end{prop}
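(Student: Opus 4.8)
The plan is to exploit the tensor-product factorization of the Fourier state together with the additivity of $\mu_2$, reducing the computation to a sum of single-qubit contributions. Writing $\theta_k = 2\pi a/2^k = \pi a/2^{k-1}$ and $\ket{\theta_k} = \at{\ket{0}+e^{i\theta_k}\ket{1}}/\sqrt{2}$, the state factorizes as $\ket{QFT^a_n} = \bigotimes_{k=1}^n \ket{\theta_k}$, so by additivity of $\mu_2$ (whose proof uses only the multiplicative property of $v_2$, which survives the extension of the domain to the rings $\mathcal{R}_d$) we have $\mu_2\ket{QFT^a_n} = \sum_{k=1}^n \mu_2\ket{\theta_k}$. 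This turns the problem into evaluating $n$ independent single-qubit quantities.

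Next I would evaluate each single-qubit term. By Proposition~\ref{prp:RotationStatePauliSpectrum} the only non-zero Pauli expectations of $\ket{\theta_k}$ are $\bra{\theta_k} I\ket{\theta_k}=1$, $\bra{\theta_k}X\ket{\theta_k}=\cos\theta_k$, and $\bra{\theta_k}Y\ket{\theta_k}=\sin\theta_k$, so that $\mu_2\ket{\theta_k} = \max\set{0,\, -v_2\at{\cos\theta_k},\, -v_2\at{\sin\theta_k}}$. For $k\ge 3$ the angle is $\theta_k = \pi a/2^{k-1}$ with $a$ odd and $k-1\ge 2$, so the stated values of $v_2$ on cosines and sines give $v_2\at{\cos\theta_k}=v_2\at{\sin\theta_k}=1/2^{k-2}-1 < 0$, whence $\mu_2\ket{\theta_k} = 1 - 1/2^{k-2}$. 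The two boundary cases are handled directly: for $k=1$, $\theta_1=\pi a$ yields $\ket{\theta_1}=\ket{-}$, and for $k=2$, $\theta_2=\pi a/2$ yields $\ket{\theta_2}=\at{\ket{0}\pm i\ket{1}}/\sqrt{2}$; both are stabilizer states, so $\mu_2\ket{\theta_1}=\mu_2\ket{\theta_2}=0$, consistent with $\cos\theta_k,\sin\theta_k\in\set{0,\pm1}$ and $v_2\at{\pm1}=0$.

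Finally I would sum the geometric-type series, noting that only the terms $k=3,\ldots,n$ contribute:
\[
\mu_2\ket{QFT^a_n} = \sum_{k=3}^n\at{1 - \frac{1}{2^{k-2}}} = (n-2) - \sum_{j=1}^{n-2}\frac{1}{2^{j}} = (n-2) - \at{1 - \frac{1}{2^{n-2}}} = n-3+\at{\frac{1}{2}}^{n-2},
\]
which is exactly the claimed value.

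The main obstacle is not any single computation but rather justifying that the machinery applies at all: one must confirm that each factor $\ket{\theta_k}$, and hence the full state, has entries in some ring $\mathcal{R}_d$ on which the extended $v_2$ and $\mu_2$ are defined, and that the additivity and measurement-monotonicity of $\mu_2$ genuinely persist for the extended monotone across factors living in different rings. I would address this by observing that $1/\sqrt{2}=\at{\zeta_8+\zeta_8^{-1}}/2\in\mathcal{R}_2$ and $e^{i\theta_k}=\at{e^{i\pi/2^{k-1}}}^a\in\mathcal{R}_{k-1}\subseteq\mathcal{R}_{n-1}$, so every factor lies in the single ring $\mathcal{R}_{n-1}$; this places the whole argument inside one ring and lets the already-established additivity carry through unchanged. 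The only other point requiring care is verifying that the quoted identities $v_2\at{\cos(\pi m/2^d)}=v_2\at{\sin(\pi m/2^d)}=1/2^{d-1}-1$ are applied within their stated regime ($m$ odd, $d\ge 2$), which holds precisely for the contributing indices $k\ge 3$.
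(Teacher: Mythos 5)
Your proof is correct and follows essentially the same route as the paper's: factor $\ket{QFT^a_n}$ into single-qubit tensor factors, apply additivity of $\mu_2$, evaluate each factor via its Pauli spectrum $\set{0,\cos\theta_k,\sin\theta_k}$ and the stated values $v_2\at{\cos(\pi a/2^{d})}=v_2\at{\sin(\pi a/2^{d})}=1/2^{d-1}-1$, then sum the geometric series. Your treatment is in fact slightly more careful than the paper's on two minor points: you handle the stabilizer factors $k=1,2$ explicitly rather than folding them into the formula, and you verify that all factors live in a common ring $\mathcal{R}_{n-1}$ (valid for $n\ge 3$) so that the extended additivity argument applies.
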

\begin{proof}
Recall that Pauli expectations of $(\ket{0}+e^{i 2\pi a/ 2^k}\ket{1})/\sqrt{2}$ are 
\[
\{ 0, \cos\at{2\pi a/ 2^k}, \sin\at{2\pi a/ 2^k} \}.
\]
For this reason, for $k \ge 2 $ we have:
\[
\mu_2\at{(\ket{0}+e^{i 2\pi a/ 2^k}\ket{1})/\sqrt{2}} = v_2\at{\sin( \pi a /2^{k-1}) } = 1 - 1/2^{k-2}
\]
Using multiplicative property of $\mu_2$ we get:
\[
\mu_2\at{\ket{QFT^a_n}} = \sum_{ k = 2 }^{n} \at{ 1 - 1/2^{k-2} }= n - 3 + 1/2^{n-2}
\]
\end{proof}
Above leads to the following lower bound on the number of $\ket{CCZ}$ states needed to implement the modular adder:\footnote{
After the first posting of this paper, Craig Gidney \cite{GidneyBlog2019} showed that the state $\ket{C^{n} Z}$ can be produced using the $n$-qubit modular adder. 
We reproduce his argument in \app{Gidney-bound-reduction} for completeness. 
The requires at least $n-1$ copies of $\ket{CCZ}$ in this setting, which gives a tight lower bound of $n-1$ copies of $CCZ$ to implement the modular adder for a pair of $n$-qubit states.}

\begin{lem}
At least $n-2$ $\ket{CCZ}$ states are needed to implement the $n$-qubit modular adder for $n \ge 3$ by using stabilizer operations with measurement probabilities one half.
\end{lem}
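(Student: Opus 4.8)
The plan is to combine the catalytic reduction already used for the nullity bound in \sec{adder-bound} with the dyadic monotone $\mu_2$. Recall that the modular adder satisfies $A\atsm{\ket{+}^{\otimes n}\ket{QFT^1_n}} = \ket{QFT^{-1}_n}\ket{QFT^1_n}$, so the Fourier state $\ket{QFT^1_n}$ acts as a catalyst while its complex conjugate $\ket{QFT^{-1}_n}$ is freshly produced. Thus, if the adder can be implemented with $k$ copies of $\ketsm{CCZ}$ using stabilizer operations whose measurements all have probability one half, then feeding in the particular input $\ket{+}^{\otimes n}\ket{QFT^1_n}\ketsm{CCZ}^{\otimes k}$ produces $\ket{QFT^{-1}_n}\ket{QFT^1_n}$ together with ancillas left in some residual state. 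The entire process is a sequence of Clifford unitaries and probability-half Pauli measurements, so $\mu_2$ of the global state is non-increasing from input to output.

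Before applying the monotone I would check that $\mu_2$ is actually defined on every state involved: the Fourier states $\ket{QFT^{\pm 1}_n}$ have amplitudes equal to $2^n$-th roots of unity scaled by $1/\sqrt{2^n}$, and since $1/\sqrt{2}=\cos\at{\pi/4}\in\mathcal{R}_2$, their entries lie in $\mathcal{R}_{n-1}$, i.e. in the extended domain on which additivity of $\mu_2$ and its monotonicity under probability-half measurements hold. I would then use additivity, together with non-negativity of $\mu_2$ (\propos{dyadic-non-negativity}) for the residual ancilla register. Stabilizer states contribute $0$, so the input value is $\mu_2\atsm{\ket{QFT^1_n}} + k\,\mu_2\atsm{\ketsm{CCZ}}$, while the output value is at least $\mu_2\atsm{\ket{QFT^{-1}_n}} + \mu_2\atsm{\ket{QFT^1_n}}$. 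Plugging in $\mu_2\atsm{\ketsm{CCZ}}=1$ (from the Pauli spectrum of \propos{CnZPauliSpectrum} at $n=3$) and, from the preceding proposition, $\mu_2\atsm{\ket{QFT^{\pm1}_n}}=n-3+\atsm{1/2}^{n-2}$ (both $1$ and $-1\equiv 2^n-1$ are odd), the inequality $\mu_2(\text{in})\ge\mu_2(\text{out})$ gives
\[
\atsm{n-3+\atsm{1/2}^{n-2}} + k \ge 2\atsm{n-3+\atsm{1/2}^{n-2}}.
\]
Hence $k \ge n-3+\atsm{1/2}^{n-2}$, and since $k$ is a non-negative integer while $0<\atsm{1/2}^{n-2}<1$ for $n\ge 3$, the right-hand side lies strictly between $n-3$ and $n-2$, forcing $k\ge n-2$.

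The main obstacle is not the arithmetic but justifying that $\mu_2$ genuinely extends to the Fourier states with the two properties used above. This rests on extending the $2$-adic valuation $v_2$ to the real parts of the rings $\mathcal{R}_d$ so that $v_2(ab)=v_2(a)+v_2(b)$ and $v_2(a+b)\ge\min\atsm{v_2(a),v_2(b)}$ continue to hold, together with the explicit evaluation $v_2\atsm{\cos\at{\pi k/2^d}}=v_2\atsm{\sin\at{\pi k/2^d}}=1/2^{d-1}-1$ for odd $k$ and $d\ge 2$. These facts are exactly what \app{number-theory} establishes, so once they are assumed the bound follows immediately; the remainder of the argument is the short catalytic comparison sketched above.
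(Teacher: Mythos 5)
Your proposal is correct and follows essentially the same route as the paper: use the identity $A\atsm{\ket{+}^{\otimes n}\ket{QFT^1_n}} = \ket{QFT^{-1}_n}\ket{QFT^1_n}$ to view $\ket{QFT^1_n}$ as a catalyst, apply additivity and monotonicity of $\mu_2$ under probability-half measurements, plug in $\mu_2\atsm{\ket{QFT^{\pm1}_n}} = n-3+\atsm{1/2}^{n-2}$ and $\mu_2\atsm{\ketsm{CCZ}}=1$, and conclude $k \ge n-2$ by integrality. Your additional care in verifying that the Fourier states lie in the extended domain $\mathcal{R}_{n-1}$ and in handling residual ancillas via non-negativity of $\mu_2$ only makes explicit what the paper leaves implicit.
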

\begin{proof} 
Recall that by applying a circuit for modular adder to $\ket{+}^{\otimes n} \otimes \ket{QFT^{1}_n}$ 
we can create a state  $\ket{QFT^{-1}_n} \otimes \ket{QFT^1_n}$.
If there existed a circuit that used $k$ $CCZ$ gates for $k<n-2$, we would be able to prepare states $\ket{QFT^{-1}_n}$ starting from $k$ $\ket{CCZ}$ and then using Clifford unitaries and Pauli observable measurements with probability half by using $\ket{QFT^1_n}$ as a catalyst.
Let us show that this this impossible.
Indeed for the input state we would have value $\mu_2$ equal to $k + \mu_2 \ket{QFT^1_n} $ and for the output state we would have $\mu_2 \ket{QFT^{-1}_n} + \mu_2 \ket{QFT^1_n}$.
We know that $\mu_2$ is non-increasing when we apply Clifford unitaries and measurements with probability $1/2$, 
therefore $k \ge \mu_2 \ket{QFT^{-1}_n} = n - 3 + (1/2)^{n-2}$ which implies that $k \ge n - 2$.
\end{proof}

The best known~\cite{Gidney2018} modular adder construction uses $n-1$ $\ket{CCZ}$ states, therefore our bound is one $\ket{CCZ}$ state short of the optimum.
Using the same techniques we can derive a lower bound of $2n - 5$ $\ket{T}$ states for $n \ge 3$.
This lower bound multiplicative constant is twice less than the best known construction.

It is also possible to show that the extension of $\mu_2$ to the union of $\mathcal{R}_d$ in non-negative and that its equality to zero implies that its argument is a stabilizer state. 
We defer prove of this fact to \propos{ext-dyadic-non-negativity} in the Appendix.

\subsection{Lower bounds for resource state conversion}
\label{sec:prob-half-conversion-bounds}

In \sec{adder-conversion-protocols} and \app{dyadic-powers-of-t}, we have introduced protocols for catalysis assisted conversion of $\ket{CCZ}$ states into states $\ketsm{ \pi j / 2^d }$. 
We have found that for odd $j$ and integer $d \ge 2$, asymptotically, one can create one $\ketsm{ \pi j / 2^d }$ state at the cost of $1-1/2^{d-1}$ $\ket{CCZ}$ states. 
Using the dyadic monotone we can show that this is optimal when only Pauli measurements with probability 50\% are allowed.

\dyadicpowersbound
\begin{proof}
Let $\ket{\mathrm{cat}}$ be a state used as a catalyst, then $\mu_2$ for the input of our protocol is $\mu_2 \ket{\mathrm{cat}} + k$ and for the output the value of $\mu_2$ is $n(1-1/2^{d-1})+ \mu_2 \ket{\mathrm{cat}}$. 
This is because for odd $j$ and integer $d \ge 2$, $\mu_2 \ketsm{\pi j/2^{d}} = 1 - 1/2^{d-1}$. 
Above implies that $k \ge n(1-1/2^{d-1})$.
\end{proof} 

It is possible to show the monotonicity of $\mu_2$ for a wider range of measurements, namely the Pauli measurements map the state defined over $\mathcal{R}_d$ to the state defined over $\mathcal{R}_d$. 
We provide more details on this in \propos{measure-extended} in the appendix.

\begin{table}[ht]
    \centering
    {\renewcommand{\arraystretch}{1.25}
    \begin{tabular}{c|c|c}
 $\ket{\psi}$  &  \begin{tabular}{@{}c@{}} Best algo. (lower bound) \\ $r n \ket{CCZ} \catalyse{} n \ket{\psi}$ \end{tabular} & \begin{tabular}{@{}c@{}} Best algo. (upper bound) \\ $n \ket{\psi} \catalyse{} r' n \ket{CCZ}$ \end{tabular}\\ \hline
$\ket{\sqrt{T}}$  &
0.75 (0.33333, 0.75$^\dagger$) [\figsm{many-sqrt-t-states}] & 0.0625 (0.207767*) [\tabsm{T_to_Targ}] \\
$\ket{T}$   & 0.5 (0.33333, 0.5$^\dagger$) \cite{gidney2018efficient} & 0.25 (0.275212*) \cite{Jones2012} \\
$\ket{CS} = \ket{W_2}$  & 1 (0.81688*,1$^\dagger$) [\figsm{Wn_level_reduction}, \figsm{ccz-to-cs}] & 0.5 (0.66666)  [\figsm{cs-to-ccz}] \\
$\ket{CCS}$ & 2 (1.24763*,2$^\dagger$) \cite{Jones2012} & 0.25  (1) [\propsm{measure-control}] \\
$\ket{C^3S}$ & 3 (1.33333,3$^\dagger$) \cite{Jones2012} & 0.125 (1.05336*) [\propsm{measure-control}] \\
$\ket{CCZ}$ & 1(1) & 1(1)  \\
$\ket{C^3Z}$ & 2 (1.40942*,2$^\dagger$) \cite{Jones2012} & 0.5 (1.33333) [\propsm{measure-control}]   \\
$\ket{C^4Z}$ & 3 (1.66667,3$^\dagger$) \cite{Jones2012} & 0.25 (1.05336*) [\propsm{measure-control}] \\
$\ket{CCZ_{123,145}}$ & 2 (1.66667,2$^\dagger$) [\figsm{ccz123145}] & 1 (1.20471*) [\figsm{ccz123145}] \\
$\ket{W_3}$ & 1 (1) \cite{HowardCampbell2016}  & 1 (1)  \cite{HowardCampbell2016}  \\
$\ket{W_4}$ & 2.5 (1.3758*,2$^\dagger$) [\tabsm{T_to_Targ}] & 1 (1.33333) [\figsm{Wn_level_reduction}] \\
$\ket{W_5}$  & 3 (1.66667,2$^\dagger$) [\tabsm{T_to_Targ}] & 1 (1.63376*) [\figsm{Wn_level_reduction}]
    \end{tabular}
    }
    \caption{Catalytic conversion rates to and from $\ket{CCZ}$ states. 
    This is an extended version of \tab{CCZ_to_Targ} that includes bounds based on dyadic monotone $\mu_2$.
    In the first column, the produced or consumed state is specified. 
    The second and third columns list the conversion rates ($r$ to consume, and $r'$ to produce)  for the best known algorithm, along with the tightest bound implied by stabilizer extent or nullity in parenthesis.
    The references are provided in square brackets.
    The bounds from the stabilizer extent, marked here by an asterisk, are not known to hold for arbitrary catalysts since the stabilizer extent is currently not known to be multiplicative for all states.
    The bounds from the dyadic monotone, marked here by $\dagger$, hold only for protocols that use measurements with outcome probabilities one half and for catalysts for which $\mu_2$ is defined.
    The results that reference \tab{T_to_Targ} are direct consequence of corresponding result from the table together with the inter-conversion between $\ketsm{T}$ and $\ketsm{CCZ}$. }
    \label{tab:CCZ_to_Targ-ext}
\end{table}

\section{Conclusion and open problems}

We have presented a number of resource lower bounds for a variety of scenarios including resource state conversion, unitary synthesis, and computational tasks. 
To do so, we have introduced a number of new tools, most notably the monotones that we call the stabilizer nullity and the dyadic monotone, along with a canonical form for post-selected stabilizer circuits.
We anticipate that these tools can be used much more broadly, and for example expect the following to be fruitful applications:
    \begin{itemize}
        \item Lower bounds for the multiply-controlled adder, used in multiplication,
        \item Lower bounds for the hamming weight-one state preparation,
        \item Lower bounds for the hamming weight computation circuit,
        \item Lower bounds for small circuits, such as the quantum Fourier transform on small number of qubits.
    \end{itemize}

There are a number of other questions which are raised by this work, which we feel are also deserving of further study:
\begin{enumerate}
    \item For what set of states is the stabilizer extent multiplicative? Although it is multiplicative for all the states that we apply it to, it is not known to be multiplicative for all states, such that not all of our inter-conversion bounds apply in the presence of arbitrary catalysts.
    \item We have found that the exact inter-conversion of stabilizer states is unavoidably lossy, even in the asymptotic limit. 
    In the setting of entanglement theory, exactly converting between different types of entangled state is not possible, but upon relaxing the exact requirement, loss free inter-conversion is possible in entanglement theory.
    It would be interesting to extend the unavoidably lossy resource inter-conversion results to the inexact setting.
    This has been done for odd-prime qudits \cite{wang2018efficiently}, but is not for qubits. 
    \item Is there a more efficient algorithm for the quantum adder which is outside the setting of probability half measurements? \item Is there a more efficient algorithm for the multiply controlled $Z$ which is outside the setting of probability 1/2 measurements?
    \item Studying resource state conversion protocols also informs us about possible values of arbitrary monotones. A related open question is the classification of all possible monotones for non-Clifford states that have certain properties, for example additivity (multiplicativity), faithfulness and strong convexity. 
\end{enumerate}


\section{Acknowledgements}

Circuit diagrams were created using $\langle q | pic \rangle$ \cite{Draper} and Quantikz \cite{Kay2018}. 
The correctness of many of the circuits was verified using Q\# and Microsoft's Quantum Development Kit \cite{QDK}.
We thank Craig Gidney who pointed out a strengthening of our bounds for the adder after the first edition of this paper was released. 
For completeness, we reproduce his argument, presented in \cite{GidneyBlog2019}, in \app{Gidney-bound-reduction}.

M.H.~is supported by a Royal Society--Science Foundation Ireland University Research Fellowship.

\bibliographystyle{plainurl}
\bibliography{resource-states-lower-bounds}

\newpage
\appendix

\section{Appendices}

\subsection{Generic circuits for injecting diagonal gates}
\label{app:diagonal-gate-injection}
In this section we provide an algorithm to implement any $n$-qubit diagonal unitary $U$ using the corresponding resource state $\ket{U}=U \ket{+}^{\otimes n}$, as mentioned in \sec{intro-computational-tasks}.
It was pointed out in \cite{HowardCampbell2016} that when $U$ belongs to the third level of the Clifford hierarchy it can be implemented using one resource state $\ket{U}$ via a half-teleportation circuit (see Figure~1(a) in \cite{HowardCampbell2016}). 
Here we make this protocol more explicit as well as slightly more general.
The algorithm is as follows:
\begin{alg}[Apply the diagonal $n$-qubit unitary $U$ using $\ket{U}$] $\,$ \newline
\label{alg:inject}
\textbf{Input:} $2n$ qubits, with the first $n$ qubits in the state $\ket{U}$, and the last $n$ qubits in an arbitrary state $\ket{\alpha}$.
\begin{enumerate}
	\item apply $\mathrm{CNOT}_{n+1,1}\ldots \mathrm{CNOT}_{2n,n}$.
	\item measure the first $n$ qubits; the measurement outcomes are $m(1),\ldots,m(n)$. 
	\item for each $k$ in $\set{1,\ldots,n}$: if $m(k)$ is $1$ apply $X_{n+k}$.
	\item for each $k$ in $\set{1,\ldots,n}$: if $m(k)$ is $1$ apply $U X_{n+k} U^{\dagger}$.
\end{enumerate}
\textbf{Output:} The first $n$ qubits are in a known computational basis state, and the last $n$ qubits are in the state $U\ket{\alpha}$.
\end{alg}

Above we use the notation $\mathrm{CNOT}_{a,b}$ for CNOT gate with the control qubit $a$ and target qubit $b$. 
Note that step 3 must be completed for all $k$ in $\{ \dots n\}$ before proceeding to step 4.
Next we prove the correctness of above protocol. 

\begin{prop} Algorithm~\ref{alg:inject} is correct. 
If the diagonal unitary $U$ belongs to level $k$ of the Clifford hierarchy, then the corrections applied in step (4) are unitaries that belong to at most level $k-1$ of the hierarchy.
\end{prop}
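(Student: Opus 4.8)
The plan is to track the joint state of the $2n$ qubits through the algorithm, exploiting throughout that $U$ is diagonal. Writing $U\ket{x} = U(x)\ket{x}$ for the phase $U(x)$ and expanding $\ket{U} = 2^{-n/2}\sum_x U(x)\ket{x}$ and $\ket{\alpha} = \sum_y \alpha_y \ket{y}$, the input is $2^{-n/2}\sum_{x,y} U(x)\alpha_y \ket{x}\ket{y}$. Step (1) applies the gates $\mathrm{CNOT}_{n+1,1}\cdots\mathrm{CNOT}_{2n,n}$, which act as $\ket{x}\ket{y}\mapsto\ket{x\oplus y}\ket{y}$, so the state becomes $2^{-n/2}\sum_{x,y}U(x)\alpha_y\ket{x\oplus y}\ket{y}$. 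First I would compute the post-measurement state in step (2): projecting the first register onto a measured bit string $m$ keeps exactly the terms with $x\oplus y = m$, leaving the (unnormalized) last-register state $2^{-n/2}\sum_y U(m\oplus y)\alpha_y\ket{y}$.

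The heart of the correctness argument is the identity (for $X^m := \prod_{k:m(k)=1} X_{n+k}$)
\[
\sum_y U(m\oplus y)\alpha_y\ket{y} = X^m U X^m \ket{\alpha},
\]
which follows because $X^m U X^m \ket{y} = X^m U\ket{m\oplus y} = U(m\oplus y)X^m\ket{m\oplus y} = U(m\oplus y)\ket{y}$; this is where diagonality of $U$ is essential. I would then apply the two corrections. Step (3) applies $X^m$, giving $X^m X^m U X^m\ket{\alpha} = U X^m\ket{\alpha}$; and since the individual corrections telescope, $\prod_{k:m(k)=1} U X_{n+k}U^\dagger = U X^m U^\dagger$ (the intermediate $U^\dagger U$ factors cancel, and all the $X$'s commute so the order within step (4) is irrelevant), step (4) yields $U X^m U^\dagger\, U X^m\ket{\alpha} = U\ket{\alpha}$. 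A quick norm check shows each surviving branch has norm $2^{-n/2}$, so every outcome $m$ occurs with probability $2^{-n}$ and produces the same corrected output $U\ket{\alpha}$ on the last register, while the first register is left in the known basis state $\ket{m}$; this establishes correctness.

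For the second claim I would invoke the defining property of the Clifford hierarchy directly: writing $\mathcal{C}_\ell$ for its level $\ell$, one has $\mathcal{C}_\ell = \{V : V P V^\dagger \in \mathcal{C}_{\ell-1}\text{ for all Paulis }P\}$. The operators applied in step (4) are exactly the conjugates $U X_{n+k} U^\dagger$ of the Paulis $X_{n+k}$ by $U$, so if $U$ belongs to level $k$ then each $U X_{n+k}U^\dagger$ belongs to at most level $k-1$, as claimed. The only genuine obstacle is the bookkeeping in the first part --- correctly recognizing that the CNOTs-plus-measurement implement the Pauli-conjugated diagonal unitary $X^m U X^m$, and that steps (3) and (4) supply precisely the two Pauli factors needed to strip this back down to $U$ --- since the Clifford-hierarchy statement is then immediate from the recursive definition.
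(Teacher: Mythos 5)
Your proof is correct and follows essentially the same route as the paper's: track the state through the CNOTs and measurement, group the step-(4) corrections into $UX^mU^\dagger$, and invoke the recursive definition of the Clifford hierarchy for the second claim. The only difference is presentational --- you package the post-measurement state as the operator identity $X^m U X^m\ket{\alpha}$, while the paper manipulates the coefficient sums directly --- but the underlying computation is identical.
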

\begin{proof}
Let us first show the correctness. We will use the following notation for $U$ and $\ket\alpha$:
\[
U = \sum_{k \in \set{0,1}^n} e^{i \varphi\at{k}} \ket{k}\bra{k},\quad
\ket{\alpha} = \sum_{k \in \set{0,1}^n }\alpha_k\ket{k}.
\]
The initial state can be written as: 
\[
U \ket{+}^{\otimes n} \ket{\alpha}=\frac{1}{2^{n/2}}\sum_{k,j \in \set{0,1}^n } e^{i \varphi\at{k} }\alpha_j\ket{k,j}. 
\]
After applying the CNOT gates in step (1) the state becomes: 
\[
\frac{1}{2^{n/2}}\sum_{k,j \in \set{0,1}^n } e^{i \varphi\at{k} }\alpha_j\ket{k\oplus j,j} = \frac{1}{2^{n/2}}\sum_{k,j \in \set{0,1}^n } e^{i \varphi\at{k \oplus j } }\alpha_j\ket{k,j}.
\]
For measurement outcome $m = (m(1),\ldots,m(n))$, the state after step (2) will be
\[
\ket{m} \otimes \sum_{j \in \set{0,1}^n } e^{i \varphi\at{j \oplus m } }\alpha_j\ket{j}.
\]
After applying the last two steps of the protocol the state of qubits $n+1,\ldots,2n$ will be:
\[
 U \at{X^{m(1)}\otimes\ldots \otimes X^{m(n)}} U^{\dagger} \at{X^{m(1)}\otimes\ldots \otimes X^{m(n)}} \sum_{j \in \set{0,1}^n } e^{i \varphi\at{j \oplus m } }\alpha_j\ket{j}.
\]
Note that:
\begin{eqnarray}
U^{\dagger} \at{X^{m(1)}\otimes\ldots \otimes X^{m(n)}} \sum_{j \in \set{0,1}^n } e^{i \varphi\at{j \oplus m } }\alpha_j\ket{j} &=& U^{\dagger} \sum_{j \in \set{0,1}^n }{e^{i \varphi\at{j \oplus m } }\alpha_j\ket{j \oplus m}},\nonumber \\
 &=& U^{\dagger} \sum_{j \in \set{0,1}^n } e^{i \varphi\at{j} }\alpha_{j\oplus m}\ket{j},\nonumber \\
 &=& \sum_{j \in \set{0,1}^n } \alpha_{j\oplus m}\ket{j}.\nonumber
\end{eqnarray} 
Finally, we see that
\[ 
U \at{X^{m(1)}\otimes\ldots \otimes X^{m(n)}} \sum_{j \in \set{0,1}^n} \alpha_{j\oplus m} \ket{j} = U\ket{\alpha}.
\]
as required. 

Finally, we note that if $U$ belong to the $\ell^{\mathrm{th}}$ level of the Clifford hierachy, we have by definition that all $U X_k U^\dagger$ belong to the $(\ell-1)^{\mathrm{th}}$ level. 

\end{proof}

We finish this section with the expression for some explicit corrections $U X_k U^\dagger$ in the following list and in \fig{ccz-inject} and \fig{gateteleportation}.
\begin{itemize}
	\item $U = \exp\at{i \pi \ket{1}\bra{1} /2^k}$, correction: $U X U^\dagger = e^{-i \pi /2^k} \exp\at{i \pi \ket{1}\bra{1} /2^{k-1}} X$.
	\item $U = CS = \exp\at{\frac{\pi i }{2}\ket{11}\bra{11}}$, corrections: 
	\begin{itemize}
		\item $U X_1 U^\dagger = \exp\at{\frac{-\pi i }{2}Z\otimes\ket{1}\bra{1}} X_1 =  \text{CNOT}_{1,2} S_1 S_2^\dagger \text{CNOT}_{1,2} X_1$,
		\item $U X_2 U^\dagger = \text{SWAP}_{1,2} U X_1 U^\dagger \text{SWAP}_{1,2} = \text{CNOT}_{2,1} S_2 S_1^\dagger \text{CNOT}_{2,1} X_2$.
	\end{itemize}
	\item $U = CCZ = \exp\at{i \pi \ket{111}\bra{111}}$, corrections: 
	\begin{itemize}
		\item $U X_1 U^\dagger = \exp\at{i\pi I\otimes\ket{11}\bra{11}} X_1 = CZ_{2,3} X_1$,
		\item $U X_2 U^\dagger = \text{SWAP}_{1,2} U X_1 U^\dagger \text{SWAP}_{1,2} = CZ_{1,3} X_2$,
		\item $U X_3 U^\dagger = \text{SWAP}_{1,3} U X_1 U^\dagger \text{SWAP}_{1,3} = CZ_{1,2} X_3$.
	\end{itemize}
\end{itemize}

\begin{figure}
    \centering
 	\includegraphics{CCZInject}
 	\caption{Implementing $CCZ$ using $\ket{CCZ}$.}
 	\label{fig:ccz-inject}
\end{figure}

\begin{figure}[ht!]
  \begin{subfigure}[b]{0.45\textwidth}
    \centering
    \includegraphics{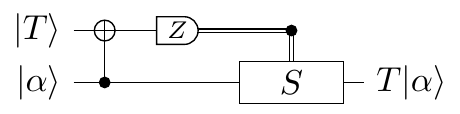}
    \caption{Implementing $T$ using $\ket{T}$.}
    \label{fig:t-inject}
  \end{subfigure}
  \begin{subfigure}[b]{0.45\textwidth}
    \centering
    \includegraphics{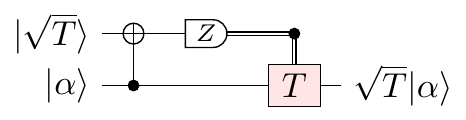}
    \caption{Implementing $\sqrt{T}$ using $|\sqrt{T}\rangle$ and $\ket{T}$.}
    \label{fig:sqrt-t-inject}
  \end{subfigure}

  \vspace{1em}
  \begin{subfigure}[b]{0.9\textwidth}
  	\centering
    \includegraphics{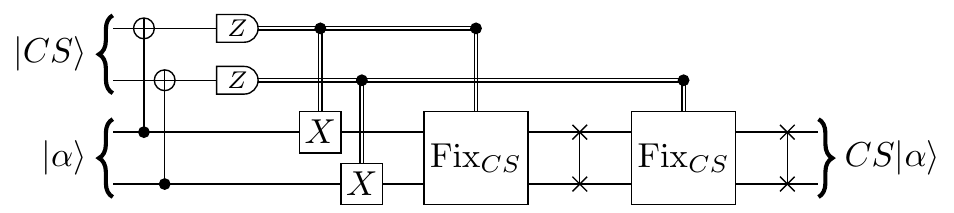}
  \end{subfigure}

  \vspace{1em}
  \begin{subfigure}[b]{0.9\textwidth}
  	\centering
 	\includegraphics{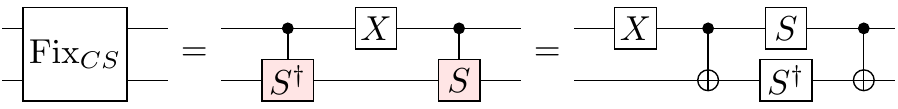}
 	\caption{Implementing $CS$ using $\ket{CS}$.}
 	\label{fig:cs-inject}
  \end{subfigure}

 	\caption{Gate injection circuits to apply some non-Clifford gates using resource ancilla states and Clifford operations.
 	 Shaded boxes represent gates in the third level of the Clifford hierarchy, which if necessary could in turn be implemented using a resource state and a Clifford circuit.}
 	\label{fig:gateteleportation}
\end{figure}

\subsection{Reducing the cost of unitary synthesis using \texorpdfstring{$\sqrt{T}$}{√̅T} gates}
\label{app:seqeuntial-root-t}
\begin{figure}[ht]
  \centering
  \includegraphics[scale=0.8]{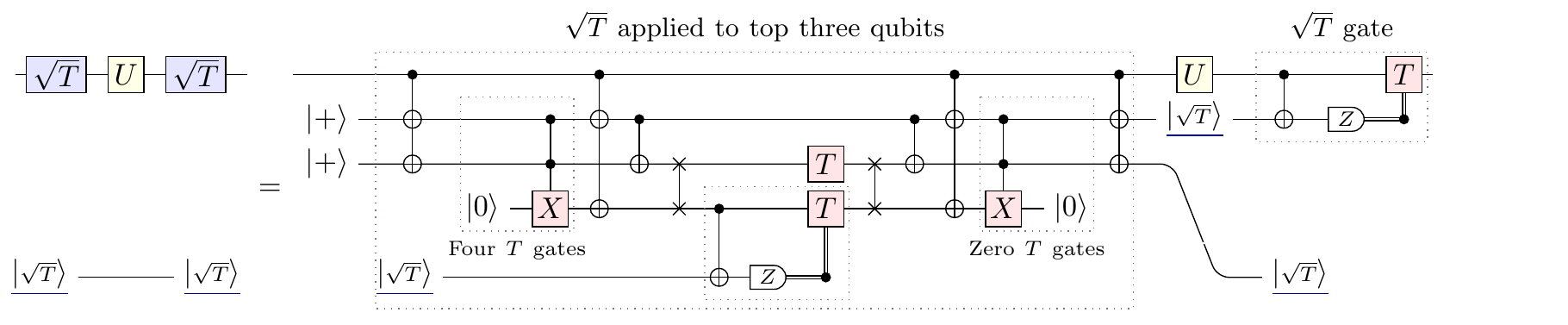}
  \caption{Application of $\sqrt{T}U\sqrt{T}$ catalyzed by a $\ketsm{\sqrt{T}}$ state. 
  This uses six $\ketsm{T}$ states on average, and always uses at least five and at most seven $\ketsm{T}$ states.
  The $\sqrt{T}$ gate can be injected using the $\ket{\sqrt{T}}$ state as in \fig{sqrt-t-inject}.}
  \label{fig:two-sqrt-t-gates}
\end{figure}

In this section we describe how to reduce the cost of approximate unitay synthesis using $\sqrt{T}$ states as mentioned in \sec{Conversions}.
We also make use of a trick to reduce the injection cost when applying $\sqrt{T}$ gates sequentially.

Applying a $\sqrt{T}$ gate using magic state injection uses an extra $T$ gate with probability one half. 
Using the family of conversion protocols $\ketsm{\sqrt{T}} + (5k+\frac{1}{2})\ket{T} \rightarrow (2k+1)\ketsm{\sqrt{T}}$ to create $\ketsm{\sqrt{T}}$ states, applying one $\sqrt{T}$ gate uses on average $3 + 1/(4k)$ $T$ gates. 
In the worst case, this method will use $3.5 + 1/(2k)$ $T$ gates. 
We further reduce the number of $T$ gates needed to apply $\sqrt{T} U \sqrt{T}$. 
This situation is common when $\sqrt{T}$ gates are used for the synthesis of single qubit $Z$ rotations by an arbitrary angle. 
The circuit shown in \fig{two-sqrt-t-gates} uses on average three $T$ gates per $\sqrt{T}$ gate and $3.5$ $T$ gates in the worst case. 
In addition, applying $\sqrt{T}$ gates using the protocol in \fig{two-sqrt-t-gates} requires less ancillary qubits in comparison to using conversion protocols $\ketsm{\sqrt{T}} + (5k+\frac{1}{2})\ket{T} \rightarrow (2k+1)\ketsm{\sqrt{T}}$ for $k > 1$.

A significant application of the above is to reduce the overhead of circuit synthesis by giving access to a larger gate set.
We therefore take an aside here to explain the context and describe how our results imply overhead reduction.
Approximating the single qubit rotation $\exp\at{i\theta\ket{1}\bra{1}}$ to within $1$-norm accuracy $\ve$ using Clifford and $T$ gates requires less then $3 \log_2\at{1/\varepsilon} + O(\log(\log_2(1/\ve)))$ $T$ gates~\cite{RossSelinger} in the typical case and less then $4\log_2(1/\ve) + O(1)$ in the worst case.
Consider now expanding the gate set to Clifford, $T$, $\sqrt{T}$ and $\sqrt{T}^3$ gates.
If $N_{T}$, $N_{\sqrt{T}}$ and $N_{\sqrt{T}^3}$ denote the number of $T$, $\sqrt{T}$ and $\sqrt{T}^3$ gates used to approximate the rotation, then the algorithm 
described in \cite{KBRY} finds gate sequences with the number of gates satisfying:
\begin{equation*}\label{eq:cost}
2 N_{T} + 3\at{N_{\sqrt{T}} + N_{\sqrt{T}^3}} < 4 \log_2\at{1/\ve} + O(1).
\end{equation*}
For this algorithm it was also empirically observed that $N_{T} \approx N_{\sqrt{T}} + N_{\sqrt{T}^3}$. Assuming that applying $\sqrt{T}$ and $\sqrt{T}^3$ gates consumes $\alpha$ $\ketsm{T}$ states, we see that using Clifford, $T$, $\sqrt{T}$ and $\sqrt{T}^3$ gates for rotation synthesis will use less than 
\[
\frac{1 + \alpha }{5} \cdot 4 \log_2\at{1/\ve} + O(1)
\]
$T$ gates. When the same algorithm uses only Clifford and $T$ gate set it finds sequences with the at most  $4\log_2(1/\ve) + O(1)$ $T$ gates. Therefore, we achieve break-even point with Clifford and $T$ synthesis when applying $\sqrt{T}$ gate consumes four $T$ gates. In the best protocol we find so far three $T$ gates are consumed for each $\sqrt{T}$ gate applied on average and $3.5$ $T$ gates are consumed in the worst case. This results in an average-case $20 \%$ reduction and worst case $10 \%$ reduction in the number of $T$ gates used to synthesize single qubit rotation. 

\subsection{Explicit circuits for some common resource conversions}
\label{app:explicit-circuits}
In this section we include a number of explicit constructions which provide conversion upper bounds that appear in \tab{T_to_Targ} and \tab{CCZ_to_Targ} in \sec{Conversions}.

\begin{figure}[ht!]
	\begin{subfigure}[b]{0.9\textwidth}
		\centering
		\includegraphics[scale=1]{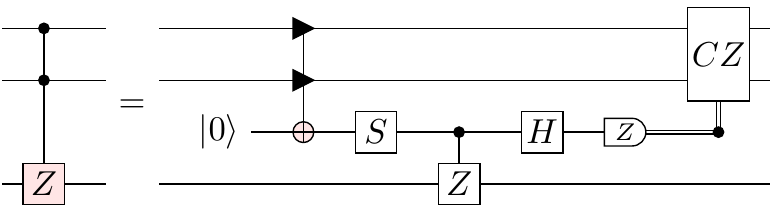}
	 	\caption{$CCZ$ gate using Toffoli* gate \cite{Jones2012}.}
	 	\label{fig:cczgate-from-t}
	\end{subfigure}
	\begin{subfigure}[b]{0.9\textwidth}
	\centering
	\includegraphics[scale=1]{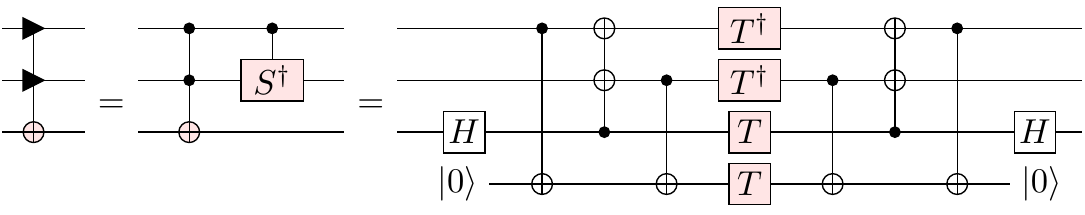}
	\caption{The Toffoli* gate, which differs from the Toffoli gate (as defined in \cite{Selinger2012}), uses four $\ket{T}$ states.}
	\label{fig:toffoli-star}
	\end{subfigure}
	\caption{Known circuits for implementing $CCZ$ gate using four $T$ gates.}
	\label{fig:gatesynthesisfromT}
\end{figure}
\begin{figure}[ht!]
	\hfill
	\begin{subfigure}[b]{0.45\textwidth}
		\centering
		\includegraphics[scale=0.9]{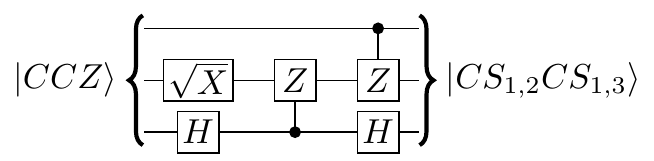}
		\caption{Conversion $\ket{CCZ} \leftrightarrow \ket{CS_{1,2} CS_{2,3}}$}
		\label{fig:ccz-to-cscs}
	\end{subfigure}
	\hfill
	\begin{subfigure}[b]{0.45\textwidth}
		\centering
		\includegraphics[scale=0.9]{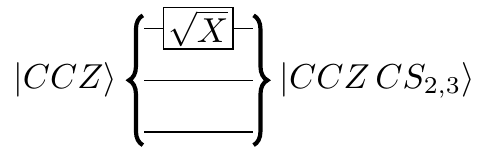}
		\caption{Conversion $\ket{CCZ} \leftrightarrow \ket{CCZ\,CS_{2,3}}$}
		\label{fig:ccz-to-cczcs}
	\end{subfigure}
	\hfill
	\caption{Two way conversion of resource states from \cite{HowardCampbell2016}. These circuits are useful subroutines for some of our results.}
	\label{fig:two-way}
\end{figure}
\begin{figure}[ht!]
	\hfill
	\begin{subfigure}[b]{0.45\textwidth}
    \centering\includegraphics[scale=0.9]{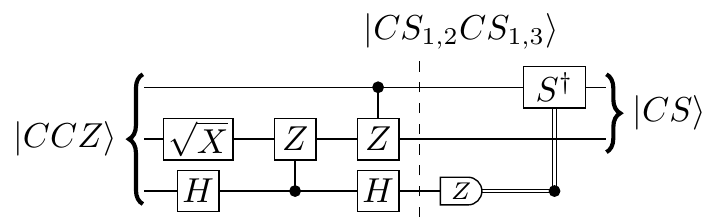}
      \caption{Conversion $\ket{CCZ} \rightarrow \ket{CS}$.}
      \label{fig:ccz-to-cs}
	\end{subfigure}
	\hfill
	\begin{subfigure}[b]{0.45\textwidth}
		\centering\includegraphics[scale=0.9]{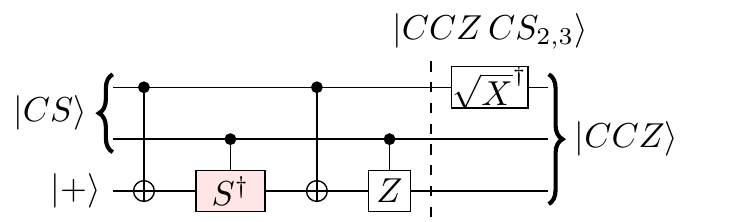}
		\caption{Conversion $2\ket{CS} \rightarrow \ket{CCZ}$.}
		\label{fig:cs-to-ccz}
	\end{subfigure}
	\hfill
	\caption{Conversion between $\ket{CCZ}$ and $\ket{CS}$.}
	\label{fig:gatesynthesis}
\end{figure}
\begin{figure}[ht!]
    \centering\includegraphics[scale=0.9]{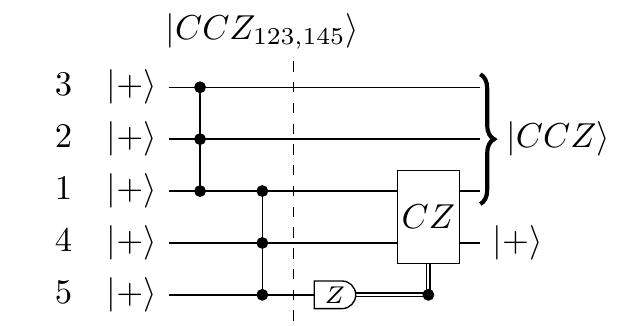}
	\caption{Conversion between $\ket{CCZ}$ and $\ket{CCZ_{123,145}}$.}
	\label{fig:ccz123145}
\end{figure}
\begin{figure}[ht!]
    \centering\includegraphics[scale=0.9]{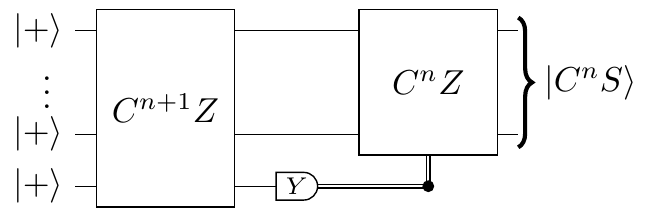}
	\caption{Conversion from $\ket{C^{n+1}Z}$ to $\ket{C^{n-1}S}$ or $\ket{C^{n-1}S^\dagger}$ with probability one half. See \propos{multiCCZtoCS} for the correctness proof.}
	\label{fig:multiCCZtoCS}
\end{figure}

\begin{prop} \label{prop:measure-control}
Let $U$ be a diagonal $n$-qubit unitary, and let $CU$ be a controlled version of $U$,
then measuring the fist qubit of the state $\ket{CU} = CU \ket{+}^{\otimes (n+1)}$ in $Z$ basis sets the rest of the qubits into the state $\ket{U}$ with probability one half and into the state $\ket{+}^{\otimes n}$ otherwise.

In particular, this implies the following conversion protocols: 
\begin{itemize}
    \item $\ketsm{C^n Z} \rightarrow \frac{1}{2} \ketsm{C^{n-1} Z} \rightarrow \ldots  \rightarrow  \frac{1}{2^{n-2}} \ketsm{CCZ}  $
    \item $\ketsm{C^n S} \rightarrow  \frac{1}{2} \ketsm{C^{n-1} S} \rightarrow \ldots \rightarrow  \frac{1}{2^{n-1}} \ketsm{CS}  $
\end{itemize}
\end{prop}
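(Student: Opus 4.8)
The plan is to compute the action of $CU$ on the product state $\ket{+}^{\otimes(n+1)}$ directly and then read off the two measurement branches. First I would write the controlled unitary as $CU = \ketsm{0}\brasm{0}\otimes I + \ketsm{1}\brasm{1}\otimes U$, where the first tensor factor is the control qubit. Applying this to $\ket{+}\otimes\ket{+}^{\otimes n}$ and expanding $\ket{+}=(\ketsm{0}+\ketsm{1})/\sqrt{2}$ on the control gives
\[
CU\,\ket{+}^{\otimes(n+1)} = \frac{1}{\sqrt 2}\at{ \ketsm{0}\otimes\ket{+}^{\otimes n} + \ketsm{1}\otimes U\ket{+}^{\otimes n} }.
\]

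Second, I would measure the control qubit in the $Z$ basis. The two computational-basis components of the control are orthogonal, so the outcome probabilities are the squared norms of the two branches. Since $\ket{+}^{\otimes n}$ is normalized and $U$ is unitary, the vector $U\ket{+}^{\otimes n}$ is also normalized, so each branch has norm $1/\sqrt 2$ and each outcome occurs with probability $1/2$. Conditioned on the $\ketsm{0}$ outcome the remaining $n$ qubits are left in $\ket{+}^{\otimes n}$, while conditioned on the $\ketsm{1}$ outcome they are left in $U\ket{+}^{\otimes n}=\ket{U}$, which is exactly the claimed statement.

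Third, to obtain the two conversion chains I would observe that $C^{n}Z$ is the controlled version of the $n$-qubit gate $C^{n-1}Z$, and likewise $C^{n}S$ is the controlled version of $C^{n-1}S$; this is immediate from comparing their diagonal phases $(-1)^{x_1\cdots x_{n+1}}$ and $i^{x_1\cdots x_{n+1}}$ with the phase of a controlled gate. Taking $U=C^{n-1}Z$ (respectively $U=C^{n-1}S$) in the proposition, one measurement of the control qubit maps $\ket{C^{n}Z}$ to $\ket{C^{n-1}Z}$ (respectively $\ket{C^{n}S}$ to $\ket{C^{n-1}S}$) with probability $1/2$. Iterating down to the three-qubit state $\ket{CCZ}$ (respectively the two-qubit state $\ket{CS}$) requires $n-2$ (respectively $n-1$) successful measurements, so the cumulative success probability is $1/2^{n-2}$ (respectively $1/2^{n-1}$), matching the factors displayed in the statement.

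I do not expect a genuine obstacle here: the core computation is a single line, and the only care needed is bookkeeping---correctly matching the $\ketsm{0}$ and $\ketsm{1}$ outcomes to the $\ket{+}^{\otimes n}$ and $\ket{U}$ branches, and verifying the recursive identification $C^{n}Z = C(C^{n-1}Z)$ together with its $S$-analogue, so that the proposition may be applied $n-2$ (respectively $n-1$) times in sequence to accumulate the stated success probabilities.
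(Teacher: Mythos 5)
Your proof is correct and essentially matches the paper's: where you expand $CU\ket{+}^{\otimes(n+1)}$ directly into its two branches and compute the norms, the paper notes that the projectors $(I\pm Z)/2$ on the control qubit commute with $CU$ and so the measurement can be pulled before the gate --- both arguments are the same observation that $CU$ is block-diagonal with respect to the control qubit's computational basis. Your explicit verification of the recursion $C^{n}Z = C(C^{n-1}Z)$ (and its $S$-analogue) together with the step counts $n-2$ and $n-1$ fills in details the paper leaves implicit, but this is bookkeeping rather than a different route.
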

\begin{proof}
Note that projectors $(I\pm Z)/2$ commute with $CU$ and therefore applying $Z$ measurement to the first qubit is the same as measuring $Z$ on the first qubit of $\ket{+}^{\otimes (n+1)}$ and then applying $CU$ to $\ket{0}\otimes \ket{+}^{\otimes n}$ or $\ket{1}\otimes \ket{+}^{\otimes n}$ depending on the measurement outcome. We get $\ket{0}$ or $\ket{1}$ on the first qubit with probability one half and therefore we get $\ket{U}$ or $\ket{+}^{\otimes n}$ on the rest of the qubits with probability one half.
\end{proof}

\begin{prop} \label{prop:multiCCZtoCS}
For $n\ge 0$, the probability of measuring the eigenvalue $m = \pm 1$ of $Y$ on the first qubit of $\ket{C^{n+1} Z}$ is $1/2$. After the measurement, the state of the rest of the qubits is $\ket{C^n S^{m}}$. 
\end{prop}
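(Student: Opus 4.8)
The plan is to separate out the first qubit, realise the $Y$-measurement as a projection, and then recognise the operator induced on the remaining register as a diagonal unitary proportional to $C^n S^m$. Conceptually this is a ``square-root-by-measurement'' gadget: since $\at{C^n S}^2 = C^n Z$, measuring the control of the controlled-$C^nZ$ in the $Y$ basis should project onto $C^n S$ or its inverse.

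First I would write $C^{n+1}Z$ as a gate controlled on the first qubit, so that applied to $\ket{+}^{\otimes(n+2)}$ it gives
\[
\ket{C^{n+1}Z} = \tfrac{1}{\sqrt{2}}\at{\ket{0}\otimes\ket{+}^{\otimes(n+1)} + \ket{1}\otimes C^n Z\,\ket{+}^{\otimes(n+1)}}.
\]
The $\pm1$ eigenstate of $Y$ is $\ket{m_Y} = \at{\ket{0}+mi\ket{1}}/\sqrt{2}$, so $\bra{m_Y} = \at{\bra{0}-mi\bra{1}}/\sqrt{2}$. Projecting the first qubit with $\bra{m_Y}$ and using $\langle m_Y|0\rangle = 1/\sqrt{2}$ and $\langle m_Y|1\rangle = -mi/\sqrt{2}$ leaves the unnormalised state $\tfrac{1}{2}\at{I - mi\,C^n Z}\ket{+}^{\otimes(n+1)}$ on the remaining $n+1$ qubits.

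The key observation is that $\tfrac{1}{\sqrt{2}}\at{I - mi\,C^n Z}$ is a \emph{diagonal unitary}: in the computational basis $C^n Z$ contributes $-1$ on $\ket{1^{n+1}}$ and $+1$ elsewhere, so each diagonal entry of $\tfrac{1}{\sqrt{2}}\at{I-mi\,C^n Z}$ is either $\tfrac{1-mi}{\sqrt{2}}$ or $\tfrac{1+mi}{\sqrt{2}}$, both of unit modulus. Since a diagonal unitary preserves norm, the probability of outcome $m$ is $\nrm{\tfrac12\at{I-mi\,C^n Z}\ket{+}^{\otimes(n+1)}}^2 = \tfrac12$ for each $m$, independent of $n$, and the normalised post-measurement state is exactly $\tfrac{1}{\sqrt{2}}\at{I-mi\,C^n Z}\ket{+}^{\otimes(n+1)}$.

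It then remains to identify this state, which I would do by verifying the operator identity $\tfrac{1}{\sqrt{2}}\at{I-mi\,C^n Z} = e^{-mi\pi/4}\,C^n S^m$ entry by entry: both sides equal $\tfrac{1-mi}{\sqrt{2}}=e^{-mi\pi/4}$ away from $\ket{1^{n+1}}$, and on $\ket{1^{n+1}}$ the left side is $\tfrac{1+mi}{\sqrt{2}}$ while the right side is $e^{-mi\pi/4}\,i^m = \tfrac{1-mi}{\sqrt{2}}\cdot mi = \tfrac{1+mi}{\sqrt{2}}$, using $i^m = mi$. Hence the post-measurement state is $e^{-mi\pi/4}\ket{C^n S^m}$, i.e.\ $\ket{C^n S^m}$ up to an irrelevant global phase, and the case $n=0$ (where $C^0 S^m = S^m$) is covered by the identical computation. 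The only delicate point is the global-phase and normalisation bookkeeping in this final identification; once the controlled decomposition of $\ket{C^{n+1}Z}$ is written down, both claims of the proposition follow immediately from the fact that $\tfrac{1}{\sqrt{2}}\at{I-mi\,C^n Z}$ is a unit-modulus diagonal operator.
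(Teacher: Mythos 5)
Your proof is correct, but it takes a genuinely different route from the paper's. The paper splits the argument in two: it first computes the outcome probability by expanding $\bra{C^{n+1}Z}(I+Y_1)\ket{C^{n+1}Z}/2$ and arguing the cross terms cancel (using that $\bra{+}^{\otimes(n+1)}C^nZ\ket{+}^{\otimes(n+1)}$ is real and $\bra{0}(I+Y)\ket{1}=-\bra{1}(I+Y)\ket{0}$), and then identifies the post-measurement state by \emph{induction on $n$}, peeling off the last qubit via $\ket{C^{n+1}Z}=\ket{+}^{\otimes(n+1)}\otimes\ket{0}/\sqrt{2}+\ket{C^nZ}\otimes\ket{1}/\sqrt{2}$ and handling the $-1$ outcome at the end by elementwise complex conjugation (since $Y^{\ast}=-Y$). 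You instead give a direct, closed-form argument: projecting the measured qubit leaves the operator $\tfrac{1}{2}(I-mi\,C^nZ)$ acting on $\ket{+}^{\otimes(n+1)}$, and the single observation that $\tfrac{1}{\sqrt{2}}(I-mi\,C^nZ)$ is a unit-modulus diagonal unitary yields \emph{both} claims at once: unitarity gives the probability $1/2$ with no cancellation argument, and the entrywise identity $\tfrac{1}{\sqrt{2}}(I-mi\,C^nZ)=e^{-mi\pi/4}\,C^nS^m$ (valid since $i^m=mi$ for $m=\pm1$) identifies the residual state up to global phase, treating both measurement outcomes uniformly through the parameter $m$. Your approach is arguably cleaner and more self-contained; what the paper's induction buys is an explicit display of the recursive structure of the $\ket{C^nZ}$ family (reusing the $n=0$ computation as a base case), and the conjugation trick it uses for the $-1$ outcome is a reusable general technique, but neither is logically necessary, as your proof shows.
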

\begin{proof}
Let us first show that the probability of measurement outcome is $1/2$.
Let us write 
\[
\ket{C^{n+1}Z} = \ket{0}\otimes\ket{+}^{\otimes n}/\sqrt{2} + \ket{1}\otimes\ket{C^n Z}^{\otimes n}/\sqrt{2}
\]
The probability of measuring $+1$ eigenvalue of $Y$ is: 
\[
\bra{C^{n+1}Z}I+Y\ket{C^{n+1}Z}/2 =
\bra{0}I+Y\ket{0}/4 + 
\bra{0}I+Y\ket{0}/4 +  
\alpha\bra{0}I+Y\ket{1}/4 +
\alpha^\ast \bra{1}I+Y\ket{0}/4,
\]
where $\alpha = \bra{+}^{\otimes n} \ket{C^n Z}$. 
The probability is half because $\alpha$ is a real number and $\bra{0}I+Y\ket{1} = - \bra{1}I+Y\ket{0}$. 

We prove the second part of the proposition by induction on $n$.
When $n=0$, and the measurement outcome is $+1$, the second qubit will be in the state 
\[
\frac{I+Y_1}{\sqrt{2}}\ket{CZ} 
=
\ket{i} \otimes \at{\ket{+}/\sqrt{2} -i Z\ket{+}\sqrt{2}}
= 
e^{- i \pi/4} \ket{i} \otimes \ket{S},
\]
where $\ket{i} = (1,i)/\sqrt{2}$.
Suppose we now we have shown that  
\[
\frac{I+Y_1}{\sqrt{2}}\ket{C^{n}Z} = e^{- i \pi/4} \ket{i} \otimes \ket{C^{n-2} S}
\]
Let us now observe that 
\[
\frac{I+Y_1}{\sqrt{2}}\ket{C^{n+1}Z} =
\frac{I+Y_1}{\sqrt{2}} \ket{+}^{\otimes n}\otimes\ket{0} 
+ \frac{I+Y_1}{\sqrt{2}} \ket{C^{n} Z} \otimes \ket{1} 
\]
By induction hypothesis and the fact that $(I+I)/\sqrt{2} \ket{+} = e^{-i \pi /4} \ket{i}$ it follows that:
\[
\frac{I+Y_1}{\sqrt{2}}\ket{C^{n+1}Z} = e^{- i \pi/4} \ket{i} \otimes \at{ \ket{+}^{\otimes n-1}\otimes\ket{0} + \ket{C^{n-1} S} \ket{1} } = e^{- i \pi/4} \ket{i} \otimes \ket{C^n S}
\]
By applying element-wise complex conjugation to all the equations above we get the proof for the $-1$ outcome of the measurement, because $Y^\ast = -Y$.
\end{proof}

\subsection{Extent values} 
\label{app:extent-values}
Here in \tab{extent_values} we list the extent values for some common resource states, which are used to produce some of the bounds in \tab{T_to_Targ} and \tab{CCZ_to_Targ} in \sec{montonebounds}.
To rigorously find the exact value of the extent one can perform the following steps:

\begin{enumerate}
    \item Find approximate numerical solutions to the primal and dual linear programs (that is a decomposition into a linear combination of stabilizer states and a witness state).
    \item Guess exact expressions close to the approximate solutions (or use algebraic number reconstruction tools to find them).
    \item Plug the (guessed) exact solutions into the linear program and see that min/max for primal/dual problem are equal thereby confirming they are the true solutions.
\end{enumerate}

Note that we did not perform the rigorous extent calculation for some of the multi-qubit states in \tab{extent_values}.
Instead, we computed extent value up to eight digits of precision and reconstructed the exact expression that matches found approximation.

\begin{table*}[ht]

	\hfill
	\begin{subfigure}[b]{0.45\textwidth}
\centering
{\renewcommand{\arraystretch}{1.25}
\begin{tabular}{c|c}
$\ket{\psi}$ & $\xi(\ket{\psi})$ \\ 
\hline 
$\ket{\sqrt{T}}$ & $2-\sqrt{2} + 1/\sqrt{2+\sqrt{2}}$ \\ 
$\ket{T}$  & $\frac{4}{2+\sqrt{2}}$ \\ 
$\ket{CS}$  & $\frac{8}{5}$ \\ 
$\ket{CCS}$  & $\frac{41}{20}$ \\ 
$\ket{C^3S}$  & $\frac{9}{8}+\frac{1}{\sqrt{2}}$ \\ 
$\ket{CCZ}$  & $\frac{16}{9}$ 
\end{tabular}
}
	\end{subfigure}
	\hfill
	\begin{subfigure}[b]{0.45\textwidth}
\centering
{\renewcommand{\arraystretch}{1.25}
\begin{tabular}{c|c}
$\ket{\psi}$ & $\xi(\ket{\psi})$ \\ 
\hline 
$\ket{C^3Z}$  & $\frac{9}{4}$ \\ 
$\ket{C^4Z}$  & $\frac{9}{8}+\frac{1}{\sqrt{2}}$  \\ 
$\ket{CCZ_{123,145}}$  & $2$ \\ 
$\ket{W_3}$  & $\frac{16}{9}$ \\
$\ket{W_4}$  & $\frac{64}{29}$ \\ 
$\ket{W_5}$  & $\frac{64}{25}$ \\
\end{tabular}
}
	\end{subfigure}
	\hfill
	
\caption{Exact expressions for the extent of the states used in \tab{T_to_Targ} and \tab{CCZ_to_Targ}.
All values are accurate to within eight digits of precision. Note that $\ket{C^4Z}$ and $\ket{C^3S}$ have the same extent but they are not Clifford-equivalent; measuring $\ket{C^4Z}$ on the last qubit in the $Y$ basis produces $\ket{C^3S}$ or $\ket{C^3S^\dagger}$~(\propos{multiCCZtoCS}), but conversion in the reverse direction is ruled out by the stabilizer nullity.}
\label{tab:extent_values}
\end{table*}

\subsection{Further details on phase polynomial protocols}
\label{app:phase-polynomial}
Here we provide the proofs for \theo{PhasePolyCat} and \lemm{tau-for-Wn} in \sec{polynomialcatalysis}. 

\PhasePolyCat*
In this theorem, we follow the conversion notation of \defn{conversion-equations} and use $\nu$ that was defined earlier as the stabilizer nullity (recall \defn{stabilizer-monotone}). 

\begin{proof} The proof of the theorem uses the phase polynomial formalism, which we quickly review here and the reader can learn more about in Refs.~\cite{amy2016t,campbell2017unified,heyfron2018efficient}.

For any diagonal unitary in the 3$^{\mathrm{rd}}$ level of the Clifford hierarchy we have
\begin{equation}
	U_f = \sum_x \exp( i f(x) \pi / 4) 	\kb{x}{x} ,
\end{equation}	
where $f:\mathbb{Z}_2^n \rightarrow \mathbb{Z}_8$ is a cubic form.  That is, $f$ can be decomposed as the phase polynomial
\begin{equation}
		f(x) = \sum_{a_k \neq 0} a_k \lambda_k(x) \pmod{8},
\end{equation}
where $a_k \in \mathbb{Z}_8$ and each $\lambda_k$ is a $ \mathbb{Z}_2$ linear function. That is, each $\lambda_k$ has the form
\begin{equation}
		\lambda_k(x) = (P_{1,k} x_1) \oplus  (P_{2,k} x_2)  \ldots (P_{n,k} x_n) \pmod{2} ,
\end{equation}	
where $P_{j,k}$ are binary.  Therefore, the function can be described by a binary matrix $P$ and vector $a$.  We only define columns of $P$ for nonzero $a_k$, so it has a number of columns equal to the number of terms in $f$.

For a function with a single term $f(x)=a_k \lambda_k(x)$, an easily verified circuit decomposition is
\begin{equation}
  U_{\lambda_k}= \sum_{x} \exp( i \lambda_k(x) \pi / 4)   \kb{x}{x}= V_{CNOT (\lambda_k)}^{\dagger} T_1^{a_k} V_{CNOT (\lambda_k)}
  \end{equation}
where $T_1$ is a $T$ gate acting on qubit 1 and $V_{CNOT (\lambda_k)}$ is a cascade of CNOT gates such that 
\begin{equation}
   V_{CNOT (\lambda_k)} \ket{ x } = V_{CNOT (\lambda_k)} \ket{ x_1, x_2, \ldots x_n} = \ket{ \lambda_k(x) , x_2, \ldots x_n}.
\end{equation}
We note that if $a_k$ is even then $T_1^{a_k}=S_1^{a_k/2}$ is a Clifford and the whole circuit is Clifford.  Whereas if $a_k$ is odd then $T_1^{a_k}=T_1 S_1^{(a_k-1)/2}$ and only a single $T$ gate is used.  For a phase polynomial $f$ with many terms we have
\begin{equation}
    U_f = \prod_k U_{\lambda_k}
\end{equation}
and so the $T$-count for the associated circuit is equal to the number of odd valued $a_k$.  If all values are even, then the unitary is Clifford.

We use this insight to split the unitary $U_f$ into a Clifford and non-Clifford part.  For each $a_k$ coefficient, we define $b_k \in \mathbb{Z}_4$ and $c_k \in \mathbb{Z}_2$ such that $a_k = 2 b_k + c_k$. Notice that $c_k=1$ if and only if $a_k$ is odd valued.  Then we have that $f = g + 2 h$ where $g$ and $h$ are the functions
\begin{align}
		g(x) & = \sum_{c_k \neq 0} c_k \lambda_k(x) \pmod{8}, \\
		h(x) & = \sum_{b_k \neq 0} b_k \lambda_k(x) \pmod{8}.
\end{align}
We see that $U_f=U_{g+2h}=U_{g}U_{2h}$ where $U_{2h}$ is a Clifford unitary. The non-Clifford part is $U_{g}$ and all the terms have odd valued co-coefficients, so the number of terms in $g$ gives an upper bound on $\tau(U_g)$ as discussed earlier. It follows that if the function $g$ has $m$ (odd-valued) terms then the state can be prepared using $m$ many $T$ gates or states.  For any given unitary $U_g$ there is an equivalence class of different functions $g$ that all result in the same unitary but with different numbers of terms. Herein we assume that $g$ is the optimal representative with the fewest number of terms, which we denote $\tau(U_g)$.  Design of compilers for finding this optimal function is an ongoing research area with several useful heuristics~\cite{amy2016t,campbell2017unified,heyfron2018efficient}.  Furthermore, there is a binary matrix $P$ description of $g$ (as defined above) with a number of columns also equal to $\tau(U_g)$. A trivial, but relevant, example is $U=T^{\otimes n}$ for which $P=\id_n$ and $\tau(T^{\otimes n})=n$.  

The next important step is that given a unitary $U_g$ we may also be able to remove terms from $g$ by applying inverse $T$ gates.  
More generally, given two such unitaries $U_g$ and $U_{g'}$ with phase polynomials $g$ and $g'$, we have that $U_{g'}=U_{g}U_{\Delta}$ where $\Delta= g - g'$.  Therefore,
\begin{equation}
    \ket{U_{g'}} = U_{\Delta} \ket{U_g} ,
\end{equation}
and
\begin{equation}
    \ket{T}^{\otimes \tau(U_\Delta)} \ket{U_{g'}} \rightarrow  \ket{U_g}.
\end{equation}
The number of $T$ states needed is equal to $\tau(U_\Delta)$, which in turn is equal to the number of terms where $g$ and $g'$ differ.

Given any $P$ we can always bring it into row-reduced echelon form using a CNOT circuit, by virtue of the arguments presented in Sec. III of Ref.~\cite{HowardCampbell2016}. Then
\begin{equation} 
	 P =\left( \begin{array}{cc}
	 \id_r & A \\
	 0 & 0 \end{array}
	   \right) ,
\end{equation}	
where $\id_r$ is an identity matrix of size equal to $r:=\mathrm{rank}(P)$.  If $P$ is full rank the additional 0 padding is not present.  Note that if $P$ has any 0 rows then the unitary acts trivially on the corresponding qubits leaving them in the $\ket{+}$ state and so $r \geq \mu(U)$.  Using our earlier argument, we can always remove from $P$ the columns corresponding to the matrix $A$ using a number of $T$ states equal to the number of columns in $A$.  Since $A$ has $\tau(U_g)-r$ columns, this requires the same quantity of $T$ states.  The resulting $U_{g'}$ has $P' = \id_r$ (with possibly some 0 row padding) which corresponds to $r$ copies of $T$ states.   Therefore,  we can perform
\begin{equation}
	\ket{U_g}\ket{T}^{\otimes (\tau(U_g)-r)} \rightarrow \ket{T}^{\otimes r}.
\end{equation}
If $r = \mu(U_g)$ then we have the result of the theorem.  If $r>\mu(U_g)$ then we actually have a stronger result and the statement of the theorem still follows.
\end{proof}

The interesting cases of \theo{PhasePolyCat} are those where $\ket{U} \rightarrow  \ket{T}^{\otimes r}$ is forbidden by virtue of the ring argument as presented in \theo{number-field-constraint}. We make the following observation
\begin{claim}\label{EvenWeightClaim}
Let $U$ be a diagonal unitary from the 3$^\mathrm{rd}$ level of the Clifford hierarchy with phase polynomial matrix $P$.  If all rows of $P$ have even Hamming weight then $U\ket{+}^{\otimes n} \nrightarrow  \ket{T}$.
\end{claim}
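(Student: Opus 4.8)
The plan is to show that the even-row-weight hypothesis forces the density matrix $\rho = \kb{U}{U}$ to have all entries in $\q\at{i}$, after which \theo{number-field-constraint} (applied with $F = \q\at{i}$) immediately rules out the conversion to $\ket{T}$, whose density matrix has entries in $\q\at{\zeta_8}$ but not all in $\q\at{i}$, as noted earlier in the excerpt. Writing $U = \sum_x \exp\at{i\pi f(x)/4}\kb{x}{x}$ with cubic form $f(x) = \sum_k a_k \lambda_k(x) \pmod{8}$, we have $\ket{U} = 2^{-n/2}\sum_x \zeta_8^{f(x)}\ket{x}$ and hence $\rho_{x,y} = 2^{-n}\zeta_8^{f(x)-f(y)}$. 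Since $\zeta_8^2 = i$ while $\zeta_8 \notin \q\at{i}$, a power $\zeta_8^m$ lies in $\q\at{i}$ exactly when $m$ is even. Thus every entry of $\rho$ lies in $\q\at{i}$ if and only if $f(x) \equiv f(y) \pmod{2}$ for all $x,y$; and because $f(0) = 0$ (each $\lambda_k$ is linear and vanishes at $0$), this is equivalent to $f(x) \equiv 0 \pmod{2}$ for all $x$.

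The second step is the parity computation. Only odd coefficients survive modulo $2$, so $f(x) \equiv \sum_{k : a_k \text{ odd}} \lambda_k(x) \pmod{2}$. Using $\lambda_k(x) \equiv \sum_j P_{j,k} x_j \pmod{2}$ and interchanging the order of summation gives $f(x) \equiv \sum_j \at{\sum_{k : a_k \text{ odd}} P_{j,k}} x_j \pmod{2}$. The coefficient of $x_j$ here is precisely the Hamming weight of the $j$-th row of $P$ (restricted to the odd-coefficient columns), reduced modulo $2$. By the hypothesis that every such row weight is even, each coefficient vanishes, so $f(x) \equiv 0 \pmod{2}$ identically. Combining with the first step, $\rho$ has all entries in $\q\at{i}$, and \theo{number-field-constraint} finishes the argument.

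The point that needs care is the treatment of even coefficients $a_k$. These correspond to the Clifford factor $U_{2h}$ in the decomposition $U = U_g U_{2h}$ from the proof of \theo{PhasePolyCat}, and they contribute nothing to $f \bmod 2$; so it is genuinely the odd-coefficient columns of $P$ whose row weights must be even. I would therefore state the hypothesis in this reduced form, or equivalently assume without loss of generality that all $a_k$ are odd, since $U_{2h}$ is a free diagonal Clifford that commutes with $U_g$ and leaves the $\q\at{i}$-membership of the density matrix unchanged. With this clarification the calculation above is routine; the only real content is the identification of \emph{``every row of $P$ has even weight''} with \emph{``the associated $\mathbb{F}_2$-parity form is identically zero''}, which I expect to be the crux and which the $\ket{W_n}$ example (an all-ones column together with $n$ identity columns, every row of weight $2$) cleanly illustrates.
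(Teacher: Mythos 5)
Your proof is correct, and it ends with the same mechanism as the paper---all density-matrix entries of $\ket{U}$ lie in $\q\at{i}$, so \theo{number-field-constraint} forbids producing $\ket{T}$---but the way you establish the $\q\at{i}$-membership is genuinely different. The paper's proof is a one-line citation: a diagonal third-level unitary whose phase-polynomial matrix has even rows is a product of $CS$ and $CCZ$ gates (App.~D of Ref.~\cite{HowardCampbell2016}), and such a product manifestly has entries in $\q\at{i}$. You instead compute directly: $\rho_{x,y}=2^{-n}\zeta_8^{f(x)-f(y)}$, and the even-row-weight hypothesis forces the parity form $f \bmod 2$, which is $\mathbb{F}_2$-linear in $x$, to vanish identically, so every entry is a power of $i$ over $2^n$. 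This is exactly the argument the paper itself uses for the special case $\ket{W_n}$ (where it notes the density matrix has entries in $\mathbb{Q}[i]$ because $g(x)\equiv 0 \bmod 2$), promoted to the general claim; what your route buys is a self-contained proof with no external dependency, at the cost of the extra structural information (an explicit $CS$/$CCZ$ decomposition of $U$) that the citation provides. Your caveat about even coefficients is warranted rather than pedantic: under the paper's convention that $P$ carries a column for every nonzero $a_k$, the claim as literally stated fails---take $f(x) = x_1 + x_2 + 2\at{x_1 \oplus x_2}$, whose matrix has both rows of weight $2$, yet $U\ket{+}^{\otimes 2} = CZ\at{S\otimes S}\ket{T}^{\otimes 2}$, from which $\ket{T}$ follows by Cliffords alone---so the hypothesis must indeed be read on the odd-coefficient columns, as you do; the paper's cited result, and its only application of the claim (the states $\ket{W_n}$, where every $a_k=1$), implicitly assume the same reading.
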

To see this, note that every diagonal unitary from the 3$^\mathrm{rd}$ level of the Clifford hierarchy is (up to Cliffords) a product of $T$, $CS$ and $CCZ$ gates~\cite{campbell2017unified}. In the special case that $U$ has phase polynomial matrix with even rows, then the unitary is a product of $CS$ and $CCZ$ gates (see App.D of Ref.~\cite{HowardCampbell2016}).  Such a unitary has elements in the ring $\q\at{i}$ and so $U\ket{+}^{\otimes n} \nrightarrow  \ket{T}$ follows (as discussed in \sec{catalysis}).  Though this transform is impossible without a catalyst, \theo{PhasePolyCat} gives a recipe for designing catalytic protocols and we next discuss some concrete examples.

 For any $n \geq 2$, we define $W_n$ as the unitary with phase polynomial matrix 
 \begin{equation}
     P_n =( \id_n, 1) = \left(  \begin{array}{ccccc}
     1 & 0 & & 0 & 1\\
     0 & 1 & & 0 & 1\\
     &   & \ddots & 0 & 1 \\
     0 & 0 & 0 & 1  & 1
     \end{array} 
     \right) ,
\end{equation}
which is the identity matrix padded with an all-one column. More explicitly, we have $W_n$ 
\begin{equation}
     W_n = \sum_{x}  \exp(i \pi g(x) / 4   ) \kb{x}{x} ,
\end{equation}
with
\begin{equation}
     g(x) = ( \oplus_{i=1}^n x_i )   + \sum_{i=1}^n x_i,
\end{equation}
where the $\oplus$ sum is performed modulo 2.

With the machinery of phase polynomials and $P$ matrices established, it is now straightforward to prove \lemm{tau-for-Wn}, 

\tauforWn
\begin{proof}
Since $P$ has a width of $n+1$ columns, we have $\tau(W_n) \leq n+1$.  The only full rank phase polynomial matrices that are square give a unitary that is Clifford equivalent to $T^{\otimes n}$, since this is not the case we conclude $\tau(W_n) = n+1$.
\end{proof}
Notice that every row of $P_n$ is even weight and so by Claim~\ref{EvenWeightClaim} we know $\ket{W_n} \nrightarrow  \ket{T}$. Since $P$ is a full rank matrix, we have $\mu(\ket{W_n})=n$. Therefore, $2\mu( \ket{U} ) - \tau(U) =n-1$ and by \theo{PhasePolyCat} we conclude that 
\begin{equation}
 \ket{W_n} \implies \ket{T}^{\otimes n-1}.
\end{equation}  
These are the most illuminating examples that one can obtain from \theo{PhasePolyCat} because assuming $U \neq T^{\otimes n}$ we know $\tau(U)>\mu(U)$ and then $\tau(U)=\mu(U)+1$ leads to the best possible catalysis protocols. 

At first glance, the $W_n$ unitaries may look unfamiliar.  However, $W_2$ has the same non-Clifford part as $CS$ and so they are equivalent up to Cliffords.  The $W_2$ example is also equivalent to the catalysis protocol first observed by Campbell~\cite{Campbell2010}. For $W_3$, we have that the state $\ket{W_3}$ is Clifford equivalent to $\ket{CCZ}$ and so $\ket{CCZ} \implies \ket{T}^{\otimes 2}$, which is the catalysis protocol observed by Gidney and Fowler~\cite{gidney2018efficient}.  The Clifford equivalence of $\ket{W_3}$ and $\ket{CCZ}$ may be not obvious and so we comment further on this.  We have that $CNOT_{3,2} W_3 CNOT_{3,2}$ has the same phase polynomial matrix as $V=CCZ_{1,2,3}CS_{2,3}$.  Furthermore, Cody Jones~\cite{Jones2012} showed that $V$ can be used to synthesize $CCZ$, which establishes the equivalence.


\subsection{Lower bound reduction from the modular adder to the multiply-controlled Z state}
\label{app:Gidney-bound-reduction}
Here we reproduce the argument from \cite{Gidney2018} that the modular adder which acts on a pair of $n$-qubit registers can be used to produce the $n$-controlled $Z$ state $\ket{C^n Z}$.

The argument proceeds in two steps. 
First we show that the controlled modular increment circuit $C\text{Inc}_{n}$ can be used to produce a $\ket{C^nZ}$ state using Clifford operations.
Second, we show that the controlled modular increment circuit can be implemented using the modular adder.
The controlled modular increment circuit $\text{Inc}_{n}$ acts as follows on computational basis states
\begin{eqnarray}
C\text{Inc}_{n}:~\ket{j}\ket{a+j} &\mapsto& \ket{j}\ket{a + j \mod 2^n},
\end{eqnarray}
where $j =0,1$ and $a = 0, 1, \dots 2^n-1$ is stored using binary on an $n$-qubit state.
One can implement the controlled modular increment circuit as show in \fig{IncrementInduction}(b).

\begin{figure}[ht!]
    \centering (a)\includegraphics[scale=1.1]{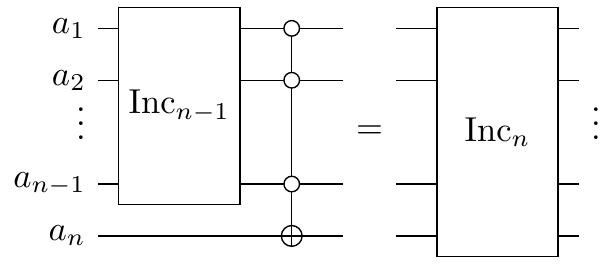}
    (b)\includegraphics[scale=0.9]{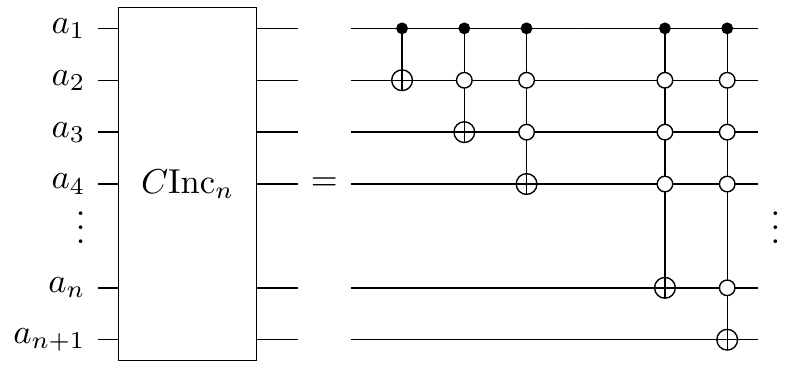}
    (c)\includegraphics[scale=1.0]{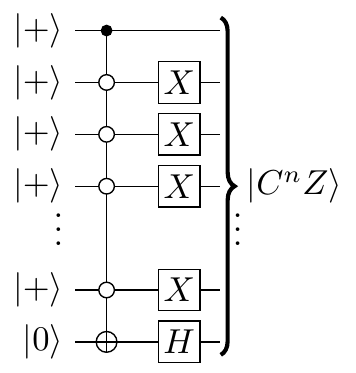}
    (d)\includegraphics[scale=0.9]{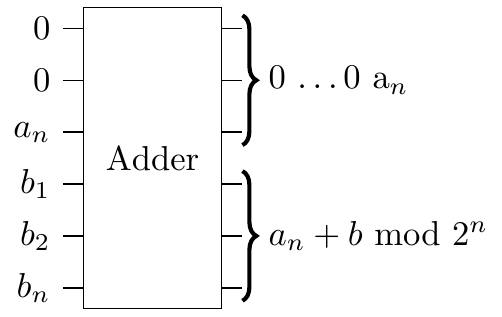}
	\caption{(a) An inductive argument shows that the modular increment circuit $\text{Inc}_n$ is built from a sequence of multiply-controlled not gates (where the control is activated on the $\ket{0}$ state of the control qubits rather than the $\ket{1}$ state). 
	(b) The controlled modular increment circuit $C\text{Inc}_n$ is then implemented by including an additional control for each gate on the additional qubit which controls whether or not $\text{Inc}_n$ is applied.
	(c) When applied to the state $\ket{+}^{\otimes n}\ket{0}$, all but the last of the gates in the circuit for $C\text{Inc}_n$ annihilate, and the resulting state is Clifford-equivalent to $\ket{C^n Z}$.
	(d) One can implement $C\text{Inc}_n$ with the modular adder by using the last qubit of the first input of the adder as the control and setting the other qubits of the first input to $\ket{0}$.
	}
	\label{fig:IncrementInduction}
\end{figure}

Consider applying $C\text{Inc}_n$ to the state $\ket{+}^{\otimes n}\ket{0}$.
Since the target of all the controlled gates is the $X$ gate, those which have a target qubit in the state $\ket{+}$ (an eigenstate of $X$) have no action and can be removed from the circuit so that only the last $n$-controlled not gate remains.
The resulting state is clearly Clifford-equivalent to $\ket{C^n Z}$ as shown in \fig{IncrementInduction}(c).

We have seen that if one can implement $C\text{Inc}_n$, it is possible to produce the state $\ket{C^n Z}$ by applying it to a stabilizer state and using Clifford gates.
Now note that one can implement $C\text{Inc}_n$ with the modular adder in \fig{IncrementInduction}(d) by using the last qubit of the first input of the adder as the control and setting the other qubits of the first input to $\ket{0}$.

\subsection{Canonical form for post-selected stabilizer computations}
\label{app:canonical-form}

The goal of this section is to establish the canonical form for post-selected stabilizer computations described in \theo{canonical-form}.

First we show we can assume both the input and output states of \theo{canonical-form} have trivial stabilizer, i.e. defined on a number of qubits equal to their nullity, due to the following proposition:
\begin{prop} \label{prop:trivialize-stabilizer}
Let $\ket\phi$ be an $m$-qubit state and let $\abs{\Stab\at{\ket{\phi}}} = 2^r$ for $r>0$,
then there exist a Clifford unitary $C$ such that 
$\ket{\phi} = C\at{\ket{0}^{r} \otimes \ket{\phi'} }$ 
where $\ket{\psi'}$ has a trivial stabilizer.
\end{prop}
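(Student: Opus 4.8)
The plan is to rotate the stabilizer of $\ket\phi$ onto the first $r$ qubits using the standard normal form for stabilizer groups. First I would record that $S := \Stab\at{\ket{\phi}}$ is an abelian subgroup of $\PG{m}$ of order $2^r$ which does not contain $-I$ and all of whose elements are Hermitian and mutually commuting (as noted immediately after \defn{state-stabilizer}). Consequently $S$ admits a set of $r$ independent commuting Hermitian generators $g_1,\ldots,g_r$, each of which stabilizes $\ket\phi$.

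The key step I would invoke is the standard fact that the Clifford group acts transitively on such generating sets: for any $r$ independent commuting Pauli operators generating a subgroup of $\PG{m}$ that does not contain $-I$, there exists a Clifford unitary $U$ with $U g_i U^\dagger = Z_i$ for $1 \le i \le r$, where $Z_i$ denotes Pauli $Z$ acting on qubit $i$. Setting $\ket\chi := U\ket\phi$, I would then verify that $Z_i \ket\chi = U g_i U^\dagger U \ket\phi = U g_i \ket\phi = U\ket\phi = \ket\chi$, so that $\ket\chi$ is a simultaneous $+1$ eigenstate of $Z_1,\ldots,Z_r$. Since the joint $+1$ eigenspace of these operators is exactly $\ket{0}^{\otimes r}$ tensored with arbitrary states on the remaining $m-r$ qubits, we obtain $\ket\chi = \ket{0}^{\otimes r}\otimes\ket{\phi'}$ for some $(m-r)$-qubit state $\ket{\phi'}$. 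Taking $C := U^\dagger$ then gives $\ket\phi = C\at{\ket{0}^{\otimes r}\otimes\ket{\phi'}}$ as required.

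It remains to confirm that $\ket{\phi'}$ has trivial stabilizer, which I would do using the Clifford invariance and multiplicativity of the stabilizer size recalled before \defn{stabilizer-monotone}: on one hand $\abs{\Stab\ket\chi} = \abs{\Stab\ket\phi} = 2^r$, while on the other hand $\abs{\Stab\at{\ket{0}^{\otimes r}\otimes\ket{\phi'}}} = \abs{\Stab\ket{0}^{\otimes r}}\cdot\abs{\Stab\ket{\phi'}} = 2^r\cdot\abs{\Stab\ket{\phi'}}$. Comparing forces $\abs{\Stab\ket{\phi'}} = 1$, i.e. $\ket{\phi'}$ has trivial stabilizer.

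The main obstacle is justifying the transitivity/normal-form step rigorously rather than merely citing it. The cleanest route I expect to take is an inductive symplectic (Gram--Schmidt) argument on the number of generators: at each stage one uses single-qubit Cliffords and CNOTs to bring the leading generator to $Z_1$ (exploiting that it is Hermitian, so its phase is $\pm 1$, and that $S$ contains no $-I$ to fix the sign to $+1$), then propagates and updates the remaining generators, which still commute with $Z_1$ and hence can be taken to act trivially on qubit $1$, before recursing on the remaining $m-1$ qubits. Everything else is bookkeeping that follows from the properties of $\Stab$ already established in \sec{state-properties}.
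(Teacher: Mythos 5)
Your proof is correct and is essentially the paper's own argument: the paper likewise conjugates $\Stab\at{\ket{\phi}}$ onto $\ip{Z_1,\ldots,Z_r}$ by citing the standard normal-form result (Ref.~\cite{Cleve1997}) and concludes via $\Stab\at{C\ket{\phi}} = C\,\Stab\at{\ket{\phi}}C^\dagger$. The only difference is that you spell out what the paper leaves implicit or cited --- the identification of the joint $+1$ eigenspace, the multiplicativity argument forcing $\abs{\Stab\ket{\phi'}}=1$, and a sketch of the symplectic Gram--Schmidt proof of the transitivity fact --- all of which is fine.
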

\begin{proof} 
Recall that for any commutative sub-group $\G$ of the Pauli group that does not contain $-I$
there exist a Clifford $C$ such that 
$C \G C^\dagger = \ip{Z_1 , \ldots, Z_m}$ \cite{Cleve1997}. 
Choosing $\G = \Stab\at{\ket{\phi}}$ lets us find the required Clifford, 
because $\Stab\at{C\ket{\phi}} = C\,\Stab\at{\ket{\phi}} C^{\dagger}$. 
\end{proof}
Given this, \theo{canonical-form} is inferred from the following theorem (identical to \theo{canonical-form} but in which the input and output states have trivial stabilizer) which we prove in the remainder of this section:

\begin{thm} \label{thm:canonical-form-app}
Consider a post-selected stabilizer circuit with $n$-qubit input state $\ket{\psi_\mathrm{in}}$ and $m$-qubit output state $\ket{\psi_{\mathrm{out}}}$, where
$m \leq n$ and where $n=\nu(\ket{\psi_{\mathrm{in}}})$ and $m=\nu(\ket{\psi_\mathrm{out}})$. 
Then there exists a set of $k = n-m$ independent commuting Pauli operators $P_1,\ldots,P_{k}$ and
a Clifford unitary $C$ such that 
\[
\ket{\psi_\mathrm{out}}\otimes\ket{S}
\propto
C M_{P_1} \ldots M_{P_{k}} \ket{\psi_\mathrm{in}},
\]
where $\ket{S}$ is a stabilizer state and where $M_{P}$ is the projector on the $+1$ eigenspace of $P$.
\end{thm}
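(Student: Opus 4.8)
The plan is to take an arbitrary post-selected stabilizer circuit and massage it into the stated canonical form in four stages, throughout relying on Proposition~\ref{prp:measurement-and-stabilizer} (which classifies how a Pauli measurement changes the stabilizer group) and \propos{trivialize-stabilizer} (which lets a Clifford bring any stabilizer to a standard $Z$-form). First I would put the circuit into a \emph{standard order}: every stabilizer ancilla preparation, which we may take to be $\ket{0}^{\otimes p}$ up to an initial Clifford, commutes to the very beginning since a freshly added qubit does not interact with earlier gates; and every Clifford commutes to the very end using $M_P C = C\, M_{C^\dagger P C}$, which merely conjugates the measured observables. I would also absorb the $\pm 1$ post-selection sign of each measurement into a Pauli correction folded into the final Clifford, so that all projectors are onto $+1$ eigenspaces. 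After this stage the circuit reads $\ket{\psi_\mathrm{out}}\otimes\ket{g}\propto C'\,M_{Q_1}\cdots M_{Q_l}\,(\ket{\psi_\mathrm{in}}\otimes\ket{0}^{\otimes p})$ for some Clifford $C'$, Paulis $Q_i$, and leftover $\ket g$ on the non-output qubits.

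The central stage is a reduction lemma: for a state $\ket\phi$ and any product of post-selected $+1$ projectors with non-zero result, $M_{Q_1}\cdots M_{Q_l}\ket\phi\propto D\,M_{R_1}\cdots M_{R_j}\ket\phi$ with $D$ Clifford and $R_1,\ldots,R_j$ independent, mutually commuting, and $j=\nu(\ket\phi)-\nu(M_{Q_1}\cdots M_{Q_l}\ket\phi)$ equal to the total nullity drop. I would prove this by induction on $l$, peeling off the outermost projector $M_{Q_1}$, applying it to $\ket\chi:=D'M_{R_1}\cdots M_{R_{k'}}\ket\phi$ (the inductive form for the remaining $l-1$ projectors) after conjugating $Q_1$ through $D'$. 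Proposition~\ref{prp:measurement-and-stabilizer} then gives three cases for the new $M_{Q_1'}$ on $\ket\chi$: if $Q_1'$ stabilizes $\ket\chi$ nothing changes; if it anticommutes with some stabilizer element the measurement is a Clifford that I push into $D'$; and if it commutes with the stabilizer but is new, the stabilizer grows. The subtlety I would emphasize is that this growth can add \emph{more than one} generator at once (as in the two--$\ket T$ example in \secsm{state-properties}), so to keep the count honest I would choose a complete set of new independent generators $S_1,\ldots,S_d$ of the enlarged stabilizer with $S_1=Q_1'$; these automatically commute with the existing $R_i$ (all lie in one abelian stabilizer), are independent of them, and satisfy $M_{S_1}\cdots M_{S_d}\ket\chi = M_{Q_1'}\ket\chi$ because $S_2,\ldots,S_d$ already stabilize $M_{Q_1'}\ket\chi$. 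This is the step I expect to be the main obstacle.

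The third stage eliminates the $p$ ancilla qubits. Using Proposition~\ref{prp:measurement-and-stabilizer} once more, I would process the commuting measurements against the ancilla stabilizer $\langle Z_{n+1},\ldots,Z_{n+p}\rangle$: those commuting with it can be reduced modulo the ancilla stabilizers so as to act trivially on the ancilla register, while those anticommuting with some ancilla generator contribute only a Clifford (the type-(b) case), which I push to the end. What remains are commuting independent measurements acting on the $n$ input qubits alone, applied to $\ket{\psi_\mathrm{in}}$ and followed by a Clifford; applying \propos{trivialize-stabilizer} to both sides then strips the stabilizer padding and shows the outcome is Clifford-equivalent on $n$ qubits to $\ket{\psi_\mathrm{out}}\otimes\ket{g'}$ with $\ket{g'}$ a trivial-stabilizer state on $\nu(\ket g)$ qubits.

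Finally, to reach exactly $k=n-m$ measurements with a genuinely stabilizer leftover, I would collapse $\ket{g'}$. Since $\ket{g'}\neq 0$, some stabilizer state $\ket\sigma$ has $\ip{\sigma|g'}\neq 0$, so measuring the sign-adjusted generators of $\Stab\ket\sigma$ projects $\ket{g'}$ onto $\ket\sigma$ using exactly $\nu(\ket g)$ further commuting measurements. Conjugating these generators back through the Clifford of the previous stage yields Paulis that commute with, and are independent of, the measurements already found (they act on the disjoint garbage factor), bringing the total to $j+\nu(\ket g)=n-m=k$. A short nullity count then closes the argument: the only way $k$ commuting independent measurements can take $\ket{\psi_\mathrm{in}}$ (nullity $n$) to a state of nullity $m$ is for the final stabilizer to be exactly $\ip{P_1,\ldots,P_k}$, so the $n-m$ leftover qubits carry a stabilizer state $\ket S$, giving the asserted form $\ket{\psi_\mathrm{out}}\otimes\ket S\propto C\,M_{P_1}\cdots M_{P_k}\ket{\psi_\mathrm{in}}$.
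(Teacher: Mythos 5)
Your stages 1, 2 and 4 are sound and essentially retrace the paper's own route: stage 1 and 2 together reconstruct \lemm{weak-canonical-form} (pushing Cliffords to the end, converting measurements that anticommute with the current stabilizer into Cliffords via \propos{measurement-as-unitary}, retaining only new commuting measurements), and your explicit handling of the case where the stabilizer grows by more than one generator, and of non-stabilizer garbage in stage 4, is if anything more careful than the paper's terse ``add remaining generators'' remark. The genuine gap is in stage 3, at the sentence ``applying \propos{trivialize-stabilizer} to both sides then strips the stabilizer padding.'' That proposition cannot do this job. After restricting the measurements to the input register you hold
\[
\ket{\psi_{\mathrm{out}}}\otimes\ket{g} \;\propto\; C'''\at{\at{M_{Q_j}\cdots M_{Q_1}\ket{\psi_{\mathrm{in}}}}\otimes\ket{0}^{\otimes p}},
\]
with $C'''$ a Clifford on all $n+p$ qubits. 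Applying \propos{trivialize-stabilizer} to each side only gets you to $\ket{A}\otimes\ket{0}^{\otimes t}\propto\tilde{C}\at{\ket{B}\otimes\ket{0}^{\otimes t}}$, where $\ket{A}=\ket{\psi_{\mathrm{out}}}\otimes\ket{g'}$ and $\ket{B}$ are trivial-stabilizer states on equally many qubits; the Clifford $\tilde{C}$ still acts jointly across the padding/non-padding cut. The statement you actually need --- that then $\ket{A}\propto C_0\ket{B}$ for a Clifford $C_0$ supported off the padding --- is precisely the paper's \lemm{ancilla-free}, which is the technical heart of the appendix, not a corollary of \propos{trivialize-stabilizer}. ``Restrict the global Clifford to the $\ket{0}$ sector'' is not an allowed move: a block of a unitary is not automatically Clifford just because the whole unitary is, and proving that it is (after first making $\tilde{C}$ commute with the padding $Z$'s) is exactly what \propos{z-commutator} together with \lemm{clifford-property} accomplish, the latter via a Choi-state argument. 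Since your stage 4 conjugates its garbage-collapsing measurements through this (unconstructed) $n$-qubit Clifford, the commutation and independence claims there, and the final count $k=n-m$, all rest on the missing lemma; without it the proof does not close.

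The repair is well circumscribed: either import the paper's chain (multiply $\tilde{C}$ by CNOTs on the padding so that it commutes with every padding $Z_k$, block-decompose it by \propos{z-commutator}, and invoke \lemm{clifford-property} to see the block is Clifford), or use the following shortcut available in your situation. Once $\tilde{C}$ commutes with all padding $Z_k$, write $\tilde{C}=\sum_{x}\kb{x}{x}\otimes V_x$. For any Pauli $P$ on the non-padding qubits, $\tilde{C}\at{I\otimes P}\tilde{C}^{\dagger}=\sum_{x}\kb{x}{x}\otimes V_x P V_x^{\dagger}$ must be a Hermitian Pauli operator; being block-diagonal in the padding register it must have the form $\pm Z^{a}\otimes Q$, forcing $V_x P V_x^{\dagger}=\pm Q$ for every $x$. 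Hence each block $V_x$ normalizes the Pauli group, i.e.\ is Clifford, and $\ket{A}\propto V_{0\cdots 0}\ket{B}$ gives the needed $n$-qubit Clifford equivalence. Some such argument must appear explicitly for stage 3 (and everything downstream of it) to stand.
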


Note that if $m=n$, the states $\ket{\psi_{\mathrm{in}}}$ and $\ket{\psi_{\mathrm{out}}}$ can be obtained from one another by applying a Clifford unitary. 
An interesting feature of \theo{canonical-form-app} is that if we wish to enumerate all possible stabilizer circuits that can act on a particular input state, we need only to consider Pauli measurements that commute with each other. 
The following proposition gives some intuition for why this is the case.

\begin{prop}\label{prop:measurement-as-unitary}
Let $\ket{\psi}$ be an $n$-qubit state and let $P$ be a $n$-qubit Pauli operator such that there exists $Q \in \Stab\at{\ket{\psi}}$ that anti-commutes with $P$. 
Then the measurement of $P$ is equivalent to randomly applying the Clifford unitaries $\at{I+PQ}/\sqrt{2}$ or $\at{I-PQ}/\sqrt{2}$ with equal probability.
\end{prop}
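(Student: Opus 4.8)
The plan is to show that the two possible outcomes of the $P$-measurement occur with equal probability $1/2$, and that each associated normalized post-measurement state is exactly $\at{I \pm PQ}/\sqrt{2}$ applied to $\ket{\psi}$. Since $\at{I \pm PQ}/\sqrt{2}$ will be seen to be a norm-preserving Clifford unitary, this means the stochastic measurement map and the random-unitary map produce the very same ensemble of normalized states with the very same probabilities, which is the claimed equivalence.

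First I would recall that measuring $P$ yields the $\pm 1$ outcome with probability $\bra{\psi} M_\pm \ket{\psi}$, where $M_\pm = \at{I \pm P}/2$ is the projector onto the corresponding eigenspace, and leaves the normalized state $M_\pm\ket{\psi}/\nrm{M_\pm\ket{\psi}}$. Using $Q\ket{\psi}=\ket{\psi}$, $Q^2 = I$ and $QPQ = -P$ exactly as in the proof of Prop.~\ref{prp:measurement-and-stabilizer}, the probability of the $+1$ outcome satisfies $\bra{\psi}\at{I+P}\ket{\psi}/2 = \bra{\psi} Q\at{I+P}Q \ket{\psi}/2 = \bra{\psi}\at{I-P}\ket{\psi}/2$, which is the probability of the $-1$ outcome; hence both equal $1/2$ and $\nrm{M_\pm\ket{\psi}} = 1/\sqrt{2}$. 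Next I would rewrite the post-measurement states. Because $Q$ stabilizes $\ket{\psi}$, we have $\at{I \pm PQ}\ket{\psi} = \ket{\psi} \pm P\ket{\psi} = \at{I \pm P}\ket{\psi}$, so the normalized outcome states are
\[
\frac{M_\pm\ket{\psi}}{\nrm{M_\pm\ket{\psi}}} = \sqrt{2}\,\frac{I \pm P}{2}\ket{\psi} = \frac{I \pm PQ}{\sqrt{2}}\ket{\psi}.
\]

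It then remains to verify that the two prefactors are genuine Clifford unitaries. Setting $P' = -iPQ$, the anticommutation of $P$ and $Q$ makes $PQ$ anti-Hermitian with $\at{PQ}^2 = -I$, so $P'$ is a Hermitian Pauli with $\at{P'}^2 = I$ and $\at{I \pm PQ}/\sqrt{2} = \exp\at{\pm i\pi P'/4}$, which is Clifford; this is precisely the final observation in the proof of Prop.~\ref{prp:measurement-and-stabilizer}. Being unitary, each such map is norm-preserving, so applying it to the already-normalized $\ket{\psi}$ reproduces the normalized outcome state computed above. Combining the three ingredients, measuring $P$ returns each of the states $\at{I \pm PQ}/\sqrt{2}\,\ket{\psi}$ with probability $1/2$, which is identical to drawing one of the Clifford unitaries $\at{I + PQ}/\sqrt{2}$ or $\at{I - PQ}/\sqrt{2}$ uniformly at random and applying it. I expect essentially no serious obstacle here: the statement is a repackaging of facts already established in the proof of Prop.~\ref{prp:measurement-and-stabilizer}, and the only care required is in matching normalizations so that the measurement ensemble and the random-unitary ensemble coincide state-by-state and probability-by-probability.
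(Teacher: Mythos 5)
Your proof is correct and follows essentially the same route as the paper's: show via $Q\ket{\psi}=\ket{\psi}$ and $QPQ=-P$ that both outcomes occur with probability $1/2$, identify the normalized post-measurement states $\at{I\pm P}\ket{\psi}/\sqrt{2}$ with $\at{I\pm PQ}\ket{\psi}/\sqrt{2}$, and note that $\at{I\pm PQ}/\sqrt{2}=\exp\at{\pm i\pi P'/4}$ with $P'=-iPQ$ is Clifford. The only difference is that you spell out the normalization bookkeeping and the Clifford verification in more detail than the paper, which simply cites the analogous steps from Proposition~\ref{prp:measurement-and-stabilizer}.
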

\begin{proof}
First check that measuring $P$ gives outcome $+1$ or $-1$ with probability $1/2$. 
Indeed the probability of measuring $+1$ is $\bra{\psi}\at{I+P}\ket{\psi}/2$ and it is equal to:
\[
\bra{\psi}Q\at{I+P}Q\ket{\psi}/2 = \bra{\psi}\at{I+QPQ}\ket{\psi}/2 = \bra{\psi}\at{I-P}\ket{\psi}/2
\]
Therefore the probability of measuring $+1$ and $-1$ is the same and their sum is one. 
Therefore the probability of each measurement outcome in $1/2$.
This means what in case of $+1$ outcome the state becomes $\at{I+P}\ket{\psi}/\sqrt{2}$ which is equal to $\at{I+PQ}\ket{\psi}/\sqrt{2}$ which is a Clifford unitary. Similarly in case of $-1$ outcome we have applied $\at{I-PQ}\ket{\psi}/\sqrt{2}$.
\end{proof}

The next step towards the proof of \theo{canonical-form-app} is to rewrite an arbitrary quantum circuit consisting of 
Clifford unitaries and post-selected Pauli measurements into a canonical form. 
This is the subject of the next lemma: 

\begin{lem} \label{lem:weak-canonical-form}
Let $\ket{\psi_\mathrm{out}}$ be a non-zero $n$-qubit state that can be obtained from an $n$-qubit state $\ket{\psi_\text{in}}$ using Clifford unitaries and post-selected Pauli measurements.
Then there exists a Clifford unitary $C$ and 
a commutative sub-group $\G$ of the Pauli group that does not contain $-I$ with generators $P_1,\ldots,P_{m}$ such that:
\begin{itemize}
  \item $\ket{\psi_\mathrm{out}} \propto C M_{P_m} \ldots M_{P_1} \ket{\psi_\mathrm{in}}$,
  \item the group generated by $\G$ and $\Stab \at{\ket{\psi_\mathrm{in}}}$ is a commutative sub-group of $n$-qubit Pauli group
  and does not contain $-I$,
  \item none of the $P_k$'s are in $\Stab \at{\ket{\psi_\mathrm{in}}}$.
\end{itemize}
\end{lem}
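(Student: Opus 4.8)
The plan is to argue by induction on the number of operations in the circuit, processing the Clifford unitaries and post-selected Pauli measurements one at a time from the input side towards the output while maintaining the claimed canonical form as a loop invariant. The inductive hypothesis is that every \emph{nonzero} state $\ket{\chi}$ obtained from $\ket{\psi_\mathrm{in}}$ by a prefix of the circuit can be written as $\ket{\chi} \propto C\, M_{P_m}\cdots M_{P_1}\ket{\psi_\mathrm{in}}$ for a Clifford $C$ and Paulis $P_1,\ldots,P_m$ satisfying the three bullet points, together with the strengthened invariant that $\langle \G, \Stab\at{\ket{\psi_\mathrm{in}}}\rangle$ (the group generated by the $P_k$ and the input stabilizer) is contained in $\Stab\at{\ket{\eta}}$, where $\ket{\eta} := M_{P_m}\cdots M_{P_1}\ket{\psi_\mathrm{in}}$ is the intermediate state just before $C$. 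The base case of the empty prefix is immediate with $C=I$ and $m=0$.

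For the inductive step I would distinguish two cases. If the next operation is a Clifford $D$, I simply replace $C$ by $DC$ and nothing else changes. If the next operation is a post-selected measurement $M_R$, I first commute it past the accumulated Clifford using $M_R C = C\, M_{R'}$ with $R' := C^\dagger R C$ (which is again a Pauli, and the identity holds because $C^\dagger M_R C = M_{C^\dagger R C}$), reducing the task to incorporating $M_{R'}$ into $\ket{\eta}$. I would then branch on the relation between $R'$ and the \emph{full} stabilizer $\Stab\at{\ket{\eta}}$: (B1) if $R'\in\Stab\at{\ket{\eta}}$ the projector acts trivially and is dropped; (B2) if $-R'\in\Stab\at{\ket{\eta}}$ the projector annihilates the state, which cannot occur since every intermediate state leading to the nonzero output is nonzero; (B3) if $R'$ commutes with all of $\Stab\at{\ket{\eta}}$ but $\pm R'\notin\Stab\at{\ket{\eta}}$, I append $R'$ as a new outermost generator $P_{m+1}$; and (B4) if $R'$ anticommutes with some $Q\in\Stab\at{\ket{\eta}}$, then by \propos{measurement-as-unitary} one has $M_{R'}\ket{\eta}\propto U\ket{\eta}$ for the Clifford $U=\at{I+R'Q}/\sqrt{2}$, so I absorb $U$ by replacing $C$ with $CU$.

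To close the induction I would check that the invariants are restored in every case, the decisive point being case (B3): the enlarged group $\langle \G, R', \Stab\at{\ket{\psi_\mathrm{in}}}\rangle$ is automatically contained in $\Stab\at{M_{R'}\ket{\eta}}$, the stabilizer of a nonzero state, which is therefore commutative and free of $-I$ (as noted after \defn{state-stabilizer}). This simultaneously yields pairwise commutativity of the $P_k$, the second bullet, and—since $R'\notin\Stab\at{\ket{\eta}}\supseteq\Stab\at{\ket{\psi_\mathrm{in}}}$—the third bullet that no generator lies in $\Stab\at{\ket{\psi_\mathrm{in}}}$. Applying the hypothesis to the entire circuit then gives the canonical form for $\ket{\psi_\mathrm{out}}$.

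The main obstacle I anticipate is the gap, illustrated by the $\ket{T}\ket{T}$ example following Proposition~\ref{prp:measurement-and-stabilizer}, between the tracked group $\langle \G,\Stab\at{\ket{\psi_\mathrm{in}}}\rangle$ and the true stabilizer $\Stab\at{\ket{\eta}}$, which can be strictly larger because a single measurement may \emph{more than} double the stabilizer. This forces the case split (B1)--(B4) to be performed against the actual stabilizer of the intermediate state rather than against the generators accumulated so far, and it is precisely why \propos{measurement-as-unitary} must be invoked with an anticommuting element $Q$ drawn from all of $\Stab\at{\ket{\eta}}$. Once the invariant $\langle\G,\Stab\at{\ket{\psi_\mathrm{in}}}\rangle\subseteq\Stab\at{\ket{\eta}}$ is carried through, the remaining work is routine commutation bookkeeping.
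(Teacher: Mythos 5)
Your proposal is correct and follows essentially the same route as the paper's proof: both commute Cliffords past the Pauli projectors via $C^\dagger M_P C = M_{C^\dagger P C}$, both perform the same four-way case split on the relation between the conjugated measurement operator and the stabilizer of the intermediate state, and both invoke \propos{measurement-as-unitary} to convert the anticommuting case into a Clifford that is absorbed into the accumulated unitary. The only difference is organizational—you run a single inductive pass with the explicit invariant $\langle \G, \Stab\at{\ket{\psi_\mathrm{in}}}\rangle \subseteq \Stab\at{\ket{\eta}}$, while the paper first pushes all Cliffords to the end and then processes the measurements in sequence—which makes the same underlying argument slightly more rigorous but not different in substance.
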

\begin{proof}
We write the circuit of Clifford unitaries and post-selected Pauli measurements as: 
\[
 \ket{\psi_\text{out}} \propto C_{m'+1} M_{P'_{m'}} C_{m'} M_{P'_{m'-1}} C_{m'-2} \ldots C _2 M_{P'_1} C_{1} \ket{\psi_\text{in}},
\]
where $P'_k$ are $n$-qubit hermitian Pauli operators and $C_k$ are $n$-qubit Clifford unitaries. 
Next we observe that the projector $M_{P} = (I+P)/2$ transforms into another Pauli projector under conjugation by a Clifford unitary:
$
 C^{\dagger} M_{P} C = M_{C^{\dagger} P C} = M_{P'}
$
where $P'$ is an $n$-qubit hermitian Pauli operator because Clifford unitaries map Pauli matrices to Pauli matrices. 
By repeatedly applying this observation we can push each Clifford unitary to the end of computation and therefore:
\begin{equation*}
 \ket{\psi_\text{out}} = C' M_{P''_m} M_{P''_{m-1}} \ldots M_{P''_1} \ket{\psi_\text{in}},
\end{equation*}
where each $P''_k$ is an $n$-qubit hermitian Pauli operator and $C'$ is an $n$-qubit Clifford unitary.

Next we describe how to construct $P_1,\ldots,P_m$ out of $P''_1,\ldots,P''_{m'}$.
Suppose $P''_1$ anti-commutes with some $Q$ from $\Stab \at{\ket{\psi_\text{in}}}$. 
In this case we can replace $M_{P''_1}$ with the Clifford unitary $\at{I + P_1''Q}/\sqrt{2}$
as shown in \propos{measurement-as-unitary}.
Then we pull this Clifford unitary through the following measurements and absorb it into the Clifford gate applied at the end. 
If $P''_1$ commutes with  $\Stab \at{\ket{\psi_{in}}}$, there are several cases we need to consider. 
If $P''_1$ is in $\Stab \at{\ket{\psi_{in}}}$ than $M_{P''_1}$ can be removed from the canonical form,
if $-P''_1$ is in $\Stab \at{\ket{\psi_{in}}}$ then $\ket{\psi_\text{out}}$ is the zero state.
The remaining case is that $P''_1$ commutes with $\Stab \at{\ket{\psi_{in}}}$ but does not belong to it.
In this case we set $P_1$ to be $P''_1$.
We have ensured that $P_1$ and $\Stab \at{\ket{\psi_{in}}}$ generate commutative sub-group of a Pauli group that does not contain $-I$ and that $P_1$ is not in $\Stab \at{\ket{\psi_{in}}}$.
We repeat the described procedure for $P''_2, \ldots, P''_{m'}$ and get the required result.
\end{proof}
\lemm{weak-canonical-form} implies that if the state $\ket{\psi_\text{in}}$ can be transformed into the state $\ket{\psi_\text{out}}$
by post-selected stabilizer operations, then for some $m$, $m'$ and $n$:
\[
\ket{0}^{\otimes m'} \otimes\ket{\psi_{out}} \propto C M_{P_1} \ldots M_{P_{n}} \ket{0}^{\otimes m} \otimes \ket{\psi_{in}}.
\] 
To prove \theo{canonical-form-app} it remains to get rid of the ancillary qubits on the right side of this equation.
The following result is a key to this.
\begin{lem} \label{lem:ancilla-free}
Let $\ket{\phi}$ and $\ket{\psi}$ be two states such that for a Clifford unitary $C$ and $n > 0$:
\begin{equation} \label{eq:clifford-equivalence-with-anccilla}
\ket{0}^{\otimes n}\otimes\ket{\psi} = C (\ket{0}^{\otimes n}\otimes\ket{\phi}),
\end{equation}
then there exists a Clifford unitary $C_0$ such that $\ket{\psi} = C_0 \ket{\phi}$.
\end{lem}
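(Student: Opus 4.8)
The plan is to reduce to the case where $\ket{\phi}$ and $\ket{\psi}$ have trivial stabilizer, and then to exploit the fact that the only stabilizers of $\ket{0}^{\otimes m}\otimes\ket{\phi'}$ are the ancilla operators $\langle Z_1,\ldots,Z_m\rangle$, which pins down the action of the Clifford on the ancilla register.

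For the reduction I would invoke \propos{trivialize-stabilizer} to write $\ket{\phi}=D_\phi(\ket{0}^{\otimes r}\otimes\ket{\phi'})$ and $\ket{\psi}=D_\psi(\ket{0}^{\otimes r'}\otimes\ket{\psi'})$ with $D_\phi,D_\psi$ Clifford and $\ket{\phi'},\ket{\psi'}$ of trivial stabilizer. Since \eq{clifford-equivalence-with-anccilla} exhibits $\ket{0}^{\otimes n}\otimes\ket{\psi}$ and $\ket{0}^{\otimes n}\otimes\ket{\phi}$ as Clifford-equivalent, their stabilizers have equal size, which forces $r=r'$. Absorbing $D_\phi,D_\psi$ into $C$ and grouping the $r$ extra $\ket{0}$'s with the $n$ ancillas, I obtain a Clifford $\tilde C$ and $m:=n+r\geq 1$ ancilla qubits with $\ket{0}^{\otimes m}\otimes\ket{\psi'}=\tilde C(\ket{0}^{\otimes m}\otimes\ket{\phi'})$. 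A Clifford $C_0'$ satisfying $\ket{\psi'}=C_0'\ket{\phi'}$ then yields the desired $C_0=D_\psi(I^{\otimes r}\otimes C_0')D_\phi^{\dagger}$, so it suffices to treat the trivial-stabilizer case.

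In that case $\Stab(\ket{0}^{\otimes m}\otimes\ket{\phi'})=\Stab(\ket{0}^{\otimes m}\otimes\ket{\psi'})=\langle Z_1,\ldots,Z_m\rangle=:G$, a group of size $2^m$. Because $\tilde C$ is Clifford and maps one state to the other, conjugation by $\tilde C$ sends $G$ to a subgroup of $\Stab(\ket{0}^{\otimes m}\otimes\ket{\psi'})=G$ of the same size $2^m$, hence $\tilde C G\tilde C^{\dagger}=G$. Thus $\tilde C Z_i\tilde C^{\dagger}=\prod_j Z_j^{M_{ij}}$ for an invertible $M\in\mathrm{GL}_m(\f_2)$. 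I would then realign the generators with a CNOT circuit $D_A$ acting only on the ancilla: CNOT gates induce exactly the transvections generating $\mathrm{GL}_m(\f_2)$ on the $Z$-labels and fix $\ket{0}^{\otimes m}$, so taking $D_A$ to implement $M^{-1}$ gives $(D_A\tilde C)Z_i(D_A\tilde C)^{\dagger}=Z_i$ for all $i$ while $D_A(\ket{0}^{\otimes m}\otimes\ket{\psi'})=\ket{0}^{\otimes m}\otimes\ket{\psi'}$. Setting $C':=D_A\tilde C$, the operator $C'$ commutes with every $Z_i$ and is therefore block-diagonal in the ancilla computational basis, $C'=\sum_{x}\ket{x}\bra{x}\otimes U_x$. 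Evaluating on $\ket{0}^{\otimes m}\otimes\ket{\phi'}$ gives $U_{0^m}\ket{\phi'}=\ket{\psi'}$, and conjugating an arbitrary $B$-register Pauli $I\otimes Q$ by $C'$ (the result must again be Pauli) forces $U_{0^m}QU_{0^m}^{\dagger}$ to be Pauli, so $C_0':=U_{0^m}$ is Clifford, finishing this case.

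The main obstacle is exactly the step that the reduction removes: when $\ket{\phi},\ket{\psi}$ carry nontrivial stabilizers, $\Stab(\ket{0}^{\otimes n}\otimes\ket{\psi})$ is strictly larger than $\langle Z_1,\ldots,Z_n\rangle$, so $CZ_iC^{\dagger}$ may acquire a nontrivial factor on the $B$ register and one cannot conclude $CGC^{\dagger}=G$ directly. Stripping the stabilizers via \propos{trivialize-stabilizer} is what makes $G$ a maximal stabilizer subgroup and pins $CGC^{\dagger}=G$. The remaining points I would check carefully are that CNOT circuits realize all of $\mathrm{GL}_m(\f_2)$ on the $Z$-generators while fixing $\ket{0}^{\otimes m}$ (no sign subtleties arise, since products of $Z_j$ are genuine $+1$ stabilizers), and that commutation with all $Z_i$ is equivalent to ancilla block-diagonality; both are routine.
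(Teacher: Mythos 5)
Your proof is correct, and its skeleton is the paper's: reduce to trivial stabilizers, use that $\Stab(\ket{0}^{\otimes m}\otimes\ket{\phi'})=\langle Z_1,\ldots,Z_m\rangle$ is mapped to itself under conjugation by the Clifford, realign the $Z$-generators with a CNOT circuit that fixes $\ket{0}^{\otimes m}$, and extract the ancilla-$\ket{0}$ block of the resulting block-diagonal unitary. You diverge in two places. Cosmetically, you block-decompose over the whole ancilla register at once, while the paper peels off one qubit at a time (\propos{z-commutator}) and inducts. Substantively, your certification that the block $U_{0^m}$ is Clifford is different: the paper invokes \lemm{clifford-property} --- $U_{0^m}\otimes I_d$ maps stabilizer states to stabilizer states, hence $U_{0^m}$ is Clifford, a fact proved there via Choi states and \propos{pauli-matrix-property} --- whereas you conjugate an arbitrary second-register Pauli $I\otimes Q$ by the block-diagonal Clifford $C'$, observe that the resulting Pauli operator is forced to have a diagonal ($Z$-type) ancilla factor by the block structure, and read off that $U_{0^m}Q U_{0^m}^{\dagger}$ lies in the Pauli group, so $U_{0^m}$ normalizes the Pauli group and is Clifford. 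Your route is more elementary and self-contained (no Choi-state machinery), though it forgoes \lemm{clifford-property} as a standalone reusable statement; you also spell out the trivial-stabilizer reduction (via \propos{trivialize-stabilizer} together with invariance of stabilizer size under Cliffords, which forces $r=r'$), a step the paper asserts in a single sentence.
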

We postpone the proof of \lemm{ancilla-free} and first complete the proof \theo{canonical-form-app} using \lemm{ancilla-free}.7\begin{proof}[Proof of \theo{canonical-form-app}]
Using \lemm{weak-canonical-form} we conclude that there exist a Clifford unitary $C$ and commuting Pauli operators $P_1, \ldots, P_s$ such that: 
\begin{equation} \label{eq:canonical-form-with-zeros}
\ket{0}^{\otimes m'}\otimes \ket{\psi_\mathrm{out}}
\propto
C' M_{{P_{s}}} \ldots M_{{P_1}} \ket{0}^{\otimes m} \otimes \ket{\psi_\mathrm{in}}.
\end{equation}
Next we show that operators $P_k$ can be replaced with operators $Q_k$ supported only on the last $n-m$ qubits.
Indeed, each of operators $P_k$ must commute with the stabilizer of $\ket{0}^{\otimes m}\otimes\ket{\psi_{in}}$ which consists of all possible operators $Z^{a_1} \otimes \ldots \otimes Z^{a_m}\otimes I_{2^{n-m}}$ for $a_j \in \set{0,1}$. This implies that each $P_k$ can be written as a tensor product 
\[
Z^{a_{k,1}}\otimes\ldots\otimes Z^{a_{k,m}}\otimes Q_k, \text{ for some } a_{k,j} \in \set{0,1}
\]
For this reason, applying $M_{P_k}$ to a state $\ket{0}^{\otimes m}\otimes\ket{\psi}$ is equivalent to
 applying $I_{2^m} \otimes M_{{Q_k}}$. 
Let $\G$ be a group generated by $Q_1, \ldots, Q_{s}$.
We rewrite Equation~(\ref{eq:canonical-form-with-zeros}) as:
\[
\ket{0}^{\otimes m'}\otimes \ket{\psi_\text{out}}
=
C' \at{ \ket{0}^{\otimes m} \otimes \at{ M_{Q_s} \ldots M_{Q_1} \ket{\psi_\text{in}}} }
\]
We remove first $m$ qubits initialized to $\ket{0}$ from the equation above by using \lemm{ancilla-free}.

Note that after measuring $Q_1,\ldots,Q_s$ on $\ket{\psi_\text{in}}$ the stabilizer of the result can be strictly
bigger then the group generated by $Q_1,\ldots,Q_s$.
We can just add remaining generators to the list of $Q_1,\ldots,Q_s$ to make sure that there are $m-m'$ of them.
If $m-m'$ is zero, then $s$ must be zero and input and output states must be Clifford equivalent.
This completes the proof.
\end{proof}

\subsubsection{Decoupling stabilizer states}
Here we prove \lemm{ancilla-free}.
It relies on several simpler results, which we separate into propositions and lemmas after the main proof of \lemm{ancilla-free}.
\begin{proof}[Proof of \lemm{ancilla-free}]
Note that it is sufficient to consider the case when $\ket{\phi}$ and $\ket{\psi}$ have a trivial stabilizer.
Our proof strategy consists of two steps.
First we show that in the Equation~(\ref{eq:clifford-equivalence-with-anccilla})
we can replace unitary $C$ with a Clifford unitary $C_n$
such that $C_n$ commutes with Pauli matrices $Z_k$ for $k$ from $1$ to $n$. 
Second we show that the commutation of $C_n$ and $Z_1$ implies that 
\begin{equation} \label{eq:clifford-decomposition}
C_n = \ket{0}\bra{0}\otimes C_{n-1} + \ket{1}\bra{1} C'_{n-1}, \text{ where } C_{n-1} \text{ is a Clifford}.
\end{equation}
This implies that $\ket{\psi}\otimes\ket{0}^{n-1}$ and  $\ket{\phi}\otimes\ket{0}^{n-1}$ are Clifford equivalent.
Proceeding by induction completes the proof.

Let us now construct a Clifford $C_n$ with required properties.
Consider Pauli matrices $P_a = Z^{a(1)}_1\otimes \ldots \otimes Z^{a(n)}_n$ where each $a(j)$ is either zero or one.
These are exactly the matrices that stabilize $\ket{0}^{\otimes n}$.
For each $a$, there exist $b$ such that $C P_a C^{\dagger} = P_b$ because
the stabilizer of $\ket{0}^{n}\otimes\ket{\psi}$ and $\ket{0}^{n}\otimes\ket{\phi}$ is exactly the set 
$\set{P_a : a \in \set{0,1}^n }$.
There exist a Clifford $D$ composed only of CNOT gates acting on the first $n$ qubits such that 
$D C P_a C^{\dagger} D^{\dagger} = P_a$.
Defining $C_n = D C$ ensures that $C_n$ commutes with Pauli $Z_k$.
Because $D$ is composed only of CNOT gates acting on first $n$ qubits
$ \ket{0}^{n}\otimes\ket{\psi} = D \ket{0}^{n}\otimes\ket{\psi}$.
This shows that Equation~(\ref{eq:clifford-equivalence-with-anccilla}) holds with $C$ replaced by $C_n$.

Now let us show that $C_n$ is of the form given by Equation~(\ref{eq:clifford-decomposition}).
Note that $C_n$ commutes with Pauli $Z$ on the first qubit,
therefore by \propos{z-commutator}
unitary $C_n$ can be written as $\ket{0}\bra{0}\otimes C_{n-1} + \ket{1}\bra{1}\otimes C'_{n-1}$. 
To show that $C_{n-1}$ must be a Clifford unitary we rely on \lemm{clifford-property}.
Indeed, for any positive $d$, $C_{n-1}\otimes I_d$ maps stabilizer states to stabilizer states because $C_n \otimes I_d$
is a Clifford that maps stabilizer states of the from $\ket{0}\otimes\ket{\alpha}$ to stabilizer states of the form 
$\ket{0}\otimes\ket{\beta}$. 
\end{proof}

The following proposition is a well-known result from the linear algebra and we provide the proof for completeness.
\begin{prop} \label{prop:z-commutator}
Let $U$ be a unitary that commutes with a Pauli $Z$ matrix on the first qubit 
then $U = \ket{0}\bra{0}\otimes U_{00} + \ket{1}\bra{1} \otimes U_{11}$.
\end{prop}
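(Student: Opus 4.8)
The plan is to work in the block decomposition of $U$ with respect to the computational basis of the first qubit. Writing $U$ as a $2\times 2$ array of operators $U_{ab} = (\bra{a}\otimes I)\,U\,(\ket{b}\otimes I)$ acting on the remaining qubits, for $a,b \in \{0,1\}$, the entire claim reduces to showing that the two off-diagonal blocks $U_{01}$ and $U_{10}$ vanish, since the diagonal blocks are then exactly the $U_{00}$ and $U_{11}$ appearing in the statement.

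First I would record how $Z$ on the first qubit acts on such a block form. Since $Z_1 = \ket{0}\bra{0}\otimes I - \ket{1}\bra{1}\otimes I$, left-multiplication by $Z_1$ multiplies the $(a,b)$ block by $(-1)^a$, while right-multiplication multiplies it by $(-1)^b$. Concretely, the $(a,b)$ block of $Z_1 U$ is $(-1)^a U_{ab}$ and the $(a,b)$ block of $U Z_1$ is $(-1)^b U_{ab}$.

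Then I would impose the hypothesis $Z_1 U = U Z_1$ block by block: this forces $(-1)^a U_{ab} = (-1)^b U_{ab}$ for all $a,b$, so whenever $a \neq b$ we obtain $U_{ab} = -U_{ab}$ and hence $U_{ab} = 0$. This eliminates both off-diagonal blocks and yields $U = \ket{0}\bra{0}\otimes U_{00} + \ket{1}\bra{1}\otimes U_{11}$, as required.

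There is essentially no obstacle here; this is a routine consequence of $Z$ having two distinct eigenvalues, and the same reasoning would apply verbatim to any operator with distinct eigenvalues $\pm 1$ on the first qubit in place of $Z$. The only point worth flagging is that the statement asserts only the block-diagonal \emph{form} and not that $U_{00}$ and $U_{11}$ are themselves unitary; should that be needed in the ensuing application (\lemm{ancilla-free}), it follows at once from block-diagonality together with $U^\dagger U = I$.
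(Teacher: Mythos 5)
Your proof is correct and follows essentially the same route as the paper's: both decompose $U$ into blocks $\ket{a}\bra{b}\otimes U_{ab}$ and compare the two sides of the commutation relation block by block to kill the off-diagonal terms (the paper dresses this up by commuting $U$ with a generic diagonal matrix $\lambda_0\ket{0}\bra{0}\otimes I+\lambda_1\ket{1}\bra{1}\otimes I$, but the mechanism is identical to your $(-1)^a$ versus $(-1)^b$ sign comparison). Your closing remark that unitarity of the diagonal blocks follows from $U^\dagger U = I$ is a sensible addition, since that fact is used implicitly when \lemm{ancilla-free} feeds $C_{n-1}$ into \lemm{clifford-property}.
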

\begin{proof}
Note that the fact that $U$ commutes with $Z_1$ implies that $U$ commutes with matrix
$M = \lambda_0 \ket{0}\bra{0} \otimes I + \lambda_1 \ket{1}\bra{1} \otimes I$
for arbitrary complex numbers $\lambda_0, \lambda_1$.
Let us write $U = \sum_{a,b \in \set{0,1} } \ket{a}\bra{b} \otimes U_{ab}$. 
Next expand $M U$ and $U M$ as: 
\begin{align*}
M U
=
\lambda_0 \ket{0}\bra{0} \otimes U_{00} + \lambda_0 \ket{0}\bra{1} \otimes  U_{01} + 
\lambda_1 \ket{1}\bra{0} \otimes U_{10} + \lambda_1 \ket{1}\bra{1} \otimes U_{11} \\
U M = 
\lambda_0 \ket{0}\bra{0} \otimes U_{00} + \lambda_0 \ket{1}\bra{0} \otimes  U_{10} + 
\lambda_1 \ket{0}\bra{1} \otimes U_{01} + \lambda_1 \ket{1}\bra{1} \otimes U_{11}
\end{align*}
We see that equality $UM = MU$ is only possible when $U_{01}$ and $U_{10}$ are both zero.
\end{proof}
The next proposition if a convenient characterization of Pauli matrices that we use 
to establish a necessary condition for a unitary to be a Clifford later in this section.
\begin{prop} \label{prop:pauli-matrix-property}
Let $M$ be an $n$-qubit matrix such that $\mathrm{Tr}\at{MM^\dagger} = 2^n$
and for every Pauli matrix $P$ from $\set{I,X,Y,Z}^{\otimes n}$ the trace $\mathrm{Tr}\at{M P}$ is
either $0$ or $\pm 2^n$, then $M$ or $-M$ is a Pauli matrix.
\end{prop}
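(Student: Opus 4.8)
The plan is to exploit the fact that the $4^n$ tensor-product Pauli matrices $\set{I,X,Y,Z}^{\otimes n}$ form an orthogonal basis of the space of $2^n\times 2^n$ complex matrices under the Hilbert--Schmidt inner product $\ip{A,B}=\mathrm{Tr}\at{AB^\dagger}$. Each Pauli $P$ is Hermitian, so $P^\dagger=P$, and since $P^2=I$ we have $\mathrm{Tr}\at{PP^\dagger}=\mathrm{Tr}\at{I}=2^n$; moreover distinct Paulis are orthogonal, $\mathrm{Tr}\at{PQ}=2^n\delta_{P,Q}$. I would therefore expand $M$ uniquely in this basis as $M=\sum_P c_P P$, with coefficients $c_P=\mathrm{Tr}\at{MP^\dagger}/2^n=\mathrm{Tr}\at{MP}/2^n$.

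Next I invoke the two hypotheses. The assumption that $\mathrm{Tr}\at{MP}\in\set{0,\pm 2^n}$ for every Pauli $P$ immediately forces each coefficient $c_P$ to lie in $\set{0,+1,-1}$. Separately, expanding the squared Hilbert--Schmidt norm (Parseval's identity for this orthogonal basis) gives $\mathrm{Tr}\at{MM^\dagger}=\sum_P \abs{c_P}^2\,\mathrm{Tr}\at{PP^\dagger}=2^n\sum_P\abs{c_P}^2$, so the hypothesis $\mathrm{Tr}\at{MM^\dagger}=2^n$ yields $\sum_P\abs{c_P}^2=1$.

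Finally I would combine the two constraints: each $\abs{c_P}^2$ is either $0$ or $1$, and they sum to $1$, so exactly one coefficient is nonzero and equal to $\pm 1$. Hence $M=\pm P$ for a single Pauli matrix $P$, which is precisely the claim that $M$ or $-M$ is a Pauli matrix.

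There is essentially no serious obstacle here: the result is a direct consequence of orthogonality of the Pauli basis together with the quantization of the coefficients into $\set{0,\pm 1}$. The only points requiring a moment of care are keeping track of the Hermiticity $P^\dagger=P$ when passing between $\mathrm{Tr}\at{MP^\dagger}$ and $\mathrm{Tr}\at{MP}$, and the normalization factor $2^n$ arising from $\mathrm{Tr}\at{PP^\dagger}=2^n$.
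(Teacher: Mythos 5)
Your proposal is correct and follows essentially the same route as the paper's own proof: expand $M$ in the orthogonal Pauli basis under the Hilbert--Schmidt inner product, note that the trace hypothesis quantizes each coefficient to $\set{0,\pm 1}$, and use the norm condition $\mathrm{Tr}\at{MM^\dagger}=2^n$ (Parseval) to conclude exactly one coefficient is $\pm 1$. The bookkeeping with $P^\dagger = P$ and the $2^n$ normalization is handled correctly, so there is nothing to add.
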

\begin{proof}
Recall that the set $P_n = \set{I,X,Y,Z}^{\otimes n}$ is an orthogonal basis in the vector space of $n$-qubit matrices
with respect to inner product $\ip{A,B} = \mathrm{Tr}\at{AB^{\dagger}}$.
Matrix $M$ can be represented as a sum $\sum_{P \in P_n} P\ip{M,P}/\ip{P,P}$.
In particular the square norm of $M$ is $2^n = \ip{M,M} = \sum_{P \in P_n} \abs{\ip{P,M}}^2 / \ip{P,P}$. 
The equality is only possible when there is exactly one Pauli matrix $P$ such that $\ip{P,M} = \pm 2^n$.
\end{proof}
It is well-known that Clifford unitaries map stabilizer states to stabilizer states. 
One can show that this is also a necessary condition for unitary to be a Clifford.
Here we prove a slightly weaker result.
\begin{lem} \label{lem:clifford-property}
Let $U$ be an $n$ qubit unitary such that unitary $U \otimes I_n$ maps stabilizer states to stabilizer states
then $U$ is a Clifford unitary. 
\end{lem}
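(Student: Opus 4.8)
The plan is to use the channel--state (Choi--Jamiołkowski) correspondence so that the single hypothesis ``stabilizer states go to stabilizer states'' constrains the conjugation action of $U$ on the \emph{entire} Pauli group simultaneously. First I would introduce the maximally entangled state
\[
\ket{\Phi} = \frac{1}{\sqrt{2^n}}\sum_{x\in\set{0,1}^n}\ket{x}\otimes\ket{x}
\]
on $2n$ qubits, which is a stabilizer state since it is fixed by $X_i\otimes X_i$ and $Z_i\otimes Z_i$ for each $i$ (using $X^T=X$, $Z^T=Z$ and the identity $\at{M\otimes I}\ket{\Phi}=\at{I\otimes M^T}\ket{\Phi}$). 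This is precisely why the hypothesis is phrased with $I_n$: the $n$ ancilla qubits are exactly enough to maximally entangle with the register on which $U$ acts. By hypothesis, $\at{U\otimes I_n}\ket{\Phi}$ is therefore also a stabilizer state.

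Next I would invoke the characterization already used in the proof of \propos{dyadic-non-negativity}: a pure $m$-qubit state is a stabilizer state if and only if all its Pauli expectations lie in $\set{0,\pm1}$ (the normalization $\sum_P\abs{\bra{\psi}P\ket{\psi}}^2=2^m$ forces exactly $2^m$ of them to equal $\pm1$, and these then form the stabilizer group). Applying this to $\at{U\otimes I_n}\ket{\Phi}$ gives, for every pair of $n$-qubit Paulis $P$ and $Q$,
\[
\bra{\Phi}\at{U^\dagger P U}\otimes Q\ket{\Phi}\in\set{0,\pm1}.
\]
Using the transpose identity $\bra{\Phi}M\otimes N\ket{\Phi}=\tfrac{1}{2^n}\mathrm{Tr}\at{M N^T}$, this reads $\mathrm{Tr}\at{\at{U^\dagger P U}\,Q^T}\in\set{0,\pm2^n}$.

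Now I would fix a Pauli $P$ and set $M=U^\dagger P U$, then verify the two hypotheses of \propos{pauli-matrix-property}. Unitarity gives $\mathrm{Tr}\at{MM^\dagger}=\mathrm{Tr}\at{P P^\dagger}=2^n$. Moreover, since $Q^T=\pm Q$ ranges (up to sign) over all of $\set{I,X,Y,Z}^{\otimes n}$ as $Q$ does, the display above yields $\mathrm{Tr}\at{MR}\in\set{0,\pm2^n}$ for \emph{every} Pauli $R$. Hence \propos{pauli-matrix-property} shows that $M$ or $-M$ is a Pauli, i.e. $U^\dagger P U$ is a Pauli matrix up to sign. As this holds for all $P$, the unitary $U$ normalizes the Pauli group up to phase and is therefore Clifford.

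The main obstacle is conceptual rather than computational: converting a single statement about one stabilizer state into a statement about all the operators $U^\dagger P U$ at once. The transpose trick together with the $\set{0,\pm1}$ characterization of stabilizer states is what bridges this gap, and the already-established \propos{pauli-matrix-property} then performs the final recognition step. The only routine point to track is the sign bookkeeping $Q^T=\pm Q$ (the sign being $(-1)$ raised to the number of $Y$ factors), which is harmless since it does not affect membership in $\set{0,\pm2^n}$.
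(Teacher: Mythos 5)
Your proposal is correct and follows essentially the same route as the paper's proof: both apply the hypothesis to the maximally entangled (Choi) state, use the fact that a stabilizer state has all Pauli expectations in $\set{0,\pm1}$, and then invoke \propos{pauli-matrix-property} to recognize the conjugated Paulis as Paulis, concluding that $U$ is Clifford. The only differences are cosmetic — you work with the state vector and the transpose identity (conjugating as $U^\dagger P U$) where the paper works with the Choi density matrix $\rho \propto \sum_P U P U^{\dagger}\otimes P$ (conjugating as $U P U^{\dagger}$) — and your explicit $Q^T=\pm Q$ bookkeeping is, if anything, slightly more careful on signs than the paper's.
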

\begin{proof}
We will exploit the fact that the Choi state of $U$ must be a stabilizer state.
Recall that the Choi state of unitary $U$ is the result of applying $U\otimes I_n$ to $n$ Bell states.
Bell state $(\ket{00}+\ket{11})/\sqrt{2}$ is stabilized by $X\otimes X$ and $Z \otimes Z$ and
its density matrix is proportional to $\sum_{P \in \set{I,X,Y,Z}} P\otimes P$.
The density matrix of $n$ Bell states is proportional to $\sum_{P \in \set{I,X,Y,Z}^{\otimes n}} P\otimes P$.
The density matrix of the Choi state of $U$ is equal to: 
\[
\rho = \frac{1}{2^{2n}} \sum_{P \in \set{I,X,Y,Z}^{\otimes n}} U P U^{\dagger}\otimes P
\]
Let us now fix $P$ and show that $M = U P U^\dagger$ is a Pauli matrix
by using \propos{pauli-matrix-property}.
First note that $\mathrm{Tr}\at{M M^\dagger} = 2^n$. 
Next observe that for arbitrary Pauli matrix $Q$ the $\mathrm{Tr}\at{M Q}$ must be either $0$ or $2^n$.
Note that $\mathrm{Tr}\at{\rho \at{Q \otimes P}} = 2^{-n} \mathrm{Tr}\at{M Q}$.
On the other hand, because $\rho$ is a density matrix of a stabilizer state,
value $\mathrm{Tr}\at{\rho \at{Q \otimes P}}$ can only be $0,1,-1$.
\end{proof}

\subsection{Conversion protocols for dyadic rational powers of T gate}
\label{app:dyadic-powers-of-t}
In this section we look at the creation of many copies of states $\ketsm{\pi j/2^d}$ 
which include $\ket{T}$, $\ketsm{\sqrt{T}}$ when $j=1$, $d=3,4$.
We show that in the limit of creating many copies of the same state less than one $CCZ$ gate is required per state.
We start with generalizations of some of the results discussed in \sec{adder-conversion-protocols} in context of producing $\ketsm{\sqrt{T}}$ states.

\begin{prop} \label{prop:double-angle}
Let $\theta$ be a real number and let $k$ be a positive integer. 
The parallel application of $2k+1$ unitaries $\exp\at{i\theta \ket{1}\bra{1}}$ can be achieved by stabilizer operations with measurements that have probability $50\%$,
one unitary $\exp\at{i\theta \ket{1}\bra{1}}$,
$k$ $CCZ$ gates and 
the parallel application of $k$ unitaries $\exp\at{i 2\theta \ket{1}\bra{1}}$.
\end{prop}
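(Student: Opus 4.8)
The plan is to realize the parallel application of $2k+1$ small rotations $\exp(i\theta\kb{1}{1})$ by using a Hamming-weight trick analogous to that in \figsm{three-sqrt-t-gates}, generalized via an adder on groups of inputs. The central identity I would exploit is that $\exp(i\theta\kb{1}{1})^{\otimes m}$ acting on a computational basis state $\ket{w}$ multiplies it by $e^{i\theta\,\mathrm{hw}(w)}$, so applying the small rotation independently to $m$ qubits is equivalent to applying $\exp(i2^s\theta\kb{1}{1})$ to the bits of the binary representation of the Hamming weight $\mathrm{hw}(w)$. Thus I would first pair up the $2k$ ``expensive'' qubits (setting aside one to be the single allowed $\exp(i\theta\kb{1}{1})$) and, for each pair $(a,b)$, compute their sum $a+b$ into a two-bit register consisting of the parity bit $a\oplus b$ (low) and carry bit $a\wedge b$ (high).

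First I would establish the arithmetic: for a pair, $\mathrm{hw}(a,b)$ is represented by low bit $a\oplus b$ and high bit $ab$, exactly the outputs of the one-bit adder, and a $CCZ$ gate together with Clifford operations suffices to compute the carry (this is precisely the adder building block referenced from \cite{Gidney2018} and used in \figsm{three-sqrt-t-gates}). Then applying $\exp(i\theta\kb{1}{1})$ to the low (parity) bit and $\exp(i2\theta\kb{1}{1})$ to the high (carry) bit reproduces the phase $e^{i\theta(a+b)}$ on the original pair. Carrying this out for all $k$ pairs in parallel consumes $k$ $CCZ$ gates, applies the single unitary $\exp(i\theta\kb{1}{1})$ once to each of the $k$ parity bits \emph{and} once to the leftover qubit, and applies $\exp(i2\theta\kb{1}{1})$ to each of the $k$ carry bits. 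The key efficiency observation, exactly as in \secsm{adder-conversion-protocols}, is that the $k$ parity rotations plus the one leftover rotation can all be executed using a \emph{single} physical $\exp(i\theta\kb{1}{1})$ by again folding them through a Hamming-weight register — but more simply, the proposition only asserts we need one copy of $\exp(i\theta\kb{1}{1})$ because the carry computation/uncomputation is done with Cliffords and $CCZ$, and the remaining $\exp(i2\theta\kb{1}{1})$ rotations are exactly the $k$ ``coarse'' rotations the statement budgets for.

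The remaining point is that the one-bit adder can be \emph{uncomputed} using only Clifford gates and probability-one-half Pauli measurements, so the ancillary carry registers are returned without additional non-Clifford cost; this is the trick illustrated in \figsm{simpler-CCX}, where the Toffoli-with-zeroed-target is realized with Clifford operations and single-qubit Pauli measurements. I would verify that every measurement introduced in the uncomputation occurs with probability $50\%$, which is guaranteed because the measured observables anticommute with a stabilizer of the intermediate state (cf.\ the argument in \propos{measure-half} that such measurements are exactly the probability-half ones). Collecting these pieces shows the parallel application of $2k+1$ copies of $\exp(i\theta\kb{1}{1})$ reduces to one copy of $\exp(i\theta\kb{1}{1})$, $k$ copies of $CCZ$, and $k$ copies of $\exp(i2\theta\kb{1}{1})$, all glued together with stabilizer operations using probability-half measurements.

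The main obstacle I anticipate is the careful bookkeeping of the Clifford corrections that arise when injecting the $CCZ$ gates and when uncomputing the carries: one must check that the byproduct Pauli and Clifford operators from the adder uncomputation either commute through the subsequent coarse rotations $\exp(i2\theta\kb{1}{1})$ or can be absorbed into classically-controlled Clifford corrections, without spawning extra non-Clifford gates. Since the coarse rotations are diagonal, the $Z$-type byproducts commute through them freely and only the $X$-type byproducts require propagation; handling the latter is routine but is where the recursive amortization of correction costs (mentioned in \secsm{adder-conversion-protocols}) must be invoked to keep the $CCZ$ count at exactly $k$.
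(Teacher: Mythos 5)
There is a genuine gap, and it is in the resource accounting. Your gadget processes the expensive qubits two at a time: for a pair $(a,b)$ you spend one $CCZ$ and one $\exp(i2\theta\kb{1}{1})$ (on the carry bit $ab$), and you still need a full $\exp(i\theta\kb{1}{1})$ on the parity bit $a\oplus b$. So each $CCZ$ eliminates only \emph{one} $\theta$-rotation net. After treating all $k$ pairs you have already exhausted the entire budget of $k$ $CCZ$ gates and $k$ coarse rotations, yet you are left with $k+1$ parallel $\theta$-rotations (the $k$ parity bits plus the leftover qubit), not one. The sentence claiming these ``can all be executed using a single physical $\exp(i\theta\kb{1}{1})$'' is precisely the statement being proved and is not justified by what precedes it: folding $k+1$ bits through a multi-bit Hamming-weight register would require rotations by $4\theta, 8\theta, \ldots$, which are not in the budget, while recursing your pairing construction on the $k+1$ remaining rotations costs roughly $k/2$ further $CCZ$s and coarse rotations at each level, for a total of about $2k$ of each --- double what the proposition allows. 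Already at $k=1$ your construction consumes two $CCZ$s and two $2\theta$-rotations where the proposition requires one of each.

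The fix is the one the paper uses: process \emph{three} bits at a time, as in \figsm{three-sqrt-t-gates}. The Hamming weight of three bits $a,b,c$ fits in two bits --- the parity $a\oplus b\oplus c$ and the majority $ab\oplus bc\oplus ac$ --- so one $CCZ$ (computing the majority via the one-bit adder, uncomputed with Cliffords and probability-half measurements as you describe) plus one $2\theta$-rotation on the majority bit and one $\theta$-rotation on the parity bit implements three $\theta$-rotations; this nets an elimination of \emph{two} $\theta$-rotations per $CCZ$. The paper then argues by induction on $k$: to apply $2j+3$ rotations, run the three-bit gadget on three of them; its single internal $\theta$-rotation (on the parity bit) is applied in parallel with the remaining $2j$ rotations, giving $2j+1$ parallel $\theta$-rotations to which the induction hypothesis applies, for totals of $j+1$ $CCZ$s, $j+1$ coarse rotations, and one $\theta$-rotation. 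Your observations about uncomputing the adder with Clifford operations and probability-half measurements, and about diagonal rotations commuting with $Z$-type byproducts, are correct and carry over unchanged; what must be replaced is the two-bit pairing, which cannot achieve the stated counts no matter how the leftover rotations are handled.
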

\begin{proof}
We proof the proposition by induction on $k$. Let us start with the base case $k=1$. 
Using a circuit similar to 
\fig{three-sqrt-t-gates}
we can apply three unitaries $\exp\at{i\theta \ket{1}\bra{1}}$ in parallel by using one ancilla, one $CCZ$ gate, one unitary $\exp\at{i 2\theta \ket{1}\bra{1}}$ and one unitary  $\exp\at{i \theta \ket{1}\bra{1}}$. 
Suppose now that we have established the proposition for $k=j$. 
Let us prove the result for $k=j+1$. 
We need to apply $2j+3$ unitaries $\exp\at{i\theta \ket{1}\bra{1}}$ in parallel. 
We apply first three of them using a circuit similar to 
\fig{three-sqrt-t-gates}. 
The circuit will use one $\exp\at{i 2\theta \ket{1}\bra{1}}$ gate, one ancilla, one $CCZ$ gate and one $\exp\at{i\theta \ket{1}\bra{1}}$ gate. 
We notice that remaining $2j$ unitaries $\exp\at{i\theta \ket{1}\bra{1}}$ can be applied in parallel with the newly introduced one. 
A special case of the induction step is shown on \fig{many-sqrt-t-states}.
Using the induction hypothesis we see that in total we will need $j+1$ ancillary qubits, $j+1$ $CCZ$ gates, $j+1$ unitaries $\exp\at{i 2\theta \ket{1}\bra{1}}$ and one unitary $\exp\at{i\theta \ket{1}\bra{1}}$. 
This completes the proof.
\end{proof}

Next we apply above proposition to obtain a protocol that uses catalysis to apply rotations $R(\theta)=\exp\at{i\theta \ket{1}\bra{1}}$ by angle $\theta = \pi j / 2^d$ for positive integer $d \ge 3$ and odd integer $j$. 

\begin{prop} \label{prop:power-reduction}
Let $k, d$ be positive integers and let $j$ be an odd integer.
The parallel application of $2k$ unitaries
$R\atsm{\pi j / 2^d }$ can be achieved by stabilizer operations
with measurements that have probability $50\%$,
using resource state $\ketsm{ \pi j / 2^d}$ as a catalyst,
$k$ $CCZ$ gates and
the parallel application of $k+1$ unitaries  $R\atsm{\pi j / 2^{d-1}}$.
\end{prop}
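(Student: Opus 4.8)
The plan is to reduce \propos{power-reduction} directly to \propos{double-angle}, using the catalyst $\ketsm{\pi j/2^d}$ to supply the single rotation that \propos{double-angle} requires and regenerating the catalyst from one of the outputs. Setting $\theta = \pi j/2^d$, so that $2\theta = \pi j/2^{d-1}$, \propos{double-angle} already tells us that $2k+1$ parallel copies of $R\at{\pi j/2^d}$ can be produced from a single application of $R\at{\pi j/2^d}$, together with $k$ $CCZ$ gates and $k$ parallel applications of $R\at{\pi j/2^{d-1}}$, using only probability-$1/2$ measurements. The discrepancy between the $2k+1$ rotations available there and the $2k$ we wish to output is precisely what lets us close the catalytic cycle.

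First I would apply the construction of \propos{double-angle} to the $2k$ data qubits together with one extra ancilla prepared in $\ket{+}$, treating these $2k+1$ qubits jointly as the targets of the $2k+1$ rotations. After the protocol the $2k$ data qubits have had $R\at{\pi j/2^d}$ applied, while the ancilla, starting from $\ket{+}$, is left in the state $R\at{\pi j/2^d}\ket{+} = \ketsm{\pi j/2^d}$, that is, a fresh copy of the catalyst.

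Second, I would implement the one genuine $R\at{\pi j/2^d}$ gate demanded by \propos{double-angle} by gate injection of the input catalyst $\ketsm{\pi j/2^d}$. A direct calculation of the injection circuit (a single CNOT from the target qubit into the resource, followed by a computational-basis measurement) shows that each outcome occurs with probability $1/2$: one outcome applies $R\at{\pi j/2^d}$ exactly, while the other applies $R\at{-\pi j/2^d}$ up to a global phase and is corrected by one further $R\at{2\theta} = R\at{\pi j/2^{d-1}}$, exactly one of the allowed resources. This injection consumes the input catalyst but costs only one extra $R\at{\pi j/2^{d-1}}$ and one probability-$1/2$ measurement. Tallying resources, we use $k$ $CCZ$ gates and $k+1$ applications of $R\at{\pi j/2^{d-1}}$ in total ($k$ inherited from \propos{double-angle} plus one from the injection correction), exactly as claimed, while the freshly produced $\ketsm{\pi j/2^d}$ on the ancilla is handed back as the returned catalyst.

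The main thing to get right is the catalytic bookkeeping: verifying that the ancilla emerges in precisely the state $\ketsm{\pi j/2^d}$, so that the borrowed catalyst is returned unchanged, and that the single injection genuinely replaces the lone $R\at{\pi j/2^d}$ needed by \propos{double-angle} without introducing any additional non-Clifford cost. Everything else is the routine verification of the injection identity and the resource count, whose probability-$1/2$ measurement structure is inherited from \propos{double-angle} and from the injection measurement itself.
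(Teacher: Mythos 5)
Your proposal is correct and follows essentially the same route as the paper's proof: both invoke \propos{double-angle} with the extra $(2k{+}1)^{\text{st}}$ input set to $\ket{+}$ so that the protocol outputs a fresh copy of $\ketsm{\pi j/2^d}$ (returning the catalyst), both supply the single $R\atsm{\pi j/2^d}$ gate required by \propos{double-angle} via injection of the borrowed catalyst with its probability-$1/2$ correction $R\atsm{\pi j/2^{d-1}}$, and both tally $k$ $CCZ$ gates and $k+1$ rotations $R\atsm{\pi j/2^{d-1}}$. The only detail the paper makes explicit that you leave implicit is that the injection correction acts on a different qubit from the other $k$ rotations inherited from \propos{double-angle}, so all $k+1$ of them can indeed be applied in parallel as the statement requires.
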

\begin{proof}
To apply the required unitary transformation we use the protocol described in \propos{double-angle}
with the last input set to $\ket{+}$ state. 
This will ensure that we apply $2k$ unitaries $R\atsm{\pi j / 2^d}$ in parallel and
produce one resource state $\ketsm{ \pi j / 2^d}$.
To apply one gate $R\atsm{\pi j / 2^d}$ needed by protocol from \propos{double-angle} 
we use resource state injection protocol. 
The protocol consumes one state  $\ketsm{ \pi j / 2^d}$ and
with probability $50\%$ requires one application of $R\atsm{\pi j / 2^{d-1}}$. 
The gate $R\atsm{\pi j / 2^{d-1}}$ used in the injection protocol can be applied in parallel with the rest of $R\atsm{\pi j / 2^{d-1}}$ applied as a part of protocol from \propos{double-angle}.
Therefore in total we will need to apply at most $k+1$ unitaries $R\atsm{ \pi j / 2^{d-1}}$ in parallel. 
We use the same number of $CCZ$ gates as in \propos{double-angle} which is equal to $k$.
\end{proof}

In \sec{adder-conversion-protocols} we presented a special case of the above proposition for $j=1$ and $d=4$. 
Next we apply above proposition recursively to obtain a family of conversion protocols for resource states $\ketsm{ \pi j / 2^d}$
that use states $\ketsm{\pi j / 2^d},\ldots,\ket{\pi j / 2^2}$ as catalysts together with ${CCZ}$ gates. 
\dyadicpowersconversion
\begin{proof}
We prove the theorem by induction on $d$. 
The base case $d=1$ is true because $R\atsm{ \pi j / 2}$ are Clifford gates 
and require zero ${CCZ}$ gates to be applied.
Suppose now that we have shown the result for $d=d'$ and let us prove the theorem for $d=d'+1$. 
We need to apply $a_{d'+1,k}$ gates $R\atsm{j / 2^{d'+1}}$. 
According to \propos{power-reduction} we achieve this using 
$a_{d'+1,k}/2$ $CCZ$ gates,
one resource state $\ketsm{\pi j / 2^{d'+1}}$  used as a catalyst,
and the parallel application of $a_{d'+1,k}/2 + 1$ unitaries $R\atsm{ \pi j / 2^{d'}}$.
We observe that $a_{d'+1,k}/2 + 1 = a_{d',k}$. 
Therefore, by induction hypothesis, the parallel application of unitaries $R\atsm{ \pi j / 2^{d'}}$ can be achieved using the resources described in the statement of the theorem. 
The total number of $CCZ$ gates applied is $b_{d',k}+ a_{d'+1,k}/2$ which is equal to $b_{d'+1,k}$ as required. 
We also added state $\ketsm{\pi j /2^{d'+1}}$ to the list of the catalysts used in the protocol.
Finally we note that $\lim_{k \rightarrow \infty} b_{d,k}/a_{d,k} = 1 - 2^{-(d-1)}$.
\end{proof}

\subsection{Overview of some definitions and results from Number Theory}
\label{app:number-theory}
The goal of this appendix is to review the results from algebraic number theory
needed to define and calculate function $v_2$ used in \sec{prob-half-bounds}.
We aim for a pedagogical and as self-contained as possible exposition
of the needed results. 
The readers with a solid knowledge of algebraic number theory 
should proceed to \rema{expert-summary}.

\subsubsection{Definition of \texorpdfstring{$v_2$}{v₂} and additivity}

Recall that we have defined $v_2$ in the beginning of \sec{prob-half-ccz-bounds} for rational numbers as following.
If $q$ is a non-zero rational number then $v_2(q)$ is equal to the power of $2$ in the factorization of $q$ into prime numbers. If $q$ is zero, then $v_2(q) = +\infty$. 
We need to extend $v_2$ to the real subset of the following family of sets:
\[
\mathcal{R}_d = \z\of{\exp(i\pi/2^d),1/2} = \set{ \frac{1}{2^k} \sum_{j=0}^{2^d-1} a_j \exp(i\pi j/2^d) : \text{ where } a_j, k \text{ are integers} }.
\]
and show that $v_2$ has the following two properties:
\begin{itemize}
  \item additivity, that is $v_2\at{x\cdot y} = v_2\at{x} + v_2 \at{y}$,
  \item $v_2(x+y) \ge \min\at{v_2(x),v_2(y)}$.
\end{itemize}
We also need to calculate values $v_2\at{ \cos(\pi k /2^d) }$, $v_2\at{\sin(\pi k /2^d) }$ for integers $k,d$.

We will define $v_2$ on a larger family of sets that includes $\mathcal{R}_d$ and its real subsets
\[
\q\atsm{\exp(i\pi/2^d)} = \set{ \sum_{j=0}^{2^d-1} a_j \exp(i\pi j/2^d) : \text{ where } a_j \text{ are rational numbers} }
\]
and shown that it has the required properties. 
Observe that sets $\q\atsm{\exp(i\pi/2^d)}$ are closed under addition and multiplication similarly to sets $\mathcal{R}_d$, so $v_2\at{x\cdot y}$ and  $v_2(x+y)$ are well-defined.

Our strategy for extending $v_2$ is the following. 
Later in this section we will define family of functions $N_d$ on sets $\q\atsm{\exp(i\pi/2^d)}$ with four properties: 
\begin{itemize}
  \item value of $N_d$ is always rational,
  \item $N_d$ is multiplicative, that is $N_d\at{x \cdot y} = N_d\at{x} \cdot N_d\at{y}$,
  \item $N_0$ is trivial, that is $N_0\at{x} = x$,
  \item $N_d\at{x}^2 = N_{d+1} \at{x}$.
\end{itemize}
Using functions $N_d$ and the definition of $v_2$ on the set of rational numbers we extend $v_2$ to the family of sets $\q\atsm{\exp(i\pi/2^d)}$ as:
\begin{equation}
v_2\at{x} = v_2\at{N_d\at{x}} / 2^d
\end{equation}

Above mentioned properties of $N_d$ make sure that $v_2$ is
additive and well-defined.
We see that the additive property of $v_2$ follows immediately 
from the multiplicative property of $N_d$.
The definition of $v_2$ on rational number does not change because $N_0\at{x} = x$.
Finally, the definition of $v_2$ is consistent. 
Function $v_2$ is defined on the family of nested sets:
\[
 \q \subset \q\at{i} \subset \q\atsm{\exp(i\pi/2^2)} \subset \ldots 
 \subset \q\atsm{\exp(i\pi/2^d)} \subset \q\atsm{\exp(i\pi/2^{d+1})} \subset \ldots.
\]
If $x$ belongs to set $\q\atsm{\exp(i\pi/2^d)}$, then $x$ also belongs to all the sets $\q\atsm{\exp(i\pi/2^{d+k})}$ for all integer $k$. For $v_2$ to be defined consistently, $v_2\at{N_d\at{x}} / 2^d$ must be equal to $v_2\at{N_{d+k}\at{x}} / 2^{d+k}$. This follows, from property $N_d\at{x}^2 = N_{d+1} \at{x}$ and the fact that for rational $q$ value $v_2\at{q^n} = n v_2\at{q}$.

The rest of this section is dedicated to defining function $N_d$ known as norm functions of $\q\atsm{\exp(i\pi/2^d)}$ and establishing their four required properties.
Let us start with $\q(i)$ and $N_1$.
The following three properties of complex conjugation are useful for our purpose:
\begin{itemize}
    \item if $a+ b i$ is in $\q(i)$, then $(a+b i)^\ast$ is in $\q(i)$,
    \item $\at{x\cdot y}^\ast = x^\ast \cdot y^\ast$,
    \item $x = x^\ast$ if and only if $x$ is in $\q$,
\end{itemize}
We define $N_1(x) = x\cdot x^\ast$. First property of complex conjugation ensures that $N_1$ is well-defined, the second one ensures multiplicativity of $N_1$, the third property ensures that $N_1$ is rational and that $N_1(x) = N_0(x)^2$ when $x$ is rational. To define $N_d$ for $d > 1$ we will need more maps similar to complex conjugation defined on sets $\q\atsm{\exp(i\pi/2^d)}$:
\begin{equation}
\sigma_k : \q\atsm{\exp(i\pi/2^d)} \rightarrow \q\atsm{\exp(i\pi/2^d)},\, \sigma_k\at{\sum_{j=0}^{2^d-1} a_j \exp(i\pi j/2^d)} = \sum_{j=0}^{2^d-1} a_j \exp(i\pi j k/2^d)
\end{equation}
Note that $\sigma_{-1}$ is the complex conjugation and $\sigma_1$ is the identity map.
The next proposition established some well-known properties of maps $\sigma_k$. 
We provide proof for completeness.
\begin{prop} \label{prop:sigmas}
For all odd $k$, maps $\sigma_k$ have the following properties
\begin{enumerate}
    \item for all $x,y$, 
    $\sigma_k\at{x \cdot y} = \sigma_k\at{x} \cdot \sigma_k\at{y}$ and
    $\sigma_k\at{x + y} = \sigma_k\at{x} + \sigma_k\at{y}$
    \item $x$ from $\q\atsm{\exp(i\pi/2^d)}$ is rational if and only if for all odd $k$ $\sigma_k\at{x} = x$
    \item for all $x$ from $\q\atsm{\exp(i\pi/2^d)}$ and for all odd $k$, $\sigma_{k+2\cdot2^d}\at{x} = \sigma_{k}\at{x}$
    \item or all $x$, $\sigma_k\at{\sigma_j\at{x}} = \sigma_{kj}\at{x}$
\end{enumerate}
\end{prop}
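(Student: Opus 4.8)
The plan is to treat each $\sigma_k$ as the substitution $\zeta\mapsto\zeta^k$, where I write $\zeta:=\exp(i\pi/2^d)$, and to read off all four properties from elementary facts about this substitution. The one structural input I will use throughout is that $\{1,\zeta,\ldots,\zeta^{2^d-1}\}$ is a $\q$-basis of $\q\atsm{\zeta}$: this holds because $\zeta$ is a root of $x^{2^d}+1$, the $2^{d+1}$-th cyclotomic polynomial, which is irreducible over $\q$ (a standard fact). Uniqueness of the representation $\alpha=\sum_{j=0}^{2^d-1}a_j\zeta^j$ makes each $\sigma_k$ well defined and manifestly $\q$-linear, which already gives additivity in property 1 and reduces every remaining identity to a check on the basis elements $\zeta^l$.

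For the multiplicativity half of property 1, I would first record the arithmetic identities $\zeta^{2^d}=-1$ and hence $\zeta^{2^{d+1}}=1$. For odd $k$ these give $(\zeta^k)^{2^d}=(-1)^k=-1$, so $\zeta^k$ is again a root of $x^{2^d}+1$. Consequently the substitution $x\mapsto\zeta^k$ descends from $\q[x]$ to a well-defined $\q$-algebra homomorphism $\q[x]/(x^{2^d}+1)\cong\q\atsm{\zeta}\to\q\atsm{\zeta}$; since this homomorphism agrees with $\sigma_k$ on the basis it \emph{is} $\sigma_k$, and being an algebra homomorphism it is multiplicative. Property 3 then follows immediately from $\zeta^{2^{d+1}}=1$, which forces $\zeta^{j(k+2^{d+1})}=\zeta^{jk}$ so that $\sigma_{k+2\cdot 2^d}$ and $\sigma_k$ agree on every basis element. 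Property 4 is a one-line consequence of multiplicativity: on a basis element $\sigma_k\at{\sigma_j\at{\zeta^l}}=\sigma_k\at{\zeta^{jl}}=\zeta^{kjl}=\sigma_{kj}\at{\zeta^l}$, and $\q$-linearity upgrades this to the operator identity (note $kj$ is odd, so $\sigma_{kj}$ is in the stated family).

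The only part needing a genuine computation is the nontrivial (backward) direction of property 2. Here I would introduce the trace $T\at{\alpha}=\sum_{k}\sigma_k\at{\alpha}$, the sum running over the $2^d$ odd residues $k$ modulo $2^{d+1}$, and evaluate it on basis elements. For $j=0$ one gets $T\at{1}=2^d$, while for $1\le j\le 2^d-1$ the geometric series $\sum_{k\text{ odd}}\zeta^{jk}=\zeta^j\sum_{l=0}^{2^d-1}\at{\zeta^{2j}}^l$ collapses to $0$, because its numerator $\zeta^{j2^{d+1}}-1$ vanishes while the denominator $\zeta^{2j}-1$ does not (one checks $\zeta^{2j}\neq 1$ in this range). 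Hence $T\at{\alpha}=2^d a_0\in\q$ for every $\alpha=\sum a_j\zeta^j$. If $\alpha$ is fixed by all $\sigma_k$ then $T\at{\alpha}=2^d\alpha$, so $\alpha=a_0\in\q$; the forward direction is trivial since each $\sigma_k$ fixes $\q$ pointwise.

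The main obstacle is really the well-definedness underlying property 1: everything rests on the uniqueness of the $\zeta$-expansion, i.e. on the irreducibility of $x^{2^d}+1$, together with the recognition of $\sigma_k$ as the algebra homomorphism induced by $\zeta\mapsto\zeta^k$. Once that identification is made, additivity, multiplicativity, the periodicity of property 3 and the composition law of property 4 are all routine basis checks, and only the trace evaluation in property 2 requires the short geometric-series argument above.
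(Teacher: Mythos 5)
Your proof is correct, and on the only substantive item --- the backward direction of property 2 --- it takes a genuinely different route from the paper. For properties 1, 3 and 4 the two arguments agree in spirit (everything reduces to a check on monomials/basis elements), but you make explicit the foundation the paper leaves implicit: well-definedness of $\sigma_k$ rests on uniqueness of the expansion in powers of $\zeta=\exp(i\pi/2^d)$, i.e.\ on irreducibility of $x^{2^d}+1$. You also obtain multiplicativity more structurally, by recognizing $\sigma_k$ as the $\q$-algebra homomorphism $\q[x]/(x^{2^d}+1)\to\q\atsm{\zeta}$ induced by $x\mapsto\zeta^k$ (legitimate because $(\zeta^k)^{2^d}=(-1)^k=-1$ for odd $k$), whereas the paper verifies the identity $\sigma_k\at{a\zeta^j\cdot b\zeta^{j'}}=\sigma_k\at{a\zeta^j}\sigma_k\at{b\zeta^{j'}}$ on products of monomials and then invokes additivity. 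For property 2, the paper uses a single automorphism: since $\sigma_{2^d+1}\at{\zeta^j}=(-1)^j\zeta^j$, invariance under $\sigma_{2^d+1}$ forces all odd-indexed coefficients to vanish, so $x\in\q\atsm{\exp(i\pi/2^{d-1})}$, and it descends the tower of fields inductively until reaching $\q$. You instead use an averaging (trace) argument: $T\at{\alpha}=\sum_{k\ \mathrm{odd}}\sigma_k\at{\alpha}$ equals $2^d a_0$ by the geometric-series computation, and an element fixed by every $\sigma_k$ equals its average $T\at{\alpha}/2^d=a_0\in\q$. Both are sound; the paper's induction gets by with invariance under one particular map at each level of the tower, while your trace computation is a single non-inductive argument and is the standard Galois-theoretic proof that the fixed field of the full group $\set{\sigma_k}$ is $\q$.
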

\begin{proof}
Property three follows from $2\pi$ periodicity of $\exp\at{i\phi}$.
Property four is a direct consequence of the definition of $\sigma_k$.
Additivity also follows directly from the definition.
The fact that for all rational $a$, $\sigma\at{a} = a$ also follows from definition.
For rational $a$, $\sigma_k\at{a\cdot x} = \sigma_k\at{a}\cdot\sigma_k\at{x}$ again by definition of $\sigma_k$.

To establish multiplicativity it is sufficient to check that for all $j$ and $j'$ and rational $a,b$: 
\[
\sigma_k\at{ a \exp(i\pi j/2^d) \cdot b \exp(i\pi j'/2^d) } 
=
\sigma_k\at{ a \exp(i\pi j/2^d)} \cdot \sigma_k\at{ b \exp(i\pi j'/2^d) },
\]
and then use additivity. 

It remains to show that $\sigma_k\at{x} = x$ for all odd $k$ implies that $x$ is rational. 
Consider 
\[
x = \sum_{j=0}^{2^d-1} a_j \exp(i\pi j k/2^d) \in \q\atsm{\exp(i\pi/2^d)}
\]
and let us see what are the implications of the fact $\sigma_{2^d+1}\at{x} = x$.
Observe, that all $a_j$ for odd $j = 2j' + 1$ must be zero. 
Indeed, $\sigma_{2^d+1}\at{\exp(i\pi (2j'+1)/2^d)} = - \exp(i\pi (2j'+1)/2^d)$ and therefore $a_j = -a_j$. 
We have shown, that $x$ belongs to $\q\atsm{\exp(i\pi/2^{d-1})}$. 
Repeatedely applying above argument we conclude that $x$ must be rational.
\end{proof}
Now we can define $N_d$ as following
\begin{equation}
    N_d\at{x} = \prod_{k=0}^{2^d-1} \sigma_{2k+1}\at{x}
\end{equation}
and prove that $N_d$ has required properties:
\begin{prop} \label{prop:norm-prop}
Maps $N_d$ have the the following properties:
\begin{itemize}
  \item value of $N_d$ is always rational,
  \item $N_d$ is multiplicative, that is $N_d\at{x \cdot y} = N_d\at{x} \cdot N_d\at{y}$,
  \item $N_0$ is trivial, that is $N_0\at{x} = x$,
  \item $N_d\at{x}^2 = N_{d+1} \at{x}$.
\end{itemize}
\end{prop}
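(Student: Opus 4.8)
The plan is to verify the four properties in turn, drawing throughout on the homomorphism, rationality-criterion, periodicity, and composition properties of the maps $\sigma_k$ collected in \propos{sigmas}. Two of the four claims are immediate, and I would dispose of them first. Multiplicativity follows by pushing the product past $N_d$: since each $\sigma_{2k+1}$ is a ring homomorphism, $N_d\at{x\cdot y} = \prod_{k=0}^{2^d-1}\sigma_{2k+1}\at{x}\,\sigma_{2k+1}\at{y}$, and commutativity lets us regroup this as $N_d\at{x}\,N_d\at{y}$. For $N_0$ the product runs over the single index $k=0$, giving the single factor $\sigma_1\at{x} = x$, since $\sigma_1$ is the identity map.

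The remaining two properties both rest on understanding how the index set $\set{2k+1}$ sits inside the odd residues modulo a power of two, so I would set up that bookkeeping next. The values $2k+1$ for $k=0,\ldots,2^d-1$ are precisely the odd residues modulo $2^{d+1}$, of which there are $2^d$, and these are exactly the units of $\z/2^{d+1}\z$. For rationality, by the criterion in \propos{sigmas} it suffices to show $\sigma_\ell\at{N_d\at{x}} = N_d\at{x}$ for every odd $\ell$. Using multiplicativity of $\sigma_\ell$ and the composition law $\sigma_\ell\circ\sigma_j = \sigma_{\ell j}$, I would compute $\sigma_\ell\at{N_d\at{x}} = \prod_{k=0}^{2^d-1}\sigma_{\ell\at{2k+1}}\at{x}$. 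Since $\ell$ is odd, multiplication by $\ell$ is a bijection of the units of $\z/2^{d+1}\z$, so the indices $\ell\at{2k+1}$ are a permutation of the indices $2k+1$ modulo $2^{d+1}$; because $\sigma_m$ depends only on $m$ modulo $2^{d+1}$ by the periodicity property, the factors are merely reordered and the product is unchanged. Hence $N_d\at{x}$ is fixed by all $\sigma_\ell$ and is therefore rational.

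For the doubling identity $N_d\at{x}^2 = N_{d+1}\at{x}$ I would take $x$ in the subfield $\q\atsm{\exp(i\pi/2^d)}$, so that the level-$(d+1)$ maps $\sigma_k$ restrict on $x$ to the level-$d$ maps and, crucially, $\sigma_k\at{x}$ depends only on $k$ modulo $2^{d+1}$. The key counting observation is that as $k$ runs over $0,\ldots,2^{d+1}-1$, each odd residue modulo $2^{d+1}$ is attained exactly twice by $2k+1$: indeed $2k+1 \equiv 2k'+1 \pmod{2^{d+1}}$ if and only if $k\equiv k' \pmod{2^d}$, so each class of $k$ modulo $2^d$ contributes one repetition. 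Substituting this into the $2^{d+1}$-fold product defining $N_{d+1}\at{x}$ collapses it into the square of the $2^d$-fold product defining $N_d\at{x}$, as desired.

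I expect the doubling identity to be the main obstacle, and specifically the point requiring care is the implicit claim that the automorphisms at level $d+1$ agree with those at level $d$ when restricted to the smaller field, so that the periodicity modulo $2^{d+1}$ granted by \propos{sigmas} may legitimately be invoked for $N_{d+1}$. This is exactly what makes the consistency of the extended $v_2$ work, and I would verify it directly from the definition of $\sigma_k$ on the generators, writing $\exp\atsm{i\pi j'/2^d} = \exp\atsm{i\pi\cdot 2j'/2^{d+1}}$ and checking that both levels send it to the same image. The rest is routine manipulation of the unit group $\at{\z/2^{d+1}\z}^{\times}$.
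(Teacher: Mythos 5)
Your proof is correct and follows essentially the same route as the paper's: multiplicativity from the homomorphism property of the $\sigma_k$, triviality of $N_0$ from $\sigma_1 = \mathrm{id}$, rationality via the fact that multiplication by an odd $\ell$ permutes the odd residues modulo $2^{d+1}$, and the doubling identity from the two-to-one collapse of the index set $\{2k+1\}$ modulo $2^{d+1}$. Your explicit verification that the level-$(d+1)$ maps $\sigma_k$ agree with the level-$d$ maps on the subfield $\q\atsm{\exp(i\pi/2^d)}$ is a point the paper leaves implicit, but it is a careful refinement of the same argument rather than a different approach.
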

\begin{proof}
Multiplicativity of $N_d$ follows from the multiplicativity of $\sigma_k$.
Let us check that $\sigma_{j}\at{N_d\at{x}} = N_d\at{x}$ for all odd $j$ to establish that $N_d$ is rational using the second property of $\sigma_{j}$ established in \propos{sigmas}:
\[
\sigma_{j}\at{N_d\at{x}} 
= 
\prod_{k \in \set{1,3,\ldots,2\cdot2^d -1 }} \sigma_j\at{\sigma_k\at{x}} 
= 
\prod_{k \in \set{1,3,\ldots,2\cdot2^d -1 }} \at{\sigma_{kj\,\mathrm{mod} (2\cdot 2^d)}\at{x}}
\]
We used properties three and four from \propos{sigmas} to establish the last equality.
Above expression is equal to $N_d\at{x}$ because map $k \mapsto kj\,\mathrm{mod} (2\cdot2^d)$ maps set 
\[
\set{1,3,\ldots,2\cdot2^d -1 }\]
to itself when $j$ is odd.

Map $N_0$ is equal to $\sigma_1$ and therefore trivial. Consider now expression for $N_{d+1}$ for $x$ from $\q\atsm{\exp(i\pi/2^d)}$: 
\[
N_{d+1}\at{x} =
\prod_{k=0}^{2^{d+1}-1} \sigma_{2k+1 }\at{x} = 
\prod_{k=0}^{2^{d+1}-1} \sigma_{2k+1 \, \mathrm{mod} (2\cdot 2^d) }\at{x}
\]
Note that function $k \mapsto 2k+1\,\mathrm{mod} (2\cdot2^d)$ takes the same value for $k$ and $k+2^d$ and therefore the expression above equals to $N_d\at{x}^2$.
\end{proof}

\subsubsection{Certain values of \texorpdfstring{$v_2$}{v₂}}

To compute many useful values of $v_2\at{x}$ it is sufficient to know values of $N_d\at{x}$ given by the following proposition:
\begin{prop} \label{prop:norm-two}
For all $j$, $N_d\at{\exp\at{i\pi j/2^d}} = 1$ and $N_d\at{1 - \exp\at{i\pi (2j+1)/2^d}} = 2$.
\end{prop}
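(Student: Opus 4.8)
The plan is to compute both norms directly from the definition $N_d(x) = \prod_{k=0}^{2^d-1}\sigma_{2k+1}(x)$, using the multiplicativity and linearity of the $\sigma_k$ established in \propos{sigmas} together with the fact that $\sigma_k$ acts on the generator by $\sigma_k(\exp(i\pi/2^d)) = \exp(i\pi k/2^d)$. The key structural observation throughout is that the exponents $2k+1$ for $k=0,\ldots,2^d-1$ run over a complete set of odd residues modulo $2^{d+1}$.

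First, for $N_d(\exp(i\pi j/2^d))$, I would expand the product into a single exponential:
\[
N_d\at{\exp(i\pi j/2^d)} = \prod_{k=0}^{2^d-1}\exp\at{i\pi j(2k+1)/2^d} = \exp\at{\frac{i\pi j}{2^d}\sum_{k=0}^{2^d-1}(2k+1)}.
\]
Since the sum of the first $2^d$ odd numbers is $\sum_{k=0}^{2^d-1}(2k+1) = 2^{2d}$, the exponent simplifies to $i\pi j 2^d$, and because $j 2^d$ is an even integer for $d \ge 1$ this gives $N_d(\exp(i\pi j/2^d)) = 1$.

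For the second identity, set $\omega = \exp(i\pi(2j+1)/2^d)$, a primitive $2^{d+1}$-th root of unity. Using multiplicativity and linearity of $\sigma_{2k+1}$ I would write $\sigma_{2k+1}(1-\omega) = 1 - \exp\at{i\pi(2j+1)(2k+1)/2^d}$. The main step is to argue that as $k$ ranges over $0,\ldots,2^d-1$ the products $(2j+1)(2k+1)$ run over all odd residues modulo $2^{d+1}$: this holds because $2j+1$ is odd, hence invertible modulo $2^{d+1}$, so multiplication by it permutes the odd residue classes. Consequently
\[
N_d\at{1-\omega} = \prod_{k=0}^{2^d-1}\at{1 - \exp(i\pi(2k+1)/2^d)}.
\]
Finally I would recognise this as a value of the polynomial $x^{2^d}+1 = \prod_{k=0}^{2^d-1}\at{x - \exp(i\pi(2k+1)/2^d)}$, whose roots are exactly the $2^d$-th roots of $-1$; evaluating at $x=1$ gives $N_d(1-\omega) = 1^{2^d}+1 = 2$.

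The main obstacle is the permutation argument in the second part — carefully verifying that multiplication by the odd number $2j+1$ is a bijection of the odd residues modulo $2^{d+1}$ and that the induced permutation of exponents leaves the product invariant (equivalently, that it reindexes the roots of $x^{2^d}+1$ among themselves). Everything else is routine: the first part is a sum of exponents, and the final step is a direct substitution into the factorisation of $x^{2^d}+1$.
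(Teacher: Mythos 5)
Your proof is correct and follows essentially the same route as the paper: the first identity via $\sum_{k=0}^{2^d-1}(2k+1)=4^d$, and the second by recognising the product as the factorisation $x^{2^d}+1=\prod_{k=0}^{2^d-1}\bigl(x-\exp(i\pi(2k+1)/2^d)\bigr)$ evaluated at $x=1$. The only difference is that you spell out the reindexing step (multiplication by the odd unit $2j+1$ permutes the odd residues modulo $2^{d+1}$), which the paper's proof leaves implicit in asserting that the two expressions ``coincide.''
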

\begin{proof}
First note that
\[
N_d\at{\exp\at{i\pi j/2^d}} =
\exp\at{i\pi j \sum_{k=0}^{2^d-1}(2k+1)/2^d } = 
\exp\at{i\pi j 4^d/2^d } = 1
\]
Second recall that polynomial $\Phi_d\at{x} = x^{2^d} + 1$ can be written as 
\[
\Phi_d\at{x} = \prod_{k=0}^{2^d-1}\at{x - \exp\at{i(2k+1)\pi/2^d} } 
\]
because each of $\exp\at{i(2k+1)\pi/2^d}$ for $k = 0 ,\ldots, 2^{d-1}$ is a root of $\Phi_d\at{x}$.
Expression for $N_d\at{1 - \exp\at{i\pi (2j+1)/2^d}}$ coincides with the expression for $\Phi_d\at{1}=2$.
\end{proof}
Using above proposition and properties of $v_2$ we find that for odd $k$
\begin{align}
v_2\at{ \sin(\pi k /2^d) } 
= 
v_2\at{ 2\sin(\pi k /2^d) } - 1 
=
v_2\at{ \exp(i \pi k /2^d) - \exp(- i\pi k /2^d) } - 1 
= \\
= v_2\at{ 1 - \exp(\pi k /2^{d-1}) } - 1 = 1/2^{d-1} - 1
\end{align}
Similar calculation shows that $v_2\at{ \cos(\pi k /2^d) } = 1/2^{d-1} - 1$. 

In \propos{norm-two} we saw that $N_d$ takes integer values for two elements of a ring of cyclotomic integers
\[
\z\of{\exp(i\pi/2^d)} = \set{ \sum_{j=0}^{2^d-1} a_j \exp(i\pi j/2^d) : \text{ where } a_j \text{ are integers} }
\]
This is true more generally
\begin{prop} \label{prop:int-norm}
Let $x$ be and element of $\z\of{\exp(i\pi/2^d)}$, then $N_d\at{x}$ is an integer. 
If $x'$ is an element of
\[
\mathcal{R}_d = \z\of{\exp(i\pi/2^d),1/2} = \set{ \frac{1}{2^k} \sum_{j=0}^{2^d-1} a_j \exp(i\pi j/2^d) : \text{ where } a_j, k \text{ are integers} },
\]
then $N_d\at{x'} = a/2^K$ for integers $a,K$.
\end{prop}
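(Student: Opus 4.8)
The plan is to deduce both statements from the multiplicativity and rationality of $N_d$ already established in \propos{norm-prop}, combined with the fact that each $\sigma_{2k+1}$ preserves the ring $\z\of{\exp(i\pi/2^d)}$, together with the standard fact that the only rational numbers lying in this ring are the ordinary integers.

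First I would observe that each map $\sigma_{2k+1}$ sends $\z\of{\exp(i\pi/2^d)}$ into itself. Indeed, by \propos{sigmas} the map $\sigma_{2k+1}$ is a ring homomorphism fixing $\q$, and it sends the generator $\zeta := \exp(i\pi/2^d)$ to $\zeta^{2k+1}$; since $\zeta^{2k+1}$ is again an integer power of $\zeta$ up to a sign (using $\zeta^{2^d} = -1$ and $\zeta^{2^{d+1}}=1$), any integer polynomial in $\zeta$ is carried to an integer polynomial in $\zeta$. Consequently, for $x \in \z\of{\exp(i\pi/2^d)}$ the product $N_d(x) = \prod_{k=0}^{2^d-1}\sigma_{2k+1}(x)$ is again an element of $\z\of{\exp(i\pi/2^d)}$.

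Next I would invoke \propos{norm-prop}, which already gives $N_d(x) \in \q$. Thus $N_d(x)$ is simultaneously rational and an element of $\z\of{\exp(i\pi/2^d)}$, so to finish the first claim I only need $\z\of{\exp(i\pi/2^d)} \cap \q = \z$. This follows from the $\q$-linear independence of $\{1,\zeta,\ldots,\zeta^{2^d-1}\}$, which in turn follows from the irreducibility over $\q$ of $\Phi_d(x) = x^{2^d}+1$, the very polynomial exhibited in \propos{norm-two}: writing a rational $r \in \z\of{\exp(i\pi/2^d)}$ as an integer combination $\sum_{j} c_j \zeta^j$ and comparing with the representation $r = r\cdot 1$ via uniqueness forces $c_0 = r \in \z$ and $c_j = 0$ for $j>0$, so $N_d(x) \in \z$. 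For the second statement I would write $x' \in \mathcal{R}_d$ as $x' = x/2^m$ with $x \in \z\of{\exp(i\pi/2^d)}$ and $m$ a non-negative integer; since $\sigma_{2k+1}(2^m) = 2^m$ we get $N_d(2^m) = (2^m)^{2^d}$, and multiplicativity yields $N_d(x') = N_d(x)/2^{m\,2^d}$. As $N_d(x)$ is an integer by the first part, $N_d(x') = a/2^K$ with $a = N_d(x)$ and $K = m\,2^d$, as claimed.

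The main obstacle is the purely algebraic input $\z\of{\exp(i\pi/2^d)} \cap \q = \z$; everything else is bookkeeping with the homomorphism properties of the $\sigma_k$. The cleanest justification of that input is the irreducibility of $x^{2^d}+1$ over $\q$ (equivalently, that $\zeta$ has degree $2^d$), which simultaneously guarantees that $\{\zeta^j\}_{j=0}^{2^d-1}$ is a $\z$-basis of the ring and that the rational-coefficient representation of any field element is unique.
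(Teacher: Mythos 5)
Your proposal is correct and takes essentially the same route as the paper: the paper's (very terse) proof of the first claim is precisely your observation that each $\sigma_{2k+1}$ preserves $\z\of{\exp(i\pi/2^d)}$, so that $N_d\at{x}$ is a rational element of this ring and hence an integer, and your handling of the second claim (write $x'=x/2^m$, use multiplicativity, take $K=2^d m$) is word-for-word the paper's argument. The only cosmetic difference is that where you justify $\z\of{\exp(i\pi/2^d)}\cap\q=\z$ via uniqueness of representations (irreducibility of $x^{2^d}+1$), the paper's appeal to ``a proof technique similar to the proof of rationality'' amounts to re-running the coefficient-descent of \propos{sigmas} with integer coefficients --- both rest on the same linear-independence fact the paper uses implicitly throughout.
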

\begin{proof}
Consider the case when $x$ is from $\z\of{\exp(i\pi/2^d)}$. 
The result follows from a proof technique similar to the proof of rationality of $N_d\at{x}$, when $x$ is from $\q\at{\exp(i\pi/2^d)}$ in \propos{norm-prop}.

The second case follows from representing $x' = x / 2^k$ for some $x$ from $\z\of{\exp(i\pi/2^d)}$ and some integer $k$. 
Next we notice that by properties of $N_d$ from \propos{norm-prop} of $N_d\at{x'} = N_d\at{x} / 2^K$ for $K = 2^d k$.
\end{proof}

The following proposition gives a necessary condition for an element of $\mathcal{R}_d$ to be equal to $\pm 1$, in terms of $v_2$. This is the key to the proof of the fact that dyadic monotone $\mu_2$ is positive and is zero if and only if the corresponding state is the stabilizer state. 

\begin{prop} \label{prop:non-positivity}
Let $x$ be an element of

such that for all odd $k$, $\abs{\sigma_k\atsm{x}} \le 1$, then $v_2\at{x} \le 0$ and the equality is achieved if and only if $x = \pm 1$.
\end{prop}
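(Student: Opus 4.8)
The plan is to push everything through the defining relation $v_2(x) = v_2(N_d(x))/2^d$, reducing the claim to an elementary statement about the single rational number $N_d(x)$. First I would note that the hypothesis bounds the norm directly: since $N_d(x) = \prod_{k=0}^{2^d-1}\sigma_{2k+1}(x)$ is a product of $2^d$ conjugates each of modulus at most one, we get $\abs{N_d(x)} \le 1$. By \propos{int-norm}, because $x \in \mathcal{R}_d$, the norm is a dyadic rational, i.e. $N_d(x) = a/2^K$ for integers $a,K$. (I assume throughout that $x \ne 0$; then every factor $\sigma_{2k+1}(x)$ is nonzero since the $\sigma_k$ are field automorphisms, so $N_d(x)\ne 0$. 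The degenerate case $x=0$ gives $v_2=+\infty$ and is excluded in every application, where $x$ is a nonzero Pauli expectation.)

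For the inequality $v_2(x) \le 0$, equivalently $v_2(N_d(x)) \le 0$, I would argue by contradiction. Suppose $v_2(N_d(x)) = m \ge 1$. Factoring out the power of two from $a$ shows $N_d(x) = 2^m b$ with $b$ an odd integer, hence an honest nonzero integer whose absolute value is $2^m\abs{b} \ge 2$. This contradicts $\abs{N_d(x)} \le 1$, so $v_2(N_d(x)) \le 0$ and therefore $v_2(x) \le 0$.

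For the equality case, the forward implication $x = \pm 1 \Rightarrow v_2(x) = 0$ is immediate. For the converse, suppose $v_2(x)=0$, so $v_2(N_d(x))=0$; then $N_d(x)$ is a dyadic rational of valuation zero, i.e. an odd integer, and together with $\abs{N_d(x)} \le 1$ this forces $N_d(x) = \pm 1$. Now the product of the $2^d$ moduli $\abs{\sigma_{2k+1}(x)}$, each at most one, equals $\abs{N_d(x)} = 1$, so every factor is exactly one; in particular $\abs{x} = \abs{\sigma_1(x)} = 1$. Since $x$ is real, $\abs{x}=1$ yields $x = \pm 1$, as required.

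The main obstacle is precisely this converse, and it is where the reality of $x$ is essential: without it, roots of unity such as $x=i$ satisfy all the conjugate bounds and have $v_2=0$ yet are not $\pm 1$. The argument thus rests on two points that must be handled carefully — that a dyadic rational of modulus at most one and valuation zero can only be $\pm 1$, and that the \emph{saturated} product of moduli pins down $\abs{x}=1$ — after which reality closes the gap. By contrast, the forward inequality is essentially automatic once \propos{int-norm} supplies integrality (dyadicity) of the norm.
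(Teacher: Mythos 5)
Your proof is correct and follows essentially the same route as the paper's: bound $\abs{N_d(x)} \le 1$ as a product of $2^d$ conjugates of modulus at most one, invoke \propos{int-norm} to write $N_d(x)$ as a dyadic rational, deduce $v_2(N_d(x)) \le 0$, and in the equality case force $N_d(x) = \pm 1$ and hence saturation of every conjugate modulus. Your handling of the equality case is in fact more careful than the paper's own terse conclusion: you correctly flag that one must assume $x \ne 0$ and $x$ real — as holds in every application, where $x$ is a Hermitian Pauli expectation — since otherwise roots of unity such as $x = i \in \mathcal{R}_d$ satisfy all the hypotheses with $v_2(x) = 0$ yet are not $\pm 1$.
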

\begin{proof}
Let us first show that $v_2\at{x}$ is non-positive.
Condition $\abs{\sigma_k\atsm{x}} \le 1$ implies that $N_d\at{x} \le 1$. 
Because $x$ is an element of $\mathcal{R}_d$ it can be written as $z/2^k$ for $z$ from 

Following the proof~\propos{norm-prop} of rationality of $N_d$, one can show that $N_d\atsm{z} = n$ is an integer and therefore $N_d\at{x} = n / 2^K $ and has absolute value less or equal to $1$. 
For any number of the form $n / 2^K$ with absolute value less or equal to $1$ value of $v_2$ is non-positive and $v_2$ is zero if an only if $n/2^K = \pm 1$. 
We see that $v_2\at{x}$ is non-positive and is zero if and only if $N_d\at{x} = \pm 1$.

Let us now show that $v_2\at{x}$ equal zero implies that $x = \pm 1$.
We have already shown that $N_d\at{x} = \pm 1$.
We also have condition that $\abs{\sigma_k\atsm{x}} \le 1$ for all $k$. 
The only way $N_d\at{x}$ can be equal to $\pm 1$ is if  $\sigma_1\atsm{x} = x = 1$ which conclude the proof.
\end{proof}

\subsubsection{Inequality \texorpdfstring{$v_2(x+y) \ge \min(v_2(x),v_2(y))$}{v₂(x+y) ≥ min(v₂(x),v₂(y))}}

Recall, that in \sec{prob-half-ccz-bounds} the inequality $v_2(x+y) \ge \min(v_2(x),v_2(y))$ for rational $x,y$ was first established for integer $x$ and $y$ and then extended to rationals by using additivity of $v_2$. 
We will follow the same strategy in the general case and introduce cyclotomic integers:
\[
\z\of{\exp(i\pi/2^d)} = \set{ \sum_{j=0}^{2^d-1} a_j \exp(i\pi j/2^d) : \text{ where } a_j \text{ are integers} }
\]
Indeed, for arbitrary $x,y$ from $\q\at{\exp(i\pi/2^d)}$ there always exist an integer $C$, such that $x'= Cx, y' = Cy$ are both cyclotomic integers from $\z\of{\exp(i\pi/2^d)}$. 
The general inequality easily follows from the inequality for cyclotimic integers:
\begin{align*}
v_2\at{x+y} 
= v_2\at{x'+y'} + v_2\atsm{1/C} 
\ge
\min\at{v_2\at{x'},v_2\at{y'}} + v_2\atsm{1/C} 
= 
\min\at{v_2\at{x'/C},v_2\at{y'/C}}
\end{align*}

To complete the proof of the inequality we need the proposition below.
Once this proposition is established, we can follow the same proof idea as for the rational version of the inequality in \sec{prob-half-ccz-bounds} with $2$ replaced by $1-\exp(i\pi/2^d)$.
\begin{prop} 
Let $x$ be an element of $\z\of{\exp(i\pi/2^d)}$, then $v_2\at{x} \ge 0$. 
Moreover, for $k = 2^d v_2\at{x}$, $x$ can be written as $x' (1-\exp(i\pi/2^d))^k$ for $x'$ from $\z\of{\exp(i\pi/2^d)}$ such that $v_2\at{x'} = 0$.
\end{prop}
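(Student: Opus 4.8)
The plan is to recognise $v_2$, restricted to $R:=\z\of{\exp(i\pi/2^d)}$, as a normalisation of the $\lambda$-adic valuation attached to the single element $\lambda := 1-\exp(i\pi/2^d)$, and to prove both assertions by peeling off one factor of $\lambda$ at a time. Throughout write $\zeta = \exp(i\pi/2^d)$, so that $R=\z\of{\zeta}$ with $\zeta$ a root of $\Phi_d\at{X}=X^{2^d}+1$. The first assertion is then immediate from \propos{int-norm}: for $x\in R$ the norm $N_d\at{x}$ is an ordinary integer, so $v_2\at{N_d\at{x}}\ge 0$ and hence $v_2\at{x}=v_2\at{N_d\at{x}}/2^d \ge 0$. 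The same identity shows that $k:=2^d v_2\at{x}=v_2\at{N_d\at{x}}$ is a non-negative integer, so the exponent in the statement is well defined.

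For the second assertion I would induct on $k=v_2\at{N_d\at{x}}$, the base case $k=0$ being trivial (take $x'=x$). The heart of the argument is the lemma: \emph{if $2$ divides $N_d\at{x}$, then $\lambda$ divides $x$ in $R$.} I would prove this from three elementary facts. (i) For every odd $r$ and every $x\in R$ one has $\sigma_r\at{x}\equiv x \pmod{\lambda}$: by $\z$-linearity it suffices to check this on monomials, where $\sigma_r\at{\zeta^j}-\zeta^j = \zeta^j\at{\zeta^{j(r-1)}-1}$ is divisible by $\zeta-1=-\lambda$ since $\zeta^m-1=\at{\zeta-1}\at{\zeta^{m-1}+\ldots+1}$. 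Multiplying the $2^d$ factors of the norm then gives $N_d\at{x}=\prod_{j=0}^{2^d-1}\sigma_{2j+1}\at{x}\equiv x^{2^d}\pmod{\lambda}$. (ii) $\lambda$ divides $2$ in $R$, since the factorisation $2=\Phi_d\at{1}=\prod_{j=0}^{2^d-1}\at{1-\exp\at{i(2j+1)\pi/2^d}}$ from the proof of \propos{norm-two} has $\lambda$ as its $j=0$ factor. (iii) $R/\at{\lambda}$ is an integral domain: using the presentation $R\cong\z\of{X}/\at{\Phi_d\at{X}}$ via $X\mapsto\zeta$ and reducing further modulo $\at{1-X}$ sends $\Phi_d\at{X}\mapsto\Phi_d\at{1}=2$, so $R/\at{\lambda}\cong\z/2\z=\mathbb{F}_2$. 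Combining these: if $2\mid N_d\at{x}$ then (ii) gives $\lambda\mid N_d\at{x}$, so (i) gives $x^{2^d}\equiv 0\pmod{\lambda}$, and (iii) forces $x\equiv 0 \pmod{\lambda}$, i.e. $\lambda\mid x$.

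With the lemma established the induction closes at once. If $k\ge 1$ then $2\mid N_d\at{x}$, so $x=\lambda y$ for some $y\in R$; since $N_d\at{\lambda}=2$ by \propos{norm-two} we get $N_d\at{y}=N_d\at{x}/2$ and therefore $v_2\at{N_d\at{y}}=k-1$. The inductive hypothesis applied to $y$ yields $y=x'\lambda^{k-1}$ with $v_2\at{x'}=0$, whence $x=x'\lambda^{k}$, as required.

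I expect the main obstacle to be fact (i), namely justifying rigorously that the norm is congruent to $x^{2^d}$ modulo $\lambda$; reducing it to the monomial computation $\sigma_r\at{\zeta^j}-\zeta^j=\zeta^j\at{\zeta^{j(r-1)}-1}$ is precisely what keeps the argument elementary and sidesteps the general theory of ramification of $2$ in cyclotomic fields. Everything else is bookkeeping with the multiplicativity of $N_d$ and $v_2$ already in hand, and with the explicit value $N_d\at{\lambda}=2$.
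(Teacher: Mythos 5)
Your proof is correct, and it takes a genuinely different route from the paper's. The paper works top-down: it writes $x = \alpha_d^k x'$ where $k$ is the largest power of $\alpha_d = 1-\exp(i\pi/2^d)$ dividing $x$, and then shows $2 \nmid N_d\at{x'}$ by contradiction --- if $2 \mid N_d\at{x'}$ then, since $\alpha_d$ is prime (\propos{prime}, cited from the general fact that an algebraic integer of prime norm is a prime element), $\alpha_d$ divides some conjugate $\sigma_{2k+1}\at{x'}$; applying the inverse automorphism gives $\sigma_{2j+1}\at{\alpha_d} \mid x'$, and the conjugate-unit relation of \propos{divisibility} upgrades this to $\alpha_d \mid x'$, a contradiction. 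You work bottom-up, inducting on $v_2\at{N_d\at{x}}$ and peeling off one factor of $\lambda$ at a time; your key lemma ($2 \mid N_d\at{x}$ implies $\lambda \mid x$) rests on the congruence $N_d\at{x}\equiv x^{2^d} \pmod{\lambda}$, which follows from $\sigma_r\at{x}\equiv x \pmod{\lambda}$, together with the explicit quotient computation $R/\at{\lambda}\cong \z\of{X}/\at{X^{2^d}+1,\,1-X}\cong\mathbb{F}_2$. The trade-off is this: the paper's argument exhibits the standard number-theoretic structure directly ($\alpha_d$ is a ramified prime whose Galois conjugates are associates, in the spirit of \rema{expert-summary}), whereas yours is more self-contained --- the quotient computation replaces the external ``prime norm implies prime element'' fact (and in fact proves the stronger statement that $\at{\lambda}$ is maximal), the congruence $\sigma_r\equiv\mathrm{id}\pmod{\lambda}$ makes \propos{divisibility} unnecessary since it handles all conjugates at once, and your induction variable is manifestly a finite non-negative integer, which sidesteps a step the paper leaves implicit, namely that a largest power of $\alpha_d$ dividing $x$ exists at all (this needs the observation $\alpha_d^k \mid x \Rightarrow 2^k \mid N_d\at{x}$). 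Both proofs rely equally on the irreducibility of $X^{2^d}+1$ over $\q$ --- the paper needs it for $\sigma_k$ to be well defined, you need it for the presentation $R\cong\z\of{X}/\at{X^{2^d}+1}$ --- so neither is less rigorous on that count.
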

\begin{proof}
Let us denote $\alpha_d = (1-\exp(i\pi/2^d))$ and choose $k$ to be the biggest power of $\alpha_d$ that divides 
\footnote{For $x,y$ from $\z\of{\exp(i\pi/2^d)}$, we say that $x$ divides $y$ if
there exist $r$ from $\z\of{\exp(i\pi/2^d)}$ such that $y = rx$. }
 $x$. 
We can write $x = \alpha_d^k x'$ such that $x'$ is from $\z\of{\exp(i\pi/2^d)}$ such that  $\alpha_d$ does not divide $x'$. 
It remains to show that $2$ does not divide $N_d\at{x'}$, because this will establish that $v_2\at{x'}=0$ and $v_2\at{x} = k v_a\at{\alpha_d} = k/2^d$.

Let us show that $2$ does not divide $N_d\at{x'}$.
Recall that $N_d\at{x'}$ is an integer for any $x'$ from  $\z\of{\exp(i\pi/2^d)}$ according to \propos{int-norm}.
Suppose now that $2$ divides $N_d\at{x'}$. 
This implies that $\alpha_d$ divides $N_d\at{x'}$.
Because $\alpha_d$ is prime according to \propos{prime}, $\alpha_d$ must divide $\sigma_{2k+1}\at{x'}$ for some $k$. 
There exist $j$ such that $(2j+1)(2k+1)\mathrm{mod}\,2^{d+1} = 1$ and $\sigma_{2j+1}\atsm{\sigma_{2k+1}\atsm{x}} = x$ for such j.
Therefore $\sigma_{2j+1}\at{\alpha_d}$ divides $x'$.
However, $\alpha_d$ divides $\sigma_{2j+1}\at{\alpha_d}$ according to \propos{divisibility} and therefore $\alpha_d$ divides $x'$ which is a contradiction.
\end{proof}

\begin{prop} \label{prop:prime}
$\alpha_d = (1-\exp(i\pi/2^d))$ is a prime element of $\z\of{\exp(i\pi/2^d)}$.
That is, for $x$ or $y$ from $\z\of{\exp(i\pi/2^d)}$, if $\alpha_d$ divides $xy$ then $\alpha_d$ divides $x$ or $y$.
\end{prop}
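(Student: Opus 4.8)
The plan is to prove primality by identifying the quotient ring $\z\of{\exp(i\pi/2^d)}/(\alpha_d)$ with the two-element field $\f_2$, where $\alpha_d := 1-\exp(i\pi/2^d)$. Since $\z\of{\exp(i\pi/2^d)}$ is a subring of $\c$ and hence an integral domain, and since $\alpha_d\neq 0$ for $d\ge 1$, it suffices to show that $(\alpha_d)$ is a maximal ideal: a nonzero element generating a maximal (hence prime) ideal of a domain is a prime element, which is exactly the divisibility statement to be proved. Intuitively this is the statement that $2$ is totally ramified with $\alpha_d$ the prime above it, but the quotient argument below uses only the norm results already established.

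First I would record two facts about $\alpha_d$. By \propos{norm-two} with $j=0$ its norm is $N_d(\alpha_d)=2$. Because $\sigma_1$ is the identity map, the defining product factors as $N_d(\alpha_d)=\alpha_d\prod_{k=1}^{2^d-1}\sigma_{2k+1}(\alpha_d)$, and each factor $\sigma_{2k+1}(\alpha_d)=1-\exp(i\pi(2k+1)/2^d)$ lies in $\z\of{\exp(i\pi/2^d)}$; hence $\alpha_d\mid 2$ in the ring. At the same time $\alpha_d$ is not a unit: were it a unit, multiplicativity of $N_d$ (from \propos{norm-prop}) together with integrality of the norm (from \propos{int-norm}) would force $N_d(\alpha_d)=\pm 1$, contradicting $N_d(\alpha_d)=2$.

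Next I would analyze the quotient. Since $\alpha_d=1-\exp(i\pi/2^d)$, we have $\exp(i\pi/2^d)\equiv 1 \pmod{\alpha_d}$, so the ring generator collapses to $1$ and the natural homomorphism $\z\to \z\of{\exp(i\pi/2^d)}/(\alpha_d)$ is surjective. Its kernel is $\z\cap(\alpha_d)=m\z$ for some integer $m\ge 0$. The divisibility $\alpha_d\mid 2$ gives $2\in(\alpha_d)$, so $m\mid 2$; and $m\neq 1$, since $m=1$ would place $1\in(\alpha_d)$ and make $\alpha_d$ a unit, which we have excluded. Hence $m=2$ and $\z\of{\exp(i\pi/2^d)}/(\alpha_d)\cong \z/2\z=\f_2$. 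As $\f_2$ is a field, $(\alpha_d)$ is maximal, so prime, and from $\alpha_d\mid xy$ one gets $xy\equiv 0$ in $\f_2$, whence $x\equiv 0$ or $y\equiv 0$, i.e. $\alpha_d\mid x$ or $\alpha_d\mid y$.

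The one step deserving care—and the main obstacle—is pinning down $\z\cap(\alpha_d)$ as exactly $2\z$ rather than all of $\z$; this is precisely the assertion that $\alpha_d$ is not a unit. This is why the quantitative norm computation $N_d(\alpha_d)=2$ (rather than the weaker fact $\alpha_d\neq 0$) does the essential work, and it is the only place where properties of $N_d$ beyond formal manipulation are needed.
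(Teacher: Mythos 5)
Your proof is correct, but it takes a genuinely different route from the paper. The paper's proof is a one-liner: it invokes $N_d(\alpha_d)=2$ from \propos{norm-two} together with the standard fact of algebraic number theory that an element of the ring of integers of a number field whose norm is a rational prime is a prime element. Your argument instead verifies primality directly by computing the quotient ring: you extract $\alpha_d \mid 2$ from the factorization $N_d(\alpha_d)=\alpha_d\prod_{k=1}^{2^d-1}\sigma_{2k+1}(\alpha_d)$, rule out $\alpha_d$ being a unit via multiplicativity and integrality of $N_d$, and then use $\exp(i\pi/2^d)\equiv 1 \pmod{\alpha_d}$ to see that $\z$ surjects onto $\z\of{\exp(i\pi/2^d)}/(\alpha_d)$ with kernel exactly $2\z$, so the quotient is $\f_2$ and the ideal $(\alpha_d)$ is maximal, hence prime. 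What your version buys is self-containedness: the fact the paper cites is usually proved via the lattice-index identity $\abs{R/(\alpha)}=\abs{N(\alpha)}$ (or via Dedekind-domain ideal theory), and as commonly stated it presumes one is working in the full ring of integers of the cyclotomic field --- true for $\z\of{\exp(i\pi/2^d)}$ but nowhere established in the paper. Your quotient computation sidesteps both issues and uses only \propos{norm-two}, \propos{norm-prop}, and \propos{int-norm}, all already proved. What the paper's version buys is brevity, and the cited fact is entirely standard. One small point of care, which you correctly flag yourself: irreducibility alone (the naive norm argument) would not suffice, since irreducible elements need not be prime outside UFDs; your maximal-ideal argument is the right way to get the actual divisibility statement.
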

\begin{proof}
The result follows from the fact that $N_d\at{\alpha_d} = 2$ is a prime number and the fact that every element of ring of integers of a number field with a prime norm is a prime element of the ring of integers.
\end{proof}

\begin{prop} \label{prop:divisibility}
Number $u_j = \at{1 - \exp\at{i \pi(2j-1)/2^d}}/\at{1 - \exp\at{i \pi/2^d}}$ is a unit in $\z\of{\exp(i\pi/2^d)}$. In other words, $u_j$ and $u_j^{-1}$ are both in $\z\of{\exp(i\pi/2^d)}$. 
\end{prop}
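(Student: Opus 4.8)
The plan is to exhibit both $u_j$ and $u_j^{-1}$ explicitly as polynomials in $\zeta := \exp\at{i\pi/2^d}$ with integer coefficients; since a ring element whose inverse also lies in the ring is by definition a unit, this suffices. The underlying fact is that $u_j$ is a \emph{cyclotomic unit}: $\zeta$ is a primitive $2^{d+1}$-th root of unity (note $\exp\at{i\pi/2^d} = \exp\at{2\pi i/2^{d+1}}$), and the ratio $\at{1-\zeta^a}/\at{1-\zeta}$ is a unit whenever $a$ is coprime to $2^{d+1}$, which holds here because $a = 2j-1$ is odd.

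First I would reduce the exponent modulo $2^{d+1}$. Since $\zeta^{2^{d+1}} = 1$, we may replace $2j-1$ by its residue $a \in \set{1,3,\ldots,2^{d+1}-1}$ without changing $\zeta^{2j-1}$, so that $u_j = \at{1-\zeta^{a}}/\at{1-\zeta}$ with $a$ a positive odd integer. The membership $u_j \in \z\of{\exp(i\pi/2^d)}$ then follows from the finite geometric-series identity
\[
\frac{1-\zeta^{a}}{1-\zeta} = 1 + \zeta + \cdots + \zeta^{a-1},
\]
whose right-hand side manifestly has integer coefficients.

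For the inverse, I would use that $a$ is invertible modulo $2^{d+1}$: choose a positive odd $b$ with $ab \equiv 1 \pmod{2^{d+1}}$ (the inverse of an odd number modulo a power of two is again odd). Setting $\eta := \zeta^{a}$, which is again a primitive $2^{d+1}$-th root of unity because $\gcd\at{a,2^{d+1}}=1$, so that $\z\of{\eta} = \z\of{\zeta}$, the same geometric-series identity gives
\[
u_j^{-1} = \frac{1-\zeta}{1-\zeta^{a}} = \frac{1-\eta^{b}}{1-\eta} = 1 + \eta + \cdots + \eta^{b-1} \in \z\of{\exp(i\pi/2^d)},
\]
where I have used $\eta^{b} = \zeta^{ab} = \zeta$. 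Hence both $u_j$ and $u_j^{-1}$ lie in the ring, so $u_j$ is a unit.

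The only subtlety — and the step I would treat most carefully — is the claim that $\z\of{\eta} = \z\of{\zeta}$ and that $\eta$ retains full order $2^{d+1}$; both follow from $\gcd\at{a,2^{d+1}} = 1$, since then $\zeta = \eta^{b}$ is itself a power of $\eta$. Everything else is a direct application of the finite geometric sum, so I anticipate no genuine obstacle beyond the bookkeeping of odd residues modulo $2^{d+1}$.
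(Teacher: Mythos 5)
Your proof is correct and takes essentially the same route as the paper's: membership of $u_j$ follows from the finite geometric series, and membership of $u_j^{-1}$ follows by inverting the odd exponent modulo $2^{d+1}$ (the paper's $j'$ is your $b$) and applying the geometric series again to the root $\eta = \zeta^{2j-1}$. Your additional framing via cyclotomic units and the explicit check that $\z\of{\eta} = \z\of{\zeta}$ is just extra care layered on the identical argument.
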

\begin{proof}
First, note that $u_j$ is an element of $\z\of{\exp(i\pi/2^d)}$ because the polynomial $1 - x^{2j-1}$ is divisible by $(1-x)$:
\[
(1 - x^{2j-1})/(1-x) = \sum_{k=0}^{2j-2} x^k \implies u_j = \sum_{k=0}^{2j-2} \exp\at{i k \pi/2^d},
\]
To show that the inverse of $u_j$ is an element of $\z\of{\exp(i\pi/2^d)}$ we first find an integer $j'$ such that $j'(2j-1) \equiv 1 ~\mathrm{mod}\,2\cdot2^d$ by using the extended Euclidean algorithm and the fact that $(2j-1)$ and $2^{d+1}$ are coprime, then the inverse is
\[
u^{-1}_j = \at{1 - \exp\at{i \pi(2j-1)j'\pi/2^d}}/\at{1 - \exp\at{i \pi(2j-1)/2^d}}.
\] 
Again using that the polynomial $1-x^{j'}$ divisible by the polynomial $1-x$, we conclude that $u^{-1}_j$ is an element of $\z\of{\exp(i\pi/2^d)}$.
\end{proof}

\begin{rem} \label{rem:expert-summary}
All the ring of integers of number fields $\q\atsm{\exp(i\pi/2^d)}$ have unique ramified prime ideal $\mathfrak{p}_d$ with norm $2$. Function $v_2$ is a $\mathfrak{p}$-adic valuation divided by $2^d$. The re-normalization makes sure that $v_2$ is defined consistently for the whole family of nested fields 
\[
 \q \subset \q\at{i} \subset \q\atsm{\exp(i\pi/2^2)} \subset \ldots 
 \subset \q\atsm{\exp(i\pi/2^d)} \subset \q\atsm{\exp(i\pi/2^{d+1})} \subset \ldots.
\]
All the properties of $v_2$ follow from the properties of $\mathfrak{p}$-adic valuations and the fact that $\mathfrak{p}_d = \mathfrak{p}^2_{d+1}$.
\end{rem}

\subsection{Some properties of dyadic monotone \texorpdfstring{$\mu_2$}{μ₂}}
\label{app:properties-of-dyadic-mtone}
In this appendix we prove properties of the dyadic monotone that use slightly more advanced techniques from number theory introduced in \app{number-theory}.

\begin{prop} \label{prop:ext-dyadic-non-negativity}
Let $\ket{\psi}$ be a state with entries in $\mathcal{R}_d$, then $\mu_2\ket{\psi} \ge 0$ and the equality achieved if and only if $\ket{\psi}$ is a stabilizer state. 
In addition, for every Pauli operator $P$, if the expectation $\bra{\psi} P \ket{\psi} \ne 0$,
 then $v_2\at{\bra{\psi} P \ket{\psi}} \le 0$.
\end{prop}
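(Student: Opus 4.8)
The plan is to mirror the proof of \propos{dyadic-non-negativity}, which handled the ring $\mathcal{R}_1 = \z\of{i,1/2}$, but to replace the elementary bound $\abs{\bra{\psi}P\ket{\psi}} \le 1$ (which there forced the power of two in the denominator to be non-negative) with the number-theoretic \propos{non-positivity}. The point is that for a general $\mathcal{R}_d$ the function $v_2$ is built from the norm $N_d$, a product over all Galois conjugates, so controlling $v_2\at{x}$ requires bounding \emph{every} conjugate $\sigma_k\at{x}$, not just $x$ itself. The central object is, for each Pauli $P$, the real number $x = \bra{\psi}P\ket{\psi} \in \mathcal{R}_d$.

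First I would show that $\abs{\sigma_k\at{x}} \le 1$ for every odd $k$, by recognizing $\sigma_k\at{x}$ as, up to a sign, a genuine Pauli expectation. Writing $x = \sum_{i,j} \overline{\psi_i}\, P_{ij}\, \psi_j$ and applying $\sigma_k$ entrywise, I would use two facts. First, $\sigma_k$ commutes with complex conjugation, since $\sigma_{-1}$ is complex conjugation and $\sigma_{-1}\circ\sigma_k = \sigma_{-k} = \sigma_k\circ\sigma_{-1}$ by property~4 of \propos{sigmas}. Second, $\sigma_k$ maps each entry of a Pauli matrix to $\pm$ itself: one has $\sigma_k\at{i} = i^k = \pm i$ for odd $k$, so $\sigma_k\at{P}$ equals either $P$ or its entrywise conjugate $\overline{P}$, and $\overline{P} = \pm P$ because complex conjugation negates every tensor factor of $Y$. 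Combining these, $\sigma_k\at{x} = \pm \bra{\psi^{\sigma_k}} P \ket{\psi^{\sigma_k}}$, where $\ket{\psi^{\sigma_k}}$ denotes the state obtained by applying $\sigma_k$ to the amplitudes of $\ket{\psi}$. Since $\sigma_k$ commutes with conjugation, $\ip{\psi^{\sigma_k} | \psi^{\sigma_k}} = \sigma_k\at{\ip{\psi|\psi}} = 1$, so $\ket{\psi^{\sigma_k}}$ is normalized and its Pauli expectation has modulus at most one, giving $\abs{\sigma_k\at{x}} \le 1$.

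With this secured, \propos{non-positivity} immediately yields $v_2\at{x} \le 0$ whenever $x \ne 0$, which is the "in addition" clause; and since $\mu_2\ket{\psi}$ is the maximum of $-v_2$ over non-zero Pauli expectations, it also gives $\mu_2\ket{\psi} \ge 0$. For the equality case I would argue exactly as in \propos{dyadic-non-negativity}: $\mu_2\ket{\psi} = 0$ forces $v_2\at{x} = 0$ for every non-zero expectation, and the equality clause of \propos{non-positivity} then forces each such $x = \pm 1$, so that every Pauli has expectation in $\set{0,\pm 1}$. Finally, invoking the completeness relation $1 = \ip{\psi | \psi}^2 = \frac{1}{2^n}\sum_{P \in \set{I,X,Y,Z}^{\otimes n}}\abs{\mathrm{Tr}\at{\ket{\psi}\bra{\psi}P}}^2$, the $2^n$ terms equal to one pin down a size-$2^n$ stabilizer, whence $\ket{\psi}$ is a stabilizer state.

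I expect the main obstacle to be the first step, namely verifying cleanly that $\sigma_k$ intertwines the Pauli expectations of $\ket{\psi}$ with those of the conjugate state $\ket{\psi^{\sigma_k}}$ — in particular tracking the signs coming from $\sigma_k\at{P} = \pm P$ and confirming that normalization is preserved under the conjugation of amplitudes. Once $\abs{\sigma_k\at{x}} \le 1$ is established for all odd $k$, the remainder is a direct appeal to \propos{non-positivity} together with a reprise of the stabilizer-counting argument already used for $\mathcal{R}_1$.
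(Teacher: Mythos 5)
Your proposal is correct and follows essentially the same route as the paper's proof: bound every Galois conjugate $\sigma_k\at{\bra{\psi}P\ket{\psi}}$ by $1$ by recognizing it (up to sign) as a Pauli expectation of the element-wise conjugated state $\ket{\psi_k}$, invoke \propos{non-positivity} to get $v_2 \le 0$ with equality only at $\pm 1$, and conclude with the stabilizer-counting argument of \propos{dyadic-non-negativity}. Your write-up is in fact slightly more careful than the paper's on two points it glosses over — that $\sigma_k(P) = \pm P$ explicitly, and that $\ket{\psi_k}$ remains normalized because $\sigma_k$ commutes with complex conjugation.
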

\begin{proof}
Consider Pauli $P$ expectation $\alpha = \bra{\psi} P \ket{\psi}$. 
Because $P$ has eigenvalues $\pm 1$, $|\bra{\psi} P \ket{\psi}| \le 1$. 
Consider now $\sigma_k\at{\alpha}$. 
Because $\sigma_k$ respects addition, multiplication and commutes with complex conjugation according to \propos{sigmas}, $\alpha_k$ can be written as expectation $\bra{\psi_k} P_k \ket{\psi_k}$ where $\ket{\psi_k}$ is the state obtained from $\ket{\psi}$ by applying $\sigma_k$ element-wise and $P_k$ some other Pauli operator obtained from $P$ by also applying $\sigma_k$ element-wise. 
We conclude that $|\alpha_k| \le 1$. 
Now using \propos{non-positivity} we conclude that $v_2\at{\alpha} \le 0$ and the equality is achieved if and only if $\alpha = \pm 1$.
This implies that $\mu_2$ is always non-negative and equality is achieved if and only if all non-zero Pauli expectations of $\ket{\psi}$ are $\pm 1$. 
This implies that $\ket{\psi}$ is a stabilizer state, similarly to the proof of \propos{dyadic-non-negativity}.
\end{proof}

Next we show that $\mu_2$ is non-increasing for a slightly more general class of measurement than Pauli measurements with outcome probabilities one half.

\begin{prop} \label{prop:measure-extended}
Let $\ket{\psi}$ be a state with entries in $\mathcal{R}_d$. 
Let $P$ be a multi-qubit Pauli observable and let $p = \bra{\psi} I + P \ket{\psi} /2 > 0$ be a probability of measuring $+1$ eigenvalue of $P$. 
Suppose there exist global phase $e^{i \phi}$ such that $\ket{\psi_+} = e^{i \phi} \frac{I+P}{\sqrt{p}} \ket{\psi}$ is the state with entries in $\mathcal{R}_d$, then $\mu_2 \ket{\psi_+} \le \mu_2\ket{\psi}$.
\end{prop}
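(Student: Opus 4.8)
The plan is to mirror the proof of \propos{measure-half}, the new feature being that the post-measurement normalisation is no longer a power of two. Fix any multi-qubit Pauli operator $Q$ and use $P\ket{\psi_+}=\ket{\psi_+}$ together with the (unit-)normalised post-measurement state $\ket{\psi_+}=e^{i\phi}\frac{I+P}{2\sqrt{p}}\ket{\psi}$ to write
\[
\bra{\psi_+}Q\ket{\psi_+}=\frac{\bra{\psi}Q\ket{\psi}+\bra{\psi}QP\ket{\psi}}{1+\bra{\psi}P\ket{\psi}}=\frac{\bra{\psi}Q\ket{\psi}+\bra{\psi}QP\ket{\psi}}{2p}.
\]
When $Q$ anticommutes with $P$ the expectation vanishes, so it suffices to treat commuting $Q$, in which case $Q$ and $QP$ are both Hermitian Paulis and their expectations are genuine Pauli expectations of $\ket{\psi}$. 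By the ultrametric inequality $v_2(x+y)\ge\min(v_2(x),v_2(y))$ and \propos{ext-dyadic-non-negativity} (which also gives $\mu_2\ket{\psi}\ge0$), the numerator has $v_2\ge-\mu_2\ket{\psi}$. Hence $v_2(\bra{\psi_+}Q\ket{\psi_+})\ge-\mu_2\ket{\psi}-v_2(2p)$, and the whole statement reduces to the denominator bound $v_2(2p)\le0$, i.e.\ $v_2(p)\le-1$ whenever $p\neq1$.

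To control $p$ I would first identify it as a squared overlap. Since $\bra{\psi_+}(I+P)=2\bra{\psi_+}$, the number $\mu:=\ip{\psi_+|\psi}=e^{-i\phi}\sqrt{p}$ equals $\tfrac12\bra{\psi_+}(I+P)\ket{\psi}$, which lies in $\mathcal{R}_d$ because both $\ket{\psi_+}$ and $\ket{\psi}$ do; moreover $p=\mu\bar\mu$. Applying a Galois map $\sigma_k$ entrywise to a unit vector over $\mathcal{R}_d$ again yields a unit vector (using that $\sigma_k$ respects the norm relation and commutes with complex conjugation, \propos{sigmas}), so Cauchy--Schwarz gives $|\sigma_k(\mu)|\le1$, and therefore $\sigma_k(p)=|\sigma_k(\mu)|^2\in[0,1]$ for every odd $k$. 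Thus $p$ is totally non-negative with all conjugates in $[0,1]$, and \propos{non-positivity} yields $v_2(p)\le0$, with equality precisely when $\mu=\pm1$, i.e.\ $p=1$; in that degenerate case the measurement acts as a global phase and $\mu_2\ket{\psi_+}=\mu_2\ket{\psi}$ trivially.

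The main obstacle is upgrading $v_2(p)\le0$ to $v_2(p)\le-1$ for $p\neq1$, equivalently ruling out the intermediate values $v_2(p)\in(-1,0)$. Total non-negativity of $p$ (and not merely $|\sigma_k(p)|\le1$) is essential here: the element $1/\sqrt2$ satisfies $|\sigma_k(1/\sqrt2)|\le1$ and has $v_2=-\tfrac12$, yet it is excluded because its conjugate $-1/\sqrt2$ is negative. I would prove the required valuation lemma using the local theory at the unique ramified prime $\mathfrak{p}_d=(1-e^{i\pi/2^d})$ developed in \app{number-theory}, where $N_d(\mathfrak{p}_d)=2$ and $2$ is totally ramified of index $2^d$. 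The product-of-conjugates (norm) estimate alone only gives $v_2(p)\le-1/2^{d-1}$, since individual conjugates may be close to $1$; the remaining gap must be closed by a dyadic-integrality obstruction. For instance, for $d=2$ the only dangerous value is $v_2(p)=-\tfrac12$, which would force the real element $p=s+t\sqrt2$ to satisfy $s^2-2t^2=\tfrac12$ with $s,t$ dyadic and both conjugates in $[0,1]$; a short $2$-adic parity argument (comparing the even valuation of $s^2$ with the odd valuation of $2t^2$) shows $s$ must then be an integer, contradicting $s\in(1/\sqrt2,3/4)$, so no such $p$ exists. Generalising this obstruction uniformly in $d$ is the technical heart of the proof. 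Once the lemma $v_2(p)\le-1$ is in hand, combining it with the first paragraph gives $-v_2(\bra{\psi_+}Q\ket{\psi_+})\le\mu_2\ket{\psi}$ for every $Q$, hence $\mu_2\ket{\psi_+}\le\mu_2\ket{\psi}$ as required.
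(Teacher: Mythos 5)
Your first paragraph reproduces the paper's setup faithfully: anti-commuting $Q$ give zero expectation, commuting $Q$ give the ratio with numerator $\bra{\psi}Q\ket{\psi}+\bra{\psi}QP\ket{\psi}$ and denominator $2p=1+\bra{\psi}P\ket{\psi}$, the numerator has $v_2\ge-\mu_2\ket{\psi}$ by the ultrametric inequality, and everything reduces to the denominator bound $v_2(2p)\le 0$. The genuine gap is in how you try to prove that bound. Your total-positivity argument (Galois conjugates of the overlap, Cauchy--Schwarz, \propos{non-positivity}) correctly yields $v_2(p)\le 0$ with equality only at $p=1$, but the promotion to $v_2(p)\le-1$ --- ruling out $v_2(p)\in(-1,0)$ --- is precisely the step you never supply: you sketch a Pell-equation/parity argument for $d=2$ only and then declare that generalising it uniformly in $d$ is ``the technical heart of the proof.'' As written the proof is therefore incomplete for every $d\ge 3$, and the route you propose (local analysis at the ramified prime, norm-form case analysis in a real field of degree $2^{d-1}$) genuinely escalates in difficulty with $d$.

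The paper closes this step with a short trick that you have all the ingredients for but aim at the wrong element: apply the valuation bound not to $p$ but to $\beta = 2p-1 = \bra{\psi}P\ket{\psi}$, which is itself a Pauli expectation. If $\beta=0$ then $v_2(2p)=v_2(1)=0$ and you are done (this is the probability-$1/2$ case already covered by \propos{measure-half}). Otherwise \propos{ext-dyadic-non-negativity} (equivalently, \propos{non-positivity} applied to $\beta$, all of whose conjugates lie in $[-1,1]$) gives $v_2(\beta)\le 0$, with equality only if $\beta=\pm1$; $\beta=-1$ is excluded by $p>0$, and $\beta=+1$ is the trivial case $p=1$ where the measurement is a global phase. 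So for the remaining cases $v_2(\beta)<0$ strictly, and the ultrametric inequality applied to $\beta=(2p)+(-1)$ gives $v_2(\beta)\ge\min(v_2(2p),0)$; if $v_2(2p)>0$ this would force $v_2(\beta)\ge0$, a contradiction, hence $v_2(2p)\le0$. Note that your own observation that all conjugates $\sigma_k(p)$ lie in $[0,1]$ already implies that all conjugates of $2p-1$ lie in $[-1,1]$, so this argument is available verbatim inside your framework: the number-theoretic lemma you flagged as the technical heart evaporates once the ultrametric inequality is pointed at $2p-1$ instead of $p$.
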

\begin{proof}
We assume that $ \bra{\psi} P \ket{\psi} \ne 0$, because equality to zero case corresponds to $p=1/2$ and covered by \propos{measure-half}.

Consider Pauli matrix $Q$ and corresponding expectation $\alpha = \bra{\psi_+} Q \ket{\psi_+}$.
If $P$ and $Q$ anti-commute, the expectation $\alpha$ is zero, because 
$(I+P)Q(I+P) = (I+P)(I-P)Q = 0$.
It remains to consider the case when $P$ and $Q$ commute. 
In this case the expectation is 
\[
\alpha = \frac{ \bra{\psi} QP + P \ket{\psi} }{ 1 + \bra{\psi} P \ket{\psi} }
\]
Using multiplicative property of $v_2$ and inequality $v_2\at{a+b} \ge \min\atsm{v_2\atsm{a}, v_2\atsm{b}}$:
\[
v_2\at{\alpha} \ge \min\atsm{v_2\at{\bra{\psi} QP \ket{\psi}}, v_2\at{\bra{\psi} P \ket{\psi}}} -  v_2\atsm{1 + \bra{\psi} P \ket{\psi}}
\]
Recall, that by definition of $\mu_2$, $v_2\at{\bra{\psi} P' \ket{\psi}} \ge - \mu_2 \ket{\psi}$ for any Pauli $P'$ including $P$ and $PQ$.
It remains to show that $v_2\atsm{1 + \bra{\psi} P \ket{\psi}}$ is non-positive.
\[
v_2\at{\bra{\psi} P \ket{\psi}} 
=
v_2\at{\bra{\psi} P \ket{\psi} + 1 - 1 }
\ge
\min\at{v_2\at{1 + \bra{\psi} P \ket{\psi}}, 0}
\]
We have shown above in \propos{ext-dyadic-non-negativity} that $v_2\at{\bra{\psi} P \ket{\psi}}$ is non-positive when the expectation $\bra{\psi} P \ket{\psi}$ is non-zero.
We see that $v_2\at{1 + \bra{\psi} P \ket{\psi}} = \bra{\psi} P \ket{\psi} \le 0$.

We have shown that for arbitrary Pauli matrix $Q$, $-v_2\at{\bra{\psi_+} Q \ket{\psi_+}} \le \mu_2 \ket{\psi}$.
Inequality  $\mu_2 \ket{\psi_+} \le \mu_2\ket{\psi}$ follows from the definition of $\mu_2$.
\end{proof}

One might wonder if above result holds for two or more post-selected Pauli measurements. 
Below we provide an example showing that post-selecting on two commuting Pauli measurements can increase value of $\mu_2$: 
\[
\ket{\Psi} = (1, -i, -5i - 2, -i, -2i + 1, -i - 2, -i, -i + 2, 1, i, i + 2, i, 1, i, i + 2, i)/8
\]
By direct computation one can check that $\mu_2 \ket{\Psi} = 3$. Post-selecting on $+1$ outcome for observables $Z_1, Z_2$ results in the state $(1, -i, -5i - 2)/4\sqrt{2}$ with the value of $\mu_2$ equal to $4$.

\begin{table*}[ht]
\hfill
\begin{subfigure}[b]{0.45\textwidth}
\centering
{\renewcommand{\arraystretch}{1.25}
\begin{tabular}{c|c}
$\ket{\psi}$     & $\mu_2(\ket{\psi})$ \\ 
\hline                                 
$\ket{\sqrt{T}}$ & 3/4                 \\ 
$\ket{T}$        & 1/2                 \\
$\ket{CS}$       & 1                   \\
$\ket{CCS}$      & 2                   \\
$\ket{C^3S}$     & 3                   \\
$\ket{CCZ}$      & 1
\end{tabular}
}
\end{subfigure}
\hfill
\begin{subfigure}[b]{0.45\textwidth}
\centering
{\renewcommand{\arraystretch}{1.25}
\begin{tabular}{c|c}
$\ket{\psi}$          & $\mu_2(\ket{\psi})$ \\ 
\hline 
$\ket{C^3Z}$          & 2                   \\ 
$\ket{C^4Z}$          & 3                   \\ 
$\ket{CCZ_{123,145}}$ & 5                   \\ 
$\ket{W_3}$           & 1                   \\
$\ket{W_4}$           & 2                   \\ 
$\ket{W_5}$           & 2                   
\end{tabular}
}
\end{subfigure}
\hfill
	
\caption{Exact expressions for the dyadic monotone of the states used in \tab{T_to_Targ} and \tab{CCZ_to_Targ}.}
\label{tab:dyadic-monotone-values}
\end{table*}

\subsubsection{Connections between the dyadic monotone and maximum denominator exponent}

Here we show that the maximum denominator exponent of a unitary, which was used in \cite{Forest2015} for the exact synthesis and canonical form of Clifford-cyclotomic gate-sets, is proportional to the dyadic monotone of the corresponding Choi state. 
Recall, that the Choi state's density matrix can be written as 
\[
\frac{1}{4^n}\sum_{P \in {I,X,Y,Z}^{\otimes n} } U P U^\dagger \otimes P.
\]
The Pauli spectrum of the Choi state consists of values
\[
\abs{\mathrm{Tr}\at{ (Q \otimes Q') \sum_{P \in \set{I,X,Y,Z}^{\otimes n} } U P U^\dagger \otimes P }}/4^n,
\] 
for all possible Pauli matrices $Q,Q'$.
Taking into account that $\mathrm{Tr}(PQ) = \delta_{P,Q} \cdot 2^n$, we see that the Pauli spectrum of the Choi state is the set:
\[
\set{ \abs{\mathrm{Tr}\at{Q U P U^\dagger}}/2^n : P,Q \in \set{I,X,Y,Z}^{\otimes n} },
\]
which is exactly the set of all entries of $U$ in the channel representation~\cite{Gosset2014} used to compute the maximum denominator exponent of $U$. Finally, maximum denominator exponent of the entry is proportional to minus the normalized p-adic valuation $v_2$. 

\subsection{General lower bounds for approximate unitary synthesis}
\label{app:general-approx-bounds}
The goal of this section is to generalize \propos{commuting-pauli-prob-on-t-states},
from the tensor power of $\ket{T}$ state to arbitrary 
resource state for which dyadic monotone is defined. 
This leads to a lower bound on single qubit unitary approximation which generalizes \lemm{cs-lower-bound}, \lemm{ccz-lower-bound}, \lemm{t-lower-bound}.
Finally we show a version of \theo{approximation-lower-bound} involving $T$ and $\sqrt{T}$ states.
\begin{lem} \label{lem:commuting-pauli-prob-bound}
Let $d \ge 1$ and $\ket{\Psi}$ be a state with entries in $\mathcal{R}_d = \z\of{\exp(i\pi/2^d),1/2}$ when written in computational basis.
Let $\set{P_1,\ldots,P_{m}}$ be independent commuting Pauli operators and
let the probability of joint measurement of $+1$ eignevalue of $\set{P_1,\ldots,P_{m}}$ on $\ket{\Psi}$ be
\begin{equation} \label{eq:Psi-expectations}
 p = \frac{1}{2^{m}}\sum_{P \in \ip{P_1,\ldots,P_{m} }} \bra{\Psi} P \ket{\Psi}.
\end{equation}
If the value of $p$ is non-zero, then  $\log_2 p \ge -2^{d-1}( m + \mu_2 \ket{\Psi})$.
\end{lem}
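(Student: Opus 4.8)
The plan is to generalize the proof of \propos{commuting-pauli-prob-on-t-states}, replacing the single ``bullet'' map by the full family of Galois conjugations $\sigma_k$ introduced in \app{number-theory}. First I would rewrite the probability as $p = \bra{\Psi} M \ket{\Psi}$, where $M = \prod_{j=1}^{m}\at{I+P_j}/2$ is the orthogonal projector onto the joint $+1$ eigenspace of $P_1,\ldots,P_m$. Since $\ket{\Psi}$ has entries in $\mathcal{R}_d$ and the Paulis have entries in $\z\of{i,1/2}\subseteq\mathcal{R}_d$, and $\mathcal{R}_d$ is a ring closed under complex conjugation, $p$ is a \emph{real} element of $\mathcal{R}_d$ with $0 < p \le 1$.

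The easy half is a lower bound on $v_2\at{p}$. Writing $p = 2^{-m}\sum_{P\in\ip{P_1,\ldots,P_m}}\bra{\Psi}P\ket{\Psi}$, each nonzero Pauli expectation satisfies $v_2\at{\bra{\Psi}P\ket{\Psi}}\ge -\mu_2\ket{\Psi}$ by the definition of the dyadic monotone, so the ultrametric inequality and additivity of $v_2$ (both established in \app{number-theory}) give $v_2\at{p}\ge -\at{m+\mu_2\ket{\Psi}}$. It remains to convert this $2$-adic estimate into an honest lower bound on $p$ as a positive real, which is where the conjugations enter.

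The crux — and the step I expect to be the main obstacle — is to show that \emph{every} conjugate $\sigma_k\at{p}$ (for odd $k$) is again a genuine probability, hence real and in $(0,1]$. The key observation is that applying $\sigma_k$ entrywise is a ring homomorphism on matrices that commutes with $\dagger$ (because $\sigma_k$ commutes with complex conjugation, a consequence of \propos{sigmas}); therefore it sends the Hermitian idempotent $M$ to a Hermitian idempotent $\sigma_k\at{M}$ and sends $\ket{\Psi}$ to a vector $\ket{\Psi_k}$ of the same unit norm. Consequently $\sigma_k\at{p} = \bra{\Psi_k}\sigma_k\at{M}\ket{\Psi_k}$ is the expectation of an orthogonal projector in a normalized state, so $\sigma_k\at{p}\in[0,1]$, and it is nonzero since $\sigma_k$ is injective.

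Finally I would assemble the pieces. Because each $\sigma_k\at{p}$ is real by the previous step, it equals its complex conjugate $\sigma_{-k}\at{p}$, so the $2^d$ conjugates defining $N_d\at{p}$ collapse into $2^{d-1}$ equal pairs; writing $R$ for the product of one representative from each pair (one of which is $p$ itself) gives $N_d\at{p}=R^2$ with $0 < R \le p$. By \propos{int-norm}, $N_d\at{p}$ is a positive integer divided by a power of two, hence $N_d\at{p}\ge 2^{v_2\at{N_d\at{p}}} = 2^{2^d v_2\at{p}}$ using $v_2\at{p}=v_2\at{N_d\at{p}}/2^d$. Taking square roots, $p \ge R = \sqrt{N_d\at{p}} \ge 2^{2^{d-1} v_2\at{p}} \ge 2^{-2^{d-1}\at{m+\mu_2\ket{\Psi}}}$, and taking $\log_2$ yields the claim. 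As a consistency check, for $\ket{\Psi}=\ket{T}^{\otimes n}$ (so $d=2$ and $\mu_2\ket{\Psi}=n/2$) this reproduces the bound $p\ge 2^{-\at{2m+n}}$ of \propos{commuting-pauli-prob-on-t-states}.
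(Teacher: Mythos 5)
Your proof is correct and takes essentially the same route as the paper's: both arguments lower-bound $p$ by $\sqrt{N_d\at{p}}$ using the fact that every Galois conjugate $\sigma_k\at{p}$ is itself a measurement probability in $(0,1]$, then combine the integrality of the norm (\propos{int-norm}) with the ultrametric inequality and additivity of $v_2$ to get $v_2\at{p} \ge -\at{m+\mu_2\ket{\Psi}}$ and hence the claimed bound. The only differences are presentational — you package the joint measurement as the Hermitian idempotent $M$ and deduce the pairing of conjugates from the reality of each $\sigma_k\at{p}$, whereas the paper works with the group-averaged sum directly and obtains the pairing from the reality of $p$ itself via $\sigma_{-1}\at{p}=p^{\ast}=p$.
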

\begin{proof}
The proof consists of two steps. First we lower bound $p$ by $\sqrt{N_d\at{p}}$. 
Second, we observe that $N_d\at{p}$ is given by ratio $a/2^K$ for some odd integer $a$ and non-negative integer $K$~(\propos{int-norm}) and upper bound $K$ in terms of $\mu_2 \ket{\Psi}$.

Let us first recall the expression for $N_d$:
\[
    N_d\at{p} = \prod_{k=0}^{2^d-1} \sigma_{2k+1}\at{p}
\]
Using properties of map $\sigma_k$ from \propos{sigmas}, expression for $N_d$ can be rewritten as: 
\[
N_d\at{p} = 
\prod_{k=0}^{2^{d-1}-1} \sigma_{2k+1}\at{p}\sigma_{-(2k+1)}\at{p}=
\at{\prod_{k=0}^{2^{d-1}-1} \sigma_{2k+1}\at{p}}^2
\]
Above we used qualities $\sigma_{-1}\at{x} = x^\ast$, $\sigma_{kj}\at{x} = \sigma_k\at{\sigma_j\at{x}}$ with $j=-1$, and took into account that $p$ is a real number.
To lower bound $p$ in terms of $N_d\at{p}$ it remains to notice 
that $\sigma_k\at{p}$ is equal to: 
\[
\frac{1}{2^{m}}\sum_{P \in \ip{P'_1,\ldots,P'_{m} }} \bra{\Psi_k} P \ket{\Psi_k}.
\]
Where $P'_j$ are some Pauli matrices obtained from $P_j$ by element-wise application of $\sigma_k$ and $\ket{\Psi_k}$ is a state obtained from $\ket{\Psi}$ by element-wise application of $\sigma_k$. Because $\sigma_k\at{p}$ is probability of some measurement it must be less than 1. 
We see that $p \ge \sqrt{N_d\at{p}}$.

Recall that we have defined $v_2\at{p}$ as 
\[
v_2\at{N_d\at{p}}/2^d.
\]
Therefore, $K = - 2^d v_2\at{p}$. 
Now using inequality $v_2\at{a+b} \ge \min(v_2\at{a},v_2\at{b})$, we have:
\[
v_2\at{p} \ge - m + \min_{P \in \ip{P'_1,\ldots,P'_{m} } } v_2\at{\bra{\Psi_k} P \ket{\Psi_k}} \ge - m - \mu_2 \ket{\Psi}.
\]
We conclude that $K \le 2^d \mu_2 \at{\ket{\Psi}}$ and therefore $p \ge \sqrt{2}^{-2^d (m + \mu_2 \ket{\Psi})}$.
\end{proof}

Next we follow the proof technique of \lemm{cs-lower-bound} and use the proposition above and establish the following result: 
\begin{lem} \label{lem:mu2-approx-bound}
Let $d \ge 1$ and $\ket{\Psi}$ be a state with entries in $\mathcal{R}_d = \z\of{\exp(i\pi/2^d),1/2}$ when written in computational basis.
Suppose that qubit state $\ket{\psi}$ is approximated
to within trace distance $\varepsilon$ using stabilizer operations and has post-selection with input $\ket{\Psi}$.
Inequalities $\varepsilon < 1/8$ and $\varepsilon < \abs{\ip{\psi | 0}}^2 < 3 \varepsilon$ imply
$\mu_2\ket{\Psi} + \nu \ket{\Psi} \ge \frac{1}{2^{d-1}} \log_2\at{1/\varepsilon} - \frac{1}{2^{d-2}}$.
\end{lem}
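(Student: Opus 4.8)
The plan is to follow the proof of \lemm{cs-lower-bound} essentially line for line, substituting its $\ket{CS}$-specific probability estimate with the general bound of \lemm{commuting-pauli-prob-bound}. First I would take the first qubit to be the output qubit, let $\rho$ be its reduced density matrix, and apply \theo{canonical-form} to the post-selected protocol: this expresses the preparation of $\rho$ through a Clifford unitary $C$ and a family of $k-1 = \nu\ket{\Psi}-\nu\ket{\Psi_{\mathrm{out}}}$ independent commuting Pauli projectors $M_{P_1},\ldots,M_{P_{k-1}}$ acting on $\ket{\Psi}$. Setting $P_k = C^\dagger Z_1 C$, I would show $P_k$ commutes with every $P_j$: were it to anticommute with some $P_j$, the probability $p'$ of the $Z_1=+1$ outcome on $\rho$ would be pinned to $1/2$, contradicting the estimate below.

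The key interval estimate, which is independent of the structure of $\ket{\Psi}$, comes from the precision hypothesis. Using $\abs{\mathrm{Tr} AB} \le \nrm{A}_{\infty}\nrm{B}_1$ and $\nrm{\,\ket{0}\bra{0}\,}_{\infty}=1$, the trace-distance bound gives $\abs{\,\abs{\ip{0|\psi}}^2 - p'\,} \le \varepsilon$, so together with $\varepsilon < \abs{\ip{\psi|0}}^2 < 3\varepsilon$ we get $p' \in (0,4\varepsilon)$; since $\varepsilon < 1/8$ this lies in $(0,1/2)$, excluding $p'=1/2$ and also forcing $\nu\ket{\Psi_{\mathrm{out}}}\ge 1$ (a stabilizer output would force $p'\in\set{0,1/2,1}$). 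Hence $k \le \nu\ket{\Psi}$, and the joint probability $p$ of obtaining $+1$ for all of $P_1,\ldots,P_k$ on $\ket{\Psi}$ is non-zero and at most $p'$, so $p\in(0,4\varepsilon)$.

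Finally I would feed $p$ into \lemm{commuting-pauli-prob-bound}, which, since $p\ne 0$, yields $\log_2 p \ge -2^{d-1}\at{k+\mu_2\ket{\Psi}}$. Combining with $p<4\varepsilon$ gives $2+\log_2\varepsilon > -2^{d-1}\at{k+\mu_2\ket{\Psi}}$, and rearranging produces $k + \mu_2\ket{\Psi} > \tfrac{1}{2^{d-1}}\log_2\at{1/\varepsilon} - \tfrac{1}{2^{d-2}}$; substituting $k\le\nu\ket{\Psi}$ then gives the claim.

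There is no genuine obstacle here beyond bookkeeping, since all the number-theoretic content is already packaged inside \lemm{commuting-pauli-prob-bound}. The only points needing care are verifying that the interval estimates on $p'$ depend solely on $\varepsilon$ and the trace distance (and so transfer unchanged from $\ket{CS}^{\otimes n}$ to an arbitrary $\ket{\Psi}$), and that \theo{canonical-form} applies to a single input state $\ket{\Psi}$ exactly as it did to $\ket{CS}^{\otimes n}$.
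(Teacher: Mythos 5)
Your proposal is correct and is essentially the paper's own argument: the paper's proof consists of noting that one may assume $\nu\ket{\Psi}$ equals the number of qubits of $\ket{\Psi}$ and then stating that ``the rest of the proof is similar to \lemm{cs-lower-bound},'' which is precisely the line-by-line adaptation you carried out, replacing the $\ket{CS}$-specific probability bound $1/2^{k+n}$ with \lemm{commuting-pauli-prob-bound}. Your bookkeeping (the interval $p'\in(0,4\varepsilon)$, the commutation of $P_k$, $k\le\nu\ket{\Psi}$ via $\nu\ket{\Psi_{\mathrm{out}}}\ge 1$, and the final rearrangement giving the constant $1/2^{d-2}$) all checks out.
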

\begin{proof}
First note that we can assume without loss of generality that $\nu \ket{\Psi}$ is equal to the number of of qubits on which $\ket{\Psi}$ is defined.
The rest of the proof is similar to \lemm{cs-lower-bound}.
\end{proof}

For example, above result implies a lower bound when approximating using $N_T$ copies of $\ketsm{T}$ state, and $N_{\sqrt{T}}$ copies of $\ketsm{\sqrt{T}}$ and $\ketsm{\sqrt{T}^3}$ states: 
\[
  N_{\sqrt{T}} + \frac{6}{7} N_T \ge \frac{1}{7} \log_2\atsm{1/\varepsilon} - \frac{1}{14}.
\]
Above inequality leads to the following generalization of 
\theo{approximation-lower-bound}.
\begin{thm} \label{thm:root-t-lower-bound}
Consider a protocol that uses $\mathcal{N}_{\sqrt{T}}(U,\varepsilon)$ copies of $\ketsm{\sqrt{T}}$ and $\ketsm{\sqrt{T}^3}$ states, $\mathcal{N}_{T}(U,\varepsilon)$ copies of $\ketsm{T}$ state and stabilizer operations to
approximate a one-qubit unitary $U$ to within precision $\varepsilon$ (measured by the diamond norm). 
For any positive $C > 1$ and $\varepsilon < 1/(2^8 C)$ there exists a unitary $U$ such that the following inequality must hold
\begin{align*}
\mathcal{N}_{{\sqrt{T}}}(U,\varepsilon) + \frac{6}{7} \mathcal{N}_{{T}}(U,\varepsilon)
& \ge \frac{1}{14} \log_2\at{1/\varepsilon} - \frac{1}{14} \log_2\at{C} - \frac{3}{14} 
\end{align*}
with probability at least $(C-1)/C$.
In particular, this is the case for all unitaries $U$ such that
$2\sqrt{C \varepsilon} \le \abs{\bra{0}U\ket{1}}^2 \le 6 \sqrt{C \varepsilon}$.
\end{thm}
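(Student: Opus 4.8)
The plan is to mirror the proof of \theo{approximation-lower-bound}: first establish a lower bound on the number of resource states needed to \emph{prepare} the one-qubit state $U\ket{1}$ to small trace distance using post-selected stabilizer operations, and then lift this state-preparation bound to a diamond-norm statement about unitary approximation via \lemm{from-state-to-unitary}. The only genuinely new ingredient relative to \theo{approximation-lower-bound} is that the resource is now a heterogeneous mixture of $\ket{T}$, $\ketsm{\sqrt{T}}$ and $\ketsm{\sqrt{T}^3}$ states, so I would route the state-preparation step through the dyadic-monotone bound \lemm{mu2-approx-bound} (which rests on \lemm{commuting-pauli-prob-bound}) rather than the $\ket{T}$-specific \lemm{t-lower-bound}.

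First I would fix the input resource state $\ket{\Psi}$ to be the tensor product of $\mathcal{N}_T$ copies of $\ket{T}$ and $\mathcal{N}_{\sqrt{T}}$ factors, each of which is either $\ketsm{\sqrt{T}}$ or $\ketsm{\sqrt{T}^3}$. The crucial observation is that although $\ket{T}$ has entries in $\mathcal{R}_2$ and the root states have entries in $\mathcal{R}_3$, the inclusion $\mathcal{R}_2\subset\mathcal{R}_3$ means $\ket{\Psi}$ has all entries in $\mathcal{R}_3$, so \lemm{mu2-approx-bound} applies with $d=3$. Using additivity of both $\nu$ and $\mu_2$ together with the single-copy values $\nu(\ket{T})=\nu(\ketsm{\sqrt{T}})=\nu(\ketsm{\sqrt{T}^3})=1$ and $\mu_2(\ket{T})=1/2$, $\mu_2(\ketsm{\sqrt{T}})=\mu_2(\ketsm{\sqrt{T}^3})=3/4$ (these follow from \tab{dyadic-monotone-values}, or directly from the single-qubit rotation-state spectrum in Proposition~\ref{prp:RotationStatePauliSpectrum} combined with the values $v_2(\cos(\pi k/2^d))=v_2(\sin(\pi k/2^d))=2^{1-d}-1$), I obtain $\mu_2(\ket{\Psi})+\nu(\ket{\Psi})=\tfrac{3}{2}\mathcal{N}_T+\tfrac{7}{4}\mathcal{N}_{\sqrt{T}}$. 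Feeding this into \lemm{mu2-approx-bound} with $d=3$ and rearranging reproduces the state-preparation inequality displayed just before the theorem, namely $\mathcal{N}_{\sqrt{T}}+\tfrac{6}{7}\mathcal{N}_T\ge \tfrac{1}{7}\log_2(1/\varepsilon')-\tfrac{1}{14}$, valid for any target with $\varepsilon'<\abs{\ip{\psi|0}}^2<3\varepsilon'$ and $\varepsilon'<1/8$; here the weight $6/7$ is exactly the ratio $(\tfrac{3}{2})/(\tfrac{7}{4})$ of the combined monotone cost $\mu_2+\nu$ of a $\ket{T}$ state to that of a root state.

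The final step transports this to the diamond norm. Given a protocol that approximates $U$ to diamond-norm precision $\varepsilon$, \lemm{from-state-to-unitary} shows that with probability at least $(C-1)/C$ the same protocol furnishes a post-selected preparation of $\ket{\psi}=U\ket{1}$ to trace distance $\varepsilon'=2\sqrt{C\varepsilon}$. Choosing $U$ so that $2\sqrt{C\varepsilon}\le\abs{\bra{0}U\ket{1}}^2\le 6\sqrt{C\varepsilon}$ makes $\ket{\psi}$ satisfy the hypothesis of the state bound with this $\varepsilon'$, while the assumption $\varepsilon<1/(2^8C)$ guarantees $\varepsilon'<1/8$. Substituting $\varepsilon'=2\sqrt{C\varepsilon}$ and using $\log_2\!\left(1/(2\sqrt{C\varepsilon})\right)=\tfrac{1}{2}\log_2(1/\varepsilon)-\tfrac{1}{2}\log_2 C-1$ then yields $\mathcal{N}_{\sqrt{T}}(U,\varepsilon)+\tfrac{6}{7}\mathcal{N}_T(U,\varepsilon)\ge\tfrac{1}{14}\log_2(1/\varepsilon)-\tfrac{1}{14}\log_2 C-\tfrac{3}{14}$ with probability at least $(C-1)/C$, which is the claim.

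I expect the main obstacle to be \lemm{commuting-pauli-prob-bound} in the heterogeneous setting: one must check that the smallest nonzero joint measurement probability on $\ket{\Psi}$ is still governed by $2^{d-1}(m+\mu_2(\ket{\Psi}))$ even though the tensor factors originate in different cyclotomic rings. This is handled by working uniformly in $\mathcal{R}_3$ and invoking the consistency of $v_2$ and of the norm map $N_d$ across the nested fields $\q(i)\subset\q(\exp(i\pi/4))\subset\q(\exp(i\pi/8))$ (\rema{expert-summary}); it is precisely the multiplicativity of $N_d$ and the ultrametric inequality for $v_2$ that make $\mu_2$ additive across the mixed product, so no number theory beyond \app{number-theory} is required. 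A secondary point I would verify carefully is that the probability containment $p\in(0,4\varepsilon')$ and the count $k\le\nu(\ket{\Psi})$ of commuting post-selected measurements carry over verbatim from the proof of \lemm{cs-lower-bound}.
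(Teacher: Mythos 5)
Your proposal is correct and is essentially identical to the paper's own derivation: the paper obtains this theorem by applying \lemm{mu2-approx-bound} with $d=3$ to the heterogeneous resource state (using additivity of $\nu$ and $\mu_2$, with $\nu=1$ for each tensor factor and $\mu_2(\ket{T})=1/2$, $\mu_2(\ketsm{\sqrt{T}})=\mu_2(\ketsm{\sqrt{T}^3})=3/4$, so that $\mu_2+\nu=\tfrac{3}{2}\mathcal{N}_T+\tfrac{7}{4}\mathcal{N}_{\sqrt{T}}$), and then lifts the resulting state-preparation bound to the diamond-norm statement via \lemm{from-state-to-unitary} with $\varepsilon'=2\sqrt{C\varepsilon}$, exactly as you propose. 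The only blemish, which you inherit from the paper itself, concerns the additive constant: \lemm{mu2-approx-bound} with $d=3$ gives $\mathcal{N}_{\sqrt{T}}+\tfrac{6}{7}\mathcal{N}_T\ge\tfrac{1}{7}\log_2(1/\varepsilon')-\tfrac{2}{7}$ rather than the $-\tfrac{1}{14}$ you (and the paper) quote, so the final constant propagates to $-\tfrac{3}{7}$ rather than $-\tfrac{3}{14}$, while the leading $\tfrac{1}{14}\log_2(1/\varepsilon)-\tfrac{1}{14}\log_2 C$ terms are unaffected.
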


\end{document}